\def\showauthornotes{1}
\def\showdraftbox{0}
\definecolor{darkred}{rgb}{0.5,0,0}
\definecolor{darkgreen}{rgb}{0,0.5,0}
\definecolor{darkblue}{rgb}{0,0,0.5}
\numberwithin{figure}{section}
\numberwithin{table}{section}
\numberwithin{equation}{section}
\newtheorem{theorem}{Theorem}[section] 
\newtheorem{lemma}[theorem]{Lemma}
\newtheorem{corollary}[theorem]{Corollary}
\newtheorem{claim}[theorem]{Claim}
\newtheorem{remark}[theorem]{Remark}
   \newtheorem{defn}[theorem]{Definition}%
   \newtheorem{observation}[theorem]{Observation}
\newcommand{\MakeBig} {\rule[-.2cm]{0cm}{0.4cm}}
\newlength{\savedparindent}
\providecommand{\pbrcx}[1]{\left[ {#1} \right]}
\renewcommand{\Re}{{\rm I\!\hspace{-0.025em} R}}
\newcommand{\Cp}{\mathds{C}}
\newcommand{\figref}[1]{Figure~\ref{fig:#1}}
\newcommand{\Eqlab}[1]{\label{equation:#1}}
\newcommand{\Eqref}[1]{Eq.~(\ref{equation:#1})}
\newcommand{\ineqlab}[1]{\label{inequality:#1}}
\newcommand{\ineqref}[1]{(\ref{inequality:#1})}
\newcommand{\lemlab}[1]{\label{lemma:#1}}
\newcommand{\lemref}[1]{Lemma~\ref{lemma:#1}}
\newcommand{\corlab}[1]{\label{corollary:#1}}
\newcommand{\corref}[1]{Corollary~\ref{corollary:#1}}
\newcommand{\obslab}[1]{\label{observation:#1}}
\newcommand{\obsref}[1]{Observation~\ref{observation:#1}}
\newcommand{\seclab}[1]{\label{section:#1}}
\newcommand{\secref}[1]{Section~\ref{section:#1}}
\newcommand{\applab}[1]{\label{app:#1}}
\newcommand{\appref}[1]{Appendix~\ref{app:#1}}
\newcommand{\thmlab}[1]{\label{theorem:#1}}
\newcommand{\thmref}[1]{Theorem~\ref{theorem:#1}}
\providecommand{\deflab}[1]{\label{def:#1}}
\newcommand{\pth}[2][\!]{#1\left({#2}\right)}%
\newcommand{\brc}[1]{\left\{ {#1} \right\}}%
\newcommand{\sep}[1]{\,\left|\, {#1} \MakeBig\right.}%
\newcommand{\Mat}[1]{\mathop{\mathbf{Mat}}\pth{#1}}
\newcommand{\Vector}[1]{\mathop{\mathsf{Vec}}\pth{#1}}
\newcommand{\Tr}[1]{\mathop{\mathsf{Tr}}\pth{#1}}
\newcommand{\iprod}[2]{\langle #1,#2 \rangle}
\newcommand{\norm}[2]{\left\lVert {#2} \right\rVert_{#1}}
\newcommand{\Ex}[1]{\mathop{\mathbb{E}}{\pbrcx{#1}}}
\newcommand{\Exp}[2]{\mathop{\mathbb{E}}_{#1}{\pbrcx{#2}}}
\newcommand{\PEx}[1]{\mathop{\widetilde{\mathbf{E}}}{\pbrcx{#1}}}
\newcommand{\PE}{\mathop{\widetilde{\mathbf{E}}}}
\newcommand{\Prob}[1]{\mathop{\mathbf{Pr}}\!{\pbrcx{#1}}}
\newcommand{\cardin}[1]{\left| {#1} \right|}%
\newcommand{\eps}{{\varepsilon}}%
\newcommand{\Sym}{\mathbb{S}}
\newcommand{\orbit}[1]{\mathscr{O}{\,\pth{#1}}}
\newcommand{\multichoose}[2]{\ensuremath{\left(\kern-.3em\left(\genfrac{}{}{0pt}{}{#1}{#2}\right)\kern-.3em\right)}}
\def\restrict#1{\raise-.5ex\hbox{\ensuremath|}_{#1}}
\newcommand{\emphi}[1]{\emph{\textbf{#1}}}
\newcommand{\tetmat}{\mathfrak{M}}
\newcommand{\barq}{\overline{q}}
\newcommand{\one}{\mathds{1}}
\newcommand{\real}[1]{\mathrm{Re}\pth{#1}}
\newcommand{\im}[1]{\mathrm{Im}\pth{#1}}
\DeclareMathOperator*{\E}{\mathbb{E}}
\newcommand{\RR}{\mathbb{R}}      
\newcommand{\CC}{\mathbb{C}}      
\newcommand{\NN}{\mathbb{N}}      
\newcommand{\SSS}{\mathbb{S}}
\newcommand{\1}{\mathds{1}}      
\newcommand{\spread}[1]{\mathop{\mathbf{spread}}{\pth{ #1 }}}
\newcommand{\spreadA}[1]{\mathop{\mathbf{spread}_A}{\pth{ #1 }}}
\newcommand{\spreadbA}[1]{\mathop{\mathbf{spread}_{\overline{A}}}{\pth{ #1 }}}
\newcommand{\spreadB}[1]{\mathop{\mathbf{spread}_B}{\pth{ #1 }}}
\newcommand{\spreadC}[1]{\mathop{\mathbf{spread}_C}{\pth{ #1 }}}
\newcommand{\spreadD}[1]{\mathop{\mathbf{spread}_D}{\pth{ #1 }}}
\newcommand{\spreadABCD}[1]{\mathop{\mathbf{spread}_{ABCD}}{\pth{ #1 }}}
\newcommand{\spreadbABCD}[1]{\mathop{\mathbf{spread}_{\overline{A}BCD}}{\pth{ #1 }}}
\newcommand{\hgm}[2]{\mathsf{M}_{\textrm{hyp}}^{#1}(#2)}
\newcommand{\overbar}[1]{\mkern 1.2mu\overline{\mkern-1.2mu#1\mkern-1.2mu}\mkern 1.2mu}
\newcommand{\mi}[1]{\alpha({#1})}
\newcommand{\mindex}{{\NN}^{n}}
\newcommand{\PR}[1]{{\RR}_{#1}[x]}
\newcommand{\pr}[1]{\PR{#1}}
\newcommand{\NPR}[1]{{\RR}^+_{#1}[x]}
\newcommand{\npr}[1]{\NPR{#1}}
\newcommand{\FPR}[2]{{\pth{{\RR}_{#2}\![x]}\!_{#1}}\![x]}
\newcommand{\fpr}[2]{\FPR{#1}{#2}}
\newcommand{\NFPR}[2]{{\pth{{\RR}^+_{#2}\![x]}\!_{#1}}\![x]}
\newcommand{\degmindex}[1]{{\NN}_{#1}^{n}}
\newcommand{\udmindex}[1]{{\NN}_{\!\leq #1}^{n}}
\newcommand{\degbmindex}[1]{\{0,1\}_{#1}^{n}}
\newcommand{\multif}[1]{F_{#1}}
\newcommand{\supp}[1]{\mathsf{S}(#1)}
\newcommand{\fold}[2]{\overbar{#1}_{#2}}
\newcommand{\collapse}[2]{\mathsf{C}_{#1}\pth{#2}}
\newcommand{\unfold}[1]{\mathsf{U}(#1)}
\newcommand{\PExc}[2]{
\mathop{\widetilde{\mathbf{E}}_{#1}}{\!\pbrcx{#2}}}
\newcommand{\hscsos}[2]{\Lambda_{#1}\pth{#2}}
\newcommand{\hssos}[1]{\Lambda\pth{#1}}
\newcommand{\fmax}[1]{#1_{\max}}
\newcommand{\fmin}[1]{#1_{\min}}
\newcommand{\bx}{\overbar{x}}
\newcommand{\Authornote}[2]{{\sf\small\color{red}{[#1: #2]}}}
\newcommand{\Authorcomment}[2]{{\sf \small\color{gray}{[#1: #2]}}}
\newcommand{\Authorfnote}[2]{\footnote{\color{red}{#1: #2}}}
\newcommand{\Authornote}[2]{}
\newcommand{\Authorcomment}[2]{}
\newcommand{\Authorfnote}[2]{}
\newcommand{\ie}{i.e.,\xspace}
\newcommand{\eg}{e.g.,\xspace}
\newcommand{\etal}{et al.\xspace}
\newcommand{\mper}{\,.}
\newcommand{\mcom}{\,,}
\newcommand{\given}{\;\middle\vert\;}
\newcommand{\abs}[1]{\left\lvert #1 \right\rvert}
\newcommand{\inparen}[1]{\left(#1\right)}             
\newcommand{\inbraces}[1]{\left\{#1\right\}}           
\newcommand{\insquare}[1]{\left[#1\right]}             
\newcommand{\draftbox}{\begin{center}
  \fbox{%
    \begin{minipage}{2in}%
      \begin{center}%
        \begin{Large}%
          \textsc{Working Draft}%
        \end{Large}\\
        Please do not distribute%
      \end{center}%
    \end{minipage}%
  }%
\end{center}
\vspace{0.2cm}}
\newcommand{\draftbox}{}
\newcommand{\deffont}{\sf}
\newcommand{\defnt}[1]{{\deffont #1}}
\newcommand{\xtensor}[1]{x^{\otimes #1}}
\newcommand{\defeq}{:=}
\newcommand{\R}{\RR}
\newcommand{\sym}[1]{\SSS^{#1}(\R)}
\newcommand{\fsp}[1]{\norm{{sp}}{#1}}
\newcommand{\ftwo}[1]{\norm{2}{#1}}
\newcommand{\cliques}{\calC}
\newcommand{\ind}[1]{\1\insquare{#1}}
\newcommand{\parts}{\mathcal{P}}
\newcommand{\lsim}{\lesssim}
\newcommand{\triangles}{\Delta}
\newcommand{\nbr}{\mathsf{N}}
\newcommand{\sfM}{\mathsf{M}}
\newcommand{\sfA}{\mathsf{A}}
\newcommand{\sfX}{\mathsf{X}}
\newcommand{\sfZ}{\mathsf{Z}}
\newcommand{\sfI}{\mathsf{I}}
\newcommand{\sfT}{\mathsf{T}}
\newcommand{\hA}{\widehat{\mathsf{A}}}
\newcommand{\hM}{\widehat{\mathsf{M}}}
\newcommand{\hN}{\widehat{\mathsf{N}}}
\newcommand{\hJ}{\widehat{\mathsf{J}}}
\newcommand{\calC}{\mathcal{C}}
\newcommand{\calH}{\mathcal{H}}
\begin{document}


\title{\vspace{-25 pt}  Weak Decoupling, Polynomial Folds, and \\
Approximate Optimization over the Sphere}
\author{
$\qquad$
Vijay Bhattiprolu\thanks{Supported by NSF CCF-1422045 and CCF-1526092.
 \tt vpb@cs.cmu.edu} \and
Mrinalkanti Ghosh\thanks{Supported by NSF CCF-1254044 \tt mkghosh@ttic.edu} \and
Venkatesan Guruswami\thanks{Supported in part by NSF grant CCF-1526092. {\tt guruswami@cmu.edu} } $\qquad$ \and
Euiwoong Lee\thanks{Supported by a Samsung Fellowship, Simons award, and NSF CCF-1526092. {\tt euiwoonl@cs.cmu.edu}} \and
Madhur Tulsiani \thanks{Supported by NSF CCF-1254044 \tt madhurt@ttic.edu} 
}


\setcounter{page}{0}
\date{}

\maketitle
\draftbox
\thispagestyle{empty}


\begin{abstract}

 We consider the following basic problem: given an $n$-variate
 degree-$d$ homogeneous polynomial $f$ with real coefficients, compute
 a unit vector $x \in \mathbb{R}^n$ that maximizes $|f(x)|$.  Besides
 its fundamental nature, this problem arises in diverse contexts
 ranging from tensor and operator norms to graph expansion to quantum
 information theory.  The homogeneous degree $2$ case is efficiently
 solvable as it corresponds to computing the spectral norm of an
 associated matrix, but the higher degree case is NP-hard.

\smallskip
We give approximation algorithms for this problem that offer a
trade-off between the approximation ratio and running time: in
$n^{O(q)}$ time, we get an approximation within factor
$O_d((n/q)^{d/2-1})$ for arbitrary polynomials, $O_d((n/q)^{d/4-1/2})$
for polynomials with non-negative coefficients, and $O_d(\sqrt{m/q})$
for sparse polynomials with $m$ monomials. The approximation
guarantees are with respect to the optimum of the level-$q$
sum-of-squares (SoS) SDP relaxation of the problem (though our
algorithms do not rely on actually solving the SDP).  Known polynomial
time algorithms for this problem rely on ``decoupling lemmas.'' Such
tools are not capable of offering a trade-off like our results as they
blow up the number of variables by a factor equal to the degree. We
develop new decoupling tools that are more efficient in the number of
variables at the expense of less structure in the output
polynomials. This enables us to harness the benefits of higher level
SoS relaxations.  Our decoupling methods also work with ``folded
polynomials,'' which are polynomials with polynomials as
coefficients. This allows us to exploit easy substructures (such as
quadratics) by considering them as coefficients in our algorithms.

\smallskip
 We complement our algorithmic results with some polynomially large
 integrality gaps for $d$-levels of the SoS relaxation. For general
 polynomials this follows from known results for \emph{random}
 polynomials, which yield a gap of $\Omega_d(n^{d/4-1/2})$. For
 polynomials with non-negative coefficients, we prove an
 $\tilde{\Omega}(n^{1/6})$ gap for the degree $4$ case, based on a
 novel distribution of $4$-uniform hypergraphs. We establish an
 $n^{\Omega(d)}$ gap for general degree $d$, albeit for a slightly
 weaker (but still very natural) relaxation. Toward this, we give a
 method to lift a level-$4$ solution matrix $M$ to a higher level
 solution, under a mild technical condition on $M$.
 
\smallskip
From a structural perspective, our work yields worst-case convergence
results on the performance of the sum-of-squares hierarchy for
polynomial optimization. Despite the popularity of SoS in this
context, such results were previously only known for the case of $q =
\Omega(n)$.
  
\end{abstract}

\newpage

\pagenumbering{roman}
\tableofcontents
\clearpage

\pagenumbering{arabic}
\setcounter{page}{1}


\section{Introduction}
\seclab{intro}
%
\newcommand{\fmaxd}[1]{#1_{\max}}
We study the problem of optimizing homogeneous polynomials over the unit sphere. Formally, given an
$n$-variate degree-$d$ homogeneous polynomial $f$, the goal is to compute
\begin{equation}
\label{eqn:problem}
\ftwo{f} ~\defeq~  \sup_{\|x\| = 1} \abs{f(x)} 
\end{equation}
When $f$ is a homogeneous polynomial of degree 2, this problem is
equivalent computing the spectral norm of an associated symmetric
matrix $M_f$. For higher degree $d$, it defines a natural higher-order
analogue of the eigenvalue problem for matrices.
The problem also provides an important testing ground for the development
of new spectral and semidefinite programming (SDP) techniques, and
techniques developed in the context of this problem have had
applications to various other constrained settings \cite{HLZ10,
  Laurent09, Lasserre09}.

Besides being a natural and fundamental problem in its own right, it
has connections to widely studied questions in many other areas. In
quantum information theory~\cite{BH13, BKS14}, the problem of
computing the optimal success probability of a protocol for Quantum
Merlin-Arthur games can be thought of as optimizing certain classes of
polynomials over the unit sphere.  The problem of estimating the $2
\rightarrow 4$ norm of an operator, which is equivalent to optimizing
certain homogeneous degree-4 polynomials over the sphere, is known to
be closely related to the Small Set Expansion Hypothesis (SSEH) and
the Unique Games Conjecture (UGC) \cite{BBHKSZ12, BKS14}.  The
polynomial optimization problem is also very relevant for natural
extensions of spectral problems, such as low-rank decomposition and
PCA, to the case of tensors \cite{BKS15,GM15, MR14, HSS15}.  Frieze
and Kannan \cite{FK08} (see also \cite{BV09}) also established a
connection between the problem of approximating the spectral norm of a
tensor (or equivalently, computing $\ftwo{f}$ for a polynomial $f$),
and finding planted cliques in random graphs.



The problem of polynomial optimization has been studied
\footnote{In certain cases, the problem studied is not to maximize $\abs{f}$, but just $f(x)$. While
  the two problems are equivalent for homogeneous polynomials of odd degree, some subtle issues
  arise when considering polynomials of even degree. We compare the two notions in 
  \appref{fmax:vs:ftwo}.}
 over various  compact sets
\cite{Lasserre09, deKlerk08}, and is natural to ask how well polynomial time algorithms can {\em
    approximate} the optimum value over a given compact set (see \cite{deKlerk08} for a survey).
While the maximum of a degree-$d$ polynomial over the simplex admits a PTAS for every fixed $d$
\cite{deKLP06}, the problem of optimizing even a degree $3$ polynomial over the hypercube does not
admit any approximation better than $2^{(\log n)^{1-\eps}}$ (for arbitrary $\eps > 0$) 
assuming NP cannot be solved in time $2^{(\log n)^{O(1)}}$ \cite{HV04}. 

The approximability of polynomial optimization on the sphere is
poorly understood in comparison. It is known that the  maximum of a degree-$d$ polynomial 
can be approximated within a factor of $n^{d/2-1}$ in polynomial time \cite{HLZ10, So11}. 
On the hardness side, Nesterov~\cite{Nesterov03} gave a reduction from Maximum Independent Set to
optimizing a homogeneous cubic polynomial over $\SSS^{n-1}$. Formally, given a graph $G$, there
exists a homogeneous cubic polynomial $f(G)$ such that 
$\sqrt{1 - \frac{1}{\alpha(G)}} = \max_{\|x\|=1} f(x)$. Combined with the hardness of Maximum
Independent Set~\cite{Hastad96}, this rules out an FPTAS for optimization over the unit sphere. 
Assuming the Exponential Time Hypothesis, Barak et al.~\cite{BBHKSZ12} proved that computing $2
\rightarrow 4$ norm of a matrix, a special case when $f$ is a degree-$4$ homogeneous polynomial, is
hard to approximate within a factor $\exp(\log^{1/2-\epsilon}(n))$ for any $\epsilon > 0$.

Optimization over $\SSS^{n-1}$ has been given much attention in the optimization community, where
for a fixed number of variables $n$ and degree $d$ of the polynomial, it is known that the estimates
produced by $q$ levels a certain hierarchy of SDPs (Sum of Squares) get arbitrarily close to the
true optimal  solution as $q$ increases (see \cite{Lasserre09} for various applications). We refer
the reader to the recent work of Doherty and Wehner~\cite{DW12} and de Klerk, Laurent, and
Sun~\cite{dKMS14} and references therein for more information on convergence results.
These algorithms run in time $n^{O(q)}$, which is polynomial for constant $q$.
Unfortunately, known convergence results often give a non-trivial
bound only when the $q$ is linear in $n$. 

In computer science, much attention has been given to the
sub-exponential runtime regime (i.e.  $q \ll n$) since many of the
target applications such as SSE, QMA and refuting random CSPs are of
considerable interest in this regime.  In addition to the polytime
$n^{d/2-1}$-approximation for general polynomials \cite{HLZ10, So11},
approximation guarantees have been proved for several special cases
including $2 \rightarrow q$ norms~\cite{BBHKSZ12}, polynomials with
non-negative coefficients~\cite{BKS14}, some polynomials that arise in
quantum information theory~\cite{BKS17,BH13}, and random
polynomials~\cite{RRS16, BGL16}. Hence there is considerable interest
in tightly characterizing the approximation guarantee achievable using
sub-exponential time.

In this paper, we develop general techniques to design and analyze algorithms
for polynomial optimization over the sphere. The sphere
constraint is one of the simplest constraints for
polynomial optimization and thus is a good testbed for
techniques. Indeed, we believe these techniques will also be useful in
understanding polynomial optimization for other constrained settings.

In addition to giving an analysis the problem for arbitrary
polynomials, these techniques can also be adapted to take advantage of 
the structure of the input polynomial, yielding better approximations 
for several special cases such as polynomials with non-negative coefficients, and sparse
polynomials. Previous polynomial time algorithms for polynomial optimization work by reducing the
problem to diameter estimation in convex bodies \cite{So11} and seem unable to utilize structural
information about the (class of) input polynomials. Development of a method which can
use such information was stated as an open problem by Khot and Naor \cite{KN08} (in the
context of $\ell_{\infty}$ optimization). 
%
%

Our approximation guarantees are with respect to the
optimum of the well-studied Lasserre/sum-of-squares (SoS) semidefinite
programming relaxation. Such SDPs are the most natural tool to bound
the optima of polynomial optimization problems, and our results shed
light on the efficacy of higher levels of the SoS hierarchy to deliver
better approximations to the optimum. We discuss the SoS connection in
\secref{subsec:sos}, but first turn to stating our approximation
guarantees.

\subsection{Our Algorithmic Results}
For a homogeneous polynomial $h$ of even degree $q$, a matrix $M_h \in \Re^{[n]^{q/2}\times
  [n]^{q/2}}$ is called  a matrix representation of $h$ if $(x^{\otimes q/2})^T \cdot M_h \cdot
x^{\otimes q/2} = h(x) ~~\forall x \in \R^n$. Next we define the quantity, 
\begin{equation}
  \label{eq:def-of-Lambda}
\hssos{h} 
~\defeq~ 
\inf \inbraces{\sup_{\|z\|_2 = 1} z^T M_h~z ~\given~  M \text{ is a representation of } h} \mper 
\end{equation}
%
Let $\fmax{h}$ denote $\sup_{\|x\|=1} h(x)$. Clearly, $\fmax{h}\leq
\hssos{h}$, i.e. $\hssos{h}$ is a relaxation of $\fmax{h}$. However,
this does not imply that $\hssos{h}$ is a relaxation of $\ftwo{h}$,
since it can be the case that $\fmax{h}\neq \ftwo{h}$. To remedy this,
one can instead consider $\sqrt{\hssos{h^2}}$ which is a relaxation of
$\ftwo{h}$, since $\fmax{(h^2)}=\ftwo{h^2}$.  More generally, for a degree-$d$ homogeneous
polynomial $f$ and an integer $q$ divisible by $2d$, we have the upper estimate
\[
     \ftwo{f} ~ \leq \hssos{f^{q/d}}^{d/q} 
\]

The following result shows that $\hssos{f^{q/d}}^{d/q}$ approximates $\ftwo{f}$ within polynomial
factors, and also  gives an algorithm to approximate $\ftwo{f}$ with respect to the upper bound
$\hssos{f^{q/d}}^{d/q}$. In the statements below and the rest of this section, $O_d(\cdot)$ and
$\Omega_d(\cdot)$ notations hide $2^{O(d)}$ factors. Our algorithmic results are as follows:

\begin{theorem}
\thmlab{results-list}
Let $f$ be an $n$-variate homogeneous polynomial of degree-$d$, and let $q\leq n$ be an 
integer divisible by $2d$. Then,
\begin{align*}
&\text{Arbitrary $f$:}
&&\pth{\hssos{f^{q/d}}}^{d/q} ~\leq~ O_d\pth{\pth{n/q}^{d/2 - 1}} \cdot \ftwo{f} \\
&\text{$f$ with Non-neg. Coefficients:}
&&\pth{\hscsos{C}{f^{q/d}}}^{d/q} ~\leq~ O_d\pth{\pth{n/q}^{d/4 - 1/2}} \cdot \ftwo{f} \\
&\text{$f$ with Sparsity $m$:}
&&\pth{\hssos{f^{q/d}}}^{d/q} ~\leq~ O_d\pth{\sqrt{m/q}} \cdot \ftwo{f}. 
\end{align*} 
(where $\hscsos{C}{\cdot}$ is a a related efficiently computable quantity that we define in \secref{hscsos})

Furthermore, there is a deterministic algorithm that runs in $n^{O(q)}$ time and returns $x$ such that 
\[|f(x)|\geq \frac{\hssos{f^{q/d}}^{d/q}}{O_d(c(n,d,q))}\] 
where $c(n,d,q)$ is $(n/q)^{d/2 - 1}$, $(n/q)^{d/4 - 1/2}$ 
and $\sqrt{m/q}$ respectively, for each of the above cases (the inequality uses $\hscsos{C}{\cdot}$ 
in the case of polynomials with non-negative coefficients). 
\end{theorem}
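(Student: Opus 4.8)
The plan is to prove only the ``furthermore'' clause, since an algorithm that returns a unit $x$ with $|f(x)| \ge \hssos{f^{q/d}}^{d/q}/O_d(c(n,d,q))$ also certifies $\hssos{f^{q/d}}^{d/q} \le O_d(c(n,d,q))\cdot\ftwo{f}$ (because $|f(x)| \le \ftwo{f}$), and likewise with $\hscsos{C}{\cdot}$. Write $p := q/d$, which is even since $2d \mid q$, so $f^{p}(x) = f(x)^{p} \ge 0$ and $\fmax{(f^{p})} = \ftwo{f}^{p}$. The easy direction $\ftwo{f} \le \hssos{f^{p}}^{1/p}$ is immediate: for any matrix representation $M$ of $f^{p}$ and the maximizer $x^\ast$ of $|f|$ on the sphere, the unit vector $(x^\ast)^{\otimes q/2}$ witnesses $\lambda_{\max}(M) \ge \ftwo{f}^{p}$, and we infimize over $M$.

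The core of the algorithm is to produce, in $n^{O(q)}$ time and directly from the coefficients of $f$, a matrix representation $M$ of $f^{p}$ on $\R^{[n]^{q/2}\times[n]^{q/2}}$ together with a unit $x$ such that
\[
 |f(x)|^{p} ~\ge~ \lambda_{\max}(M)\,/\,O_d\!\pth{c(n,d,q)^{p}}\mper
\]
Granting this, $\hssos{f^{p}} \le \lambda_{\max}(M) \le O_d(c^{p})\,|f(x)|^{p}$, and taking $p$-th roots gives the theorem with a single factor of $c$; the $\hscsos{C}{\cdot}$ version is identical once representations are constrained to be ``non-negative'' in the sense of \secref{hscsos} (the constraint is needed precisely because the unconstrained infimum ignores the sign/coefficient structure). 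The obvious candidates for $M$ fail: flattening the coefficient tensor of $f^{p}$ loses $n^{\Theta(q)}$ under any rounding, and the $p$-th tensor power of an optimal representation of $f$ merely reproduces the $q$-independent bound, since symmetrizing a tensor power does not shrink its spectrum. What is needed is a representation that actually exploits that $f^{p}$ is a $p$-th power, so that the $q/2$-fold symmetric tensors one may plug in can be peeled off in blocks rather than coordinate by coordinate.

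I would obtain such an $M$ from a weak decoupling lemma. Instead of fully polarizing $f^{p}$ into a multilinear form in $q$ disjoint blocks (the step that makes the $q$ levels invisible, since the blocks lose all connection to the copies of $f$), the idea is to decouple only partially, keeping polynomial structure inside blocks --- a ``folded polynomial'' whose coefficients are themselves bounded-degree polynomials. The representation $M$ should be built so that $\lambda_{\max}(M)$ is governed by a telescoping product of per-fold losses, where a single fold is a Cauchy--Schwarz step trading one degree-$d$ factor of $f$ for a multiplicative $O_d(\sqrt{n/q})$; the division by $q$, rather than by a constant as in ordinary decoupling, is exactly what the $q/2$-fold symmetry of the allowed vectors buys. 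After $\Theta(d)$ folds the remaining ``core'' is a quadratic whose top eigenpair is computed directly, and the loss telescopes to $O_d((n/q)^{d/2-1})$. The rounding is the same computation run backwards: from the top eigenvector $z$ of $M$, repeatedly ``unfold'' one block, each time passing to the best among the $n^{O(q)}$ slices/projections of the current tensor (this is how each Cauchy--Schwarz step is realized algorithmically, at only the matching per-fold loss), until $z$ is collapsed to a rank-one $x^{\otimes q/2}$; homogeneity then yields the displayed inequality. The two refinements are variants of the same scheme: for non-negative coefficients one restricts to the positive orthant and a non-negative coefficient matrix, which lets each Cauchy--Schwarz step pair indices two at a time and so halves the exponent to $O_d((n/q)^{d/4-1/2})$; for a polynomial with $m$ monomials only $m$ coordinates of any relevant flattening are nonzero, so a single macro-fold of $f^{p}$ down to a quadratic costs only $\sqrt{m/q}$, independent of $d$, with the folded-polynomial formalism keeping the bookkeeping clean.

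The step I expect to be the main obstacle is the weak decoupling lemma itself: exhibiting an \emph{explicit} representation $M$ of $f^{p}$ for which both (a) $\lambda_{\max}(M)$ meets the telescoped $(n/q)^{d/2-1}$ bound and (b) the backward ``unfolding'' can be done greedily at exactly the matching loss. Ordinary decoupling cannot deliver (a) --- it inflates the variable count by a factor $d$ and thereby discards the $1/q$ savings --- so one must decouple ``weakly,'' pay for it with messier polynomial-valued coefficients, and then show the mess still rounds. Secondary difficulties are keeping the $2^{O(d)}$ polarization constants accumulated over the $\Theta(d)$ folds inside the $O_d(\cdot)$, and checking that the greedy best-slice choices in the rounding compose without an extra per-step loss from sign or averaging artifacts, so that the non-negative-coefficient and sparse cases genuinely gain the claimed factor rather than only heuristically.
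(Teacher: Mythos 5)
Your proposal correctly identifies the shape of the result and several of the genuine obstacles (ordinary polarization kills the $1/q$ savings; the gain must come from the $q$-fold symmetry of the allowed test vectors; a quadratic ``core'' must be peeled off to win the $-1$ in the exponent). But as written it is a plan, not a proof: the one lemma you yourself flag as ``the main obstacle'' --- an explicit representation $M$ of $f^{q/d}$ whose top eigenvalue telescopes to $(n/q)^{d/2-1}$ \emph{and} whose top eigenvector can be greedily unfolded at matching loss --- is exactly the content of the theorem, and nothing in the proposal supplies it. The paper's route is concretely different and worth contrasting. It never touches the top eigenvector of the big matrix. Instead it (i) decomposes $g=f^{q/d}$ as $\sum_\alpha x^{2\alpha}G_{2\alpha}(x)$ with $G_{2\alpha}$ multilinear, and builds a block-diagonal representation whose norm is at most $\max_\alpha \hssos{G_{2\alpha}}/|\orbit{\alpha}|$ up to a factor $q$; (ii) proves the matching lower bound $\ftwo{g}\ge 2^{-O(q)}\|G_{2\alpha}\|_2/|\orbit{\alpha}|$ --- this is the actual weak decoupling lemma, and it is probabilistic, not a Cauchy--Schwarz telescope: one evaluates $g$ at a random complex vector built from roots of unity and Bernoulli variables so that only the $x^{2\alpha}G_{2\alpha}$ term survives in expectation, then invokes Chebyshev's extremal polynomial inequality to pick the Bernoulli bias; (iii) bounds $\hssos{}/\ftwo{}$ for the multilinear pieces by Gershgorin (general), Perron--Frobenius on row sums (non-negative), or Frobenius norm (sparse); and (iv) runs the whole argument for folded polynomials, using that SoS is exact on powers of quadratics, to convert $(n/q)^{d/2}$ into $(n/q)^{d/2-1}$. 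The algorithm then just enumerates the $n^{O(q)}$ explicit candidate vectors $\one/\sqrt n + \sqrt{\alpha}/\sqrt{|\alpha|}+\gamma/\sqrt{|\gamma|}+\arg\max$ of a quadratic (plus roots-of-unity perturbations in the general case); no rounding of a high-dimensional eigenvector is ever performed, and indeed your proposed backward unfolding of $z$ into a rank-one $x^{\otimes q/2}$ is itself a nontrivial claim that would need its own proof.

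The concrete gaps to close, then, are: the probabilistic decoupling lower bound on $\ftwo{g}$ with the orbit factor $|\orbit{\alpha}|$ exactly cancelling the one in the spectral upper bound; the spectral estimates for the multilinear pieces; and the folded-polynomial versions of both (in the non-negative case this further requires a pseudoexpectation-based Perron--Frobenius argument, since block-Gershgorin is too lossy to win the extra $1/2$ in the exponent).
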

\begin{remark}
    Interestingly, our deterministic algorithms only involve computing
    the maximum eigenvectors of $n^{O(q)}$ different matrices in
    $\Re^{n\times n}$, and actually don't require computing
    $\hssos{f^{q/d}}^{d/q}$ (even though this quantity can also be
    computed in $n^{O(q)}$ time by the sum-of-squares SDP; see
    \secref{subsec:sos}). The quantity $\hssos{f^{q/d}}^{d/q}$ is
    only used in the analysis.
\end{remark}
\begin{remark}
    If ~$m=n^{\,\rho \cdot d}$ ~for~ $\rho < 1/3$, then for all $q \leq n^{1-\rho}$, 
    the $\sqrt{m/q}$-approximation for sparse polynomials is better than the $(n/q)^{d/2 - 1}$ 
    arbitrary polynomial approximation. 
\end{remark}
\begin{remark}
In cases where $\ftwo{f}=\fmax{f}$ (such as when $d$ is odd or $f$ has non-negative
coefficients), the above result holds whenever $q$ is even and divisible by $d$, instead of $2d$.
\end{remark}

A key technical ingredient en route establishing the above results is a
method to reduce the problem for arbitrary polynomials to a list of
\emph{multilinear} polynomial problems (over the same variable set). We
believe this to be of independent interest, and describe its context
and abstract its consequence (\thmref{intro:gen:multi}) next.

Let $M_g$ be a matrix representation of a degree-$q$ homogeneous
polynomial $g$, and let $K=(I,J)\in [n]^{q/2}\times [n]^{q/2}$ have
all distinct elements. Observe that there are $q!$ distinct entries of
$M_g$ including $K$ across which, one can arbitrarily assign values
and maintain the property of representing $g$, as long as the sum
across all $q!$ entries remains the same (specifically, this is the
set of all permutations of $K$). In general for $K'=(I',J')\in
[n]^{q/2}\times [n]^{q/2}$, we define the orbit of $K'$ denoted by
$\orbit{K'}$, as the set of permutations of $K'$, i.e. the number of
entries to which 'mass' from $M_g[I',J']$ can be moved while still
representing $g$.

As $q$ increases, the orbit sizes of the entries increase, and to show
better bounds on $\hssos{f^{q/d}}$, one must exploit these additional
"degrees of freedom" in representations of $f^{q/d}$. However, a big
obstacle is that the orbit sizes of different entries can range
anywhere from $1$ to $q!$,~ two extremal examples being $((1,\dots
1),(1,\dots 1))$ and $((1,\dots q/2),(q/2+1,\dots q))$. This makes it
hard to exploit the additional freedom afforded by growing
$q$. Observe that if $g$ were multilinear, all matrix entries
corresponding to non-zero coefficients have a span of $q!$ and indeed
it turns out to be easier to analyze the approximation factor in the
multilinear case as a function of $q$ since the representations of $g$
can be highly symmetrized. However, we are still faced with the
problem of $f^{q/d}$ being highly non-multilinear.  The natural
symmetrization strategies that work well for multilinear polynomials
fail on general polynomials, which motivates the following
result:
\begin{theorem}[Informal version of \thmref{sos:gen:mult}]
\thmlab{intro:gen:multi}
    For even $q$, let $g(x)$ be a degree-$q$ homogeneous polynomial. Then there exist multilinear
    polynomials $g_1(x), \dots , g_m(x)$ of degree at most $q$, such that 
    \[
        \frac{\hssos{g}}{\ftwo{g}} 
        ~\leq~ 
        2^{O(q)}\cdot 
        \max_{i\in [m]} \frac{\hssos{g_i}}{\ftwo{g_i}} 
    \]
\end{theorem}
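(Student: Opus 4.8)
The plan is to decompose $g$ into a bounded number of pieces, each of which is "multilinear up to a reversible operation," and to control how $\hssos{\cdot}$ and $\ftwo{\cdot}$ behave under that operation. First I would fix a matrix representation $M_g$ of $g$ achieving (or nearly achieving) the infimum defining $\hssos{g}$, so that $\hssos{g} = \lambda_{\max}(M_g)$ up to a negligible error. The key structural move is to classify the multi-indices $K = (I,J) \in [n]^{q/2}\times[n]^{q/2}$ according to the \emph{pattern of repetitions} in the combined $q$-tuple $I \cdot J$: for each such repetition pattern $P$ (a partition of $[q]$ recording which coordinate slots carry equal variable indices), there are at most $2^{O(q)}$ patterns, and every slot of $M_g$ lies in exactly one class. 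Summing the part of $M_g$ supported on class $P$ gives a matrix $M_g^{(P)}$ and a homogeneous polynomial $g_P$ of degree $q$ with $g = \sum_P g_P$, where the total number of classes is $2^{O(q)}$. By the triangle inequality for $\ftwo{\cdot}$ and a union/averaging bound for $\hssos{\cdot}$ (here one uses that $\hssos{\cdot}$ is subadditive on representations, i.e. $\hssos{g_1+g_2} \le \hssos{g_1}+\hssos{g_2}$, which follows by adding representations and $\lambda_{\max}(A+B)\le\lambda_{\max}(A)+\lambda_{\max}(B)$), it suffices to produce, for each fixed pattern $P$, a \emph{single} multilinear polynomial $g_i$ with $\hssos{g_P}/\ftwo{g_P} \le 2^{O(q)}\,\hssos{g_i}/\ftwo{g_i}$.

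Second, for a fixed pattern $P$, the polynomial $g_P$ is obtained from a multilinear polynomial by "collapsing" blocks of variables: whenever $P$ forces a block $B$ of $t = |B|$ slots to carry the same variable $x_j$, we are effectively substituting $x_j \mapsto x_j$ into $t$ distinct formal slots. I would make this reversible by \emph{splitting variables}: introduce, for the pattern $P$, a fresh multilinear polynomial $g_i$ on the \emph{same} variable set $x_1,\dots,x_n$ obtained by "un-collapsing" — spreading the mass of each repeated block across distinct variable slots indexed by the same underlying variables but with the multilinearity enforced, so that the natural diagonal restriction recovers $g_P$. The point is that passing between $g_P$ and $g_i$ changes $\hssos{\cdot}$ and $\ftwo{\cdot}$ only by multiplicative $2^{O(q)}$ factors: one direction is a specialization (restricting a representation of the multilinear $g_i$ to the diagonal block structure gives a valid representation of $g_P$, costing nothing in $\hssos{\cdot}$), and the other uses a symmetrization/decoupling-type inequality (of the flavor already invoked in the paper) to go from the structured $g_P$ back up to $g_i$ at the cost of a $2^{O(q)}$ factor, together with the elementary fact that for unit $x$ the "un-collapsed" evaluation and the original evaluation differ by at most such a factor. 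Chaining the two directions yields $\hssos{g_P}/\ftwo{g_P} \le 2^{O(q)} \cdot \hssos{g_i}/\ftwo{g_i}$, and taking the maximum over the $2^{O(q)}$ patterns $P$ (and absorbing that count into the $2^{O(q)}$) completes the argument.

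I expect the main obstacle to be the second step: making precise the "un-collapsing" map from $g_P$ to a genuinely \emph{multilinear} $g_i$ on the same $n$ variables while controlling \emph{both} $\hssos{\cdot}$ and $\ftwo{\cdot}$ in \emph{both} directions by $2^{O(q)}$ factors. The subtlety is exactly the one flagged in the text — orbit sizes of entries of $M_g$ range from $1$ to $q!$, and a naive symmetrization that works for multilinear polynomials can destroy mass or blow up the spectral norm for the heavily-repeated patterns; the repetition pattern $P$ is precisely the bookkeeping device that lets one quarantine the bad (small-orbit) entries into separate pieces so that within each piece the orbit structure is homogeneous and the symmetrization is lossless up to $2^{O(q)}$. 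Handling the $\ftwo{\cdot}$ (as opposed to $\fmax{\cdot}$) version cleanly — i.e. making sure absolute values and sign cancellations are respected when we split into the $g_P$'s and when we un-collapse — is the other place where care is needed, and is presumably where the formal \thmref{sos:gen:mult} spends most of its effort.
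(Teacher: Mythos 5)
Your high-level shape (decompose $g$ into pieces, reduce each piece to a multilinear polynomial, control $\hssos{\cdot}$ and $\ftwo{\cdot}$ on both sides) matches the paper's, but there are two genuine gaps, and the second is the technical heart of the theorem. First, your reduction to the pieces $g_P$ already fails as stated: the triangle inequality for $\ftwo{\cdot}$ gives $\ftwo{g} \le \sum_P \ftwo{g_P}$, which is the wrong direction. To conclude $\hssos{g}/\ftwo{g} \le 2^{O(q)}\max_P \hssos{g_P}/\ftwo{g_P}$ you need the \emph{lower} bound $\ftwo{g} \ge 2^{-O(q)}\ftwo{g_{P}}$ for each piece, i.e.\ you must rule out that the other pieces cancel the contribution of $g_P$ at its optimizer. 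This extraction step is precisely the ``weak decoupling'' lemma (\lemref{gen:mult:2} / \lemref{weak:decoupling}), proved via random roots of unity, Bernoulli variables, and Chebyshev's extremal polynomial inequality; it cannot be waved in as ``a decoupling-type inequality of the flavor already invoked,'' because the standard polarization lemmas (\lemref{decoupling}) blow the variable count up to $nq$, which the paper explicitly cannot afford. Relatedly, your ``un-collapsing'' of $g_P$ to a multilinear polynomial \emph{on the same $n$ variables} is not well defined: a monomial like $x_1^3x_2$ has no multilinear avatar on the same variable set, which is why the paper instead factors out the maximal square part, writing $g = \sum_{\alpha} x^{2\alpha}\cdot G_{2\alpha}(x)$ with $G_{2\alpha}$ multilinear of degree $q-2|\alpha|$, and takes the $g_i$ to be these $G_{2\alpha}$.

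Second, the bookkeeping that makes the final loss $2^{O(q)}$ rather than $q^{\Omega(q)}$ is absent. The number of repetition patterns of $[q]$ is the Bell number, which is $2^{\Theta(q\log q)}$, so a triangle inequality over all patterns already overshoots the budget. The paper instead applies the triangle inequality only over the $q/2+1$ values of $|\alpha|$; for fixed $|\alpha|$ the representation is block-diagonal (spectral norm is a max over blocks, not a sum), and — crucially — the optimal representation of $G_{2\alpha}$ is spread over the $|\orbit{\alpha}|$ diagonal blocks, giving $\hssos{g} \le (1+q/2)\max_\alpha \hssos{G_{2\alpha}}/|\orbit{\alpha}|$ (\lemref{gen:mult:sp}). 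The matching factor $1/|\orbit{\alpha}|$ reappears in the lower bound $\ftwo{g} \ge \ftwo{G_{2\alpha}}/(2^{O(q)}|\orbit{\alpha}|)$ and cancels in the ratio. Since $|\orbit{\alpha}|$ ranges up to $(q/2)!$, omitting this cancellation — as your proposal does — loses a $q^{\Omega(q)}$ factor even if everything else were repaired.
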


By combining \thmref{intro:gen:multi} (or an appropriate
generalization) with the appropriate analysis of the multilinear
polynomials induced by $f^{q/d}$, we obtain the aforementioned results
for various classes of polynomials. 

\paragraph{Weak decoupling lemmas.} 
A common approach for reducing to the multilinear case is through more general 
``decoupling'' or ``polarization'' lemmas (see \eg \lemref{decoupling}), which also have variety of
applications in functional analysis and probability \cite{de2012decoupling}. However, such methods
increase the number of variables to $nq$, which would completely nullify any advantage obtained from
the increased degrees of freedom.

Our proof of \thmref{intro:gen:multi} (and its generalizations)  requires only a decoupling with
somewhat weaker properties than given by the above lemmas. However, we need it to be very efficient
in the number of variables.
In analogy with ``weak regularity lemmas'' in combinatorics, which trade structural control for
complexity of the approximating object, 
we call these results  ``weak decoupling lemmas'' (see \secref{weak-decoupling} and
\lemref{weak:decoupling}). They provide a milder form of decoupling but only increase the number of
variables to $2n$ (independently of $q$).

We believe these could be more generally applicable; in
particular to other constrained settings of polynomial optimization as
well as in the design of sub-exponential algorithms. Our techniques might also be able to yield a
full tradeoff between the number of variables and quality of decoupling.

\subsection{Connection to sum-of-squares hierarchy}
\seclab{subsec:sos}
The \emph{Sum of Squares Hierarchy} (SoS) is one of the canonical and well-studied approaches to
attack polynomial optimization problems. Algorithms based on this framework are parametrized by
the degree or level $q$ of the SoS relaxation.
 For the case of optimization of a homogenous polynomial $h$ of even degree $q$ (with some matrix
 representation $M_h$) over the unit sphere, the level $q$ SoS relaxes the non-convex program of
 maximizing $(x^{\otimes q/2})^T \cdot M_h \cdot x^{\otimes q/2} = h(x)$ over $x \in \R^n$ with
 $\|x\|_2=1$, to the semidefinite program of maximizing $\Tr{M^T_h X}$ over all positive
 semidefinite matrices $X \in \R^{[n]^{q/2} \times [n]^{q/2}}$ with $\Tr{X}=1$. (This is a
 relaxation because $X = x^{\otimes q/2}  (x^{\otimes q/2})^T$ is psd with $\Tr{X} = \|x\|_2^q$.)

It is well known (see for instance \cite{Laurent09}) that the quantity
$\hssos{h}$ from \eqref{eq:def-of-Lambda} is the dual value of this
SoS relaxation. Further, strong duality holds for the case of
optimization on the sphere and therefore $\hssos{h}$ equals the
optimum of the SoS SDP and can be computed in time $n^{O(q)}$. (See
\secref{sos:prelims} for more detailed SoS preliminaries.) In light of
this, our results from \thmref{results-list} can also be viewed
as a convergence analysis of the SoS hierarchy for optimization over
the sphere, as a function of the number of levels $q$. 
Such results are of significant interest in the optimization community, and have been studied for
example in \cite{DW12, dKMS14} (see \secref{related} for a comparison of results). 

\medskip \noindent \textbf{SoS Lower Bounds.} While the approximation
factors in our upper bounds of \thmref{results-list} are modest, there
is evidence to suggest that this is inherent.


When $h$ is a degree-$q$ polynomial with \emph{random} i.i.d $\pm 1$
coefficients, it was shown in \cite{BGL16} that there is a 
constant $c$ such that w.h.p.  
$\Bigl( \frac{n}{q^{c+o(1)}}
\Bigr)^{q/4} \le \hssos{h} \le \Bigl( \frac{n}{q^{c-o(1)}}
\Bigr)^{q/4}$.  On the other hand, $\|h\|_2 \le O(\sqrt{nq \log q})$
w.h.p. Thus the ratio between $\hssos{h}$ and $\|h\|_2$ can be as
large as $\Omega_q(n^{q/4-1/2})$.

Hopkins et al.~\cite{HKP16} recently proved that
degree-$d$ polynomials with random coefficients achieve a degree-$q$
SoS gap of roughly $(n/q^{O(1)})^{d/4 - 1/2}$ (provided
$q>n^{\epsilon}$ for some constant $\epsilon>0$). This is also a lower
bound on the ratio between $\hssos{f^{q/d}}^{d/q}$ and $\|f\|_2$ for
the case of \emph{arbitrary} polynomials.  Note that this lower bound
is roughly square root of our upper bound from \thmref{results-list}.
Curiously, our upper bound for the case of polynomials with
non-negative coefficients essentially matches this lower bound for
random polynomials.

\medskip \noindent \textbf{Non-Negative Coefficient Polynomials.}
In this paper, we give a new lower bound construction for the case of
non-negative polynomials,
To the best of out knowledge, the only previous lower
bound for this problem, was known through Nesterov's reduction
\cite{deKlerk08}, which only rules out a PTAS. 
We give the following polynomially large lower bound. 
The gap applies for random polynomials associated with a novel distribution 
of $4$-uniform hypergraphs, and is analyzed using subgraph counts in a
random graph.
\begin{theorem}
\thmlab{nnc-lowerbound}
There exists an $n$ variate degree-4 homogeneous polynomial $f$ with non-negative coefficients such
that
\[
\ftwo{f} ~\le~ (\log n)^{O(1)}
\qquad \text{and} \qquad
\hssos{f} ~\ge~ \tilde{\Omega}(n^{1/6}) \mper
\]
\end{theorem}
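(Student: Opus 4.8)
The plan is to take $f$ to be the ``hyperedge polynomial'' of a carefully designed random $4$-uniform hypergraph: its sparsity and quasirandomness will keep $\ftwo{f}=\fmax{f}$ small, while its lack of algebraic structure will force \emph{every} matrix representation of $f$ to have large spectral norm, i.e. make $\hssos{f}$ large. Fix a small graph $F$ on $4$ vertices (a $4$-cycle, or two disjoint edges) and a density $p=p(n)$, draw $G\sim G(n,p)$, and let $H$ be the multiset of $4$-subsets $e\subseteq[n]$ on which $G$ contains an (injective, hence non-degenerate) copy of $F$, with multiplicity $w_e$ equal to the number of such copies; set $f(x)=\sum_{e\in H}w_e\,x_e$ where $x_e=\prod_{i\in e}x_i$. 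This $f$ is multilinear and has non-negative coefficients, so $\ftwo{f}=\fmax{f}$. We will choose $p$ so that $\Ex{\sum_e w_e}=\Theta\big(n^4 p^{|E(F)|}\big)$ is of order $n^2$ up to polylog factors; this is essentially forced, since $f(\mathbf{1}/\sqrt n)=\big(\sum_e w_e\big)/n^2$ and hence $\ftwo{f}\ge\big(\sum_e w_e\big)/n^2$.

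For the upper bound $\ftwo{f}=\fmax{f}\le(\log n)^{O(1)}$, non-negativity of the coefficients places the maximizer in the non-negative orthant, so it suffices to show that with high probability $\sum_{e\in H}w_e\,x_e\le(\log n)^{O(1)}$ for every non-negative unit vector $x$. The structural input is that in $G(n,p)$, for the relevant $p$, copies of $F$ are \emph{spread out}: with high probability, for every vertex set $S$ the number of copies of $F$ contained in $S$ is at most $(|S|/n)^{|V(F)|}\cdot\big(\sum_e w_e\big)\cdot(\log n)^{O(1)}$, which follows from concentration of subgraph counts in random graphs (Chernoff plus a union bound over $S$, or a Kim--Vu-type polynomial concentration inequality). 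Feeding this ``$F$-spread'' property into a dyadic bucketing of the coordinates of $x$ by magnitude -- bounding the contribution of each bucket via the spread property applied to the bucket's support -- yields the claimed polylog bound.

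For the lower bound $\hssos{f}\ge\tilde\Omega(n^{1/6})$ we exhibit a feasible solution to the dual of the degree-$4$ SoS SDP. By the duality recalled in \secref{subsec:sos}, it suffices to find a positive semidefinite $X\in\Re^{[n]^2\times[n]^2}$ with $\Tr{X}=1$ whose entry $X[(a,b),(c,d)]$ depends only on the multiset $\{a,b,c,d\}$ (so that $\iprod{X}{M_f}$ is the same for every representation $M_f$ of $f$, and equals a valid pseudo-expectation $\PEx{f}$), with $\iprod{X}{M_f}$ as large as possible. Take $X=X_0+\eps\,Y$, where $X_0$ is the moment matrix of the uniform distribution on the sphere -- it is positive semidefinite, multiset-symmetric, has $\Tr{X_0}=1$, annihilates every squarefree monomial (so $\iprod{X_0}{M_f}=0$), and its least eigenvalue on the symmetric subspace is $\Theta(1/n^2)$ -- and $Y$ is built from $G$: start from the multiset-symmetric matrix whose $(a,b),(c,d)$ entry is the number of copies of $F$ on $\{a,b,c,d\}$, i.e. $w_{\{a,b,c,d\}}$, which agrees with the canonical representation of $f$ up to the orbit-size factor so that $\iprod{Y}{M_f}\asymp\sum_e w_e^2$, and then subtract from it its ``structured part'' -- the conditional expectation of that matrix given the degree and codegree sequences of $G$ -- which is a low-rank, multiset-symmetric matrix whose Frobenius mass is $o(\|Y\|_F)$ precisely because $H$ is sparse ($|H|\ll n^4$), so the subtraction changes neither $\iprod{Y}{M_f}$ nor $\Tr$ appreciably. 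Setting $\eps\asymp 1/(n^2\,\|Y\|)$ keeps $X\succeq0$, whence $\hssos{f}\ge\iprod{X}{M_f}=\eps\,\iprod{Y}{M_f}\asymp\big(\sum_e w_e^2\big)\big/\big(n^2\,\|Y\|\big)$; bounding $\|Y\|\le\tilde O(\sqrt\Delta)$ with $\Delta$ the relevant maximum pair-codegree statistic of $H$ inside $G$ (via the trace/moment method on the random graph), and optimizing the choice of $F$ and of $p$ subject to $\sum_e w_e\le n^2(\log n)^{O(1)}$, makes this $\tilde\Omega(n^{1/6})$.

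The main obstacle is the positive-semidefiniteness verification in the lower bound: one must pin down exactly which low-rank ``drift'' matrix to subtract from the count matrix $Y$ so that the residual is genuinely noise-like, with spectral norm only $\tilde O(\sqrt\Delta)$ -- this demands sharp control of all the relevant higher-order subgraph-count fluctuations of $G(n,p)$, well beyond first moments, and is the technically heaviest part -- while the subtracted matrix simultaneously has tiny Frobenius mass (so $\iprod{Y}{M_f}$ is essentially unchanged) and remains multiset-symmetric (so $X$ stays a legal pseudo-moment matrix). Running alongside is the arithmetic of balancing the two sides: the $\fmax$ argument wants $H$ as sparse and spread as possible, the lower-bound argument wants $\sum_e w_e^2$ large and $\Delta$ small, and only a careful choice of $F$ and $p=p(n)$ threads both constraints -- and this is what pins the exponent at $1/6$.
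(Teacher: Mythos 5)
Your high-level plan (random sparse $4$-uniform structure from $G(n,p)$; dyadic bucketing for the upper bound; a perturbed-identity dual certificate plus the trace method for the lower bound) matches the paper's, but both halves have a concrete gap as written.

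\textbf{Upper bound.} The paper reduces $\ftwo{f}$ to bounding $\abs{\cliques_G(S_1,\dots,S_4)}/\sqrt{|S_1||S_2||S_3||S_4|}$ over \emph{four} sets of possibly very different sizes, and your proposed spread lemma --- at most $\prod_k(|S_k|/n)$ times the total count, proved by ``Chernoff plus a union bound over $S$, or Kim--Vu'' --- is false in exactly the asymmetric regime that matters. Take $S_1=\{i_1\}$, $S_2=\{i_2\}$ with $\{i_1,i_2\}\in E$ and $S_3=S_4=[n]$: the number of shattered $4$-cliques is $\Theta((np^2)^2p)=\Theta(n^2p^5)$, while your bound would assert $O(n^2p^6\log^{O(1)}n)$, off by $1/p=n^{1/3}$; and no union bound can exclude this since such an edge exists with high probability. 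The correct statement (proved in the paper via neighborhood/codegree counts and a separate bound on shattered triangles) is $\abs{\cliques_G(S_1,\dots,S_4)}\lsim\sqrt{|S_1||S_2||S_3||S_4|}\cdot n^2p^6$, which is weaker than your product form but still suffices for the bucketing; the paper explicitly notes that the symmetric-sizes tail-bound strategy ``inherently does not work when the set sizes become significantly different,'' so this step needs the structural counting argument, not concentration of subgraph counts.

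\textbf{Lower bound.} Your certificate $X=X_0+\eps Y$ with $X_0$ the uniform-sphere moment matrix cannot give a nontrivial bound, independent of how cleverly you center $Y$. Since $\lambda_{\min}$ of $X_0$ on the symmetric subspace is $\Theta(1/n^2)$, feasibility forces $\eps\lesssim 1/(n^2\norm{2}{Y})$, and your own objective formula gives $\hssos{f}\gtrsim\norm{F}{Y}^2/(n^2\norm{2}{Y})$. Here $\norm{F}{Y}^2\approx\abs{\cliques}\approx n^4p^6=n^2$, and $Y$ is supported on the $\Theta(n^2p)\times\Theta(n^2p)$ edge-indexed submatrix, so $\norm{2}{Y}\geq\norm{F}{Y}/\sqrt{\mathrm{rank}(Y)}\gtrsim n/\sqrt{n^2p}=n^{1/6}$ even after an ideal centering; the certificate value is then at most $n^{-1/6}$, which is vacuous. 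The paper's fix is precisely to make the regularizer live only on the edge-indexed block: it takes $\sfM\propto\sfA+\abs{\lambda_{\min}}\cdot(I_{E'}+Q_{E'})$, so the normalizing trace is $4m\abs{\lambda_{\min}}=\tilde O(n^2p\cdot n^{3/2}p^4)$ rather than $n^2\cdot\norm{2}{Y}$, and the value becomes $\abs{\cliques}/(m\abs{\lambda_{\min}})=\tilde\Omega(n^{1/2}p)=\tilde\Omega(n^{1/6})$ (the matrix $Q_{E'}$ being needed to restore SoS-symmetry of the diagonal correction). Your trace-method control of the noise part of $\sfA_{E'}$ is the right remaining ingredient, but without relocating the PSD slack onto the edge block the whole construction collapses.
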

For larger degree $t$, we prove an $n^{\Omega(t)}$ gap between $\ftwo{h}$ and a 
quantity $\fsp{h}$ that is closely related to $\hssos{h}$. Specifically, $\fsp{h}$ is defined by 
replacing the largest eigenvalue of matrix representations $M_h$ of $h$ in 
\eqref{eq:def-of-Lambda} by the \emph{spectral norm} $\norm{2}{M_h}$. (See \figref{fsp} for a 
formal definition.) Note
that $\fsp{h} \ge \max \{ \hssos{h}, \hssos{-h} \}$.  Like
$\hssos{\cdot}$,
$\fsp{\cdot}$ suggests a natural hierarchy of relaxations for the problem of approximating $\ftwo{h}$, 
obtained by computing $\|h^{q/t}\|_{sp}^{t/q}$ as the $q$-th level of the hierarchy. 

We prove a lower bound of $n^{q/24}/\inparen{q\cdot \log n}^{O(q)}$ on $\|f^{q/4}\|_{sp}$ where 
$f$ is as in \thmref{nnc-lowerbound}.  This not only gives $\fsp{\cdot}$ gaps for the degree-$q$  optimization problem on polynomials with non-negative coefficients, but also an 
$n^{1/6}/(q \log n)^{O(1)}$ gap on higher levels of the aforementioned $\fsp{\cdot}$ hierarchy for optimizing degree-$4$ polynomials with non-negative coefficients. Formally we show: 
\begin{theorem}
\thmlab{intro:fsplb}
    Let $g:=f^{q/4}$ where $f$ is the degree-$4$ polynomial as in \thmref{nnc-lowerbound}.  Then
    \[
        \frac{\fsp{g}}{\ftwo{g}} 
        ~\geq~ 
        \frac{n^{q/24}}{(q \log n)^{O(q)}}  \ .
    \]
\end{theorem}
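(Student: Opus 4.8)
The plan is to bound the numerator and denominator of $\fsp{g}/\ftwo{g}$ separately (here $q$ is a multiple of $4$, so $g=f^{q/4}$ is an honest polynomial). The denominator is immediate: since $f$ has non-negative coefficients, $\ftwo{f}=\fmax{f}$ is attained at a non-negative unit vector, and for any unit $x$ one has $|f(x)|\le f(|x|)\le\fmax{f}$ (entrywise absolute value), hence $|f(x)^{q/4}|\le\fmax{f}^{q/4}$; plugging in the maximizer of $f$ shows $\ftwo{g}=\ftwo{f}^{q/4}$, which by \thmref{nnc-lowerbound} is at most $(\log n)^{O(q)}$. All the work is thus in showing $\fsp{g}\ge n^{q/24}/(q\log n)^{O(q)}$, and for this I would \emph{lift} the level-$4$ dual certificate behind \thmref{nnc-lowerbound} to a level-$q$ certificate for the $\fsp{\cdot}$ relaxation.

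First I would record the dual description of $\fsp{g}=\inf_M\norm{2}{M}$ (infimum over matrix representations $M$ of $g$) that makes lifting meaningful. Let $V_q$ be the linear span of the rank-one matrices $x^{\otimes q/2}(x^{\otimes q/2})^{T}$, $x\in\RR^n$; a symmetric matrix $Z$ lies in $V_q$ if and only if $\langle Z,M\rangle$ is independent of the choice of representation $M$ of a degree-$q$ homogeneous polynomial (two representations differ by a matrix orthogonal to $V_q$). Consequently, for any $Z\in V_q$ and any representation $M$ of $g$,
\[
|\langle Z, M\rangle|~\le~\|Z\|_{*}\cdot\norm{2}{M}\,,
\qquad\text{so}\qquad
\fsp{g}~=~\inf_{M}\norm{2}{M}~\ge~\frac{|\langle Z,M\rangle|}{\|Z\|_{*}}\,,
\]
the inner product on the right being the common (representation-independent) value. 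It therefore suffices to produce one $Z\in V_q$ with $|\langle Z,M\rangle|/\|Z\|_{*}\ge n^{q/24}/(q\log n)^{O(q)}$; note only this weak-duality direction is needed.

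The lift itself: the proof of \thmref{nnc-lowerbound} supplies a level-$4$ solution, i.e.\ a positive semidefinite $M_0\in V_4$ with $\Tr{M_0}=1$ for which $\langle M_0,A_f\rangle\ge\tilde{\Omega}(n^{1/6})$ for a representation $A_f$ of $f$, and moreover $M_0$ satisfies the ``mild technical condition'' of the lifting method (which holds by construction, as $M_0$ is assembled from permutation-symmetric hyperedge/subgraph statistics). Starting from $M_0$ I would form the tensor power $M_0^{\otimes q/4}$, symmetrize it across the $q/2$ tensor coordinates, and project onto $V_q$; call the result $Z$. Taking a representation of $g$ inside $V_q$ (e.g.\ $\Pi_{V_q}(A_f^{\otimes q/4})$, which is again a representation of $g=f^{q/4}$), the projection does not affect the inner product, so $\langle Z,M\rangle$ equals $\langle M_0^{\otimes q/4},A_f^{\otimes q/4}\rangle$ minus the correction $\langle(I-\Pi_{V_q})M_0^{\otimes q/4},(I-\Pi_{V_q})A_f^{\otimes q/4}\rangle$; the technical condition lets one show this is at least $\langle M_0,A_f\rangle^{q/4}/(q\log n)^{O(q)}\ge\tilde{\Omega}(n^{q/24})/(q\log n)^{O(q)}$, and likewise $\|Z\|_{*}\le(q\log n)^{O(q)}$. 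Plugging into the display gives $\fsp{g}\ge n^{q/24}/(q\log n)^{O(q)}$, and dividing by $\ftwo{g}\le(\log n)^{O(q)}$ yields the theorem.

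I expect the main obstacle to be exactly the pair of estimates that the technical condition must buy: that projecting onto $V_q$ neither destroys most of the inner product with a representation of $g$ nor inflates the nuclear norm. Tensoring and symmetrizing are harmless — they keep the matrix positive semidefinite with trace $1$ — but the projection onto $V_q$ is essential for validity (so that $\langle Z,M\rangle$ is insensitive to the choice of representation, which is precisely what lets it lower-bound an infimum over representations), and a generic projection onto $V_q$ can blow up the nuclear norm by as much as $\sqrt{\dim V_q}\approx n^{q/2}$, which would be far too lossy. Taming this down to $(q\log n)^{O(q)}$, and verifying that the hypergraph construction of \thmref{nnc-lowerbound} supplies an $M_0$ for which it goes through, is the delicate part. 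Finally, the projection step also destroys positive semidefiniteness, so the lifted $Z$ is only nuclear-norm bounded rather than PSD; this is why the argument certifies a gap for the (weaker) $\fsp{\cdot}$ hierarchy, whose dual admits indefinite $Z$, and not for $\hssos{\cdot}$.
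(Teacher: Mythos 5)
Your high-level architecture matches the paper's: bound $\ftwo{g}=\ftwo{f}^{q/4}\le(\log n)^{O(q)}$ (correct, and identical to what the paper uses), and certify $\fsp{g}\ge n^{q/24}/(q\log n)^{O(q)}$ by symmetrizing a tensor power of a degree-$4$ matrix and invoking weak duality for the nuclear-norm program. But the proposal stops exactly where the proof actually lives. The entire technical content of the paper's argument is the bound $\norm{S_1}{(M^{\otimes q/4})^{S}}=2^{O(q)}$ for suitably normalized $M$ (\thmref{lift:stable:sp:lb}), proved via the ``Tetris theorem'' (\thmref{tetris}): the symmetrization $(M^{\otimes q/4})^{S}$ decomposes into $2^{O(q)}$ terms of the form $P\cdot(M_A^{\otimes a}\otimes M_B^{\otimes b}\otimes M^{\otimes c}\otimes M_D^{\otimes d})\cdot P$ built from rectangular reshapings of $M$, whence its nuclear norm is controlled by $\norm{S_1}{M}$ and $\norm{S_1}{M_{3,1}}$. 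You explicitly flag this bound as ``the delicate part'' and do not supply any argument for it; without it there is no proof. Relatedly, you mischaracterize the ``mild technical condition'': it is not an automatic permutation-symmetry property of the hypergraph construction (SoS-symmetry is needed anyway just for feasibility at level $4$), but the quantitative requirement $\norm{S_1}{M_{1,3}}\le 1$ after normalization --- a bound on the $n\times n^3$ reshaping that must be verified for the specific random construction. The paper does this via $\norm{S_1}{M_{1,3}}\le\sqrt{n}\,\norm{F}{M}=\tilde{O}(n^{3/2})$ together with $\norm{S_1}{M}=\tilde{O}(n^{11/6})$, the latter coming from the eigenvalue estimates of \lemref{eigenvalue}; this is exactly where the exponent $q/24$ comes from ($n^{q/2-11q/24}$).

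A secondary divergence: the paper does not lift the PSD dual certificate for $\hssos{f}$; it lifts the SoS-symmetric representation $M_f$ of $f$ itself and applies \lemref{simple:sp:lb}, so the certificate is $M_g/\norm{S_1}{M_g}$ with $M_g=(M_f^{\otimes q/4})^{S}$ and the objective value is simply $\norm{F}{M_g}^2\ge\tilde{\Omega}(n^{q/2})/q^{O(q)}$ by clique counting --- no inner-product correction arises. Your choice of lifting the level-$4$ solution $M_0$ forces you to compare $\iprod{(M_0^{\otimes q/4})^{S}}{\sfA^{\otimes q/4}}$ with $\iprod{M_0}{\sfA}^{q/4}$. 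This can be salvaged here, but the mechanism is that $M_0$ and $\sfA$ are entrywise non-negative, so every permuted term in the symmetrization contributes non-negatively and the identity class alone already gives $\iprod{M_0}{\sfA}^{q/4}/q^{O(q)}$; it is not the ``technical condition'' you cite, and it would fail for a general signed level-$4$ certificate.
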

Our lower bound on $\|f^{q/4}\|_{sp}$  is based on a general tool that allows one to ``lift''
level-$4$ $\fsp{\cdot}$ gaps, that meet one additional condition, to higher levels.  While we derive
final results only for the weaker relaxation $\fsp{\cdot}$, 
the underlying structural result  can be used to lift SoS lower 
bounds (i.e. gaps for $\hssos{\cdot}$) as well, provided the SoS solution matrix $X$ satisfies
PSD-ness of two other matrices of appropriately related shapes to $X$ (\corref{lift:stable:sos:lb})
--- this inspired us to name our tool ``Tetris theorem.''
Recently, the insightful pseudo-calibration approach \cite{BHKKMP16} has provided a recipe to give
higher level SoS lower bounds for certain \emph{average-case} problems.  We believe our lifting
result might similarly be useful in the context of \emph{worst-case} problems, where in order to get
higher degree lower bounds, it suffices to give  lower bounds for constant degree SoS with some
additional structural properties.

\subsection{Related Previous and Recent Works}
\seclab{related}
Polynomial optimization is a vast area with several previous
results. Below, we collect the results most relevant for comparison
with the ones in this paper, grouped by the class of
polynomials. Please see the excellent monographs \cite{Laurent09,
  Lasserre09} for a survey.

\medskip \noindent \textbf{Arbitrary Polynomials.}
For general homogeneous polynomials of degree-$d$, an
$O_d\inparen{n^{d/2-1}}$ approximation was given by He \etal
\cite{HLZ10}, which was improved to $O_d\inparen{(n/\log
  n)^{d/2-1}}$ by So \cite{So11}.
The convergence of SDP hierarchies for polynomial optimization was
analyzed by Doherty and Wehner \cite{DW12}. However, their result only
applies to relaxations given by $\Omega(n)$ levels of the SoS
hierarchy (Theorem 7.1 in \cite{DW12}).
Thus, our results can be seen an giving an interpolation between the
polynomial time algorithms obtained by \cite{HLZ10,So11} and the
exponential time algorithms given by $\Omega(n)$ levels of SoS,
although the bounds obtained by \cite{DW12} are tighter (by a factor
of $2^{O(d)}$) for $q = \Omega(n)$ levels.

For the case of arbitrary polynomials, we believe a tradeoff between
running time and approximation quality similar to ours can also be
obtained by considering the tradeoffs for the results of Brieden \etal
\cite{BGKKLS01} used by So \cite{So11}. However, to the best of our
knowledge, this is not published.  In particular, So uses the
techniques of Khot and Naor \cite{KN08} to reduce degree-$d$
polynomial optimization to $d-2$ instances of the problem of
optimizing the $\ell_2$ diameter of a convex body.  This is solved by
\cite{BGKKLS01}, who give an $O((n/k)^{1/2})$ approximation in time
$2^k \cdot n^{O(1)}$. We believe this can be combined with proof of
So, to yield a $O_d\pth{(n/q)^{d/2-1}}$ approximation in time $2^{q}$.
We note here that the method of Khot and Naor~\cite{KN08} cannot be
improved further (up to polylog) for the case $d = 3$ 
(see~\appref{oracle_lower_bound}).
Our results for the case of arbitrary polynomials show that similar
bounds can also be obtained by a very generic algorithm given by the
SoS hierarchy. Moreover, the general techniques developed here are
versatile and demonstrably applicable to various other cases (like
polynomials with non-negative coefficients, sparse polynomials,
worst-case sparse PCA) where no alternate proofs are available. The
techniques of \cite{KN08,So11} are oblivious to the structure in the
polynomials and it appears to be unlikely that similar results can be
obtained by using diameter estimation techniques.

\medskip \noindent \textbf{Polynomials with Non-negative
  Coefficients.}  The case of polynomials with non-negative
coefficients was considered by Barak, Kelner, and Steurer~\cite{BKS14}
who  proved that the relaxation obtained by $\Omega(d^3 \cdot \log n /
\eps^2)$ levels of the SoS hierarchy provides an $\eps \cdot
\|f\|_{BKS}$ additive approximation to the quantity $\ftwo{f}$. Here, the parameter we denote by
$\norm{BKS}{f}$ corresponds to a relaxation for $\ftwo{f}$ that is weaker than the one given by
$\fsp{f}$.\footnote{Specifically, $\norm{BKS}{f}$ minimizes the spectral norm over a smaller set of
  matrix representations of $f$ than $\fsp{f}$ which allows all matrix representations.}
Their results can be phrased as showing that a relaxation obtained by
$q$ levels of the SoS hierarchy gives an approximation ratio of
\[
    1 + \pth{\frac{d^3 \cdot \log n}{q}}^{1/2} \cdot \frac{\|f\|_{BKS}}{\ftwo{f}} \mper
\]
Motivated by connections to quantum information theory, they were
interested in the special case where $\|f\|_{BKS}/\ftwo{f}$ is bounded
by a constant. However, this result does not imply strong
multiplicative approximations outside of this special case since in
general $\|f\|_{BKS}$ and $\ftwo{f}$ can be far apart. In particular,
we are able to establish that there exist polynomials $f$ with
non-neg. coefficients such that $\|f\|_{BKS}/\ftwo{f}\geq
n^{d/24}$. Moreover we conjecture that the worst-case gap between
$\|f\|_{BKS}$ and $\ftwo{f}$ for polynomials with
non-neg. coefficients is as large as
$\widetilde{\Omega}_d((n/d)^{d/4-1/2})$ (note that the conjectured
$(n/d)^{d/4-1/2}$ gap for non-negative coefficient polynomials is
realizable using arbitrary polynomials, i.e. it was established in
\cite{BGL16} that polynomials with i.i.d. $\pm 1$ coefficients achieve
this gap w.h.p.).

Our results show that $q$ levels of SOS gives an $(n/q)^{d/4-1/2}$
approximation to $\ftwo{f}$ which has a better dependence on $q$ and
consequently, converges to a constant factor approximation after
$\Omega(n)$ levels.


\medskip \noindent \textbf{$2$-to-$4$ norm.}  It was proved in
\cite{BKS14} that for any matrix $A$, $q$ levels of the
SoS hierarchy approximates $\|A\|^4_{2\rightarrow 4} =
\ftwo{\|Ax\|_4^4}$ (i.e. the fourth power of the $2$-to-$4$-norm)
within a factor of
\[
    1 + \pth{\frac{\log n}{q}}^{1/2} \cdot 
    \frac{\|A\|^2_{2\rightarrow 2}\|A\|^2_{2\rightarrow \infty}}{\|A\|^4_{2\rightarrow 4}} \mper
\]
Brandao and Harrow \cite{BH15} also gave a nets based algorithm with
runtime $2^q$ that achieves the same approximation as above. Here
again, the cases of interest were those matrices for which
$\|A\|^2_{2\rightarrow 2}\|A\|^2_{2\rightarrow \infty}$ and
$\|A\|^4_{2\rightarrow 4}$ are at most constant apart.

We would like to bring attention to an open problem in this line of
work.  It is not hard to show that for an $m\times n$ matrix $A$ with
i.i.d.  Gaussian entries, $\|A\|^2_{2\rightarrow 2} = \Theta(m + n)$,
$\|A\|^2_{2\rightarrow \infty} = \Theta(n)$, and
$\|A\|^2_{2\rightarrow 4} = \Theta(m+n^2)$ which implies the worst
case approximation factor achieved above is $\Omega(n/\sqrt{q})$ when
we take $m=\Omega(n^2)$.

Our result for arbitrary polynomials of degree-$4$, achieves an
approximation factor of $O(n/q)$ after $q$ levels of SoS which implies
that the current best known approximation $2$-to-$4$ norm is oblivious
to the structure of the $2$-to-$4$ polynomial and seems to suggest
that this problem can be better understood for arbitrary tall
matrices. For instance, can one get a $\sqrt{m}/q$ approximation for
$(m\times n)$ matrices (note that \cite{BH15} already implies a
$\sqrt{m/q}$-approximation for all $m$, and our result implies a
$\sqrt{m}/q$-approximation when $m=\Omega(n^2)$).

\medskip \noindent \textbf{Random Polynomials.}  For the case when
$f$ is a degree-$d$ homogeneous polynomial with i.i.d. random $\pm 1$
coefficients \cite{BGL16,RRS16} showed that degree-$q$ SoS certifies
an upper bound on $\ftwo{f}$ that is with high probability at most
$\widetilde{O}((n/q)^{d/4 - 1/2})\cdot \ftwo{f}$.  Curiously, this matches our approximation
guarantee for the case of \emph{arbitrary} polynomials with non-negative
coefficients. This problem was also studied for the case of
sparse random polynomials in \cite{RRS16} motivated by applications to refuting random
CSPs.

%
%

\subsection{Organization of the Paper}
We cover some preliminaries in \secref{prelims} and provide an overview of our proofs and techniques
in \secref{overview}. \secref{sos-prelims} provides details of various relaxations used in this
paper, and their duals in terms of the Sum-of-Squares hierarchy. We first give a basic version of the
reduction from general to multilinear polynomials in \secref{gen-to-multi}, which only obtains a
weaker result (without the additive term in the exponent). \secref{folding} gives a generalization
of this reduction, which yields \thmref{results-list}. We prove an SoS lower bound for degree-4 polynomials 
with non-negative coefficients in \secref{nnc-lowerbound}. 
In \secref{fsp-lowerbound}, we provide a general
technique for lifting lower bounds for the slightly weaker relaxation given by $\fsp{f}$, to 
relaxation higher level relaxations.


%

\section{Preliminaries and Notation} 
\seclab{prelims}
\noindent \textbf{Polynomials.}
We use $\pr{d}$ to denote the set of all homogeneous polynomials of degree (exactly)
$d$. Similarly, $\npr{d}$ is used to denote the set of polynomials with non-negative
coefficients. All polynomials considered in this paper will be $n$-variate and homogeneous 
(with $x$ denoting the set of $n$ variables $x_1, \ldots, x_n$) unless otherwise stated.

A \defnt{multi-index} is defined as sequence $\alpha \in \mindex$. We use
$\abs{\alpha}$ to denote $\sum_{i=1}^n \alpha_i$ and $\degmindex{d}$ (resp. $\udmindex{d}$) to
denote the set of all multi-indices $\alpha$ with $\abs{\alpha} = d$ (resp. $\abs{\alpha} \leq d$).
Thus, a polynomial $f \in \pr{d}$ can be expressed in terms of its coefficients as

\smallskip
$\qquad\qquad
f(x) ~=~ \sum_{\alpha \in \degmindex{d}} f_{\alpha} \cdot x^{\alpha} \mcom$

\smallskip\noindent
where $x^{\alpha}$ is used to denote the monomial corresponding to $\alpha$. A polynomial is
\defnt{multilinear} if $\alpha \leq \1$ whenever $f_{\alpha} \neq 0$, where $\1$ denotes the
multi-index $1^n$. We use the notation $\alpha^r$ to denote the vector
$(\alpha_1^r,\ldots,\alpha_n^r)$ for $r \in \R$. In general, with the exception of absolute-value, 
any scalar function when applied to a vector/multi-index returns the vector obtained by applying the function entry-wise. We also use $\circ$ to denote the Hadamard (entry-wise) product of two vectors. 

To save the additive constant terms in the exponent of our results, we will need to extract the ``quadratic
part'' of a given polynomial, and use the fact that eigenvalue problems are easy for quadratic
polynomials. We thus define the following polynomials where the coefficients themselves may be
polynomials (in the same variables).
\begin{defn}[Folded Polynomials]
A degree-$(d_1,d_2)$ \defnt{folded polynomial} $f \in \fpr{d_1}{d_2}$ is 
defined to be a polynomial of the form
\[
f(x) ~=~ \sum_{\alpha \in \degmindex{d_1}} \fold{f}{\alpha}(x) \cdot x^{\alpha} \mcom
\]
where each $\fold{f}{\alpha}(x) \in \pr{d_2}$ is a homogeneous polynomial of degree $d_2$.  
Folded polynomials over $\RR^+$ are defined analogously.
\begin{itemize}
\vspace{-1ex}
\itemsep=0ex
\item We refer
to the polynomials $\fold{f}{\alpha}$ as the \defnt{folds} of $f$ and the terms $x^{\alpha}$ as the
monomials in $f$. 
\item A folded polynomial can also be used to define a degree $d_1+d_2$ polynomial by
multiplying the monomials with the folds (as polynomials in $\RR[x]$). We refer to this polynomial
in $\pr{d_1+d_2}$ as the \defnt{unfolding} of $f$, and denote it by $\unfold{f}$. 
%
%
\item For a degree $(d_1,d_2)$-folded polynomial $f$ and $r \in \NN$, we take $f^r$ to be a
  degree-$(r \cdot d_1, r \cdot d_2)$ folded polynomial, obtained by multiplying the folds as
  coefficients.
\end{itemize}
\end{defn}

%

\noindent \textbf{Matrices.}
For $k \in \NN$, we will consider $n^k \times n^k$ matrices $M$ with real entries. All matrices considered in
this paper should be taken to be symmetric (unless otherwise stated). We index entries of the matrix
$M$ as $M[I,J]$ by \defnt{tuples} $I,J \in [n]^k$.

A tuple $I = (i_1, \ldots, i_k)$ naturally corresponds to a multi-index $\mi{I} \in \degmindex{k}$
with $\abs{\mi{I}} = k$, i.e. $\mi{I}_j = |\{\ell ~|~i_{\ell} = j\}|$.  For a tuple $I \in [n]^k$, we define $\orbit{I}$ the set of all tuples $J$ which correspond to the
same multi-index \ie $\alpha(I) = \alpha(J)$. Thus, any multi-index $\alpha\in \degmindex{k}$,  corresponds to an equivalence class in $[n]^k$. We also use $\orbit{\alpha}$ to denote the class of all tuples corresponding to
$\alpha$.

Note that a matrix of the form $\pth{x^{\otimes k}} \pth{x^{\otimes k}}^T$ has many additional
symmetries, which are also present in solutions to programs given by the SoS hierarchy.   
To capture this, consider the following definition: 

\begin{defn}[SoS-Symmetry]
A matrix $\sfM$ which satisfies $\sfM[I,J] = \sfM[K,L]$ whenever 
$\alpha(I) + \alpha(J) = \alpha(K) + \alpha(L)$ is referred to as \defnt{SoS-symmetric}. 
\end{defn}

\noindent \textbf{Remark.} 
It is easily seen that \,every homogeneous polynomial has a unique SoS-Symmetric matrix representation. 


%

\section{Overview of Proofs and Techniques}
\seclab{overview}
%

In the interest of clarity, we shall present all techniques for the special case 
where $f$ is an arbitrary degree-$4$ homogeneous polynomial. We shall further assume that 
$\ftwo{f} = \fmax{f}$ just so that $\hssos{f}$ is a relaxation of 
$\ftwo{f}$. Summarily, the goal of this section is to give an overview of 
an $O(n/q)$-approximation of $\ftwo{f}$, i.e. 
\[
    \hssos{f^{q/4}}^{4/q} \leq O(n/q)\cdot \ftwo{f}.
\]
Many of the high level 
ideas remain the same when considering higher degree polynomials and 
special classes like polynomials with non-negative coefficients, or 
sparse polynomials. 

\subsection{Warmup: $(n^2/q^2)$-Approximation}
We begin with seeing how to analyze constant levels of the $\hssos{\cdot}$ relaxation and will then
move onto higher levels in the next section.  The level-$4$ relaxation actually achieves an
$n$-approximation, however we will start with $n^2$ as a warmup and cover the $n$-approximation a
few sections later. 
\subsubsection{$n^2$-Approximation using level-$4$ relaxation}
\seclab{warmup:n^2}
We shall establish that $\hssos{f}\leq O(n^2)\cdot \ftwo{f}$. 
Let $M_f$ be the SoS-symmetric representation of $f$, let 
$x_{i_1}x_{i_2}x_{i_3}x_{i_4}$ be the monomial whose coefficient 
in $f$ has the maximum magnitude, and let $B$ be the magnitude 
of this coefficient. Now by Gershgorin circle theorem, 
we have $\hssos{f} \leq \|M_f\|_2\leq  n^2 \cdot B$. 

It remains to establish $\ftwo{f} = \Omega(B)$. To this end, 
define the decoupled polynomial $\mathcal{F}(x,y,z,t) := 
(x\otimes y)^T\cdot M_f \cdot (z\otimes t)$ and define the decoupled 
two-norm as
\[
\ftwo{\mathcal{F}} ~:=~ \sup_{\|x\|,\|y\|,\|z\|,\|t\| = 1} 
\mathcal{F}(x,y,z,t).
\] 
It is well known that $\ftwo{f} = \Theta(\ftwo{\mathcal{F}})$ 
(see \lemref{decoupled:lower:bound:pre}). Thus, we have, 
\[
\ftwo{f}
~=~ 
\Omega(\ftwo{\mathcal{F}})
~\geq~ 
\Omega\inparen{|\mathcal{F}(e_{i_1},e_{i_2},e_{i_3},e_{i_4})|}
~=~ 
\Omega(B)
~=~
\Omega \inparen{\hssos{f} / n^2} \mper
\]

\noindent
In order to better analyze $\hssos{f^{q/4}}^{4/q}$ we will need 
to introduce some new techniques.

\subsubsection{$(n^2/q^2)$-Approximation Assuming \thmref{intro:gen:multi}}
We will next show that $\hssos{f^{q/4}}^{4/q} \leq O(n^2/q^2)\cdot\ftwo{f}$ 
(for $q$ divisible by $4$). In fact, one can show something stronger, namely that for every 
homogeneous polynomial $g$ of degree-$q$, $\hssos{g}\leq 2^{O(q)}\cdot (n/q)^{q/2}\cdot \ftwo{g}$ 
which clearly implies the above claim (also note that for the target $O(n^2/q^2)$-approximation 
to $\ftwo{f}$, losses of $2^{O(q)}$ in the estimate of $\ftwo{g}$ are negligible, while factors of 
the order $q^{\Omega(q)}$ are crucial). 

Given the additional freedom in choice of representation (due to the polynomial having higher degree), 
a first instinct would be to completely symmetrize, i.e. take the SoS-symmetric representation of 
$g$, and indeed this works for multilinear $g$ (see \thmref{gen:mult} for details). 

However, the above approach of taking the SoS-symmetric representation breaks down when 
the polynomial is non-multilinear. To circumvent this issue, we employ \thmref{intro:gen:multi} 
which on combining with the aforementioned multilinear polynomial result, yields that 
for every homogeneous polynomial $g$ of degree-$q$, $\hssos{g}\leq (n/q)^{q/2}\cdot \ftwo{g}$. 
The proofs of \thmref{intro:gen:multi} and it's generalizations (that will be required for the $n/q$ 
approximation), are quite non-trivial and are the most technically involved sections of our upper
bound results. We shall next give an outline of the proof of \thmref{intro:gen:multi}.



%
%
\subsubsection{Reduction to Optimization of Multi-linear Polynomials}
One of the main techniques we develop in this work, is a way of reducing the optimization problem
for general polynomials to that of multi-linear polynomials, which \emph{does not increase the 
number of variables}. 
While general techniques for reduction to the multi-linear case have been widely used in the 
literature \cite{KN08, HLZ10, So11} (known commonly as decoupling/polarization techniques), 
these reduce the problem to optimizing a multi-linear polynomial in $n \cdot d$ variables (when 
the given polynomial $h$ is of degree $d$). Below is one example:
\begin{lemma}[\cite{HLZ10}]
\lemlab{decoupled:lower:bound:pre}
	Let $\mathcal{A}$ be a SoS-symmetric $d$-tensor and let 
	$h(x):= \iprod{\mathcal{A}}{x^{\otimes d}}$. Then  $\|h\|_2  ~\geq~
		2^{-O(d)}\cdot 
		\max_{\|x^i\|=1}
		\iprod{\mathcal{A}}{x^1\otimes \dots \otimes x^d}$.
\end{lemma}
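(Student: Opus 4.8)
The plan is to prove this via the classical polarization identity for symmetric tensors, which writes the multilinear value $\langle\mathcal{A},x^1\otimes\cdots\otimes x^d\rangle$ as a signed average of ``diagonal'' values $h\bigl(\sum_{i\in S}x^i\bigr)$ over subsets $S\subseteq[d]$, and then to bound each diagonal term crudely by $\|h\|_2$ times a power of a norm.

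First I would establish the identity
\[
\iprod{\mathcal{A}}{x^1\otimes\cdots\otimes x^d} ~=~ \frac{1}{d!}\sum_{S\subseteq[d]}(-1)^{d-|S|}\,h\Bigl(\textstyle\sum_{i\in S}x^i\Bigr)\mper
\]
This can be proved by expanding $h\bigl(\sum_{i\in S}x^i\bigr)=\sum_{\phi:[d]\to S}\iprod{\mathcal{A}}{x^{\phi(1)}\otimes\cdots\otimes x^{\phi(d)}}$, interchanging the order of summation, and noting that for a fixed function $\phi:[d]\to[d]$ the net coefficient is $\sum_{S\supseteq\mathrm{Im}(\phi)}(-1)^{d-|S|}$, which by the binomial theorem vanishes unless $\mathrm{Im}(\phi)=[d]$, i.e.\ unless $\phi$ is a permutation. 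The $d!$ surviving permutation terms are all equal to $\iprod{\mathcal{A}}{x^1\otimes\cdots\otimes x^d}$ because $\mathcal{A}$ is SoS-symmetric. (Alternatively one can obtain the same identity by averaging $\varepsilon_1\cdots\varepsilon_d\, h\bigl(\sum_i\varepsilon_i x^i\bigr)$ over i.i.d.\ Rademacher signs $\varepsilon_i$, which makes the constant more transparent.)

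Next I would bound the terms. By the triangle inequality $\bigl\|\sum_{i\in S}x^i\bigr\|\le|S|\le d$, and since $h$ is homogeneous of degree $d$ we have $|h(y)|\le\|h\|_2\cdot\|y\|^d$ for all $y$; hence each summand has magnitude at most $d^d\,\|h\|_2$. Summing over the $2^d$ subsets and dividing by $d!$ gives
\[
\bigl|\iprod{\mathcal{A}}{x^1\otimes\cdots\otimes x^d}\bigr| ~\le~ \frac{2^d\,d^d}{d!}\cdot\|h\|_2 ~\le~ (2e)^d\cdot\|h\|_2 \mcom
\]
using Stirling ($d!\ge (d/e)^d$), so the prefactor is $2^{O(d)}$. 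Since this holds for every unit-norm tuple $x^1,\dots,x^d$, and since replacing some $x^i$ by $-x^i$ shows the maximum of $\iprod{\mathcal{A}}{x^1\otimes\cdots\otimes x^d}$ over unit tuples coincides with the maximum of its absolute value, we conclude $\|h\|_2\ge 2^{-O(d)}\max_{\|x^i\|=1}\iprod{\mathcal{A}}{x^1\otimes\cdots\otimes x^d}$.

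I do not expect a real obstacle: this is a standard polarization estimate. The only point needing a little care is the loss in bounding $\bigl\|\sum_{i\in S}x^i\bigr\|$ --- the crude bound $|S|\le d$ already suffices for the claimed $2^{-O(d)}$ factor, but if one wanted a sharper constant one would instead work with the random-sign version and exploit concentration of $\bigl\|\sum_i\varepsilon_i x^i\bigr\|$ around $\sqrt d$; since the lemma only asks for $2^{-O(d)}$, I would not pursue that refinement.
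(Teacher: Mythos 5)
Your proof is correct and takes essentially the same route as the paper, which obtains the statement from the Rademacher polarization identity (\lemref{decoupling}) together with the same crude bound of $d^d\cdot\|h\|_2$ on each diagonal term; your subset inclusion--exclusion identity is just the deterministic expansion of that signed average, as you yourself note.
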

Since we are interested in the improvement in approximation obtained by considering $f^{q/4}$ 
for a large $q$, applying these would yield a multi-linear polynomial in $n \cdot q$ variables. 
For our analysis, this increase in 
variables exactly cancels the advantage we obtain by considering $f^{q/4}$ instead of $f$ 
(\ie the advantage obtained by using $q$ levels of the SoS hierarchy).

We can uniquely represent a homogeneous polynomial $g$ of degree $q$ as
\begin{equation}
\Eqlab{decomposition}
g(x) 
~=~ \sum_{\abs{\alpha} \leq q/2} x^{2\alpha}  \cdot G_{2\alpha}(x) 
~=~ \sum_{r=0}^{q/2} \sum_{\abs{\alpha} = r} x^{2\alpha}  \cdot G_{2\alpha}(x)
~=~ \sum_{r=0}^{q/2} g_r(x)
\mcom
\end{equation}
where each $G_{2\alpha}$ is a multi-linear polynomial and $g_r(x) \defeq \sum_{\abs{\alpha} = r}
x^{2\alpha}  \cdot G_{2\alpha}(x)$.
We reduce the problem to optimizing $\ftwo{G_{2\alpha}}$ for each of the polynomials
$G_{2\alpha}$. More formally, we show that 
\begin{equation}
\ineqlab{do:rtm}
    \frac{\hssos{g}}{\ftwo{g}} ~\leq~ \max_{\alpha\in \udmindex{q/2}} 
    \frac{\hssos{G_{2\alpha}}}{\|G_{2\alpha}\|_2}\cdot 2^{O(q)}
\end{equation}
As a simple and immediate example of its applicability, \ineqref{do:rtm} provides a simple proof 
of a polytime constant factor approximation for optimization over the simplex (actually this case is
known to admit a PTAS \cite{deKLP06, dKLS15}). Indeed, observe that a simplex optimization problem for a 
degree-$q/2$ polynomial in the variable vector $y$ can be reduced to a sphere optimization 
by substituting $y_i = x_i^2$. Now since every variable present in a monomial has even degree 
in that monomial, each $G_{2\alpha}$ is constant, which implies a constant factor approximation 
(dependent on $q$) on applying \ineqref{do:rtm}.  
%


Returning to our overview of the proof, note that given representations of each of the polynomials
$G_{2\alpha}$, each of the polynomials $g_r$ can be represented as a block-diagonal matrix with one
block corresponding to each $\alpha$. Combining this with triangle inequality and the fact that the
maximum eigenvalue of a block-diagonal matrix is equal to the maximum eigenvalue of one of the
blocks, gives the following inequality:
%
%
%
\begin{equation}
\ineqlab{weak:sos:split}
\hssos{g}  ~\leq~ (1+q/2)\cdot \max_{\alpha\in \udmindex{q/2}} \hssos{G_{2\alpha}}. 
\end{equation}
We can further strengthen \ineqref{weak:sos:split} by averaging the "best" representation of 
$G_{2\alpha}$ over $|\orbit{\alpha}|$ diagonal-blocks which all correspond to $x^{2\alpha}$.
%
%
This is the content  of \lemref{gen:mult:sp} wherein we show 
\begin{equation}
\ineqlab{sos:split}
\hssos{g}
~\leq~ 
(1+q/2)\cdot \max_{\alpha\in \udmindex{q/2}} \frac{\hssos{G_{2\alpha}}}{|\orbit{\alpha}|} \mper
\end{equation}
Since $|\orbit{\alpha}|$ can be as large as $q^{\Omega(q)}$, the above strengthening is crucial. 
We then prove the following inequality, which shows that the decomposition in \Eqref{decomposition}
not only gives a block-diagonal decomposition for matrix representations of $g$, but can in fact be
thought of as a ``block-decomposition'' of the \emph{tensor} corresponding to $g$ (with regards
to computing $\ftwo{g}$). Just as the maximum eigenvalue of a block-diagonal matrix is at least the
maximum eigenvalue of a block, we show that
%
\begin{equation}
\ineqlab{do:mult:2}
    \ftwo{g} ~\geq~ 2^{-O(q)}\cdot \max_{\alpha\in \udmindex{q/2}} 
    \frac{\|G_{2\alpha}\|_2}{|\orbit{\alpha}|} \mper
\end{equation}
The above inequality together with \ineqref{sos:split}, implies \ineqref{do:rtm}. 

\subsubsection{Bounding $\ftwo{g}$ via a new weak decoupling lemma}
\seclab{weak-decoupling}
Recall that the expansion of $g(x)$ in \Eqref{decomposition}, contains the term 
$x^{2\alpha}\cdot G_{2\alpha}(x)$. 
The key part of proving the bound in \ineqref{do:mult:2} is to show the following ``weak
decoupling'' result for $x^{2\alpha}$ and $G_{2\alpha}$.
%
%
    \[
    \forall \alpha \qquad  \ftwo{g}
    ~\geq~ 
    \max_{\|y\|=\|x\|=1}~
    y^{2\alpha}\cdot G_{2\alpha}(x)\cdot 2^{-O(q)}
    ~=~ 
    \max_{\|y\|=1}~
    y^{2\alpha}\cdot \ftwo{G_{2\alpha}}\cdot 2^{-O(q)}.
    \] 
%
%
%
%
%

The proof of \ineqref{do:mult:2} can then be completed by considering the unit vector 
$y:= \sqrt{\alpha}/\sqrt{|\alpha|}$, i.e. $y:= \sum_{i\in [n]}
\frac{\sqrt{\alpha_i}}{\sqrt{|\alpha|}}\cdot e_i$. A careful calculation shows that 
$y^{2\alpha} ~\geq~ 2^{-O(q)}/\cardin{\orbit{\alpha}}$ which finishes the proof.

The primary difficulty in establishing the above decoupling is the possibility of cancellations. 
To see this, let $x^*$ be the vector realizing 
$\|G_{2\alpha}\|_2$ and substitute $z = (x^* + y)$ into $g$. 
Clearly, $y^{2\alpha}\cdot G_{2\alpha}(x^*)$ is a 
term in the expansion of $g(z)$, however there is no guarantee that the other terms in the expansion 
don't cancel out this value. 
To fix this our proof relies on multiple delicate applications of the 
first-moment method, i.e. we consider a complex vector random variable $Z(x^*,y)$ that is a 
function of $x^*$ and $y$, and argue about $\Ex{|g(Z)|}$. 


%

\smallskip \noindent \textbf{The base case of $\alpha = 0^n$.} We first consider the base case with $\alpha = 0^n$,
where we define $y^{2\alpha}=1$.
This amounts to showing that  for every homogeneous polynomial $h$ of degree $t$, $\ftwo{h}\geq
\ftwo{h_m}\cdot 2^{-O(t)}$ where $h_m$ is the restriction of  $h$ to it's multilinear monomials.

Given the optimizer $x^*$ of $\ftwo{h_m}$, let $z$ be a random vector such that each $Z_i = x_i^*$
with probability $p$ and $Z_i = 0$ otherwise. Then, $\Ex{h(Z)}$ is a \emph{univariate} degree-$t$
polynomial in $p$ with the coefficient of $p^t$ equal to $h_m(x^*)$. An application of Chebyshev's
extremal polynomial inequality (\lemref{chebyshev}) then gives that there exists a value of the
probability $p$ such that
\[
\ftwo{h} ~\geq~ \Ex{\abs{h(Z)}} ~\geq~ \abs{\Ex{h(Z)}} 
~\geq~ 2^{-O(t)} \cdot \abs{h_m(x^*)}  
~=~ 2^{-O(t)} \cdot \ftwo{h_m}
\mper
\]


For the case of general $\alpha$, we first pass to the \emph{complex version} of $\ftwo{g}$ defined as
\[
\ftwo{g}^c ~\defeq~ \sup_{z \in \CC^n, \|z\| = 1} \abs{g(z)} \mper
\]
We use another averaging argument together with an application of the polarization lemma
(\lemref{decoupled:lower:bound:pre}) to show that we do not loose much by considering
$\ftwo{g}^c$. In particular, $\ftwo{g} ~\leq~ \ftwo{g}^c ~\leq~ 2^{O(q)} \cdot \ftwo{g}$.

\smallskip\noindent \textbf{The case of $g = g_r$.}
In this case, the problem reduces to showing that 
for all $\alpha\in \degmindex{r}$ and for all $y \in \SSS^{n-1}$, 
\[
    \|g_r\|^c_2 ~~\geq~~ y^{2\alpha} \cdot \ftwo{G_{2\alpha}} \cdot 2^{-O(q)}.
\]
Fix any $\alpha\in \degmindex{r}$, and let $\omega\in \CC^n$ be a complex vector random variable, 
such that $\omega_i$ is an independent and uniformly random $(2\alpha_i+1)$-th root of unity. 
Let $\Xi$ be a random $(q-2r+1)$-th root of unity, and let $x^*$ be the optimizer of 
$\|G_{2\alpha}\|_2$. Let $Z := \omega\circ y + \Xi\cdot x^*$, where $\omega\circ y$ denotes the
coordinate-wise product.  Observe that for any 
$\alpha',\gamma$ such that $|\alpha'|=r, ~~|\gamma|=q-2r, ~~\gamma\leq \one$, 
\[
\Ex{\prod_{i}\omega_i \cdot \Xi \cdot Z^{2\alpha' +\gamma}} 
    ~=~ 
\begin{cases}
y^{2\alpha}\cdot (x^*)^{\gamma}
    &\text{if}~ \alpha' = \alpha \\
0 &\text{otherwise}
\end{cases}
\]
By linearity, this implies 
$\Ex{\prod_{i}\omega_i \cdot \Xi \cdot g_r(Z)}  = y^{2\alpha}\cdot G_{2\alpha}(x^*)$.
The claim then follows by noting that 
\[
    \|g_r\|^c_2 
    ~\geq~ 
    \Ex{|g_r(Z)|} 
    =~ 
    \Ex{\cardin{\prod_{i}\omega_i \cdot \Xi \cdot g_r(Z)}} 
    \geq~ 
    \cardin{\,\Ex{\prod_{i}\omega_i \cdot \Xi \cdot g_r(Z)}} 
    \geq~
    y^{2\alpha}\cdot \ftwo{G_{2\alpha}}.
\]
\noindent \textbf{The general case.} The two special cases considered here correspond to the cases when we
need to extract a specific $g_r$ (for $r = 0$), and when we need to extract a fixed $\alpha$ from a
given $g_r$. The argument for the general case uses a combination of the arguments for both these cases.
Moreover, to get an $O(n/q)$ approximation, we also need versions of such decoupling
lemmas for folded polynomials to take advantage of ``easy substructures'' as described next.

\subsection{Exploiting Easy Substructures via Folding and Improved Approximations }
To obtain the desired $n/q$-approximation to $\ftwo{f}$, we need to use the fact that the problem of 
optimizing quadratic polynomials can be solved in polynomial time, and moreover that SoS 
captures this. More generally, in this section we consider the problem of getting improved 
approximations when a polynomial contains "easy substructures". It is not hard to obtain improved 
guarantees when considering constant levels of SoS. The second main technical contribution of our work 
is in giving sufficient conditions under which higher levels of SoS improve on the approximation of 
constant levels of SoS, when considering the optimization problem over polynomials containing 
"easy substructures". 

As a warmup, we shall begin with seeing how to exploit easy substructures at constant levels 
by considering the example of degree-$4$ polynomials that trivially "contain" quadratics. 

\subsubsection{$n$-Approximation using Degree-$4$ SoS}
\seclab{warmup:n}

Given a degree-$4$ homogeneous polynomial $f$ (assume $f$ is multilinear for simplicity), we 
consider a degree-$(2,2)$ folded polynomial $h$, whose unfolding yields $f$, chosen so that 
$\max_{\|y\|=1}\ftwo{h(y)} = \Theta(\ftwo{f})$ (recall that an evaluation of a folded 
polynomial returns a polynomial, i.e., for a fixed $y$, $h(y)$ is a quadratic polynomial in the 
indeterminate $x$). Such an $h$ always exists and is not hard to find based on the SoS-symmetric
representation of $f$. Also recall, 
\[
\qquad\qquad 
    h(x) = \sum_{|\beta|=2,\,\beta\leq \one} \fold{h}{\beta}(x) \cdot x^{\beta} \mcom
\]
where each $\fold{h}{\beta}$ is a quadratic polynomial (the aforementioned phrase 
"easy substructures" is referencing the folds: $\fold{h}{\beta}$ which are easy to optimize). 
Now by assumption we have, 
\[
    \ftwo{f}
    \geq 
    \max_{|\beta|=2,\,\beta\leq \one} \|h(\beta/\sqrt{2})\|_2
    = 
    \max_{|\beta|=2,\,\beta\leq \one} \|\fold{h}{\beta}\|_2/2.
\]
We then apply the block-matrix generalization of 
Gershgorin circle theorem to the SoS-symmetric matrix representation of $f$ to show that 
\[\hssos{f}\leq  \fsp{f} \leq n \cdot  \max_{|\beta|=2,\,\beta\leq \one} \fsp{\fold{h}{\beta}} ~ = ~ 
n\cdot \max_{|\beta|=2,\,\beta\leq \one} \|\fold{h}{\beta}\|_2 \ , \] 
where the last step uses the fact that $\fold{h}{\beta}$ is a quadratic, and $\fsp{\cdot}$ is a tight relaxation of $\ftwo{\cdot}$ for quadratics.
This yields the 
desired $n$-approximation. 
%
\subsubsection{$n/q$-approximation using Degree-$q$ SoS}

Following the cue of the $n^2/q^2$-approximation, we derive the desired $n/q$ bound by proving a 
folded-polynomial analogue of every claim in the previous section (including the multilinear 
reduction tools), a notable difference being that when we consider a power $f^{q/4}$ of $f$, we 
need to consider degree-$(q - 2q/4, 2q/4)$ folded polynomials, since we want to use the fact that 
any {\bf {\em product of $q/4$ quadratic polynomials}} is ``easy'' for SoS (in contrast to \secref{warmup:n} 
where we only used the fact quadratic polynomials are easy for SoS). 
We now state an abstraction of the general approach we use to leverage the tractability of the folds.

\medskip \noindent \textbf{Conditions for Exploiting "Easy Substructures" at Higher Levels of SoS.} 
Let $d:= d_1 + d_2$ and $f:= \unfold{h}$ where $h$ is a degree-$(d_1,d_2)$ folded polynomial that
satisfies 
\[\sup_{\|y\|=1}\ftwo{h(y)} = \Theta_d(\ftwo{f}) \mper
\] 
Implicit in \secref{folding}, is the following theorem we believe to be of independent interest: 
\begin{theorem}
\thmlab{easy-substructure}
    Let $h,f$ be as above, and let 
    \[
        \Gamma
        := ~
        \min \brc{ 
        \frac{\hssos{p}}{\ftwo{p}}
        \sep{p(x)\in \textrm{span}\pth{\fold{h}{\beta}\sep{\beta\in \degmindex{d_2}}}} 
        }.
    \]
    Then for any $q$ divisible by $2d$, 
   $~~\hssos{f^{q/d}}^{d/q} 
        \leq ~~
        O_d\pth{\Gamma \cdot (n/q)^{d_1/2}}\cdot \ftwo{f}$. 
\end{theorem}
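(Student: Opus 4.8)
The plan is to mirror the chain of inequalities developed for the warm-up $n/q$-approximation, but replace every "scalar" step by its folded-polynomial analogue, and replace the trivial bound $\fsp{\fold{h}{\beta}} = \norm{2}{\fold{h}{\beta}}$ used in Section~\ref{section:warmup:n} by the quantity $\Gamma$ (which measures how well $\hssos{\cdot}$ approximates $\ftwo{\cdot}$ on the span of the folds). Concretely, set $g := f^{q/d} = \unfold{h^{q/d}}$, and observe that $h^{q/d}$ is a degree-$(q d_1/d,\, q d_2/d)$ folded polynomial whose folds lie in the span of products of $q/d$ of the original folds $\fold{h}{\beta}$. I would first establish the folded analogue of \ineqref{sos:split}: using the decomposition \Eqref{decomposition} applied to the monomial part of $h^{q/d}$ together with the block-diagonal/Gershgorin argument of Section~\ref{section:warmup:n}, one gets
\[
    \hssos{g} ~\leq~ 2^{O(q)}\cdot (n/q)^{q d_1/(2d)} \cdot \max_{\gamma} \hssos{P_\gamma}\,,
\]
where each $P_\gamma$ is one of the folds of $h^{q/d}$ (a product of $q/d$ of the $\fold{h}{\beta}$'s, hence in the relevant span). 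The factor $(n/q)^{q d_1/(2d)}$ comes from the degree-$q d_1/d$ monomial part exactly as $(n/q)^{q/2}$ arose in the warm-up, and the orbit-size averaging (the strengthening from \ineqref{weak:sos:split} to \ineqref{sos:split}) is what converts $n^{q d_1/d}$ to $(n/q)^{q d_1/(2d)}$.

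Next I would bound $\hssos{P_\gamma}$. Since $P_\gamma$ is a product of $q/d$ polynomials each in $\mathrm{span}(\fold{h}{\beta})$, and $\hssos{\cdot}$ is sub-multiplicative on products in the sense that $\hssos{p_1 \cdots p_k} \le \prod_i \hssos{p_i}$ (taking tensor products of the optimal representations), while by definition of $\Gamma$ each $\hssos{p_i} \le \Gamma \cdot \ftwo{p_i}$, we get $\hssos{P_\gamma} \le \Gamma^{q/d} \cdot \prod_i \ftwo{p_i}$. On the other side I need to lower bound $\ftwo{g}$. Here I would invoke the folded weak-decoupling machinery (the folded analogue of \ineqref{do:mult:2}, sketched at the end of Section~\ref{section:weak-decoupling}): choosing the unit vector $y := \sqrt{\alpha}/\sqrt{|\alpha|}$ for the worst monomial $x^{2\alpha}$ in the monomial part, weak decoupling gives
\[
    \ftwo{g} ~\geq~ 2^{-O(q)} \cdot \frac{\ftwo{P_\gamma \text{-type fold}}}{\cardin{\orbit{\alpha}}}\,,
\]
and then I use the hypothesis $\sup_{\|y\|=1}\ftwo{h(y)} = \Theta_d(\ftwo{f})$ — raised to the power $q/d$ via sub-multiplicativity of $\ftwo{\cdot}$ on folded products — to relate $\prod_i \ftwo{p_i}$ back to $\ftwo{f}^{q/d}$, up to a $2^{O(q)}$ factor.

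Combining the three pieces and taking $d/q$-th roots: $\hssos{g}^{d/q} \le \big(2^{O(q)} (n/q)^{q d_1/(2d)} \Gamma^{q/d} \ftwo{f}^{q/d}\big)^{d/q} = O_d(\Gamma \cdot (n/q)^{d_1/2}) \cdot \ftwo{f}$, as desired (the $2^{O(q)}$ and $q^{O(q)}$-type losses wash out in the $O_d(\cdot)$ after taking the root, since they become $2^{O(d)}$ and $q^{O(d)} = O_d(1)$ relative to the leading term — this is exactly the bookkeeping remark made in Section~\ref{section:warmup:n}). The main obstacle, and the step requiring the most care, is the folded weak-decoupling lower bound on $\ftwo{g}$: the warm-up only needed that quadratics are easy and used a single-level Gershgorin bound, whereas here I must decouple a high power while (a) controlling cancellations via the first-moment/root-of-unity arguments of Section~\ref{section:weak-decoupling} in the presence of nontrivial folds, and (b) ensuring the extracted fold of $h^{q/d}$ genuinely decomposes as a product whose $\ftwo{\cdot}$ factors through $\sup_{\|y\|=1}\ftwo{h(y)}$. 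Getting the orbit-size normalization $\cardin{\orbit{\alpha}}$ to cancel correctly against the $(n/q)$ savings — i.e. verifying $y^{2\alpha} \ge 2^{-O(q)}/\cardin{\orbit{\alpha}}$ for the balanced choice of $y$ — is the delicate quantitative point, but it is the same computation already used for the unfolded case and should carry over verbatim.
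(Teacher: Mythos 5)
Your overall skeleton --- folded multilinear reduction plus a Gershgorin-type bound for the upper bound, folded weak decoupling plus the orbit-size computation for the lower bound --- is the same as the paper's (\lemref{fold:gen:mult:sos}, \lemref{fold:gen:coefficient:sos}, \lemref{fold:gen:mult:eval} and the chain establishing \Eqref{gen:rte}). But the middle of your argument, which is the only place where $\Gamma$ actually enters, does not work as written. First, the folds of $h^{q/d}$ are sums of products of $q/d$ folds of $h$, and in any case they are not in $\textrm{span}\pth{\fold{h}{\beta}\sep{\beta\in \degmindex{d_2}}}$: a product of $q/d$ degree-$d_2$ polynomials has degree $qd_2/d$, while $\Gamma$ only controls the ratio $\hssos{p}/\ftwo{p}$ for degree-$d_2$ polynomials $p$, so ``hence in the relevant span'' is false and $\Gamma$ cannot be applied to $P_\gamma$ or to its factors' products. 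Second, the asserted submultiplicativity $\hssos{p_1\cdots p_k}\le\prod_i\hssos{p_i}$ is false: tensoring the optimal representations $M_i$ does give a representation of the product, but $\lambda_{\max}(M_1\otimes M_2)=\max\{\lambda_{\max}(M_1)\lambda_{\max}(M_2),\,\lambda_{\min}(M_1)\lambda_{\min}(M_2)\}$ and the second term can dominate (take $M_1=M_2=\mathrm{diag}(1,-10)$). Submultiplicativity holds for $\fsp{\cdot}$, not for $\hssos{\cdot}$. Third, even granting both steps, you are left comparing $\prod_i\ftwo{p_i}$, a product over unrelated folds, against a weak-decoupling lower bound that produces $\ftwo{\cdot}$ of a single fold of $h^{q/d}$, and against the hypothesis $\sup_{\|y\|=1}\ftwo{h(y)}=\Theta_d(\ftwo{f})$, which is a supremum over a common $y$; neither matches a product of norms of distinct $p_i$'s, and $\ftwo{p_1\cdots p_k}$ can be far smaller than $\prod_i\ftwo{p_i}$.

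The way the paper threads this needle is different in exactly this middle step. The quantity appearing on the upper-bound side after \lemref{fold:gen:mult:sos} and \lemref{fold:gen:coefficient:sos} is $\fsp{\fold{(S_{2\alpha})}{\gamma}}$ for a fold of the multilinear component of $h^{q/d}$, and the lower bound \Eqref{gen:rte} produces \emph{the same} quantity, so it cancels and is never converted into $\ftwo{\cdot}$ of individual factors. The easiness of the folds is used at exactly one step of the lower-bound chain: for each fixed evaluation point $z$, $h(z)$ is a \emph{single} element of the span, so $\ftwo{h(z)^{q/d}}=\ftwo{h(z)}^{q/d}$ can be related to $\fsp{h(z)^{q/d}}$ (for $\Gamma=1$ and quadratic folds this is the line ``SOS exact on powered quadratics''; for general $\Gamma$ one needs the analogous statement for powers of a single element of the span --- a much weaker requirement than your product bound, and one that naturally lives with $\fsp{\cdot}$ rather than $\hssos{\cdot}$). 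The extraction of the particular fold $\fold{(S_{2\alpha})}{\gamma}$ is then performed \emph{inside} the spectral norm, via $\E\|\cdot\|_{sp}\ge\|\E[\cdot]\|_{sp}$ followed by Chebyshev's inequality applied to the fixed top singular vectors of the optimal representation of that fold. If you reorganize the proof around keeping $\fsp{\fold{(S_{2\alpha})}{\gamma}}$ on both sides and invoking $\Gamma$ only for powers of the single polynomial $h(z)$, the rest of your outline, including the $y^{2\alpha}\ge 2^{-O(q)}/\cardin{\orbit{\alpha}}$ bookkeeping, goes through.
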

In other words, if degree-$d_2$ SoS gives a good approximation for every polynomial in 
the subspace spanned by the folds of $h$, then higher levels of SoS give an improving 
approximation that exploits this. In this work, we only apply the above with $\Gamma =1$, where exact optimization is easy for the space spanned by the folds.

While we focuses on general polynomials for the overview, let us remark that in the case of polynomials with non-negative coefficients, the approximation factor in \thmref{easy-substructure} becomes $O_d\pth{\delta\cdot (n/q)^{d_1/4}}$.


%


\subsection{Lower Bounds for Polynomials with Non-negative Coefficients}
\subsubsection{Degree-4 Lower Bound for Polynomials with Non-Negative Coefficients}
We discuss some of the important ideas from the proof of~\thmref{nnc-lowerbound}.
The lower bound proved by a subset of the authors in \cite{BGL16} 
proves a large ratio $\frac{\hssos{f}}{\| f \|_2}$ by considering a random polynomial $f$ where each
coefficient of $f$ is an independent (Gaussian) random variable with bounded variance. 
The most natural adaptation of the above strategy to degree-$4$ polynomials with non-negative
coefficients is to consider a random polynomial $f$ where each coefficient  $f_{\alpha}$ is
independently sampled such that $f_{\alpha} = 1$ with probability $p$ and $f_{\alpha} = 0$ with
probability $1-p$. However, this construction fails for every choice of $p$.
If we let $\sfA \in \RR^{[n]^2 \times [n]^2}$ be the natural matrix representation of $f$ (i.e.,
each coefficient $f_{\alpha}$ is distributed uniformly among the corresponding entries of $\sfA$),
the Perron-Frobenius theorem shows that $\norm{2}{\sfA}$ is less than the maximum row sum
$\max(\tilde{O}(n^2 p), 1)$ of $\sfM$, which is also an upper bound on $\hssos{f}$.
However, we can match this bound by (within constant factors) choosing $x = (\frac{1}{\sqrt{n}},
\dots, \frac{1}{\sqrt{n}})$ when $p \geq 1/n^2$. Also, when $p < 1/n^2$, we can take any $\alpha$
with $f_{\alpha} = 1$ and set $x_i = 1/2$ for all $i$ with $\alpha_i > 0$, which achieves a value of $1/16$.

We introduce another natural distribution of random non-negative polynomials that bypasses this problem. 
Let $G = (V, E)$ be a random graph drawn from the distribution $G_{n, p}$ (where we choose $p = n^{-1/3}$ and $V = [n]$. Let $\cliques \subseteq \binom{V}{4}$ be the set of $4$-cliques in $G$. 
The polynomial $f$ is defined as 
\[
f(x_1, \dots, x_n) := \sum_{\{ i_1, i_2, i_3, i_4 \} \in \cliques} x_{i_1} x_{i_2} x_{i_3} x_{i_4}. 
\]
Instead of trying $\Theta(n^4)$ $p$-biased random bits, we use $\Theta(n^2)$ of them. This limited
independence bypasses the problem above, since the rows of $\sfA$ now have significantly different
row sums:
 $\Theta(n^2 p)$ rows that correspond to an edge of $G$ will have row sum $\Theta(n^2 p^5)$, and all
 other rows will be zeros. 
Since these $n^2 p$ rows (edges) are chosen independently from $\binom{[n]}{2}$, they still reveal
little information that can be exploited to find a $n$-dimensional vector $x$ with large
$f(x)$. However, the proof requires a careful analysis of the trace method (to bound the spectral
norm of an ``error'' matrix).

It is simple to prove that $\|f\|_{sp} \geq \Omega\inparen{\sqrt{n^2p^5}} = \Omega(n^{1/6})$ by
considering the Frobenius norm of the $n^2p \times n^2p$ principal submatrix, over any matrix
representation (indeed, $\sfA$ is the minimizer). To prove
$\hssos{f} \geq \tilde{\Omega}(n^{1/6})$, we construct a moment matrix $\sfM$ that is SoS-symmetric,
positive semidefinite, and has a large $\langle \sfM, \sfA \rangle$ (see the dual form of
$\hssos{f}$ in \secref{sos-prelims}). It turns out that the $n^2 p
\times n^2 p$ submatrix of $\sfA$ shares spectral properties of the adjacency matrix of a random
graph $G_{n^2p, p^4}$, and taking $\sfM := c_1\sfA + c_2 \sfI$ for some identity-like matrix $\sfI$
proves $\hssos{f} \geq \tilde{\Omega}(n^{1/6})$. An application of the trace method is needed to
bound $c_2$.

To upper bound $\| f \|_2$, we first observe that $\| f \|_2$ is the same as the following natural combinatorial problem up to an $O(\log^4 n)$ factor: find four sets $S_1, S_2, S_3, S_4 \subseteq V$ that maximize
\[
\frac{\abs{\cliques_G(S_1, S_2, S_3, S_4)}}{\sqrt{|S_1||S_2||S_3||S_4|}}
\]
where $|\cliques_G (S_1, S_2, S_3, S_4)|$ is the number of $4$-cliques $\{ v_1, \dots, v_4 \}$ of $G$ with $v_i \in S_i$ for $i = 1, \dots, 4$. 
The number of copies of a fixed subgraph $H$ in $G_{n, p}$, including its tail bound, has been actively studied in probabilistic combinatorics~\cite{Vu01, KV04, JOR04, Chatterjee12, DK12a, DK12b, LZ16}, though we are
interested in bounding the $4$-clique density of {\em every} $4$-tuple of subsets simultaneously. The previous results give a strong enough tail bound for a union bound, to prove that the optimal value of the problem is $O(n^2p^6 \cdot \log^{O(1)} n)$ when $\abs{S_1} = \dots = \abs{S_4}$, but this strategy inherently does not work when the set sizes become significantly different. 
However, we give a different analysis for the above asymmetric case, showing that the optimum is still no more than $O(n^2p^6 \cdot \log^{O(1)} n)$.


%
\subsubsection{Lifting Stable Degree-$4$ Lower Bounds}
For a degree-$t$ ($t$ even) homogeneous polynomial $f$, note that 
$\max \{|\hssos{f}|,|\hssos{-f}|\}$ is a relaxation of $\ftwo{f}$. 
$\fsp{f}$ is a slightly weaker (but still quite natural) relaxation 
of $\ftwo{f}$ given by 
\[
\fsp{f} ~\defeq~ \inf \inbraces{\norm{2}{M} ~\given~ M~\text{is a matrix representation of}~f} \mper
\]
As in the case of $\hssos{f}$, for a (say) degree-$4$ polynomial $f$, $\fsp{f^{q/4}}^{4/q}$ gives a
hierarchy of relaxations for $\ftwo{f}$, for increasing values of $q$.

We give an overview of a general method of ``lifting''
certain ``stable'' low degree gaps for $\fsp{\cdot}$ to gaps for higher levels with 
at most $q^{O(1)}$ loss in the gap. While we state our techniques for lifting 
degree-$4$ gaps, all the ideas are readily generalized to higher levels. 
We start with the observation that the dual of $\fsp{f}$ is given by the following ``nuclear norm''
program.
Here $\sfM_f$ the canonical matrix representation of $f$, and $\norm{S_1}{\sfX}$ is the Schatten
$1$-norm (nuclear norm) of $X$, which is the sum of it's singular values.
\begin{align*}
\mathsf{maximize} ~~~\qquad \qquad \qquad \iprod{\sfM_f}{\sfX}  \\
\mathsf{subject ~to:} \qquad \qquad \quad \norm{S_1}{\sfX} = 1  \\
 {\sfX} ~\text{is SoS symmetric}
\end{align*}
Now let $\sfX$ be a solution realizing a gap of $\delta$ between $\fsp{f}$ and 
$\ftwo{f}$. We shall next see how assuming reasonable conditions on $\sfX$ and $M_f$, 
one can show that $\|f^{q/4}\|_{sp}/\|f^{q/4}\|_2$ is at least $\delta^{q/4}/q^{O(q)}$. 

In order to give a gap for the program corresponding to 
$\fsp{f^{q/4}}$, a natural choice for a solution is the symmetrized 
version of the matrix $\sfX^{\otimes q/4}$ normalized by its Schatten-$1$ norm \ie for $Y =
\sfX^{\otimes q/4}$, we take
\[
\sfZ ~:=~ \frac{Y^S}{\norm{S_1}{Y^S}}
\qquad \text{where} \qquad
Y^S[K] = \Exp{\pi \in \Sym_{q}}{Y[\pi(K)]} ~~~\forall K \in [n]^q \mper
\]
To get a good gap, we are now left with showing that $\norm{S_1}{Y^S}$ is not 
too large. Note that symmetrization can 
drastically change the spectrum of a matrix as for different permutations $\pi$, the matrices
 $Y^{\pi}[K] \defeq Y[\pi(K)]$ can have very different ranks (while $\norm{F}{Y} = \norm{F}{Y^{\pi}}$).
In particular, symmetrization can increase the maximum 
eigenvalue of a matrix by polynomial factors, and thus one must carefully 
count the number of such large eigenvalues in order to get a good upper bound 
on $\norm{S_1}{Y^S}$. Such an upper bound is a consequence of a 
structural result about $Y^S$ that we believe to be of independent interest. 

To state the result, we will first need some notation. 
For a matrix $M\in\Re^{[n]^2\times [n]^2}$ let $T\in \RR^{[n]^4}$ denote 
the tensor given by, $T[i_1,i_2,i_3,i_4] = M[(i_1,i_2),(i_3,i_4)]$. 
Also for any non-negative integers $x,y$ satisfying $x+y = 4$, let 
$M_{x,y}\in \Re^{[n]^{x}\times [n]^{y}}$ denote the (rectangular) matrix given by, 
$M[(i_1,\dots ,i_x),(j_1,\dots j_y)] = T[i_1,\dots ,i_x,j_1,\dots j_y]$. 
Let $M\in \Re^{[n]^2\times [n]^2}$ be a degree-$4$ SoS-Symmetric matrix, 
let $M_A:= M_{1,3}\otimes M_{4,0}\otimes M_{1,3}$, let $M_B := M_{1,3}\otimes M_{3,1}$, 
let $M_C:=M$ and let $M_D:= \Vector{M}\Vector{M}^T = M_{0,4}\otimes M_{4,0}$. 

We show (see \thmref{tetris}) that $(M^{\otimes q/4})^S$ can be written as the 
sum of $2^{O(q)}$ terms of the form: 
\[
    C(a,b,c,d)\cdot P \cdot (M_A^{\otimes a}\otimes M_B^{\otimes b}\otimes M_C^{\otimes c}
    \otimes M_D^{\otimes d}) \cdot P
\]
where $12a+8b+4c+8d = q$,  $P$ is a matrix with spectral norm $1$ and $C(a,b,c,d) = 2^{O(q)}$. 
This implies that 
controlling the spectrum of $M_A,M_B,M$ and $M_D$  
is sufficient to control on the spectrum of $(M^{\otimes q/4})^{S}$. 

Using this result with $M:= \sfX$, we are able to establish that 
if $\sfX$ satisfies the additional condition of 
$\norm{S_1}{\sfX_{1,3}}\leq 1$ (note that we already know 
$\norm{S_1}{X}\leq 1$), then $\norm{S_1}{Y^S} = 2^{O(q)}$. 
Thus $\sfZ$ realizes a $\iprod{M_f^{\otimes q/4}}{Y^S}/2^{O(q)}$ gap 
for $\fsp{f^{q/4}}$.
On composing this result with the degree-$4$ gap from the previous section, we 
obtain an $\fsp{\cdot}$ gap of $n^{q/24}/\inparen{q\cdot \log n}^{O(q)}$ for degree-$q$ polynomials with 
non-neg. coefficients. We also show the $q$-th level $\fsp{\cdot}$ gap for 
degree-$4$ polynomials with non-neg. coefficients is $\tilde{\Omega}(n^{1/6})/q^{O(1)}$. 

\smallskip
Even though we only derive results for the weaker relaxation $\fsp{\cdot}$, 
the structural result above can be used to lift ``stable'' low-degree SoS lower 
bounds as well (i.e. gaps for $\hssos{\cdot}$), albeit with a stricter notion of 
stability (see \corref{lift:stable:sos:lb}).
However, the problem of finding 
such stable SoS lower bounds remains open. 

There are by now quite a few results giving near-tight lower bounds on the performance of higher level SoS relaxations for \emph{average-case} 
problems~\cite{BHKKMP16,KMOW17,HKP16}.
However, there are few examples in the literature of matching SoS upper/lower bounds 
on \emph{worst-case} problems. We believe our lifting result might be especially useful in 
such contexts, where in order to get higher degree lower bounds, it suffices to give 
stable lower bounds for constant degree SoS.


%



%

\section{Additional Preliminaries and the SoS Hierarchy}
\seclab{sos-prelims}
\subsection{Pseudoexpectations and Moment Matrices}
\seclab{sos:prelims}
Let $\Re[x]_{\leq q}$ be the vector space of polynomials with real coefficients in variables $x =
(x_1, \dots, x_n)$, of degree at most $q$. For an even integer $q$, the degree-$q$
pseudo-expectation operator is a linear operator $\PE : \Re[x]_{\leq q} \mapsto \Re$ such that 
\smallskip
\begin{enumerate}
\item $\PEx{1} = 1$ for the constant polynomial $1$.
\item $\PEx{p_1 + p_2} = \PEx{p_1} + \PEx{p_2}$ for any polynomials $p_1, p_2 \in \Re[x]_{\leq q}$. 
\item $\PEx{p^2} \geq 0$ for any polynomial $p \in \Re[x]_{\leq q/2}$. 
\end{enumerate}
The pseudo-expectation operator $\PE$ can be described by a \defnt{moment matrix} $\hM \in
\RR^{\udmindex{q/2} \times \udmindex{q/2}}$ defined as $\hM[\alpha,\beta] = \PEx{x^{\alpha+\beta}}$
for $\alpha,\beta \in \udmindex{q/2}$. 

For each fixed $t \leq q/2$, we can also consider the principal minor of $\hM$ indexed by
$\alpha,\beta \in \degmindex{t}$. This also defines a matrix $M \in \R^{[n]^t \times [n]^{t}}$ with
$M[I,J] = \PEx{x^{\mi{I}+\mi{J}}}$. Note that this new matrix $M$ satisfies $M[I,J] = M[K,L]$
whenever $\mi{I}+\mi{J} = \mi{K} + \mi{L}$. Recall that a matrix in $\R^{[n]^t \times [n]^t}$ with
this symmetry is said to be \defnt{SoS-symmetric}.

We will use the following facts about the operator $\PE$ given by the SoS hierarchy.
\begin{claim}[Pseudo-Cauchy-Schwarz \cite{BKS14}]
$\PEx{p_1p_2}\leq (\PEx{p_1^2}\PEx{p_2^2})^{1/2}$ for any $p_1,p_2$ of degree at most $q/2$. 
\end{claim}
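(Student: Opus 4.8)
The plan is to run the standard Cauchy--Schwarz argument, using only the three defining axioms of $\PE$ together with an elementary degree count. First I would observe that for every $t\in\Re$ the polynomial $p_1 + t\,p_2$ still lies in $\Re[x]_{\leq q/2}$, so that $(p_1+t\,p_2)^2 \in \Re[x]_{\leq q}$ is a valid argument for $\PE$, and property~3 yields $\PEx{(p_1+t\,p_2)^2}\ge 0$. Expanding by linearity (property~2) turns this into the statement that
\[
\phi(t) \;:=\; \PEx{p_2^2}\,t^2 \;+\; 2\,\PEx{p_1 p_2}\,t \;+\; \PEx{p_1^2} \;\ge\; 0 \qquad\text{for every } t\in\Re .
\]

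Next I would split on whether $\PEx{p_2^2}=0$. In the main case $\PEx{p_2^2}>0$, the function $\phi$ is a genuine upward-opening quadratic that is non-negative everywhere, so its discriminant is non-positive: $4\PEx{p_1p_2}^2 - 4\PEx{p_1^2}\PEx{p_2^2}\le 0$, i.e.\ $\PEx{p_1p_2}^2 \le \PEx{p_1^2}\PEx{p_2^2}$. Since $\PEx{p_1 p_2}\le |\PEx{p_1 p_2}|$, taking square roots gives exactly the claimed inequality. In the degenerate case $\PEx{p_2^2}=0$, the function $\phi$ is affine in $t$ and bounded below, which forces its slope $\PEx{p_1 p_2}$ to be $0$; then the desired inequality reads $0 \le (\PEx{p_1^2}\cdot 0)^{1/2}=0$, using $\PEx{p_1^2}\ge 0$ (property~3 again) so that the right-hand side is a well-defined nonnegative real number.

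There is essentially no substantive obstacle here: the only points requiring a little care are the degree bookkeeping --- squares of degree-$(\le q/2)$ combinations must remain within the domain of $\PE$ --- and the degenerate branch $\PEx{p_2^2}=0$, where the discriminant step is replaced by the observation that a bounded-below affine function is constant. An equivalent one-line alternative would be to apply property~3 to $c_1 p_1 + c_2 p_2$ for arbitrary $c_1,c_2\in\Re$, conclude that the symmetric $2\times 2$ matrix $\big(\PEx{p_i p_j}\big)_{i,j=1,2}$ is positive semidefinite, and read off $\PEx{p_1 p_2}^2\le\PEx{p_1^2}\PEx{p_2^2}$ from non-negativity of its determinant.
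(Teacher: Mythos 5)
Your argument is correct: the paper itself states this claim without proof (citing \cite{BKS14}), and your discriminant argument applied to $\PEx{(p_1+t\,p_2)^2}\ge 0$ is the standard derivation, with the degree bookkeeping and the degenerate branch $\PEx{p_2^2}=0$ both handled properly. Nothing further is needed.
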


\subsubsection{Constrained Pseudoexpectations}
For a system of polynomial constraints $C = \inbraces{f_1 = 0, \ldots, f_m = 0, g_1 \geq 0, \ldots,
  g_r \geq 0}$, we say $\PE_C$ is a pseudoexpectation operator respecting $C$, if in addition to the
above conditions, it also satisfies
\begin{enumerate}
\item $\PExc{C}{p \cdot f_i} = 0$,  $\forall i \in [m]$ and $\forall p$ such that $\deg(p \cdot f_i)
  \leq q$.
\item $\PExc{C}{p^2 \cdot \prod_{i \in S} g_i} \geq 0$, $\forall S \subseteq [r]$ and $\forall p$
  such that $\deg(p^2 \cdot \prod_{i \in S} g_i) \leq q$.
\end{enumerate}
It is well-known that such constrained pseudoexpectation operators can be described as solutions to
semidefinite programs of size $n^{O(q)}$ \cite{BS14, Laurent09}. This hierarchy of semidefinite
programs for increasing $q$ is known as the SoS hierarchy.

\subsection{Matrix Representations of Polynomials and relaxations of $\ftwo{f}$}
\seclab{hscsos}
For a homogeneous polynomial $f$ of even  degree $d$, we say a matrix $M \in \Re^{[n]^{d/2}\times
  [n]^{d/2}}$ is a degree-$d$ matrix representation of $f$ if for all $x$, $f(x) = (x^{\otimes
  d/2})^{T} \cdot M \cdot x^{\otimes d/2}$. Recall that we consider the semidefinite program for
optimizing the quantity $\hssos{f}$, which is a relaxation for $\ftwo{f}$ when $f \geq 0$.
Let $\sfM_f \in \R^{n^{d/2} \times n^{d/2}}$ denote the unique SoS-symmetric matrix representation
of $f$. Figure \ref{fig:Lambda} gives the primal and dual forms of the relaxation computing
$\hssos{f}$.
It is easy to check that strong duality holds in this case, since the solution
$\PExc{C}{x^{\alpha}} = (1/\sqrt{n})^{\abs{\alpha}}$ for all $\alpha \in \udmindex{d}$,
is strictly feasible and in the relative interior of the domain. 
Thus, the objective values of the two programs are equal.
\begin{figure}[htb]
\begin{tabular}{|c|}
\hline
\begin{minipage}[t]{0.94\textwidth}
\smallskip
\underline{\textsf{Primal}}
\[
\hssos{f} ~\defeq~ \inf \inbraces{\sup_{\|z\| = 1} z^T M z ~\given~  M \in \sym{n^{d/2}}, ~~(x^{\otimes d/2})^T \cdot M \cdot x^{\otimes d/2} = f(x) ~~\forall x \in \R^n} 
\]
\smallskip
\end{minipage} \\
\hline
\begin{minipage}{0.92\textwidth}
\begin{tabular}{l|l}
\begin{minipage}[t]{0.48 \textwidth}
\smallskip
\underline{\textsf{Dual I}}
\begin{align*}
\mathsf{maximize} ~~~\qquad \qquad \qquad \iprod{\sfM_f}{\sfX}  \\
\mathsf{subject ~to:} \qquad \qquad \quad \Tr{\sfX} = 1 \\
{\sfX} ~\text{is SoS symmetric} \\
\sfX \succeq 0
\end{align*}
\end{minipage}
&
\begin{minipage}[t]{0.48 \textwidth}
\smallskip
\underline{\textsf{Dual II}}
\begin{align*}
\mathsf{maximize} \qquad \qquad \qquad \qquad \quad \PExc{C}{f}  \\
\mathsf{subject ~to:}  \qquad \qquad \widetilde{\mathbf{E}}_C ~\text{is a degree-$d$} \\
 \text{pseudoexpectation} \\
 \widetilde{\mathbf{E}}_C ~\text{respects}~ C \equiv \inbraces{\norm{2}{x}^d = 1}\\
\end{align*}
\end{minipage}  \\
\end{tabular}
\end{minipage} \\
\hline
\end{tabular}
\caption{Primal and dual forms for the relaxation computing $\hssos{f}$}
\label{fig:Lambda}
\end{figure}
We will also consider a weaker relaxation of $\ftwo{f}$, which we denote by $\fsp{f}$. A somewhat
weaker version of this  was used as the reference value in the work of \cite{BKS14}. Figure
\ref{fig:fsp} gives the primal and dual forms of this relaxation.
\begin{figure}[htb]
\begin{tabular}{|c|}
\hline
\begin{minipage}{0.94\textwidth}
\smallskip
\underline{\textsf{Primal}}
\[
\fsp{f} ~\defeq~ \inf \inbraces{ \norm{2}{M} ~\given~  M \in \sym{n^{d/2}}, ~~(x^{\otimes d/2})^T \cdot M \cdot x^{\otimes d/2} = f(x) ~~\forall x \in \R^n} 
\]
\smallskip
\end{minipage} \\
\hline
\begin{minipage}{0.94\textwidth}
\smallskip
\underline{\textsf{Dual}}
\begin{align*}
\mathsf{maximize} ~~~\qquad \qquad \qquad \iprod{\sfM_f}{\sfX}  \\
\mathsf{subject ~to:} \qquad \qquad \quad \norm{S_1}{\sfX} = 1 \\
{\sfX} ~\text{is SoS symmetric}
\end{align*}
\end{minipage} \\
\hline
\end{tabular}
\caption{Primal and dual forms for the relaxation computing $\fsp{f}$}
\label{fig:fsp}
\end{figure}
%


%
We will also need to consider constraint sets $C = \inbraces{\norm{2}{x}^2 = 1, x^{\beta_1} \geq 0,
  \ldots, x^{\beta_m} \geq 0}$. We refer to the non-negativity constraints here as \defnt{moment
  non-negativity constraints}.
When considering the maximum of $\PExc{C}{f}$, for constraint sets $C$ containing moments
non-negativity constraints in addition to $\norm{2}{x}^2=1$, we refer to the optimum value as
$\hscsos{C}{f}$. Note that the maximum is still taken over degree-$d$ pseudoexpectations.
Also, strong duality still holds in this case since $\PExc{C}{x^{\alpha}} =
(1/\sqrt{n})^{\abs{\alpha}}$ is still a strictly feasible solution.

\subsubsection{Properties of relaxations obtained from constrained pseudoexpectations}
We use the following claim, which is an easy consequence of the fact that the sum-of-squares
algorithm can produce a certificate of optimality (see \cite{OZ13}). 
In particular, if $\max_{\PE_C} \PExc{C}{f} =
\hscsos{C}{f}$ for a degree-$q_1$ pseudoexpectation operator respecting $C$ containing
$\norm{2}{x}^2=1$ and moment non-negativity constraints for $\beta_1, \ldots, \beta_m$, then for every
$\lambda > \hscsos{C}{f}$, we have that $\lambda - f$ can be certified to be positive by 
showing that $\lambda - f \in \Sigma^{q_1}_C$. Here $\Sigma^{(q_1)}_C$ is the set of all expressions
of the form
\[
\lambda - f ~=~ \sum_j p_j \cdot \pth{\norm{2}{x}^2-1} + \sum_{S \subseteq [m]} h_S(x) \cdot
\prod_{i \in S}{x^{\beta_i}} \mcom
\]  
where each $h_S$ is a sum of squares of polynomials and the degree of each term is at most $q_1$.
\begin{lemma}
\lemlab{sos:replace}
Let $\hscsos{C}{f}$ denote the maximum of $\PExc{C}{f}$ over all degree-$d$ pseudoexpectation operators
respecting $C$. Then, for a pseudoexpectation operator of degree $d'$ (respecting $C$) and a
polynomial $p$ of degree at most $(d'-d)/2$, we have that 
\[
\PExc{C}{p^2 \cdot f} ~\leq~ \PExc{C}{p^2} \cdot \hscsos{C}{f} \mper
\]
\end{lemma}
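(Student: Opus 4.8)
The plan is to derive this from the dual certificate of optimality for the degree-$d$ relaxation computing $\hscsos{C}{f}$, which is exactly the fact recalled immediately above the statement. Since strong duality holds for this relaxation (the strictly feasible point $\PExc{C}{x^{\alpha}} = (1/\sqrt{n})^{\abs{\alpha}}$ guarantees it), for every real $\lambda > \hscsos{C}{f}$ we have $\lambda - f \in \Sigma^{(d)}_C$, i.e.
\[
\lambda - f ~=~ \sum_j p_j \cdot \pth{\norm{2}{x}^2 - 1} ~+~ \sum_{S \subseteq [m]} h_S(x) \cdot \prod_{i \in S} x^{\beta_i},
\]
where each $h_S$ is a sum of squares and every term on the right-hand side has degree at most $d$. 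Morally this is the exact pseudoexpectation analogue of the elementary inequality $p^2 \cdot f \le p^2 \cdot \max f$, and the proof mimics that: multiply the certificate by $p^2$ and apply $\PE_C$.

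Concretely, I would multiply the displayed identity through by $p^2$. Since $\deg(p) \le (d'-d)/2$, the polynomial $p^2$ has degree at most $d'-d$, so after multiplication each term $p^2 \cdot p_j \cdot (\norm{2}{x}^2-1)$ and each term $p^2 \cdot h_S \cdot \prod_{i\in S} x^{\beta_i}$ has degree at most $d'$; and if $h_S = \sum_k q_k^2$ then $p^2 h_S = \sum_k (p q_k)^2$ is again a sum of squares. Applying the degree-$d'$ operator $\PE_C$ (which respects $C$) to the identity $\lambda p^2 - p^2 f = \sum_j (p^2 p_j)(\norm{2}{x}^2-1) + \sum_S (p^2 h_S)\prod_{i\in S} x^{\beta_i}$: each term $(p^2 p_j)(\norm{2}{x}^2-1)$ vanishes under $\PE_C$ because $\PE_C$ respects the equality constraint $\norm{2}{x}^2 = 1$ and the degree is at most $d'$; and each term $\sum_k (p q_k)^2 \prod_{i\in S} x^{\beta_i}$ is nonnegative under $\PE_C$ by the moment non-negativity axiom for the constraints $x^{\beta_i}\ge 0$, $i\in S$, again using that its degree is at most $d'$. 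Hence $\PExc{C}{\lambda p^2 - p^2 f} \ge 0$, i.e.\ $\PExc{C}{p^2 \cdot f} \le \lambda \cdot \PExc{C}{p^2}$, and letting $\lambda \downarrow \hscsos{C}{f}$ gives the claim. (When $\PExc{C}{p^2}=0$ this reads $\PExc{C}{p^2 f}\le 0$, which is precisely what the inequality asserts in that case.)

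I do not expect a genuine obstacle here: all the mathematical content is packaged in the dual certificate already quoted in the excerpt. The only point that needs real care is the degree bookkeeping — the certificate for $\hscsos{C}{f}$ must be taken of degree $d$ (since that relaxation is over degree-$d$ pseudoexpectations), which is exactly what forces the hypothesis $\deg(p)\le (d'-d)/2$ so that multiplying by $p^2$ keeps every produced term inside the degree-$d'$ window where the axioms of $\PE_C$ are valid — together with the trivial but necessary observation that $p^2 h_S$ stays a sum of squares so that the moment non-negativity axiom still applies term by term.
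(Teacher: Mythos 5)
Your proposal is correct and is essentially identical to the paper's proof: both invoke the degree-$d$ SoS certificate $\lambda - f \in \Sigma^{(d)}_C$ for $\lambda > \hscsos{C}{f}$, multiply by $p^2$, and apply the degree-$d'$ pseudoexpectation, using the degree bound on $p$ to keep all terms within degree $d'$. The extra care you take in verifying that each term of $p^2 \cdot g$ remains of the certified form (sums of squares times products of the $x^{\beta_i}$, plus multiples of $\norm{2}{x}^2-1$) is exactly what the paper compresses into "by the properties of pseudoexpectation operators."
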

\begin{proof}
As described above, for any $\lambda > \hscsos{C}{f}$, we can write $\lambda - f = g$ for $g \in
\Sigma^{(d)}_C$. Since the degree of each term in $p^2 \cdot g$ is at most $d'$, we have by the
properties of pseudoexpectation operators (of degree $d'$) that
\[
\lambda \cdot \PExc{C}{p^2} - \PExc{C}{p^2 \cdot f)} 
~=~\PExc{C}{p^2 \cdot (\lambda - f)} ~=~ \PExc{C}{p^2 \cdot g} ~\geq~ 0 \mper
\] 
\end{proof}

The following monotonicity claim for non-negative coefficient polynomials will come in handy in later sections. 

\begin{lemma}
\lemlab{sos:nnc:monotonicity}
    Let $C$ be a system of polynomial constraints containing $\{\forall 
	\beta\in \degmindex{t}, x^{\beta}\geq 0\}$. Then for any non-negative coefficient polynomials 
	$f$ and $g$ of degree $t$, and such that $f\geq g$ (coefficient-wise, i.e. $f-g$ has non-negative 
	coefficients), we have $\hscsos{C}{f}\geq \hscsos{C}{g}$. 
\end{lemma}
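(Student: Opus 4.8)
The plan is to reduce the claimed inequality to a pointwise statement about a single pseudoexpectation operator. First I would note that it suffices to prove that $\PExc{C}{f} \ge \PExc{C}{g}$ holds for \emph{every} pseudoexpectation operator $\PE_C$ respecting $C$ (of the degree used in defining $\hscsos{C}{\cdot}$): taking $\PE_C$ to be an operator attaining the maximum $\hscsos{C}{g}$ --- which exists, since this maximum is the optimum of a bounded semidefinite program over a compact feasible set --- the pointwise bound then gives $\hscsos{C}{f} \ge \PExc{C}{f} \ge \PExc{C}{g} = \hscsos{C}{g}$, as desired.

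Second, I would fix such a $\PE_C$ and set $h := f - g$, so that by hypothesis $h$ is a homogeneous polynomial of degree $t$ with $h = \sum_{\beta \in \degmindex{t}} h_\beta\, x^\beta$ and every coefficient $h_\beta \ge 0$. By linearity of $\PE_C$ it then suffices to show $\PExc{C}{h} = \sum_{\beta \in \degmindex{t}} h_\beta \cdot \PExc{C}{x^\beta} \ge 0$, and since each $h_\beta \ge 0$ this follows as soon as $\PExc{C}{x^\beta} \ge 0$ for every $\beta \in \degmindex{t}$. But each such $x^\beta \ge 0$ is by assumption one of the constraints in $C$, so the moment non-negativity property of $\PE_C$ --- applied with the trivial square $p = 1$ and the singleton set picking out the constraint $x^\beta \ge 0$, for which $\deg(1^2 \cdot x^\beta) = t$ is at most the degree of $\PE_C$ --- yields exactly $\PExc{C}{x^\beta} \ge 0$. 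Adding these up gives $\PExc{C}{f} = \PExc{C}{g} + \PExc{C}{h} \ge \PExc{C}{g}$, which is the desired pointwise claim.

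I do not anticipate a real obstacle here; the argument is a direct unwinding of the definition of moment non-negativity constraints together with linearity of pseudoexpectations. The only point that requires a small amount of care is the degree bookkeeping --- one must ensure that the monomial constraints $x^\beta \ge 0$ with $\beta \in \degmindex{t}$ are of degree at most the degree of the pseudoexpectation operators over which $\hscsos{C}{\cdot}$ is optimized, which holds since $f$, $g$, and hence $h$, all have degree exactly $t$ (and the same reasoning goes through verbatim if one instead works with pseudoexpectations of any larger degree, again using $p = 1$).
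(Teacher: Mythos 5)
Your proof is correct and follows essentially the same route as the paper: fix a pseudoexpectation realizing $\hscsos{C}{g}$, write $f = g + (f-g)$, and use linearity together with the moment non-negativity constraints to conclude $\PExc{C}{f-g} \geq 0$. The only difference is that you spell out the application of the constraint $x^{\beta} \geq 0$ with $p=1$ explicitly, which the paper leaves implicit.
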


\begin{proof}
    For any pseudo-expectation operator $\PE_{C}$ respecting $C$, we have 
    $\PExc{C}{f-g}\geq 0$ because of the moment non-negativity constraints and by linearity. 
    
    So let $\PE_{C}$ be a pseudo-expectation operator realizing $\hscsos{C}{g}$. Then we have, 
    \[
    \hscsos{C}{f}\geq \PExc{C}{f} = \PExc{C}{g}+\PExc{C}{f-g} = \hscsos{C}{g} + \PExc{C}{f-g}\geq 0.
    \] 
\end{proof}

%
%
 
 \subsection{An additional operation on folded polynomials}
We define the following operation (and it's folded counterpart) which in the case of 
a multilinear polynomial corresponds (up to scaling) to the sum of a row of the SOS 
symmetric  matrix representation of the polynomial.
This will be useful in our result for non-negative polynomials.
\begin{defn}[Collapse]
Let $f \in \pr{d}$ be a polynomial. The $k$-\defnt{collapse} of $f$, denoted as $\collapse{k}{f}$ 
is the degree $d-k$ polynomial $g$ given by, 
\[
g(x) = \sum_{\gamma \in \degmindex{d-k}} g_{\gamma} \cdot x^{\gamma}
\quad \text{where} \quad
g_{\gamma} = \sum_{\alpha \in \degmindex{k}} f_{\gamma+\alpha} \mper
\]
For a degree-$(d_1,d_2)$ folded polynomial $f$, we define $\collapse{k}{f}$ similarly as the
degree-$(d_1-k,d_2)$ folded polynomial $g$ given by, 
\[
g(x) = \sum_{\gamma \in \degmindex{d_1-k}} \fold{g}{\gamma}(x) \cdot x^{\gamma}
\quad \text{where} \quad
\fold{g}{\gamma} = \sum_{\alpha \in \degmindex{k}} \fold{f}{\gamma+\alpha} \mper
\]
\end{defn}


%

\section{Results for Polynomials in $\pr{d}$ and $\npr{d}$}
\seclab{gen-to-multi}
%

\subsection{Reduction to Multilinear Polynomials}
\seclab{gen-to-multi}

\begin{lemma}
\lemlab{split:gen:mult}
	Any homogeneous $n$-variate degree-$d$ polynomial $f(x)$ has a unique representation of the form 
	\[
		\sum_{\mathclap{\alpha \in \udmindex{d / 2}}} 
		\multif{2\alpha}(x)\cdot x^{2 \alpha}
	\] 
	where for any $\alpha \in \udmindex{d / 2}$, $\multif{2\alpha}$ is a homogeneous multilinear  
	degree-$(d-2|\alpha|)$ polynomial. 
\end{lemma}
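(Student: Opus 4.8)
The plan is to exploit the fact that every monomial factors uniquely into a ``perfect-square part'' times a ``squarefree part.'' Concretely, for a multi-index $\beta \in \degmindex{d}$ set $\gamma \defeq \beta \bmod 2$ coordinatewise (so $\gamma_i \in \{0,1\}$ records the parity of $\beta_i$) and $\alpha \defeq \floor{\beta/2}$, i.e.\ $\beta = 2\alpha + \gamma$ with $\gamma \leq \one$. This pair $(\alpha,\gamma)$ is manifestly determined by $\beta$, and conversely any $(\alpha,\gamma)$ with $\gamma \leq \one$ recovers $\beta = 2\alpha + \gamma$; hence $\beta \mapsto (\floor{\beta/2},\, \beta \bmod 2)$ is a bijection between $\mindex$ and $\{(\alpha,\gamma) : \alpha \in \mindex,\ \gamma \leq \one\}$. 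Restricting to $|\beta| = d$ we get $|\gamma| = d - 2|\alpha| \geq 0$, which forces $\alpha \in \udmindex{d/2}$; and $x^\gamma$ being squarefree is precisely the statement that a polynomial supported on such $\gamma$'s is multilinear.

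For existence I would simply define, for each $\alpha \in \udmindex{d/2}$, the polynomial $\multif{2\alpha}(x) \defeq \sum_{\gamma \leq \one,\ |\gamma| = d - 2|\alpha|} f_{2\alpha+\gamma}\, x^\gamma$. Each $\multif{2\alpha}$ is homogeneous of degree $d - 2|\alpha|$ and multilinear by construction, and re-indexing $\sum_\alpha \multif{2\alpha}(x)\, x^{2\alpha} = \sum_\alpha \sum_\gamma f_{2\alpha+\gamma}\, x^{2\alpha+\gamma}$ along the bijection above recovers $\sum_\beta f_\beta\, x^\beta = f(x)$.

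For uniqueness, consider any representation $f = \sum_{\alpha \in \udmindex{d/2}} \multif{2\alpha}(x)\, x^{2\alpha}$ of the stated form. Since $\multif{2\alpha}$ is multilinear, every monomial occurring in $\multif{2\alpha}(x)\, x^{2\alpha}$ has the form $x^{2\alpha+\gamma}$ with $\gamma \leq \one$, so its coordinatewise even-floor is exactly $\alpha$. Thus distinct values of $\alpha$ contribute monomials with disjoint supports, the monomial $x^\beta$ of $f$ arises solely from the term with $\alpha = \floor{\beta/2}$, and there it appears with coefficient equal to the $x^{\beta - 2\alpha}$-coefficient of $\multif{2\alpha}$. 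Matching against $f_\beta$ forces $(\multif{2\alpha})_\gamma = f_{2\alpha+\gamma}$ for all relevant $\gamma$, which pins down every $\multif{2\alpha}$.

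I do not anticipate a genuine obstacle; the one point worth stating carefully is why multilinearity of the $\multif{2\alpha}$'s makes the even-floor of each monomial well defined and equal to $\alpha$ --- this is exactly where the constraint $\gamma \leq \one$ does the work, since without it (say) $x_1^2$ could be absorbed either as the constant term of $\multif{x_1^2}$ or as a monomial inside $\multif{1}$, destroying uniqueness. Everything else is routine bookkeeping with the bijection $\beta \leftrightarrow (\floor{\beta/2},\, \beta \bmod 2)$.
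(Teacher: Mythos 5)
Your proof is correct and takes essentially the same route as the paper: the paper states \lemref{split:gen:mult} without proof, and its proof of the folded analogue (\lemref{fold:split:gen:mult}) consists precisely of writing down the formula $\multif{2\alpha}(x) = \sum_{\gamma \leq \one,\, |\gamma| = d-2|\alpha|} f_{2\alpha+\gamma}\, x^{\gamma}$ that you give. Your explicit bijection $\beta \leftrightarrow (\floor{\beta/2},\ \beta \bmod 2)$ and the uniqueness argument are correct details that the paper leaves to the reader.
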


We would like to approximate $\|f\|_2$ by individually approximating $\| \multif{2\alpha} \|_2$ for each multilinear polynomial $\multif{2\alpha}$. This section will establish the soundness of this goal. 

\subsubsection{Upper Bounding $\hssos{f}$ in terms of $\hssos{\multif{2\alpha}}$}

We first bound $\hssos{f}$ in terms of 
$\max_{\alpha \in \udmindex{d / 2}} 
	\hssos{\multif{2\alpha}}$. 
The basic intuition is that
any matrix $M_f$ such that $\pth{\xtensor{(d/2)}}^T \cdot
M_f \cdot \xtensor{(d/2)}$ for all $x$ (called a \defnt{matrix representation} of $f$) can be
written as a sum of matrices $M_{t,f}$ for each $t\leq d/2$, each of which is block-diagonal
matrix with blocks corresponding to matrix representations of the polynomials $M_{\multif{2\alpha}}$
for each $\alpha$ with $\abs{\alpha} = 2t$.

\begin{lemma}
\lemlab{gen:mult:sp}
	Consider any homogeneous $n$-variate degree-$d$ polynomial 
	$f(x)$. We have, 
	\[
	\hssos{f} ~\leq~ \max_{\alpha \in \udmindex{d / 2}} 
	\frac{\hssos{\multif{2\alpha}}}{|\orbit{\alpha}|}\,(1+d/2)
	\]
\end{lemma}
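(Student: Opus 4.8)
The plan is to exploit the freedom in choosing a matrix representation of $f$. Starting from the decomposition $f(x) = \sum_{\alpha \in \udmindex{d/2}} \multif{2\alpha}(x)\cdot x^{2\alpha}$ from \lemref{split:gen:mult}, I would group the terms by $|\alpha|$: write $f = \sum_{t=0}^{d/2} f_t$ where $f_t(x) = \sum_{|\alpha| = t} \multif{2\alpha}(x)\cdot x^{2\alpha}$. Each $f_t$ is a homogeneous polynomial of degree $d$, so it suffices (by triangle inequality, together with the fact that $\hssos{\cdot}$ only involves spectral norms of representations, hence is subadditive) to show $\hssos{f_t} \leq (1+d/2)^{-1}\cdot(1+d/2)\cdot\max_{|\alpha|=t}\hssos{\multif{2\alpha}}/|\orbit{\alpha}|$ — more precisely, I will show $\hssos{f} \le \sum_{t=0}^{d/2}\hssos{f_t}$ and then bound each $\hssos{f_t}$ by $\max_{|\alpha|=t}\hssos{\multif{2\alpha}}/|\orbit{\alpha}|$, giving the factor $(1+d/2)$ overall from the number of values of $t$.

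The heart of the argument is constructing a good representation of $f_t$. Fix $t$. For each $\alpha$ with $|\alpha| = t$, pick a matrix representation $M_{\multif{2\alpha}}$ (of dimension $n^{(d/2-t)}\times n^{(d/2-t)}$, living on the ``free'' indices) achieving $\sup_{\|z\|=1} z^T M_{\multif{2\alpha}} z$ close to $\hssos{\multif{2\alpha}}$. I would build a representation of $f_t$ that, for each $\alpha$, distributes the term $\multif{2\alpha}(x)\cdot x^{2\alpha}$ across the $|\orbit{\alpha}|$ diagonal blocks indexed by tuples $I \in \orbit{\alpha}$ — concretely, place $\frac{1}{|\orbit{\alpha}|} M_{\multif{2\alpha}}$ in the block of rows/columns whose first $t$ coordinates are (a chosen tuple in the orbit of) the index pattern for $x^{2\alpha}$ split symmetrically between row-side and column-side. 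One must check that: (i) the quadratic form of this block-diagonal matrix on $x^{\otimes d/2}$ indeed reproduces $f_t(x)$ — this uses that $x^{2\alpha}$ contributes equally to all blocks in $\orbit{\alpha}$ and that each block-contribution is $\frac{1}{|\orbit{\alpha}|}x^{2\alpha}\multif{2\alpha}(x)$; (ii) the resulting matrix is block-diagonal across distinct $\alpha$'s (different orbits occupy disjoint index sets once we fix which coordinates carry the ``$2\alpha$'' part), so its maximum eigenvalue is the max over blocks; (iii) each block has max eigenvalue $\frac{1}{|\orbit{\alpha}|}\sup_{\|z\|=1}z^T M_{\multif{2\alpha}}z \approx \frac{\hssos{\multif{2\alpha}}}{|\orbit{\alpha}|}$. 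Hence $\hssos{f_t} \le \max_{|\alpha|=t}\hssos{\multif{2\alpha}}/|\orbit{\alpha}|$, and summing over the $1+d/2$ values of $t$ finishes the lemma.

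The step I expect to be the main obstacle is the bookkeeping in (i)–(ii): making the block assignment so that (a) the off-``diagonal'' cross terms between different orbit representatives vanish or are absorbed, and (b) the quadratic form evaluated on $x^{\otimes d/2}$ exactly recombines the $|\orbit{\alpha}|$ scaled copies back into $x^{2\alpha}\multif{2\alpha}(x)$ with no spurious terms. The cleanest way to handle this is to work with the SoS-symmetric representation $M_{\multif{2\alpha}}$ of each fold and place its copies into blocks indexed by the distinct ways of writing $2\alpha = \beta + \beta'$ with $|\beta|=|\beta'|=t$... actually simpler: index blocks by pairs $(\beta,\beta)$ with $2\beta \le 2\alpha$ won't cover everything, so instead I would index the $|\orbit{\alpha}|$ copies directly by tuples $I\in\orbit{\alpha}$ placed on the row side (with the column side carrying the transpose pattern), relying on SoS-symmetry of the fold's representation to make all these copies consistent. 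The averaging by $1/|\orbit{\alpha}|$ is exactly what makes the sum telescope to $f_t$. Once the construction is pinned down, the eigenvalue bound is immediate from block-diagonality, and the only remaining subtlety is passing from the infimum defining $\hssos{\multif{2\alpha}}$ to an actual near-optimal representative, which is routine.
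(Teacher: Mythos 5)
Your proposal is correct and follows essentially the same route as the paper: the paper likewise writes $M_f = \sum_t M_{(t,f)}$ where each $M_{(t,f)}$ places the scaled block $M_{\multif{2\alpha}}/|\orbit{\alpha}|$ at the diagonal positions $(I,I)$ for every $I \in \orbit{\alpha}$, verifies this represents $f$, and combines the triangle inequality over the $1+d/2$ values of $t$ with the fact that a block-diagonal matrix's top eigenvalue is the max over its blocks. The bookkeeping worries you raise in the last paragraph do not materialize, since the matrix is zero off the diagonal blocks.
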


\begin{proof}
	We shall start by constructing an appropriate matrix representation $M_f$ of $f$ that will give us the  desired upper bound on $\hssos{f}$. To this end, for any $\alpha \in \udmindex{d/2}$, let $M_{\multif{2\alpha}}$ be the matrix representation of $\multif{2\alpha}$ realizing $\hssos{\multif{2\alpha}}$. For any $0\leq t\leq d/2$, we define $M_{(t,f)}$ so that for any $\alpha \in \degmindex{t}$ and $I\in \orbit{\alpha}$, $M_{(t,f)}[I,I] := M_{\multif{2\alpha}}/|\orbit{\alpha}|$, and $M_{(t,f)}$ is zero everywhere else. Now let $M_f := \sum_{t\in [d/2]} M_{(t,f)}$. As for validity of $M_f$ as a representation of $f$ we have, 
	\begin{align*}
		\iprod{M_f}{x^{\otimes d/2}(x^{\otimes d/2})^T}
		&= 
		\sum_{0\leq t\leq \frac{d}{2}} 
		\iprod{M_{(t,f)}}{x^{\otimes d/2}(x^{\otimes d/2})^T} \\
		&= 
		\sum_{\alpha \in \udmindex{d/2}}
		\sum_{I\in \orbit{\alpha}}
		\iprod{M_{(|\alpha|,f)}[I,I]}
		{x^{\otimes (d/2-|\alpha|)}(x^{\otimes (d/2-|\alpha|)})^T}
		x^{2\alpha} \\
		&= 
		\sum_{\alpha \in \udmindex{d/2}} 
		\sum_{I\in \orbit{\alpha}}
		\frac{1}{|\orbit{\alpha}|}
		\iprod{M_{\multif{2\alpha}}}{x^{\otimes (d/2-|\alpha|)}(x^{\otimes (d/2-|\alpha|)})^T}
		x^{2\alpha} \\
		&= 
		\sum_{\alpha \in \udmindex{d/2}} 
		x^{2\alpha}\cdot 
		\iprod{M_{\multif{2\alpha}}}{x^{\otimes (d/2-|\alpha|)}(x^{\otimes (d/2-|\alpha|)})^T}\\
		&= 
		\sum_{\alpha \in \udmindex{d/2}} 
		\multif{2\alpha}(x)
		x^{2\alpha} \\
		&= 
		f(x)
	\end{align*}
	
	Now observe that $M_{(t,f)}$ is a block-diagonal matrix (up to simultaneous permutation of it's rows and columns). Thus we have $\|M_{(t,f)}\|\leq \max_{\alpha \in \degmindex{t}}\|M_{\multif{2\alpha}}\|/|\orbit{\alpha}|$. Thus on applying triangle inequality, we obtain $\|M_{f}\|\leq \max\limits_{\alpha \in \udmindex{d/2}} 
	(1+d/2)\,\|M_{\multif{2\alpha}}\|/|\orbit{\alpha}|$
\end{proof}

\subsubsection{Lower Bounding $\|f\|_2$ in terms of $\|\multif{2\alpha}\|_2$ (non-negative coefficients)}
We first bound $\| f \|_2$ in terms of 
$\max_{\alpha \in \udmindex{d / 2}} 
	\| \multif{2\alpha} \|_2$, when every coefficient of $f$ is non-negative. 
If $x^*$ is the optimizer of
$\multif{2\alpha}$, then it is easy to see that $x^* \geq 0$.
Setting $y =  x^* + \frac{\sqrt{\alpha}}{|\alpha|}$ ensures that
$\| y \|_2 \leq 2$ and $f(y)$ is large, since $f(y)$ recovers a significant fraction (up to a 
$2^{O(d)}\cdot |\orbit{\alpha}|$ factor) of $\multif{2\alpha}(x^*)$.

\begin{lemma}
\lemlab{nnc:mult:2}
	Let $f(x)$ be a homogeneous $n$-variate degree-$d$ polynomial 
	with non-negative coefficients. Consider any $\alpha\in \udmindex{d/2}$. Then 
	\[\|f\|_{2} ~\geq~ \frac{\|\multif{2\alpha}\|_2}{2^{O(d)}\,|\orbit{\alpha}|}.\]
\end{lemma}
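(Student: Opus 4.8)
The plan is to use non-negativity of the coefficients in a completely elementary way, avoiding any probabilistic decoupling: if every coefficient of $f$ and every coordinate of the evaluation point is $\geq 0$, then $f$ at that point is at least the sum of \emph{any} subcollection of terms appearing in its monomial expansion. So it suffices to pick one good point $y\geq 0$ and read off, from each monomial of $f$ that feeds into $\multif{2\alpha}$, a single convenient term whose total contribution reconstructs a $\pm$-free multiple of $\multif{2\alpha}(x^*)$.

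Concretely: let $x^*$ be a unit vector with $\multif{2\alpha}(x^*)=\ftwo{\multif{2\alpha}}$ (if $\multif{2\alpha}\equiv 0$ there is nothing to prove). Since $\multif{2\alpha}$ inherits non-negativity of its coefficients from $f$, we have $|\multif{2\alpha}(x)|\leq \multif{2\alpha}(|x|)$ and $\||x|\|=\|x\|$, so we may take $x^*\geq 0$ and $\multif{2\alpha}(x^*)=\fmax{\multif{2\alpha}}\geq 0$. Now set $y:=x^*+u$ where $u:=\sqrt{\alpha}/\sqrt{\abs{\alpha}}$ is the unit vector in direction $\sqrt{\alpha}$ (and $u:=0$ when $\alpha=0^n$); this is the largest scaling of $\sqrt{\alpha}$ keeping $\|y\|\leq \|x^*\|+\|u\|\leq 2$, and $y\geq 0$. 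By \lemref{split:gen:mult} the monomials of $f$ contributing to $\multif{2\alpha}$ are exactly the $x^{2\alpha+\delta}$ with $\delta\leq\one$. Expanding $y^{2\alpha+\delta}=\prod_i(x^*_i+u_i)^{2\alpha_i+\delta_i}$ and keeping only the single term $\binom{2\alpha_i+\delta_i}{2\alpha_i}(x^*_i)^{\delta_i}u_i^{2\alpha_i}$ in the $i$-th factor (binomial coefficient $\geq 1$), non-negativity of $f_{2\alpha+\delta}$, $x^*$, $u$ yields
\[
f(y)\;\geq\; u^{2\alpha}\sum_{\delta\leq\one} f_{2\alpha+\delta}\,(x^*)^{\delta}\;=\;u^{2\alpha}\cdot\multif{2\alpha}(x^*)\;=\;\frac{\alpha^\alpha}{\abs{\alpha}^{\abs{\alpha}}}\cdot\ftwo{\multif{2\alpha}}\mper
\]
Since $\ftwo{f}\geq f(y)/\|y\|^d\geq f(y)/2^d$ by homogeneity (and $f(y)\geq 0$), it remains to establish the purely combinatorial estimate $\alpha^\alpha/\abs{\alpha}^{\abs{\alpha}}\geq 2^{-O(d)}/\cardin{\orbit{\alpha}}$.

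To prove this, recall $\cardin{\orbit{\alpha}}=\abs{\alpha}!/\prod_i\alpha_i!$, so the claim reads $\dfrac{\alpha^\alpha\,\abs{\alpha}!}{\abs{\alpha}^{\abs{\alpha}}\prod_i\alpha_i!}\geq 2^{-O(d)}$. I would bound this coordinatewise using standard two-sided Stirling estimates of the form $(m/e)^m\leq m!\leq e\,m^{m+1/2}e^{-m}$: for the numerator, $\alpha^\alpha/\prod_i\alpha_i!\geq \prod_{i:\alpha_i>0} e^{\alpha_i-1}/\sqrt{\alpha_i}\geq e^{\abs{\alpha}-\abs{\mathrm{supp}(\alpha)}}/\prod_i\sqrt{\alpha_i}$, while $\abs{\alpha}!/\abs{\alpha}^{\abs{\alpha}}\geq e^{-\abs{\alpha}}$; multiplying, the $e^{\pm\abs{\alpha}}$ cancel and what survives is $\geq e^{-\abs{\alpha}}/\prod_i\sqrt{\alpha_i}$. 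Finally $\prod_i\alpha_i\leq 3^{\abs{\alpha}/3}\leq 2^{\abs{\alpha}}$ (the classical bound on the product of a composition of an integer), so $\prod_i\sqrt{\alpha_i}\leq 2^{\abs{\alpha}/2}$ and the quantity is $\geq 2^{-O(\abs{\alpha})}\geq 2^{-O(d)}$ because $\abs{\alpha}\leq d/2$. Combining with the previous display finishes the proof.

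The only nontrivial point — and hence the main obstacle — is this last combinatorial inequality, together with the need to normalize $u$ correctly: one must scale $\sqrt{\alpha}$ down by $\sqrt{\abs{\alpha}}$ rather than by $\abs{\alpha}$, so that the recovered fraction is $\alpha^\alpha/\abs{\alpha}^{\abs{\alpha}}$ and not $\alpha^\alpha/\abs{\alpha}^{2\abs{\alpha}}$; the latter can be as small as $d^{-\Omega(d)}$ (take $\alpha$ supported on a single coordinate) and would not fit inside a $2^{O(d)}$ loss. Everything else — the non-negative-subset-sum trick, the $\|y\|\leq 2$ bound, the reduction to $x^*\geq 0$, and the trivial edge case $\alpha=0^n$ (where $y=x^*$, $\cardin{\orbit{0^n}}=1$, and $f(x^*)\geq \multif{0}(x^*)$ directly) — is routine.
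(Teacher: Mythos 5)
Your proof is correct and takes essentially the same route as the paper's: the same test vector $x^*+\sqrt{\alpha}/\sqrt{\abs{\alpha}}$, the same use of non-negativity to discard all terms except those recovering $\multif{2\alpha}$, and the same combinatorial estimate $\alpha^{\alpha}\,\abs{\alpha}!\geq 2^{-O(d)}\,\abs{\alpha}^{\abs{\alpha}}\prod_i\alpha_i!$ (which the paper obtains more directly from $\alpha_i^{\alpha_i}\geq\alpha_i!$ and $t!\geq (t/e)^t$, making your Stirling computation unnecessary). The only cosmetic difference is that you extract $\multif{2\alpha}(x^*)$ by a term-by-term binomial selection, whereas the paper bounds $\multif{2\alpha}(y^*)\geq\multif{2\alpha}(x^*)$ by entrywise monotonicity.
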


\begin{proof}
	Consider any $0\leq t\leq d/2$, and any $\alpha\in \degmindex{t}$. 
	Let $x^*_{\alpha} := \mathrm{argmax} \|\multif{2\alpha}\|_2$ (note $x^*_{\alpha}$ must be 
	non-negative). Let \[y^* := x^*_\alpha ~+~ 
	\frac{\sqrt{\alpha}}{\sqrt{t}} \]
	and let $x^* := y^*/\|y^*\|$. The second term is a unit vector 
	since $\norm{2}{\sqrt{\alpha}}^2 = t$. 
	Thus $\|y^*\| = \Theta(1)$ since $y^*$ is the sum of two unit
	vectors. This implies $f(x^*)\geq f(y^*)/2^{O(d)}$. Now we have, 
	\begin{align*}
		f(y^*) 
		&= 
		\sum_{\beta \in \udmindex{d/2}} 
		\multif{2\beta}(y^*) \cdot (y^*)^{2\beta}
		&&(\text{by \lemref{split:gen:mult}}) \\
		&\geq 
		\multif{2\alpha}(y^*)\cdot (y^*)^{2\alpha}
		&&(\text{by non-negativity of coefficients}) \\
		&\geq 
		\multif{2\alpha}(y^*)~\frac{1}{t^{t}} \prod_{\ell\in \supp{\alpha}} \alpha_\ell^{\alpha_\ell} 
		&&(y^* \geq
		\frac{\sqrt{\alpha}}{\sqrt{t}} \mbox { entry-wise})\\
		&\geq 
		\multif{2\alpha}(y^*)~\frac{1}{2^{O(t)}\,t!}
		\prod_{\ell\in \supp{\alpha}} \alpha_\ell^{\alpha_\ell}  \\
		&\geq 
		\multif{2\alpha}(y^*)~\frac{\prod_{\ell\in \supp{\alpha}} \alpha_\ell!}
		{2^{O(t)}\,t!} \\
		&\geq 
		\multif{2\alpha}(y^*)~\frac{1}{2^{O(t)}\,|\orbit{\alpha}|} \\
		&\geq 
		\multif{2\alpha}(x^*)~\frac{1}{2^{O(t)}\,|\orbit{\alpha}|} 
		&&(y^* \text{ is entry-wise at least } x^*) \\
		&= 
		\frac{\|\multif{2\alpha}\|_2}{2^{O(t)}\,|\orbit{\alpha}|}.
	\end{align*}
	This completes the proof. 
\end{proof}

\begin{theorem}
\thmlab{sos:nnc:mult}
	Consider any homogeneous $n$-variate degree-$d$ polynomial 
	$f(x)$ with non-negative coefficients. Then 
	\[
		\frac{ \hssos{f} }{\|f\|_2} ~\leq~ 
		2^{O(d)}
		\max_{\alpha \in \udmindex{d/2}} 
		\frac{\hssos{\multif{2\alpha}}}{\|\multif{2\alpha}\|_2} .
	\]
\end{theorem}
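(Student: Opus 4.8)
The plan is to obtain the theorem as an immediate consequence of the two preceding lemmas, the crucial point being that the orbit-size factors $|\orbit{\alpha}|$ appearing in both \lemref{gen:mult:sp} and \lemref{nnc:mult:2} cancel against each other. Concretely, I would let $\alpha^* \in \udmindex{d/2}$ be an index attaining the maximum in \lemref{gen:mult:sp}, so that
\[
\hssos{f} ~\leq~ (1+d/2)\cdot\frac{\hssos{\multif{2\alpha^*}}}{|\orbit{\alpha^*}|}.
\]
Then I would apply \lemref{nnc:mult:2} with this \emph{same} index $\alpha^*$; this is the only step that uses the hypothesis that $f$ has non-negative coefficients (and hence so does each $\multif{2\alpha}$, since the decomposition of \lemref{split:gen:mult} merely partitions the coefficients of $f$ among the $\multif{2\alpha}$). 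That lemma gives
\[
\|f\|_2 ~\geq~ \frac{\|\multif{2\alpha^*}\|_2}{2^{O(d)}\cdot|\orbit{\alpha^*}|}.
\]

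Dividing the first bound by the second, the common factor $1/|\orbit{\alpha^*}|$ cancels, the polynomial factor $(1+d/2)$ is absorbed into $2^{O(d)}$, and one is left with
\[
\frac{\hssos{f}}{\|f\|_2} ~\leq~ 2^{O(d)}\cdot\frac{\hssos{\multif{2\alpha^*}}}{\|\multif{2\alpha^*}\|_2} ~\leq~ 2^{O(d)}\cdot\max_{\alpha \in \udmindex{d/2}}\frac{\hssos{\multif{2\alpha}}}{\|\multif{2\alpha}\|_2},
\]
which is exactly the claimed inequality. In the degenerate case $\|\multif{2\alpha^*}\|_2 = 0$ we have $\multif{2\alpha^*}\equiv 0$ (it has non-negative coefficients), hence $\hssos{\multif{2\alpha^*}} = 0$ and, by the choice of $\alpha^*$, also $\hssos{f}=0$, so the bound holds trivially; thus one may assume all relevant denominators are positive.

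There is no real obstacle here: the substantive content has already been established in \lemref{gen:mult:sp} and \lemref{nnc:mult:2}, and the theorem is essentially their quotient. The only things to watch are (i) using the identical index $\alpha^*$ in both applications so that the $|\orbit{\alpha^*}|$ factors genuinely cancel rather than merely being comparable, and (ii) the degenerate case noted above.
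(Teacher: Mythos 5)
Your proposal is correct and is exactly the paper's argument: the paper's proof is the one-line "Combining \lemref{gen:mult:sp} and \lemref{nnc:mult:2} yields the claim," and your write-up simply makes explicit the cancellation of the $|\orbit{\alpha^*}|$ factors (plus the harmless degenerate case). Nothing further is needed.
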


\begin{proof}
	Combining \lemref{gen:mult:sp} and \lemref{nnc:mult:2} yields 
	the claim. 
\end{proof}

We will next generalize \thmref{sos:nnc:mult} by proving a more general version of \lemref{nnc:mult:2}. 

\subsubsection{Lower Bounding $\|f\|_2$ in terms of $\|\multif{2\alpha}\|_2$ (general  case)}
We lower bound $\|f\|_2$ in terms of $\|\multif{2\alpha}\|_2$ for all polynomials. 
We will first recollect and establish some polynomial identities that will be used in the proof of the generalized version of 
\lemref{nnc:mult:2} (i.e. \lemref{gen:mult:2}).

\paragraph{Polynomial Identities}
\begin{lemma}[Chebyshev's Extremal Polynomial Inequality]
\lemlab{chebyshev}
	Let $p(x)$ be a univariate degree-$d$ polynomial and let $c_d$ be 
	it's leading coefficient. Then we have, 
	$
		\max_{x\in [0,1]} |p(x)| \geq 2|c_d|/4^{d}.
	$
\end{lemma}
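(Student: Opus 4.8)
The plan is to reduce the statement to the classical extremal property of Chebyshev polynomials on $[-1,1]$ and then transport it to $[0,1]$ by an affine substitution, tracking how the leading coefficient transforms. Recall that the degree-$d$ Chebyshev polynomial $T_d$ satisfies $T_d(\cos\theta) = \cos(d\theta)$, has leading coefficient $2^{d-1}$, attains the values $\pm 1$ with alternating signs at the $d+1$ points $x_j := \cos(j\pi/d)$ for $j = 0, 1, \dots, d$, and has $\max_{x \in [-1,1]} |T_d(x)| = 1$. First I would establish the following normalized claim: for every monic degree-$d$ polynomial $q$, one has $\max_{x \in [-1,1]} |q(x)| \ge 2^{1-d}$.

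For this I would argue by contradiction in the standard way. Suppose $\max_{x \in [-1,1]} |q(x)| < 2^{1-d}$ and set $r := 2^{1-d} T_d - q$. Since $2^{1-d} T_d$ has leading coefficient $2^{1-d} \cdot 2^{d-1} = 1$ and $q$ is monic, the degree-$d$ terms cancel and $\deg r \le d-1$. Evaluating at the alternation points $x_0 > x_1 > \dots > x_d$, the term $2^{1-d} T_d(x_j) = \pm 2^{1-d}$ has alternating sign in $j$ and strictly dominates $|q(x_j)| < 2^{1-d}$ in magnitude, so $r(x_j)$ has alternating sign for $j = 0, \dots, d$. Hence $r$ changes sign on each of the $d$ intervals $(x_{j+1}, x_j)$, giving at least $d$ distinct roots in $(-1,1)$; but a polynomial of degree $\le d-1$ with $d$ distinct roots is identically zero. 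Then $q = 2^{1-d} T_d$, whose supremum norm on $[-1,1]$ equals exactly $2^{1-d}$, contradicting the strict inequality. This proves the normalized claim.

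Now for the lemma as stated: given $p$ of degree $d$ with leading coefficient $c_d \neq 0$, substitute $x = (t+1)/2$, so that $x$ ranges over $[0,1]$ as $t$ ranges over $[-1,1]$. The polynomial $\tilde p(t) := p\bigl((t+1)/2\bigr)$ has degree $d$, leading coefficient $c_d / 2^d$, and $\max_{x \in [0,1]} |p(x)| = \max_{t \in [-1,1]} |\tilde p(t)|$. Writing $\tilde p = (c_d/2^d)\, q$ with $q$ monic of degree $d$ and applying the normalized claim,
\[
\max_{x \in [0,1]} |p(x)| = \frac{|c_d|}{2^d} \max_{t \in [-1,1]} |q(t)| \ \ge\ \frac{|c_d|}{2^d} \cdot 2^{1-d} \ =\ \frac{2|c_d|}{4^d},
\]
which is exactly the asserted bound.

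The only substantive ingredient is the extremal property of $T_d$; the sole point to be careful about in the root-counting argument is the possibility of a tie $r(x_j) = 0$ at some alternation point, but this is excluded by the strict inequality $|q(x_j)| < 2^{1-d}$, so no genuine obstacle arises. The affine change of variables and the bookkeeping of the leading coefficient are routine. (If one prefers to avoid reproving the Chebyshev extremal property, it can simply be cited, but the self-contained contradiction argument above is short.)
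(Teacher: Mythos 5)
Your proof is correct, and the paper itself states \lemref{chebyshev} without proof, treating it as the classical Chebyshev extremal property; your argument (the standard alternation/contradiction proof on $[-1,1]$ followed by the affine transfer to $[0,1]$, with the leading coefficient picking up the factor $2^{-d}$) is exactly the canonical derivation and the constants work out to $2|c_d|/4^d$ as claimed. The only (immaterial) caveat is that the statement and your proof implicitly assume $d\ge 1$.
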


\begin{lemma}[\cite{HLZ10}]
\lemlab{decoupling}
	Let $x^1,x^2,\dots x^d \in \Re^{n}$ be arbitrary, let $\mathcal{A}\in \Re^{[n]^{d}}$ be a SoS-symmetric $d$-tensor, and let $\xi_1,\dots ,\xi_d$ be independent Rademacher random variables. Then 
	\[
		\Ex{\prod_{i\in [d]}\xi_i~\iprod{\mathcal{A}}{ (\xi_1 x^1 + \dots + \xi_d x^d)^{\otimes d}} }
		=
		d!\,\iprod{\mathcal{A}}{x^1\otimes \dots \otimes x^d}.
	\]
\end{lemma}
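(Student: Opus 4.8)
The plan is to expand the tensor power $(\xi_1 x^1 + \cdots + \xi_d x^d)^{\otimes d}$ by multilinearity, pair each resulting rank-one term against $\mathcal{A}$, multiply by $\prod_{i} \xi_i$, and evaluate the expectation term by term using that odd moments of a Rademacher variable vanish while even moments are $1$.

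Concretely, I would first expand the $k$-th factor of the tensor power as $\sum_{j \in [d]} \xi_j x^j$ and collect the resulting ``choice functions'' $\phi : [d] \to [d]$, obtaining
\[
  \iprod{\mathcal{A}}{(\xi_1 x^1 + \cdots + \xi_d x^d)^{\otimes d}}
  ~=~ \sum_{\phi : [d] \to [d]} \Bigl(\,\textstyle\prod_{k=1}^d \xi_{\phi(k)}\Bigr)\,
    \iprod{\mathcal{A}}{x^{\phi(1)} \otimes \cdots \otimes x^{\phi(d)}}.
\]
Multiplying by $\prod_{i=1}^d \xi_i$ and regrouping the $\xi$'s by value, the $\xi$-factor of the $\phi$-term is $\prod_{i=1}^d \xi_i^{\,1 + |\phi^{-1}(i)|}$, so by independence its expectation is $\prod_{i=1}^d \Ex{\xi_i^{\,1+|\phi^{-1}(i)|}}$, which equals $1$ if $|\phi^{-1}(i)|$ is odd for every $i \in [d]$ and $0$ otherwise.

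The key (and essentially only nontrivial) observation is a parity/counting argument: since $\sum_{i=1}^d |\phi^{-1}(i)| = d$ and an odd nonnegative integer is at least $1$, requiring all $d$ of the fiber sizes $|\phi^{-1}(i)|$ to be odd forces each of them to equal $1$ — i.e.\ $\phi$ must be a bijection. Hence only the terms with $\phi \in \Sym_d$ survive the expectation, leaving $\sum_{\phi \in \Sym_d} \iprod{\mathcal{A}}{x^{\phi(1)} \otimes \cdots \otimes x^{\phi(d)}}$. Finally, since $\mathcal{A}$ is SoS-symmetric (so $\mathcal{A}[I] = \mathcal{A}[J]$ whenever $\alpha(I) = \alpha(J)$, i.e.\ $\mathcal{A}$ is invariant under permuting its coordinates), each summand equals $\iprod{\mathcal{A}}{x^1 \otimes \cdots \otimes x^d}$, and summing the $d!$ identical terms gives the claimed identity. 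I do not expect any real obstacle here: once the expansion is written down, the whole proof is the parity argument plus the symmetry invocation, both routine.
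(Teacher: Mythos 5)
Your argument is correct and complete: the multilinear expansion over choice functions $\phi:[d]\to[d]$, the observation that $\Ex{\xi_i^{\,1+|\phi^{-1}(i)|}}$ vanishes unless every fiber has odd size (which, since the fiber sizes sum to $d$, forces $\phi$ to be a bijection), and the use of the permutation-invariance of the SoS-symmetric tensor $\mathcal{A}$ to collapse the $d!$ surviving terms is exactly the standard polarization argument. The paper itself states this lemma as a citation to \cite{HLZ10} and gives no proof, so there is nothing to compare against; your write-up would serve as a correct self-contained proof.
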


This lemma implies:

\begin{lemma}[\cite{HLZ10}]
\lemlab{decoupled:lower:bound}
	Let $\mathcal{A}$ be a SoS-symmetric $d$-tensor and let $f(x):= \iprod{\mathcal{A}}{x^{\otimes d}}$. Then  
	\[
		\|f\|_2  ~\geq~
		\frac{1}{2^{O(d)}}
		\max_{\|x^i\|=1}
		\iprod{\mathcal{A}}{x^1\otimes \dots \otimes x^d}.
	\]
\end{lemma}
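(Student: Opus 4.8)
The plan is to obtain the lemma as a short consequence of the decoupling identity \lemref{decoupling}. Since the unit sphere is compact, fix unit vectors $x^1,\dots,x^d$ attaining $V := \max_{\|x^i\|=1}\iprod{\mathcal{A}}{x^1\otimes\cdots\otimes x^d}$; replacing $x^1$ by $-x^1$ if necessary (which only negates the multilinear form) we may assume $V\ge 0$, so $V=\abs{\iprod{\mathcal{A}}{x^1\otimes\cdots\otimes x^d}}$. First I would apply \lemref{decoupling} to this tuple: writing $W := \xi_1 x^1+\cdots+\xi_d x^d$ for independent Rademachers $\xi_1,\dots,\xi_d$, and using that $f(v)=\iprod{\mathcal{A}}{v^{\otimes d}}$ for every $v$ by the definition of the representation $\mathcal{A}$, we get
\[
d!\cdot\iprod{\mathcal{A}}{x^1\otimes\cdots\otimes x^d}
~=~\Ex{\pth{\prod_{i\in[d]}\xi_i}\,\iprod{\mathcal{A}}{W^{\otimes d}}}
~=~\Ex{\pth{\prod_{i\in[d]}\xi_i}\,f(W)}.
\]

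Next I would take absolute values on both sides. Since $\abs{\prod_i\xi_i}=1$, the right-hand side is at most $\Ex{\abs{f(W)}}\le\max_{\xi\in\{\pm1\}^d}\abs{f(W)}$. By homogeneity, $\abs{f(v)}=\|v\|_2^d\cdot\abs{f(v/\|v\|_2)}\le\|v\|_2^d\cdot\ftwo{f}$ for every $v\ne 0$, and the triangle inequality gives $\|W\|_2\le\sum_{i=1}^d\|x^i\|_2=d$ for every sign pattern; hence $\max_\xi\abs{f(W)}\le d^d\cdot\ftwo{f}$. Combining,
\[
\ftwo{f}~\ge~\frac{d!}{d^d}\cdot V~\ge~ e^{-d}\cdot V~=~2^{-O(d)}\cdot\max_{\|x^i\|=1}\iprod{\mathcal{A}}{x^1\otimes\cdots\otimes x^d},
\]
where the middle step uses $d!\ge(d/e)^d$. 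This is precisely the claimed bound.

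There is no real obstacle here: essentially all the analytic content sits in \lemref{decoupling}, and the remainder is bookkeeping. The only mild point requiring care is the passage from $\Ex{\abs{f(W)}}$ to a quantity expressed via $\ftwo{f}$, where one must control the Euclidean norm of the decoupled vector $W$; the crude deterministic bound $\|W\|_2\le d$ is enough for the stated $2^{O(d)}$ loss, though a sharper estimate on $\Ex{\|W\|_2^d}$ (by an even-moment or hypercontractivity bound) would improve the constant in the exponent if one cared. One should also record at the outset that the supremum defining $V$ is attained and may be taken non-negative, so that passing to absolute values costs nothing.
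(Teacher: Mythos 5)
Your proof is correct and follows exactly the route the paper intends: the paper simply asserts that \lemref{decoupling} implies \lemref{decoupled:lower:bound}, and your argument (take absolute values in the decoupling identity, bound $\|W\|_2\le d$ by the triangle inequality, use homogeneity, and absorb $d^d/d!$ into $2^{O(d)}$) is the standard way to fill in those details.
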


\begin{lemma}
\lemlab{complex:to:real}
	Let $f$ be an $n$-variate degree-$d$ homogeneous polynomial. 
     Let $\|f\|^c_{2} := 
	\max\limits_{\substack{~z\in \Cp^{n}\\ \|z\|=1}}  |f(z)|$, then 
	\[
		\frac{\|f\|^c_{2}}{2^{O(d)}} \leq \|f\|_{2} \leq \|f\|^c_{2}.
	\]
\end{lemma}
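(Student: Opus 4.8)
The plan is to handle the two inequalities separately. The bound $\ftwo{f}\le \|f\|^c_2$ is immediate: the real unit sphere $\SSS^{n-1}$ sits inside the complex unit sphere $\{z\in\Cp^n:\|z\|=1\}$, so the maximum defining $\|f\|^c_2$ is taken over a superset of the domain defining $\ftwo{f}$.

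For the reverse inequality I would let $\mathcal{A}$ be the SoS-symmetric $d$-tensor representing $f$, so that $f(x)=\iprod{\mathcal{A}}{x^{\otimes d}}$; since both sides are polynomials in the coordinates agreeing on $\Re^n$, the identity $f(z)=\iprod{\mathcal{A}}{z^{\otimes d}}$ persists for complex $z$. Fix a complex unit vector $z=x+\sqrt{-1}\,y$ (with $x,y\in\Re^n$, $\|x\|^2+\|y\|^2=1$) attaining $\|f\|^c_2=|f(z)|$. If $\|x\|=0$ then $|f(z)|=|f(y)|\le\ftwo{f}$ because $y$ is a real unit vector (symmetrically if $\|y\|=0$), so assume $\hat{x}:=x/\|x\|$ and $\hat{y}:=y/\|y\|$ are both defined. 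Expanding $f(z)=\iprod{\mathcal{A}}{(x+\sqrt{-1}\,y)^{\otimes d}}$ multilinearly and using the symmetry of $\mathcal{A}$ gives
\[
f(z) ~=~ \sum_{k=0}^{d} \binom{d}{k}\,(\sqrt{-1})^{k}\, T_{k}, \qquad T_{k}~:=~\iprod{\mathcal{A}}{x^{\otimes(d-k)}\otimes y^{\otimes k}}\in\Re .
\]
By the triangle inequality some $k$ satisfies $|T_{k}|\ge |f(z)|/2^{d}$; writing $T_{k}=\|x\|^{d-k}\|y\|^{k}\cdot \iprod{\mathcal{A}}{\hat{x}^{\otimes(d-k)}\otimes \hat{y}^{\otimes k}}$ and using $\|x\|,\|y\|\le 1$, this yields $\bigl|\iprod{\mathcal{A}}{\hat{x}^{\otimes(d-k)}\otimes \hat{y}^{\otimes k}}\bigr|\ \ge\ |f(z)|/2^{d}$.

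It then remains to convert this multilinear quantity, evaluated on the two distinct real unit vectors $\hat{x},\hat{y}$, into a lower bound on $\ftwo{f}$, which is exactly what decoupling provides. Applying \lemref{decoupling} with $x^{1}=\dots=x^{d-k}=\hat{x}$ and $x^{d-k+1}=\dots=x^{d}=\hat{y}$, and writing $w:=\sum_{i=1}^{d}\xi_{i}x^{i}$, gives $\Ex{\bigl(\prod_{i}\xi_{i}\bigr)\,f(w)}=d!\cdot\iprod{\mathcal{A}}{\hat{x}^{\otimes(d-k)}\otimes \hat{y}^{\otimes k}}$. Passing to absolute values inside the expectation and using $|f(w)|\le \|w\|^{d}\,\ftwo{f}\le d^{d}\,\ftwo{f}$ (since $\|w\|\le \sum_i\|x^i\|=d$), we get $d!\cdot\bigl|\iprod{\mathcal{A}}{\hat{x}^{\otimes(d-k)}\otimes \hat{y}^{\otimes k}}\bigr|\le d^{d}\,\ftwo{f}$. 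Combining this with the previous bound and Stirling's estimate $d!/d^{d}\ge e^{-d}$ gives $\ftwo{f}\ge \tfrac{d!}{2^{d}d^{d}}\,|f(z)|=2^{-O(d)}\,\|f\|^c_2$, as required.

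This argument is essentially routine given the polarization machinery, so there is no real obstacle beyond bookkeeping; the one point needing care is that we use the \emph{absolute-value} form of polarization. The statement of \lemref{decoupled:lower:bound} drops the absolute value, but its proof via \lemref{decoupling} in fact yields $\ftwo{f}\ge 2^{-O(d)}\,\bigl|\iprod{\mathcal{A}}{x^{1}\otimes\dots\otimes x^{d}}\bigr|$, which is the form invoked above (alternatively one cites this slight strengthening directly). The degenerate cases $\|x\|=0$ or $\|y\|=0$ are trivial but should be dispatched first so that $\hat{x},\hat{y}$ are well defined.
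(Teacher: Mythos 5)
Your proof is correct and takes essentially the same route as the paper: expand $f(x+\sqrt{-1}\,y)$ multilinearly via the SoS-symmetric tensor, extract one real multilinear term of magnitude at least $2^{-O(d)}\|f\|^c_2$, and then polarize back to $\ftwo{f}$ via \lemref{decoupling}. If anything you are slightly more careful than the paper's writeup, which invokes \lemref{decoupled:lower:bound} without remarking on the absolute value or on the fact that $a^*,b^*$ need to be rescaled to unit vectors.
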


\begin{proof}
	Let $\mathcal{A}$ be the SoS-symmetric tensor representing $f$. Let $z^* = a^* + ib^*$ be the complex unit vector realizing $f(z^*) = \|f\|^c_{2}$. Then we have, 
	\begin{align*}
		f(z^*) 
		&= 
		\iprod{\mathcal{A}}{(z^*)^{\otimes d}} \\
		&= 
		\iprod{\mathcal{A}}{(a^*+ib^*)^{\otimes d}} \\
		&= 
		\sum_{c^1,\dots c^d \in \{a^*,ib^*\}} \iprod{\mathcal{A}}{\bigotimes_{j\in [d]}c^j} \\
		\Rightarrow
		\real{f(z^*)}
		&=
		\sum_{\substack{c^1,\dots c^d \in \{a^*,b^*\}, \\
		|\{j|c^j = b^*\}|\%4 = 0}} 
		\iprod{\mathcal{A}}{\bigotimes_{j\in [d]}c^j} 
		~- 
		\sum_{\substack{c^1,\dots c^d \in \{a^*,b^*\}, \\
		|\{j|c^j = b^*\}|\%4 = 2}} 
		\iprod{\mathcal{A}}{\bigotimes_{j\in [d]}c^j}, \\ 
		\im{f(z^*)}
		&=
		\sum_{\substack{c^1,\dots c^d \in \{a^*,b^*\}, \\
		|\{j|c^j = b^*\}|\%4 = 1}} 
		\iprod{\mathcal{A}}{\bigotimes_{j\in [d]}c^j} 
		~- 
		\sum_{\substack{c^1,\dots c^d \in \{a^*,b^*\}, \\
		|\{j|c^j = b^*\}|\%4 = 3}} 
		\iprod{\mathcal{A}}{\bigotimes_{j\in [d]}c^j}
	\end{align*}
	which implies that there exists $c^1, \dots ,c^d\in \{a^*,b^*\}$ such that 
	$|\iprod{\mathcal{A}}{\bigotimes_{j\in [d]}c^j}| \geq \|f\|^c_{2}/2^{O(d)}$. Lastly, 
	applying \lemref{decoupled:lower:bound} implies the claim. 
\end{proof}

\paragraph{Some Probability Facts}
\begin{lemma}
\lemlab{bernoulli:moments}
	Let $X_1, \dots X_k$ be i.i.d. $\mathrm{Bernoulli}(p)$ random 
	variables. Then for any $t_1,\dots ,t_k\in \mathds{N}$, \\
	$\Ex{X_1^{t_1}\dots X_k^{t_k}} = p^k$. 
\end{lemma}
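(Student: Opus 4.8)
The plan is to exploit the idempotence of $\{0,1\}$-valued random variables. First I would observe that since each $X_i$ takes values in $\{0,1\}$, for every positive integer $t$ we have $X_i^t = X_i$ pointwise (both sides equal $0$ on $\{X_i = 0\}$ and $1$ on $\{X_i = 1\}$); here I am reading $\NN$ as the positive integers, so that each exponent $t_i \geq 1$. Hence $X_1^{t_1} \cdots X_k^{t_k} = X_1 \cdots X_k$ as random variables.

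Next I would invoke independence of $X_1, \dots, X_k$ to factor the expectation:
\[
\Ex{X_1^{t_1} \cdots X_k^{t_k}} ~=~ \Ex{X_1 \cdots X_k} ~=~ \prod_{i=1}^k \Ex{X_i} ~=~ p^k,
\]
using $\Ex{X_i} = p$ for a $\mathrm{Bernoulli}(p)$ variable. This completes the argument.

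There is essentially no obstacle here; the only point requiring a word of care is the convention on $\NN$. If one wished to allow $t_i = 0$, the product would instead be $p^{|\{i : t_i > 0\}|}$, but in the intended application (a complex root-of-unity / first-moment computation where each variable genuinely appears) all exponents are at least one, so the stated identity is exactly what is needed.
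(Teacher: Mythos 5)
Your proof is correct; the paper states this lemma without proof, treating it as immediate, and your argument (idempotence $X_i^{t_i}=X_i$ for $t_i\geq 1$ followed by factoring the expectation via independence) is the standard one. Your remark about the convention that each $t_i\geq 1$ is apt, since the identity fails if some $t_i=0$, and in the paper's applications the exponents are indeed positive.
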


\begin{lemma}
\lemlab{rand:rtou:moments}
	Let $\zeta$ be a uniformly random $p$-th root of unity. Then for 
	any $t\in [p-1]$, $\Ex{\zeta^t} = 0$. Also, clearly $\Ex{\zeta^p} 
	= 1$. 
\end{lemma}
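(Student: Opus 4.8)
The plan is to reduce the claim to the elementary geometric-series identity for roots of unity. First I would fix a primitive $p$-th root of unity $\omega := e^{2\pi i/p}$, so that the $p$ distinct $p$-th roots of unity are precisely $\omega^0, \omega^1, \ldots, \omega^{p-1}$. A uniformly random $p$-th root of unity $\zeta$ can then be written as $\zeta = \omega^J$ where $J$ is uniform on $\{0, 1, \ldots, p-1\}$, so that for every integer $t$,
\[
\Ex{\zeta^t} ~=~ \frac{1}{p}\sum_{j=0}^{p-1} \pth{\omega^t}^j .
\]

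Next I would split into two cases according to whether $\omega^t$ equals $1$. For $t \in [p-1]$, the element $\omega$ has multiplicative order exactly $p$, so $\omega^t \neq 1$ and the sum above is a genuine geometric series, equal to $\tfrac{1}{p} \cdot \tfrac{(\omega^t)^p - 1}{\omega^t - 1}$. Since $(\omega^t)^p = (\omega^p)^t = 1$, the numerator vanishes and hence $\Ex{\zeta^t} = 0$. For $t = p$ we instead have $\omega^t = \omega^p = 1$, so every summand equals $1$ and the average equals $1$, giving $\Ex{\zeta^p} = 1$.

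There is no genuine obstacle in this argument; the only step deserving a sentence of justification is the claim that $\omega^t \neq 1$ for $1 \le t \le p-1$, which is exactly the statement that $1, \omega, \ldots, \omega^{p-1}$ are distinct, i.e. that $\omega$ is a primitive $p$-th root of unity. (Alternatively, one can note that the distinct powers $\{(\omega^t)^j : 0 \le j \le p-1\}$ are a permutation of all $p$-th roots of unity when $\gcd(t,p)$ divides appropriately, but the geometric-sum computation above already handles every $t$ uniformly and needs no number-theoretic case analysis.)
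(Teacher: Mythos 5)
Your proof is correct; the geometric-series computation $\frac{1}{p}\sum_{j=0}^{p-1}(\omega^t)^j = 0$ for $\omega^t \neq 1$ is exactly the standard verification of this elementary fact, which the paper states without proof. Nothing further is needed.
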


We finally lower bound $\| f \|_2$ in terms of $\multif{2\alpha}$.
Fix $\alpha \in \udmindex{d/2}$ and, let $x^*$ be the optimizer of
$\multif{2\alpha}$.
Setting \,$y =  x^* + \frac{\sqrt{\alpha}}{|\alpha|}$\, as in the non-negative coefficient case 
does not work since terms from $\multif{2\beta}$ may be negative.
We bypass this issue by first lower bounding $\| f \|_2^c$ in terms of $\multif{2\alpha}$ and using~\lemref{complex:to:real}. 
For $\| f \|_2^c$, we use random roots of unity and Bernoulli random variables, together with~\lemref{chebyshev}, to extract nonzero contribution
only from the monomials that are multiples of $x^{\alpha}$ times multilinear parts. 

\begin{lemma}
\lemlab{gen:mult:2}
	Let $f(x)$ be a homogeneous $n$-variate degree-$d$ polynomial. 
	Then for any $\alpha\in \udmindex{d/2}$, 
	\[\|f\|_{2} ~\geq~ \frac{\|\multif{2\alpha}\|_2}{2^{O(d)}\,|\orbit{\alpha}|}.\]
\end{lemma}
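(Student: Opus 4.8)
The plan is to fuse the two special cases sketched in \secref{weak-decoupling}: the Bernoulli--Chebyshev argument that isolates the multilinear part (the case $\alpha = 0^n$) and the root-of-unity argument that isolates a fixed $\alpha$ out of a single homogeneous level $g_r$. First I would pass to the complex norm: by \lemref{complex:to:real} it suffices to show $\|f\|^c_2 \ge \|\multif{2\alpha}\|_2 / (2^{O(d)}\,|\orbit{\alpha}|)$, where $\|f\|^c_2 = \max_{z\in\CC^n,\,\|z\|=1}|f(z)|$. Using the decomposition $f = \sum_{\beta\in\udmindex{d/2}} \multif{2\beta}(x)\,x^{2\beta}$ from \lemref{split:gen:mult}, the monomials contributed by a fold $\multif{2\beta}$ are exactly the $x^{2\beta+\gamma}$ with $\gamma$ multilinear and $|\gamma| = d-2|\beta|$; grouping the folds by the level $s=|\beta|$ isolates the block whose leading monomials will be probed.

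Fix $\alpha$ with $r=|\alpha|$, let $D = \diag\big((2\alpha_i+1)_{i\in[n]}\big)$, let $x^*$ be a unit vector attaining $\|\multif{2\alpha}(D\,\cdot)\|_2$, and set $y = \sqrt{\alpha}/\sqrt{r}$ (dropping this term entirely when $r=0$). I would then study the random complex vector $Z$ with coordinates $Z_i = \omega_i y_i + \Xi\,\xi_i x^*_i$, where the $\omega_i$ are independent uniform $(2\alpha_i+1)$-th roots of unity, $\Xi$ is an independent uniform $(d-2r+1)$-th root of unity, and the $\xi_i$ are independent $\mathrm{Bernoulli}(p)$ variables, all mutually independent. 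Expanding $\big(\prod_i\omega_i\big)\Xi\, f(Z)$ monomial by monomial and applying \lemref{rand:rtou:moments} to the $\omega$- and $\Xi$-averages and \lemref{bernoulli:moments} to the $\xi$-average, the $\omega$-average forces each $\omega_i$-exponent to equal $2\alpha_i$, the $\Xi$-average then forces the surviving monomials to come from folds $\multif{2\beta}$ with $\beta\ge\alpha$ coordinatewise, and the residual binomial factors $\prod_i\binom{2\beta_i+\gamma_i}{2\alpha_i}$ are exactly cancelled by the rescaling $D$ (this is the reason for optimizing $\multif{2\alpha}(D\,\cdot)$ rather than $\multif{2\alpha}$ when choosing $x^*$). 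The upshot is that $\Ex{\big(\prod_i\omega_i\big)\Xi\,f(Z)}$ is a univariate polynomial in $p$ of degree at most $d-2r$ whose $p^{d-2r}$-coefficient equals $y^{2\alpha}\cdot\multif{2\alpha}(Dx^*) = y^{2\alpha}\cdot\|\multif{2\alpha}(D\,\cdot)\|_2$: only $\beta=\alpha$ produces a monomial carrying the full complement of $d-2r$ Bernoulli factors, so folds at higher levels $|\beta|>r$ pollute only the lower-order coefficients in $p$.

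To finish, I would apply \lemref{chebyshev} to choose $p\in[0,1]$ with $\big|\Ex{(\prod_i\omega_i)\Xi f(Z)}\big| \ge 2^{-O(d)}\,y^{2\alpha}\,\|\multif{2\alpha}(D\,\cdot)\|_2$, and then observe $\|Z\|\le\|y\|+\|x^*\|\le 2$, so since $|\prod_i\omega_i\cdot\Xi|=1$ we get $\|f\|^c_2 \ge \Ex{|f(Z)|}/2^d \ge \big|\Ex{(\prod_i\omega_i)\Xi f(Z)}\big|/2^d$. The proof then closes with two elementary facts: $\|\multif{2\alpha}(D\,\cdot)\|_2 \ge \|\multif{2\alpha}\|_2$ because every diagonal entry of $D$ is $\ge 1$ (evaluate at $D^{-1}u^*/\|D^{-1}u^*\|$ for $u^*$ an optimizer of $\multif{2\alpha}$), and $y^{2\alpha}=\prod_i(\alpha_i/r)^{\alpha_i}\ge 2^{-O(d)}/|\orbit{\alpha}|$ by the Stirling estimate already used in \lemref{nnc:mult:2}. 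When $\alpha=0^n$ this specializes to the base case, and when $f$ is a single level $f_r$ the $\xi_i$ are superfluous and one can take $p=1$.

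The step I expect to be the main obstacle is the combinatorial bookkeeping in the second paragraph: one must verify that the three independent randomizations project $f(Z)$ cleanly onto the intended monomials, in particular that folds with $|\beta|>r$ genuinely survive the root-of-unity averages yet never reach $p$-degree $d-2r$, and that the leftover binomial coefficients coincide precisely with the distortion introduced by $D$. Pinning down the correct orders for the $\omega_i$ and $\Xi$ and the exact correction matrix $D$ so that both of these hold simultaneously is the delicate point; everything else is either a harmless $2^{O(d)}$ loss or is already supplied by the lemmas in hand.
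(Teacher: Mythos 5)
Your proposal is correct and follows essentially the same route as the paper's proof: the same random vector built from Bernoulli$(p)$ variables, $(2\alpha_i+1)$-th roots of unity and a $(d-2|\alpha|+1)$-th root of unity, the same observation that folds with $|\beta|>|\alpha|$ survive the root-of-unity averaging only at lower $p$-degree, the same extraction of the leading $p$-coefficient via Chebyshev, and the same passage through $\|f\|_2^c$ and the Stirling bound $y^{2\alpha}\geq 2^{-O(d)}/|\orbit{\alpha}|$. The only (immaterial) difference is where the binomial factor $2\alpha_i+1$ is cancelled: the paper scales the optimizer's coordinates by $1/(2\alpha_i+1)$ inside the random vector $z$, whereas you fold the same correction into the choice of optimizer for $\multif{2\alpha}(D\,\cdot)$.
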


\begin{proof}
	Fix any any $\alpha\in \udmindex{d/2}$, let $t:=|\alpha|$ and let $k:= d-2t$. 
	For any $i\in [n]$, let 
	$\zeta_i$ be an independent and uniformly randomly chosen 
	$(2\alpha_{i}+1)$-th root of unity, and let $\Xi$ be
	an independent and uniformly randomly chosen $(k+1)$-th root 
	of unity. 
		
	Let $\bx := \mathrm{argmax} \|\multif{2\alpha}\|_2$. Let $p\in [0,1]$ be a 
	parameter to be fixed later, let $b_1,\dots ,b_n$ be i.i.d. 
	$\mathrm{Bernoulli}(p)$ random variables, let 
	$\zeta:= (\zeta_1,\dots ,\zeta_n),~b:= (b_1,\dots ,b_n)$
	and finally let  
	\[
	z 
	~~~:= ~~~
	\Xi\cdot b\circ \frac{1}{2\alpha + \one}\circ \bx  ~+~
	\frac{\zeta\circ \sqrt{\alpha}}{\sqrt{t}}.
	\] 
	Since $\sum_{\ell\in \supp{\alpha}} \alpha_\ell = t$ and roots of unity have 
	magnitude one, $z$ has length $O(1)$. Now consider any fixed $\gamma\in \degbmindex{k}$. 
	We have, 
	\begin{align*}
	\Eqlab{monomial:exp}
	    &~~~\Ex{z^{2\alpha + \gamma}\cdot \Xi\cdot 
	    \prod_{i\in [n]} \zeta_{i}} \\
	    &= 
	    \text{coefficient of ~~}
	    \Xi^{k}\cdot \prod_{i\in [n]} \zeta_{i}^{2\alpha_i}
	    \text{~~ in ~~}
	    \Ex{z^{2\alpha+\gamma}}
	    \qquad\qquad\text{(by \lemref{rand:rtou:moments})} \\
	    &=
	    \text{coefficient of ~~}
	    \Xi^{k}\cdot \prod_{i\in [n]} \zeta_{i}^{2\alpha_i}
	    \text{~~ in ~~}
	    \Ex{\prod_{i\in [n]} \pth{\zeta_i\cdot \frac{\sqrt{\alpha_i}}{\sqrt{t}}
	    + \Xi\cdot \frac{b_i\cdot \bx_i}{2\alpha_i+1}}^{2\alpha_i+\gamma_i}} \\
	    &=
	    \prod_{i\in [n]}
	    \text{~~coefficient of ~~}
	    \Xi^{\gamma_i}\cdot \zeta_{i}^{2\alpha_i} 
	    \text{~~ in ~~}
	    \Ex{\pth{\zeta_i\cdot \frac{\sqrt{\alpha_i}}{\sqrt{t}}
	    + \Xi\cdot \frac{b_i\cdot \bx_i}{2\alpha_i+1}}^{2\alpha_i+\gamma_i}}
	    \qquad (\text{since }\gamma\in \degbmindex{k}) \\
	    &= 
	    p^{k}\cdot 
	    \prod_{\mathclap{i\in \supp{\alpha}}} \,\frac{\alpha_i^{\alpha_i}}{t^{\alpha_i}} \cdot 
	    \bx_i^{\gamma_i} \qquad (\text{by \lemref{bernoulli:moments}}) \\
	    &= 
	    p^{k}\cdot \bx^{\gamma} \cdot 
	    \prod_{\mathclap{i\in \supp{\alpha}}} \,\frac{\alpha_i^{\alpha_i}}{t^{\alpha_i}} 
	\end{align*}
	Thus we have, 
	\begin{align*}
	    &~~~\Ex{f(z)\cdot \Xi\cdot 
	    \prod_{i\in [n]} \zeta_{i}} \\
		&=~ 
		\sum_{\mathclap{\beta \in \degmindex{d}}} f_\beta \cdot
		\Ex{z^{\beta}\cdot \Xi\cdot 
	    \prod_{i\in [n]} \zeta_{i}} \\ 
		&=~ 
		\sum_{
		\mathclap{
		\substack{
		\beta \in \degmindex{d} \\
		\beta \geq 2\alpha
		}
		}
		}
		\,f_\beta \cdot
		\Ex{z^{\beta}\cdot \Xi\cdot 
	    \prod_{i\in [n]} \zeta_{i}} 
		&&(\text{by \lemref{rand:rtou:moments}}) \\
		&=~ 
		\sum_{
		\mathclap{
		\gamma \in \degbmindex{k} 
		}
		}
		\,f_{2\alpha +\gamma} \cdot 
		\Ex{z^{2\alpha +\gamma}\cdot \Xi\cdot 
	    \prod_{i\in [n]} \zeta_{i}} 
		~+~ 
		\sum_{
		\mathclap{
		\substack{
		\gamma \in \degmindex{k} \\
		\gamma \not\leq \one
		}
		}
		}
		\,f_{2\alpha +\gamma} \cdot 
		\Ex{z^{2\alpha +\gamma}\cdot \Xi\cdot 
	    \prod_{i\in [n]} \zeta_{i}} \\
		&=~ 
		\sum_{
		\mathclap{
		\gamma \in \degbmindex{k} 
		}
		}
		\,f_{2\alpha +\gamma} \cdot 
		\Ex{z^{2\alpha +\gamma}\cdot \Xi\cdot 
	    \prod_{i\in [n]} \zeta_{i}} 
		~+~ 
		r(p) 
		&&\text{(by \lemref{bernoulli:moments})} \\
		& \text{where } r(p) \text{ is some univariate polynomial in $p$, s.t. } \mathrm{deg}(r)<k \\
		&=~ 
		\sum_{
		\mathclap{
		\gamma \in \degbmindex{k} 
		}
		}
		\,f_{2\alpha +\gamma} \cdot 
		p^{k}\cdot \bx^{\gamma} \cdot 
	    \prod_{\mathclap{i\in \supp{\alpha}}} \,\frac{\alpha_i^{\alpha_i}}{t^{\alpha_i}} 
	    ~~+~~ 
	    r(p)\\
	    &=~
	    p^{k}\cdot \multif{2\alpha}(\bx)\cdot 
	    \prod_{\mathclap{i\in \supp{\alpha}}} \,\frac{\alpha_i^{\alpha_i}}{t^{\alpha_i}} 
	    ~~+~~
	    r(p) 
	    &&(\text{where } \mathrm{deg}(r)<k)
	\end{align*}
	
	Lastly we have, 
	\begin{align*}
		\|f\|_2
		&\geq 
		\|f\|^c_2\cdot 2^{-O(d)} 
		&&\text{by \lemref{complex:to:real}} \\
		&\geq 
		\max_{p\in [0,1]}
		\Ex{|f(z)|}\cdot 2^{-O(d)} 
		&&(\|z\|=O(1)) \\
		&= 
		\max_{p\in [0,1]}
		\Ex{\cardin{f(z)\cdot \Xi\cdot 
	    \prod_{i\in [n]} \zeta_{i}}}\cdot 2^{-O(d)}  \\
		&\geq 
		\max_{p\in [0,1]}
		\cardin{
		\Ex{f(z)\cdot \Xi\cdot 
	    \prod_{i\in [n]} \zeta_{i}}
		}
		\cdot 2^{-O(d)} \\
		&\geq 
		|\multif{2\alpha}(\bx)|\cdot 
	    \prod_{\mathclap{i\in \supp{\alpha}}} \,\frac{\alpha_i^{\alpha_i}}{t^{\alpha_i}} 
		\cdot 2^{-O(d)} 
		&&(\text{by Chebyshev: \lemref{chebyshev}}) \\
		&= 
		\|\multif{2\alpha}\|_2 \cdot 
	    \prod_{\mathclap{i\in \supp{\alpha}}} \,\frac{\alpha_i^{\alpha_i}}{t^{\alpha_i}} 
		\cdot 2^{-O(d)} \\
		&\geq 
		\frac{\|\multif{2\alpha}\|_2}{|\orbit{\alpha}|}\cdot 2^{-O(d)} 
	\end{align*}
	This completes the proof. 
\end{proof}

In fact, the proof of \lemref{gen:mult:2} yields a more general result: 
\begin{lemma}[Weak Decoupling]
\lemlab{weak:decoupling}
    Let $f(x)$ be a homogeneous $n$-variate degree-$d$ polynomial. 
	Then for any $\alpha\in \udmindex{d/2}$ and any unit vector $y$, 
	\[\|f\|_{2} ~\geq~ y^{2\alpha}\cdot \|\multif{2\alpha}\|_2 \cdot 2^{-O(d)}.\]
\end{lemma}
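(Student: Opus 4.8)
The plan is to run the proof of \lemref{gen:mult:2} essentially verbatim, with the single change of replacing the particular unit vector $\sqrt{\alpha}/\sqrt{|\alpha|}$ used there by the given unit vector $y$. Fix $\alpha \in \udmindex{d/2}$, set $t := |\alpha|$ and $k := d - 2t$, and let $\bx$ be a maximizer of $\|\multif{2\alpha}\|_2$ (we may assume $\multif{2\alpha} \not\equiv 0$, else the statement is trivial). Let each $\zeta_i$ be an independent uniformly random $(2\alpha_i+1)$-th root of unity, let $\Xi$ be an independent uniformly random $(k+1)$-th root of unity, let $b_1,\dots,b_n$ be i.i.d.\ $\mathrm{Bernoulli}(p)$ for a parameter $p \in [0,1]$ to be chosen, and set
\[
z ~:=~ \Xi \cdot b \circ \tfrac{1}{2\alpha + \one} \circ \bx ~+~ \zeta \circ y .
\]
Since roots of unity and Bernoulli variables have magnitude at most $1$, each of the two summands has $\ell_2$-norm at most $1$, so $\|z\| \le 2$ always.

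The heart of the argument is the per-coordinate moment computation. First I would verify that for every multilinear $\gamma \in \degbmindex{k}$,
\[
\Ex{z^{2\alpha+\gamma}\cdot \Xi \cdot \prod_i \zeta_i} ~=~ p^{k}\cdot y^{2\alpha}\cdot \bx^{\gamma},
\]
by the same coefficient extraction as in \lemref{gen:mult:2}: the normalization $\tfrac{1}{2\alpha_i+1}$ is exactly what cancels the binomial coefficient $\binom{2\alpha_i+\gamma_i}{2\alpha_i}$ arising in $\Ex{(\zeta_i y_i + \Xi \tfrac{b_i\bx_i}{2\alpha_i+1})^{2\alpha_i+\gamma_i}}$, leaving $y_i^{2\alpha_i}\bx_i^{\gamma_i}$ (with an extra factor $p$ exactly when $\gamma_i = 1$); the only difference from \lemref{gen:mult:2} is that $y_i^{2\alpha_i}$ appears in place of $\alpha_i^{\alpha_i}/t^{\alpha_i}$, and this change touches none of the root-of-unity or Bernoulli accounting. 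The same accounting as in that proof then shows that monomials $x^\beta$ with $\beta \not\ge 2\alpha$ contribute $0$, while those with $\beta \ge 2\alpha$ but $\beta - 2\alpha \not\le \one$ contribute a univariate polynomial $r(p)$ of degree strictly below $k$ (the $\Xi$-constraint forces each $\zeta_i$-exponent to equal $2\alpha_i$, after which the surviving $b$-monomial has support of size $<k$). Hence
\[
\Ex{f(z)\cdot \Xi \cdot \prod_i \zeta_i} ~=~ p^{k}\cdot y^{2\alpha}\cdot \multif{2\alpha}(\bx) ~+~ r(p), \qquad \deg r < k .
\]

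To conclude I would apply Chebyshev's extremal polynomial inequality (\lemref{chebyshev}) to this degree-$k$ univariate polynomial in $p$, whose leading coefficient is $y^{2\alpha}\|\multif{2\alpha}\|_2 \ge 0$, obtaining some $p^\star \in [0,1]$ with $\bigl|\Ex{f(z)\,\Xi\,\prod_i\zeta_i}\bigr| \ge 2\cdot 4^{-k}\, y^{2\alpha}\|\multif{2\alpha}\|_2$ (when $k=0$ this expectation is the constant $y^{2\alpha}\multif{2\alpha}(\bx)$ and the bound is immediate). Then, using homogeneity together with $\|z\| \le 2$ and \lemref{complex:to:real},
\[
\|f\|_2 ~\ge~ 2^{-O(d)}\,\|f\|^c_2 ~\ge~ 2^{-O(d)}\,\Ex{|f(z)|} ~\ge~ 2^{-O(d)}\,\Bigl|\Ex{f(z)\,\Xi\,\textstyle\prod_i\zeta_i}\Bigr| ~\ge~ 2^{-O(d)}\, y^{2\alpha}\|\multif{2\alpha}\|_2 ,
\]
which is the desired inequality. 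I do not expect a genuine obstacle here: the argument is almost entirely bookkeeping transplanted from \lemref{gen:mult:2}. The two points deserving a moment's care are (i) checking that the normalization $\tfrac{1}{2\alpha+\one}$ still cancels the binomial coefficients with an arbitrary $y$ — it does, since only exponents $\gamma_i \in \{0,1\}$ enter the surviving multilinear terms — and (ii) checking that substituting $y$ for $\sqrt\alpha/\sqrt t$ leaves the degree-in-$p$ count underlying the use of Chebyshev intact, which it does because that count only used $|\zeta_i| = 1$ and $|y_i| \le \|y\| = 1$, never the specific coordinates of the direction vector.
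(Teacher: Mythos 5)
Your proposal is correct and is essentially the paper's own argument: the paper proves \lemref{weak:decoupling} by exactly the observation you make, namely that the proof of \lemref{gen:mult:2} goes through verbatim with the specific unit vector $\sqrt{\alpha}/\sqrt{|\alpha|}$ replaced by an arbitrary unit vector $y$, the factor $\prod_{i\in\supp{\alpha}}\alpha_i^{\alpha_i}/t^{\alpha_i}$ becoming $y^{2\alpha}$. Your two points of care (the cancellation of the binomial coefficients by the $\tfrac{1}{2\alpha+\one}$ normalization, and the degree-in-$p$ count for Chebyshev) are exactly the right things to check, and both hold as you say.
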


We are finally able to establish the multilinear reduction result that is the focus of this section. 

\begin{theorem}
\thmlab{sos:gen:mult}
	Let $f(x)$ be a homogeneous $n$-variate degree-$d$ (for even $d$) polynomial. Then 
	\[
		\frac{\hssos{f}}{\|f\|_2} ~\leq~ 
		2^{O(d)}
		\max_{\alpha \in \udmindex{d/2}} 
		\frac{\hssos{\multif{2\alpha}}}{\|\multif{2\alpha}\|_2} .
	\]
\end{theorem}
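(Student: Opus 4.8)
The statement is essentially the conjunction of the two preceding lemmas, so the plan is simply to combine them with the orbit-size factors arranged to cancel. First I would invoke \lemref{gen:mult:sp}, which gives
\[
\hssos{f} ~\leq~ (1+d/2)\cdot \max_{\alpha \in \udmindex{d/2}} \frac{\hssos{\multif{2\alpha}}}{|\orbit{\alpha}|}\mper
\]
Let $\alpha^\ast \in \udmindex{d/2}$ be a multi-index attaining this maximum, so that $\hssos{f} \leq (1+d/2)\cdot \hssos{\multif{2\alpha^\ast}}/|\orbit{\alpha^\ast}|$.

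Next I would apply \lemref{gen:mult:2} — which holds for \emph{every} $\alpha \in \udmindex{d/2}$, in particular for the specific $\alpha^\ast$ chosen above — to obtain
\[
\|f\|_2 ~\geq~ \frac{\|\multif{2\alpha^\ast}\|_2}{2^{O(d)}\,|\orbit{\alpha^\ast}|}\mper
\]
Dividing the upper bound on $\hssos{f}$ by this lower bound on $\|f\|_2$, the common factor $|\orbit{\alpha^\ast}|$ cancels, leaving
\[
\frac{\hssos{f}}{\|f\|_2} ~\leq~ (1+d/2)\cdot 2^{O(d)}\cdot \frac{\hssos{\multif{2\alpha^\ast}}}{\|\multif{2\alpha^\ast}\|_2} ~\leq~ 2^{O(d)}\cdot \max_{\alpha \in \udmindex{d/2}} \frac{\hssos{\multif{2\alpha}}}{\|\multif{2\alpha}\|_2}\mcom
\]
where in the last step the polynomial factor $(1+d/2)$ is absorbed into $2^{O(d)}$. (The evenness of $d$ is used only so that $\udmindex{d/2}$ and the decomposition of \lemref{split:gen:mult} are well-defined.)

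There is no real technical obstacle here: the work has already been done in \lemref{gen:mult:sp} and \lemref{gen:mult:2}. The one point that deserves emphasis — and the reason the earlier lemmas were stated with the $1/|\orbit{\alpha}|$ normalization rather than the cruder bounds \ineqref{weak:sos:split} and \lemref{nnc:mult:2}'s predecessor — is precisely that $|\orbit{\alpha}|$ can be as large as $d^{\Omega(d)}$, so it is essential that the \emph{same} $\alpha^\ast$ be used in both the numerator and denominator bounds so that this potentially huge factor disappears. Since \lemref{gen:mult:2} is uniform over all $\alpha$, this matching causes no difficulty, and the argument is complete.
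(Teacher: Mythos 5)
Your proposal is correct and is exactly the paper's argument: the paper's proof of \thmref{sos:gen:mult} is literally "combining \lemref{gen:mult:sp} and \lemref{gen:mult:2}," with the $|\orbit{\alpha}|$ factors cancelling as you describe. Your write-up merely makes explicit the (correct) point that the same maximizing $\alpha^\ast$ must be used in both bounds.
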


\begin{proof}
	Combining \lemref{gen:mult:sp} and \lemref{gen:mult:2} yields 
	the claim. 
\end{proof}

\subsection{$(n/q)^{d/4}$-Approximation for Non-negative Coefficient Polynomials}

\begin{theorem}
\thmlab{nnc:mult}
	Consider any homogeneous multilinear $n$-variate degree-$d$ 
	polynomial $f(x)$ with non-negative coefficients. We have, 
	\[
		\frac{\hssos{f}}{\|f\|_2} ~\leq~ 
		2^{O(d)}~\frac{n^{d/4}}{d^{d/4}}.
	\]
\end{theorem}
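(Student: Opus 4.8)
The plan is to control $\hssos{f}$ from above and $\ftwo{f}$ from below, both in terms of a single combinatorial quantity of $f$, and then divide. Since $f$ is multilinear with non-negative coefficients, write $f=\sum_\beta f_\beta x^\beta$ over $0/1$ multi-indices $\beta$ with $\cardin{\beta}=d$ and $f_\beta\ge 0$; think of $f$ as a weighted $d$-uniform hypergraph on $[n]$. For a $(d/2)$-subset $A\subseteq[n]$ define $R(A):=\sum_B f_{A\cup B}$, the sum ranging over $(d/2)$-subsets $B$ disjoint from $A$ (so $A\cup B$ indexes a degree-$d$ multilinear monomial), and set $R^*:=\max_A R(A)$ with $A^*$ a maximizer. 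Morally $R(A)$ is the coefficient weight of the monomials ``through'' $A$, the hypergraph analogue of a degree sum.

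First I would upper bound $\hssos{f}$ by exhibiting one matrix representation. Take the unique SoS-symmetric representation $\sfM_f$ (see the remark in \secref{prelims}). By multilinearity of $f$, the entry $\sfM_f[I,J]$ for $I,J\in[n]^{d/2}$ is nonzero only when $I$ and $J$ each have all-distinct entries with disjoint supports, in which case a short count of $\cardin{\orbit{\mi I+\mi J}}$ gives $\sfM_f[I,J]=f_{\mi I+\mi J}/d!$. Hence a row indexed by a distinct-entry tuple with support $A$ has absolute sum $\tfrac{(d/2)!}{d!}\,R(A)$, and every other row is zero; Gershgorin's theorem then yields
\[
\hssos{f}~\le~\norm{2}{\sfM_f}~\le~\frac{(d/2)!}{d!}\,R^*\mper
\]

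Next I would lower bound $\ftwo{f}$ by plugging in a carefully split unit vector. Let $T:=\bigcup\{\,\supp{\beta}\setminus A^* : \beta\supseteq A^*,\ f_\beta>0\,\}$, so $\cardin{T}\le n$, and let $y$ be the unit vector that is $1/\sqrt{d}$ on each coordinate of $A^*$, $1/\sqrt{2\cardin{T}}$ on each coordinate of $T$, and $0$ elsewhere (the two sets are disjoint and $\norm{2}{y}^2=\tfrac12+\tfrac12=1$). Since all coefficients are non-negative and $y\ge 0$, discarding all monomials except those of the form $A^*\cup B$ gives
\[
\ftwo{f}~\ge~f(y)~\ge~\sum_B f_{A^*\cup B}\cdot\Bigl(\tfrac{1}{\sqrt d}\Bigr)^{d/2}\Bigl(\tfrac{1}{\sqrt{2\cardin{T}}}\Bigr)^{d/2}~=~\frac{R^*}{(2\,d\,\cardin{T})^{d/4}}~\ge~\frac{R^*}{(2dn)^{d/4}}\mper
\]
Combining the two displays, $\hssos{f}/\ftwo{f}\le \tfrac{(d/2)!}{d!}(2dn)^{d/4}$, and since $\tfrac{(d/2)!}{d!}=\prod_{j=d/2+1}^{d}\tfrac1j\le(2/d)^{d/2}$ this is at most $2^{3d/4}\,n^{d/4}/d^{d/4}=2^{O(d)}\,n^{d/4}/d^{d/4}$, proving the theorem.

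The step I would treat most carefully is the lower bound on $\ftwo{f}$. The naive choice of putting mass $d^{-1/2}$ on a single heaviest monomial only recovers the maximum coefficient, which is off from $\hssos{f}$ by roughly a square root; the point is to spend half the $\ell_2$ budget on $A^*$ and the other half uniformly over its whole neighborhood $T$, so that $f(y)$ recovers the entire row sum $R^*$ rather than one term. It is precisely the symmetrization factor $(d/2)!/d!\approx d^{-d/2}$ in the upper bound, together with the $|T|\le n$ dilution costing only $n^{d/4}$ (not $n^{d/2}$) in the lower bound, that makes the exponent of $n$ come out to $d/4$.
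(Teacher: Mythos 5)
Your proposal is correct and follows essentially the same route as the paper: bound $\hssos{f}$ by the maximum row sum of the SoS-symmetric representation $\sfM_f$ (the paper uses Perron–Frobenius where you use Gershgorin, which is immaterial here since $\sfM_f$ is entrywise non-negative), and lower bound $\ftwo{f}$ by a test vector splitting its $\ell_2$ mass between the support of the maximizing row and a spread-out part — the paper spreads over all of $[n]$ via $\one/\sqrt{n}$ while you spread only over the neighborhood $T$, a minor sharpening that changes nothing since $|T|\le n$. The arithmetic checks out, so this is the paper's argument in slightly different clothing.
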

\begin{proof}
	Let $\sfM_f$ be the SoS-symmetric matrix representation of $f$. 
	Let $I^*=(i_1,\dots ,i_{d/2})\in [n]^{d/2}$ be the multi-index of 
	any row of $\sfM_f$ with maximum row sum. Let $S_I$ for $I\in 
	[n]^{d/2}$, denote the sum of the row $I$ of $\sfM_f$. By Perron-Frobenius
	theorem, $\|\sfM_f\|\leq S_{I^*}$. Thus $\hssos{f}
	\leq S_{I^*}$. \medskip
	
	We next proceed to bound $\|f\|_2$ from below. To this end, let 
	$x^* := y^*/\|y^*\|$ where, 
	\[
		y^* := \frac{\one}{\sqrt{n}} ~+~ 
		\frac{1}{\sqrt{d/2}}\sum_{i\in I^*} e_{i}
	\]
	Since $f$ is multilinear, $I^*$ has all distinct elements, and so 
	the second term in the definition of $y^*$ is of unit length. 
	Thus $\|y^*\| = \Theta(1)$, which implies that 
	$\|f\|_2\geq f(x^*)\geq f(y^*)/2^{O(d)}$. Now we have, 
	\begin{align*}
		f(y^*) 
		&= 
		((y^*)^{\otimes d/2})^T \sfM_f\,(y^*)^{\otimes d/2} \\
		&\geq 
		\sum_{I\in \orbit{I^*}} \frac{1}{(nd)^{d/4}}~
		e^T_{I(1)}\otimes \dots 
		\otimes e^T_{I(d/2)} \,\sfM_f\,\one^{\otimes d/2} 
		&&(\text{by non-negativity of } \sfM_f)\\
		&= 
		\sum_{I\in \orbit{I^*}} \frac{1}{(nd)^{d/4}}~
		e^T_{I}\,\sfM_f\,\one ~(\in \Re^{[n]^{d/2}}) \\
		&=
		\sum_{I\in \orbit{I^*}} 
		\frac{S_{I}}{(nd)^{d/4}} \\
		&= 
		\sum_{I\in \orbit{I^*}} 
		\frac{S_{I^*}}{(nd)^{d/4}}
		&&(\text{by SoS-symmetry of } \sfM_f) \\
		&=
		\frac{(d/2)!S_{I^*}}{(nd)^{d/4}} 
		&&(|\orbit{I^*}|=(d/2)!\text{ by multilinearity of } f)\\
		&\geq 
		\frac{d^{d/4}S_{I^*}}{n^{d/4}\,2^{O(d)}} ~
		\geq 
		\frac{d^{d/4}\hssos{f}}{n^{d/4}\,2^{O(d)}}.
	\end{align*}
	This completes the proof. 
\end{proof}

\begin{theorem}
\thmlab{nnc:d/4}
	Let $f(x)$ be a homogeneous $n$-variate degree-$d$ 
	polynomial  with non-negative coefficients. 
	Then for any even $q$ such that $d$ divides $q$, 
	\[
		\frac{(\hssos{f^{q/d}})^{d/q}}{\|f\|_2} ~\leq~ 
		2^{O(d)}~\frac{n^{d/4}}{q^{d/4}}.
	\]
\end{theorem}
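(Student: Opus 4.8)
The plan is to deduce this from the two multilinear reductions already in place, applied to the power $g := f^{q/d}$, together with one elementary maximization. First I would note that $g$ is again a homogeneous polynomial with non-negative coefficients, of (even) degree $q$, and that $\|g\|_2 = \|f\|_2^{q/d}$ since $t\mapsto t^{q/d}$ is non-decreasing on $[0,\infty)$. As in the rest of the paper I would take $q \le n$ (otherwise the claimed bound already lies below $\|f\|_2 \le \hssos{f^{q/d}}^{d/q}$, so it cannot be intended there).

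Next I would apply \thmref{sos:nnc:mult} to $g$, obtaining
\[
\frac{\hssos{f^{q/d}}}{\|f\|_2^{q/d}} ~=~ \frac{\hssos{g}}{\|g\|_2} ~\leq~ 2^{O(q)}\,\max_{\alpha \in \udmindex{q/2}}\frac{\hssos{\multif{2\alpha}}}{\|\multif{2\alpha}\|_2},
\]
where the $\multif{2\alpha}$ are the multilinear polynomials in the decomposition of $g$ from \lemref{split:gen:mult}: each nonzero $\multif{2\alpha}$ is multilinear, has non-negative coefficients, and has even degree $k_\alpha := q - 2\abs{\alpha} \le q$. I would then apply \thmref{nnc:mult} to each $\multif{2\alpha}$ and absorb $2^{O(k_\alpha)}$ into $2^{O(q)}$, so that the right-hand side is at most $2^{O(q)}\,\max_{0\le k\le q}(n/k)^{k/4}$ (reading the $k=0$ term as $1$, which also matches the trivial value $\hssos{\multif{2\alpha}}/\|\multif{2\alpha}\|_2 = 1$ for a constant fold).

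The last ingredient is the elementary bound $\max_{0\le k\le q}(n/k)^{k/4} \le 2^{O(q)}(n/q)^{q/4}$, valid when $q\le n$: for $1\le k\le q$ one writes $(n/k)^k/(n/q)^q = (q/n)^{q-k}(q/k)^k \le (q/k)^k$ (using $q \le n$ and $q-k\ge 0$), and with $x := q/k \ge 1$ one has $(q/k)^k = x^{q/x} = (x^{1/x})^q \le (e^{1/e})^q = 2^{O(q)}$ since $x^{1/x}$ is maximized at $x=e$; the $k=0$ term is trivially at most $(n/q)^{q/4}$ as $n\ge q$. Plugging this back gives $\hssos{f^{q/d}} \le 2^{O(q)}(n/q)^{q/4}\|f\|_2^{q/d}$, and raising both sides to the power $d/q$ turns $2^{O(q)}$ into $2^{O(d)}$ and yields the claim.

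The argument is essentially routine given \thmref{sos:nnc:mult} and \thmref{nnc:mult}; the only point that takes a moment is the final maximization, i.e.\ checking that the full-degree multilinear part (the $\alpha = 0$ term) dominates the lower-degree parts up to a $2^{O(q)}$ factor, hence up to $2^{O(d)}$ after taking $(d/q)$-th powers. I do not anticipate a genuine obstacle.
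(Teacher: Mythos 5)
Your proposal is correct and follows exactly the paper's route: the paper's proof of this theorem is the one-line "apply \thmref{sos:nnc:mult} to $f^{q/d}$ and combine with \thmref{nnc:mult}." Your write-up simply makes explicit the elementary step the paper leaves implicit, namely that $\max_{0\le k\le q}(n/k)^{k/4}\le 2^{O(q)}(n/q)^{q/4}$ for $q\le n$, which you verify correctly.
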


\begin{proof}
	Applying \thmref{sos:nnc:mult} to $f^{q/d}$ and combining 
	this with \thmref{nnc:mult} yields the claim. 
\end{proof}

\subsection{$(n/q)^{d/2}$-Approximation for General Polynomials}

\begin{theorem}
\thmlab{gen:mult}
	Consider any homogeneous multilinear $n$-variate degree-$d$ (for even $d$)
	polynomial $f(x)$. We have, 
	\[
		\frac{\hssos{f}}{\|f\|_2} ~\leq~ 
		2^{O(d)}~\frac{n^{d/2}}{d^{d/2}}.
	\]
\end{theorem}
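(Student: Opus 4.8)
The plan is to bound $\hssos{f}$ from above and $\ftwo{f}$ from below, in both cases in terms of the largest coefficient magnitude $B := \max_{\beta\in\degmindex d}\abs{f_\beta}$, and then to combine the two estimates via $d!\ge (d/e)^d$. (We may assume $d\le n$, since otherwise $f\equiv 0$ and there is nothing to prove.)

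First I would upper bound $\hssos f$ exactly as in the proof of \thmref{nnc:mult}, but using Gershgorin's theorem in place of Perron--Frobenius. Let $\sfM_f$ be the unique SoS-symmetric matrix representation of $f$. Since $f$ is multilinear, $f_\beta=0$ unless $\beta\le\one$, and for such $\beta$ the number of pairs $(I,J)\in[n]^{d/2}\times[n]^{d/2}$ with $\alpha(I)+\alpha(J)=\beta$ equals $\binom{d}{d/2}\,(d/2)!\,(d/2)! = d!$; SoS-symmetry then forces each of these entries of $\sfM_f$ to equal $f_\beta/d!$, so $\abs{\sfM_f[I,J]}\le B/d!$ for all $I,J$. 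As $\sfM_f$ has $n^{d/2}$ columns, every row has $\ell_1$-norm at most $n^{d/2}\cdot B/d!$, so Gershgorin's theorem gives $\norm2{\sfM_f}\le n^{d/2}\cdot B/d!$, and since $\sfM_f$ is a representation of $f$ we get $\hssos f\le\norm2{\sfM_f}\le n^{d/2}\cdot B/d!$.

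Next I would lower bound $\ftwo f$ by exhibiting a single good test point. Pick $\beta^\ast$ with $\abs{f_{\beta^\ast}}=B$; multilinearity and homogeneity force $\beta^\ast\le\one$ with $\abs{\supp{\beta^\ast}}=d$. Let $S:=\supp{\beta^\ast}$ and evaluate $f$ at the unit vector $x:=\tfrac1{\sqrt d}\sum_{i\in S}e_i$. The key point is that any monomial $x^\beta$ of $f$ that survives this substitution must have $\supp{\beta}\subseteq S$, and then $\beta\le\one$ together with $\abs\beta=d=\abs S$ forces $\beta=\beta^\ast$; hence $f(x)=f_{\beta^\ast}\cdot x^{\beta^\ast}=f_{\beta^\ast}/d^{d/2}$ and $\ftwo f\ge\abs{f(x)}=B/d^{d/2}$. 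Dividing the two estimates,
\[
\frac{\hssos f}{\ftwo f}\;\le\;\frac{n^{d/2}\,d^{d/2}}{d!}\;\le\;n^{d/2}\,d^{d/2}\Bigl(\frac e d\Bigr)^{d}\;=\;e^{d}\,\frac{n^{d/2}}{d^{d/2}}\;=\;2^{O(d)}\,\frac{n^{d/2}}{d^{d/2}}\mper
\]

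I expect the only delicate step to be the lower bound on $\ftwo f$: for a general polynomial, plugging in a point built from the heaviest monomial can be destroyed by cancellation against the other monomials, which is exactly why the non-multilinear reduction in \secref{gen-to-multi} needs the first-moment / weak-decoupling argument (\lemref{weak:decoupling}). In the multilinear case this difficulty evaporates, because restricting the input to $\supp{\beta^\ast}$ annihilates \emph{every} other degree-$d$ monomial simultaneously, leaving only $\beta^\ast$; so here no decoupling is needed and the argument is elementary.
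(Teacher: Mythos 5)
Your proposal is correct and is essentially the paper's own proof: the paper also bounds $\hssos{f}$ by $n^{d/2}\cdot\max_\beta\abs{f_\beta}/d!$ via Gershgorin applied to the SoS-symmetric representation (whose entries are $f_\beta/\abs{\orbit{\beta}}=f_\beta/d!$ by multilinearity), and lower bounds $\ftwo{f}$ by $d^{-d/2}\abs{f_\beta}$ using exactly the test point $x=\beta/\sqrt{\abs{\beta}}$. Your closing remark about why no decoupling is needed in the multilinear case is also the correct reading of the surrounding sections.
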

\begin{proof}
Let $\sfM_f$ be the SoS-symmetric matrix representation of $f$, i.e. 
\[
\sfM_f[I,J] = \frac{f_{\alpha(I)+\alpha(J)}}{\abs{\orbit{\alpha(I)+\alpha(J)}}}.
\] 
By the Gershgorin circle theorem, we can bound
$\norm{2}{\sfM_f}$, and hence $\hssos{f}$ by $n^{d/2} \cdot (\max_{\beta}\abs{f_{\beta}}/d!)$.
Here, we use the multilinearity of $f$.
On the other hand for a multilinear polynomial, 
using $x = \beta/\sqrt{\abs{\beta}}$ (where $|\beta|=d$ by multilinearity),
gives $\ftwo{f} \geq d^{-d/2} \cdot \abs{f_{\beta}}$. Thus, we easily get
\[
\hssos{f} ~\leq~ \frac{d^{d/2}}{d!} \cdot n^{d/2} \cdot \ftwo{f} ~=~ 2^{O(d)}~\frac{n^{d/2}}{d^{d/2}} \mper
\]
\end{proof}

\begin{theorem}
\thmlab{gen:d/2}
	Let $f(x)$ be a homogeneous $n$-variate degree-$d$ 
	polynomial, and assume that $2d$ divides $q$. Then  
	\[
		\frac{(\hssos{f^{q/d}})^{d/q}}{\|f\|_2} ~\leq~ 
		2^{O(d)}~\frac{n^{d/2}}{q^{d/2}}.
	\]
\end{theorem}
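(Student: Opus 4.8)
The plan is to bootstrap the two results already established in this section for multilinear polynomials. Since $2d \mid q$, the exponent $q/d$ is a positive even integer, so $g := f^{q/d}$ is a bona fide homogeneous polynomial of even degree $q$, and $\|g\|_2 = \sup_{\|x\|=1}|f(x)|^{q/d} = \|f\|_2^{q/d}$. Hence it suffices to bound the ratio $\hssos{g}/\|g\|_2$ and then extract $(d/q)$-th powers at the end. First I would apply \thmref{sos:gen:mult} to $g$ (legitimate since $q$ is even); invoking \lemref{split:gen:mult} to write $g = \sum_{\alpha \in \udmindex{q/2}} \multif{2\alpha}(x)\,x^{2\alpha}$, this gives
\[
\frac{\hssos{g}}{\|g\|_2} ~\leq~ 2^{O(q)} \max_{\alpha \in \udmindex{q/2}} \frac{\hssos{\multif{2\alpha}}}{\|\multif{2\alpha}\|_2},
\]
where the maximum is over those $\alpha$ for which $\multif{2\alpha} \not\equiv 0$. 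Each such $\multif{2\alpha}$ is multilinear and homogeneous of even degree $k := q - 2|\alpha| \leq q$: when $k = 0$ it is a nonzero constant and its ratio is $1$, and when $k \geq 2$ I would apply \thmref{gen:mult} to get $\hssos{\multif{2\alpha}}/\|\multif{2\alpha}\|_2 \leq 2^{O(k)}(n/k)^{k/2}$.

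The one delicate point is to absorb the maximum over the degrees $k$ into a clean bound in terms of $q$, i.e.\ to show $\max_{0 \leq k \leq q} 2^{O(k)}(n/k)^{k/2} \leq 2^{O(q)}(n/q)^{q/2}$. The crude estimate $(n/k)^{k/2} \leq n^{q/2}$ must be avoided, since it would only yield an $n^{d/2}$ approximation rather than the target $(n/q)^{d/2}$. Instead I would use (as in \thmref{results-list}) that $q \leq n$ together with the factorization $(n/k)^{k/2} = (n/q)^{k/2}\cdot(q/k)^{k/2}$: the first factor is at most $(n/q)^{q/2}$ because $n/q \geq 1$ and $k \leq q$, and for the second factor one checks that $k \mapsto \tfrac{k}{2}\ln(q/k)$ is maximized on $[1,q]$ at $k = q/e$ with value $q/(2e)$, whence $(q/k)^{k/2} \leq e^{q/(2e)} = 2^{O(q)}$; the prefactor $2^{O(k)} \leq 2^{O(q)}$ and the $k=0$ term ($\leq 1 \leq (n/q)^{q/2}$) cause no trouble. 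Combining, $\hssos{g}/\|g\|_2 \leq 2^{O(q)}(n/q)^{q/2}$.

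Finally I would substitute $g = f^{q/d}$ and $\|g\|_2 = \|f\|_2^{q/d}$ and raise to the power $d/q$, using $\big(2^{O(q)}\big)^{d/q} = 2^{O(d)}$ and $\big((n/q)^{q/2}\big)^{d/q} = (n/q)^{d/2}$, to conclude $\big(\hssos{f^{q/d}}\big)^{d/q} \leq 2^{O(d)}(n/q)^{d/2}\,\|f\|_2$. I expect the only genuine (albeit small) obstacle to be the degree-maximization estimate of the middle paragraph; everything else is a direct composition of \thmref{sos:gen:mult} and \thmref{gen:mult} with the elementary identity $\|f^{q/d}\|_2 = \|f\|_2^{q/d}$.
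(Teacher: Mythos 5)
Your proposal is correct and follows exactly the paper's route: the paper's proof is the one-line "apply \thmref{sos:gen:mult} to $f^{q/d}$ and combine with \thmref{gen:mult}." The degree-maximization step you flag (bounding $2^{O(k)}(n/k)^{k/2}$ uniformly over the degrees $k \leq q$ of the nonzero multilinear components by $2^{O(q)}(n/q)^{q/2}$ via $(q/k)^{k/2} \leq e^{q/(2e)}$) is indeed needed and is left implicit in the paper; your treatment of it is sound.
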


\begin{proof}
	Applying \thmref{sos:gen:mult} to $f^{q/d}$ and combining 
	this with \thmref{gen:mult} yields the claim. 
\end{proof}

\subsection{$\sqrt{m/q}$-Approximation for $m$-sparse polynomials}

\begin{lemma}
\lemlab{sparse:mult}
	Consider any homogeneous multilinear $n$-variate degree-$d$ (for even $d$)
	polynomial $f(x)$ with $m$ non-zero coefficients. We have, 
	\[
		\frac{\hssos{f}}{\|f\|_2} ~\leq~ 
		2^{O(d)}\sqrt{m}.
	\]
\end{lemma}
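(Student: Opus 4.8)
The plan is to avoid searching over representations altogether: pin down the single SoS-symmetric representation $\sfM_f$ of $f$ and control $\norm{2}{\sfM_f}$ crudely by its Frobenius norm, which is easy to compute exactly when $f$ is multilinear. Since $f$ has degree $d$ and is multilinear, every monomial $x^{\alpha}$ with $f_{\alpha}\neq 0$ has $|\orbit{\alpha}| = d!$, so the matrix entries carrying that coefficient all equal $f_{\alpha}/d!$, and (via the bijection between tuples $K\in[n]^{d}$ with $\alpha(K)=\alpha$ and pairs $(I,J)\in[n]^{d/2}\times[n]^{d/2}$ obtained by cutting $K$ in half) there are exactly $d!$ such entries. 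Hence
\[
\norm{F}{\sfM_f}^2 ~=~ \sum_{\alpha:\,f_{\alpha}\neq 0} d!\cdot\pth{\frac{f_{\alpha}}{d!}}^{2} ~=~ \frac{1}{d!}\sum_{\alpha} f_{\alpha}^{2} ~\leq~ \frac{m}{d!}\cdot\max_{\alpha} f_{\alpha}^{2},
\]
and therefore $\hssos{f}\leq \norm{2}{\sfM_f}\leq \norm{F}{\sfM_f}\leq \sqrt{m/d!}\cdot\max_{\alpha}|f_{\alpha}|$.

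For the matching lower bound on $\ftwo{f}$ I would reuse the one-point-evaluation trick from \thmref{gen:mult}: let $\beta^{\ast}$ be the multi-index of a coefficient of largest magnitude; by multilinearity $\beta^{\ast}$ is the indicator of a $d$-subset $S^{\ast}\subseteq[n]$, and evaluating $f$ at the unit vector $x=\beta^{\ast}/\sqrt{d}$ kills every multilinear degree-$d$ monomial except $x^{\beta^{\ast}}$ (any other corresponds to a $d$-set $\neq S^{\ast}$ and hence contains a coordinate set to $0$). This gives $f(x)=f_{\beta^{\ast}}\,d^{-d/2}$, so $\ftwo{f}\geq |f_{\beta^{\ast}}|\,d^{-d/2}=\max_{\alpha}|f_{\alpha}|\cdot d^{-d/2}$. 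Dividing the two estimates yields $\hssos{f}/\ftwo{f}\leq \sqrt{m}\cdot d^{d/2}/\sqrt{d!}$, and Stirling ($d!\geq (d/e)^{d}$) bounds $d^{d/2}/\sqrt{d!}$ by $e^{d/2}=2^{O(d)}$.

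I do not expect a real obstacle here: the whole argument is a Frobenius-norm estimate plus a single evaluation. The only points that need a line of care are (i) verifying that $\sfM_f$ as defined is a genuine symmetric matrix representation of $f$ and that the count of entries carrying a given coefficient is exactly $|\orbit{\alpha}|$, and (ii) noting that multilinearity is used twice — once to make $|\orbit{\alpha}|$ as large as $d!$, which is precisely what makes $\norm{F}{\sfM_f}$ small, and once to make the evaluation at $\beta^{\ast}/\sqrt{d}$ collapse to a single term. If one prefers, the factor $d^{d/2}/\sqrt{d!}$ can simply be absorbed into $2^{O(d)}$ without invoking Stirling explicitly.
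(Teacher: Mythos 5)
Your proof is correct and takes essentially the same route as the paper's: bound $\hssos{f}$ by the Frobenius norm of the SoS-symmetric representation $\sfM_f$, which equals $\bigl(\sum_\beta f_\beta^2/d!\bigr)^{1/2}$ by multilinearity, and lower-bound $\ftwo{f}$ by evaluating at $\beta^\ast/\sqrt{d}$. The only differences are cosmetic: you spell out why that evaluation isolates the single monomial $x^{\beta^\ast}$, and your $\max_\alpha f_\alpha^2$ in the Frobenius bound is the correct form of what appears in the paper's display as $\max_\beta\abs{f_\beta}$ (a typo there).
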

\begin{proof}
Let $\sfM_f$ be the SoS-symmetric matrix representation of $f$, i.e. 
\[
\sfM_f[I,J] = \frac{f_{\alpha(I)+\alpha(J)}}{\abs{\orbit{\alpha(I)+\alpha(J)}}}.
\] 
Now $\hssos{f}\leq \|\sfM_f\| \leq \|\sfM_f\|_F$. Thus we have, 
\begin{align*}
    \|\sfM_f\|^2_F 
    &= 
    \sum_{I,J\in [n]^{d/2}} \sfM_f[I,J]^2 \\
    &= 
    \sum_{\beta \in \degbmindex{d}} 
    \frac{f_{\beta}^{2}}{\abs{\orbit{\beta}}} \\
    &=  
    \sum_{\beta \in \degbmindex{d}} 
    \frac{f_{\beta}^{2}}{d!} \\
    &\leq 
    \frac{m}{d!}\cdot \max_{\beta}\abs{f_{\beta}}
\end{align*}

On the other hand, since $f$ is multilinear, 
using $x = \beta/\sqrt{\abs{\beta}}$ (where $|\beta|=d$ by multilinearity),
implies $\ftwo{f} \geq d^{-d/2} \cdot \abs{f_{\beta}}$ for any $\beta$. 
This implies the claim. 
\end{proof}

\begin{theorem}
\thmlab{sparse:q}
	Let $f(x)$ be a homogeneous $n$-variate degree-$d$ 
	polynomial with $m$ non-zero coefficients, and assume that $2d$ divides $q$. Then
	\[
		\frac{(\hssos{f^{q/d}})^{d/q}}{\|f\|_2} ~\leq~ 
		2^{O(d)}\sqrt{m/q}.
	\]
\end{theorem}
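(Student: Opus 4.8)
The plan is to mimic the proofs of \thmref{gen:d/2} and \thmref{nnc:d/4}: apply the general‑to‑multilinear reduction \thmref{sos:gen:mult} to the power $g := f^{q/d}$, bound each resulting multilinear piece using the sparse multilinear estimate \lemref{sparse:mult}, and finally take $(d/q)$-th roots. The only new ingredient — the one responsible for the $1/\sqrt q$ improvement over the naive $\sqrt m$ bound — is a sharp estimate on the sparsity of a \emph{power} of a sparse polynomial.

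First I would record two easy facts about $g = f^{q/d}$: it is homogeneous of degree $q$ (which is even since $2d \mid q$), and $\|g\|_2 = \|f\|_2^{q/d}$ because $|g(x)| = |f(x)|^{q/d}$. Next comes the sparsity count. Writing $f$ as a sum of $m$ monomials, every monomial of $g = f^{q/d}$ is determined by a size-$(q/d)$ multiset of these $m$ monomials, so the number $m_g$ of nonzero coefficients of $g$ satisfies $m_g \le \binom{m + q/d - 1}{q/d} \le \bigl(e(m + q/d)/(q/d)\bigr)^{q/d}$. Finally, in the decomposition $g = \sum_{\alpha \in \udmindex{q/2}} \multif{2\alpha}(x)\,x^{2\alpha}$ of \lemref{split:gen:mult}, each multilinear polynomial $\multif{2\alpha}$ collects a disjoint subset of the monomials of $g$, so each $\multif{2\alpha}$ has at most $m_g$ nonzero coefficients.

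Now I would apply \thmref{sos:gen:mult} to $g$ and then \lemref{sparse:mult} to each $\multif{2\alpha}$ (which is multilinear of even degree $q - 2|\alpha| \le q$ and has sparsity $\le m_g$), obtaining
\[
\frac{\hssos{g}}{\|g\|_2} ~\le~ 2^{O(q)} \max_{\alpha \in \udmindex{q/2}} \frac{\hssos{\multif{2\alpha}}}{\|\multif{2\alpha}\|_2} ~\le~ 2^{O(q)}\,\sqrt{m_g}\,.
\]
Substituting $\|g\|_2 = \|f\|_2^{q/d}$, raising both sides to the power $d/q$ (so that $(2^{O(q)})^{d/q} = 2^{O(d)}$), and plugging in the bound on $m_g$ gives
\[
\bigl(\hssos{f^{q/d}}\bigr)^{d/q} ~\le~ 2^{O(d)}\, m_g^{\,d/(2q)}\, \|f\|_2 ~\le~ 2^{O(d)} \sqrt{\frac{e(m + q/d)}{q/d}}\;\|f\|_2\,,
\]
and in the main regime $m \ge q/d$ the right-hand side is at most $2^{O(d)}\sqrt{2emd/q}\,\|f\|_2 = 2^{O(d)}\sqrt{m/q}\,\|f\|_2$ after absorbing $\sqrt{2ed}$ into $2^{O(d)}$, as claimed.

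I expect the crux to be precisely the sparsity estimate for the power $f^{q/d}$ and the Stirling-type cancellation it enables: the trivial bound $m_g \le m^{q/d}$ only yields a $\sqrt m$ approximation, and one must exploit that a $(q/d)!$ fraction of the $m^{q/d}$ monomial products coincide — this is exactly the $(q/d)!$ in the denominator of $\binom{m+q/d-1}{q/d}$ — so that taking the $(d/q)$-th root turns $\sqrt m$ into $\sqrt{m/q}$. Everything else is bookkeeping: verifying that the multilinear pieces inherit the sparsity bound, that the $2^{O(q)}$ losses from \thmref{sos:gen:mult} and \lemref{sparse:mult} become harmless $2^{O(d)}$ factors after the $(d/q)$-th power, and that each $q - 2|\alpha|$ is even so that \lemref{sparse:mult} applies verbatim.
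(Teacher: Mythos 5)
Your proof is correct and follows exactly the paper's argument: combine \thmref{sos:gen:mult} with \lemref{sparse:mult} to get a $2^{O(q)}\sqrt{\overline{m}}$ bound for a degree-$q$ polynomial of sparsity $\overline{m}$, then take $g = f^{q/d}$ whose sparsity is at most the multiset coefficient $\binom{m+q/d-1}{q/d}$, and take $(d/q)$-th roots. You are in fact slightly more careful than the paper in spelling out the Stirling estimate and the implicit assumption $m \gtrsim q/d$ needed for the final $\sqrt{m/q}$ form.
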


\begin{proof}
	Combining \thmref{sos:gen:mult} and \lemref{sparse:mult}, yields that for any degree-$q$ 
	homogeneous polynomial $g$ with sparsity $\overbar{m}$, we have 
	\[
        	\frac{(\hssos{g})}{\|g\|_2} ~\leq~ 
		2^{O(q)}\sqrt{\overbar{m}}.
	\]
	Lastly, taking $g = f^{q/d}$ and observing that the sparsity of $g$ is at most 
	$\multichoose{m}{q/d}$ implies the claim. 
\end{proof}


%

\section{Approximating $2$-norms via Folding}
\seclab{folding}

\subsection{Preliminaries}

Recall that we call a folded polynomial multilinear if 
all its monomials are multilinear. In particular, there's no 
restriction on the folds of the polynomial. 

\begin{lemma}[Folded Analogue of \lemref{split:gen:mult}]
\lemlab{fold:split:gen:mult} ~\\
	Let $\FPR{d_1}{d_2} \ni f(x) :=
	\sum_{\beta\in \degmindex{d_1}} \fold{f}{\beta}(x)\cdot x^\beta$ be a 
	$(d_1,d_2)$-folded polynomial.
	$f$ can be written as 
	\[
		\sum_{\alpha\in \udmindex{d_1\!/2}} 
		\multif{2\alpha}(x) \cdot x^{2\alpha}
	\] 
	where for any $\alpha\in \udmindex{d_1\!/2}$, $\multif{2\alpha}(x)$ is a 
	multilinear $(d_1-2|\alpha|,d_2)$-folded polynomial. 
\end{lemma}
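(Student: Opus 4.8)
The plan is to transcribe the proof of \lemref{split:gen:mult} essentially verbatim, now treating each fold $\fold{f}{\beta} \in \pr{d_2}$ as a ``generalized coefficient'' drawn from the $\R$-vector space $\pr{d_2}$. The only operations performed on the folds are $\R$-linear combinations — we multiply folds by the scalar monomials $x^\gamma$ coming from the monomial part and we add folds together — so the bookkeeping of the unfolded case goes through unchanged; in particular we never multiply two folds.

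Concretely, I would first record the parity decomposition of multi-indices: every $\beta \in \degmindex{d_1}$ is \emph{uniquely} of the form $\beta = 2\alpha + \gamma$ with $\gamma_i = \beta_i \bmod 2$ and $\alpha_i = \floor{\beta_i / 2}$; then $\gamma \in \degbmindex{d_1 - 2|\alpha|}$ and $|\alpha| \le d_1/2$, i.e. $\alpha \in \udmindex{d_1/2}$, and conversely $(\alpha, \gamma) \mapsto 2\alpha + \gamma$ inverts this on the set of admissible pairs. Then I would \emph{define}
\[
\multif{2\alpha}(x) ~:=~ \sum_{\gamma \in \degbmindex{d_1 - 2|\alpha|}} \fold{f}{2\alpha+\gamma}(x) \cdot x^\gamma \qquad \text{for each } \alpha \in \udmindex{d_1/2},
\]
and verify two things. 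First, $\multif{2\alpha}$ is by construction a multilinear $(d_1 - 2|\alpha|, d_2)$-folded polynomial: its monomials $x^\gamma$ range over the multilinear monomials of degree exactly $d_1 - 2|\alpha|$, and its folds $\fold{(\multif{2\alpha})}{\gamma} = \fold{f}{2\alpha+\gamma}$ lie in $\pr{d_2}$. Second, $\sum_{\alpha} \multif{2\alpha}(x)\, x^{2\alpha} = f(x)$: substituting the definition and interchanging the two finite sums yields $\sum_{\alpha}\sum_{\gamma} \fold{f}{2\alpha+\gamma}(x)\, x^{2\alpha+\gamma}$, which by the bijection above is exactly $\sum_{\beta \in \degmindex{d_1}} \fold{f}{\beta}(x)\, x^\beta = f(x)$.

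Uniqueness (paralleling \lemref{split:gen:mult}) would follow by the same observation: in any representation of the stated shape, the monomial $x^\beta$ can only be produced by the unique admissible pair $(\alpha,\gamma)$ with $2\alpha + \gamma = \beta$, so $\fold{(\multif{2\alpha})}{\gamma}$ is forced to equal $\fold{f}{2\alpha+\gamma}$. I do not expect any genuine obstacle here — the content is purely a notational reshuffling. The one point worth stating carefully is that ``multilinear'' for a folded polynomial constrains only the monomials $x^\gamma$ and imposes no condition whatsoever on the folds (as emphasized at the start of \secref{folding}); once that convention is in hand, the lemma is an immediate consequence of the parity decomposition of multi-indices.
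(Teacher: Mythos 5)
Your construction is exactly the one the paper uses: its proof simply defines $\multif{2\alpha}(x) = \sum_{\gamma\in \degbmindex{d_1-2|\alpha|}} \fold{f}{2\alpha+\gamma}\cdot x^{\gamma}$, which is your parity decomposition $\beta = 2\alpha+\gamma$. Your write-up just makes the (routine) verification of reconstruction, multilinearity, and uniqueness explicit where the paper leaves it implicit, so the proposal is correct and essentially identical in approach.
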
 

\begin{proof}
Simply consider the folded polynomial 
	\[
	\multif{2\alpha}(x) = 
	\sum_{\gamma\in \degbmindex{d_1-2|\alpha|}} 
	\fold{(\multif{2\alpha})}{\gamma} \cdot x^{\gamma}
	\]
	where 
	$
		\fold{(\multif{2\alpha})}{\gamma}
		= 
		\fold{f}{2\alpha+\gamma}
	$.
\end{proof}

\subsection{Reduction to Multilinear Folded Polynomials}

Here we will prove a generalized version of \lemref{gen:mult:sp}, 
which is a generalization in two ways; firstly it allows for folds 
instead of just coefficients, and secondly it allows a more 
general set of constraints than just the hypersphere since we will need 
to add some additional non-negativity constraints for the case of non-negative 
coefficient polynomials (so that $\hscsos{C}{}$ satisfies monotonicity 
over NNC polynomials which will come in handy later). 

Recall that $\hscsos{C}{}$ is defined in \secref{hscsos} and that 
$\|f\|_2$ and $\hscsos{C}{f}$ for a folded polynomial $f$, are applied 
to the unfolding of $f$. 

\subsubsection{Relating $\hscsos{C}{f}$ to $\hscsos{C}{\multif{2\alpha}}$}
\begin{lemma}[Folded Analogue of \lemref{gen:mult:sp}]
\lemlab{fold:gen:mult:sos} ~\\
	Let $C$ be a system of polynomial constraints of the form 
	$\{\|x\|_2^2 = 1\}\cup C'$ where $C'$ is a moment non-negativity 
	constraint set. Let $f \in \FPR{d_1}{d_2}$ 
	be a $(d_1,d_2)$-folded polynomial. We have, 
	\[
	\hscsos{C}{f} ~\leq~ 
	\max_{\alpha\in \udmindex{d_1\!/2}} 
	\frac{\hscsos{C}{\multif{2\alpha}}}{|\orbit{\alpha}|}\,(1+d_1/2)
	\]
\end{lemma}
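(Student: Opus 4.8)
The plan is to work entirely with constrained pseudo-expectations rather than with matrix representations, since $\hscsos{C}{\cdot}$ is by definition a maximum over degree-bounded pseudo-expectations respecting $C$; this avoids having to track how the moment non-negativity constraints in $C$ interact with a block-diagonal matrix. I would fix a degree-$(d_1+d_2)$ pseudo-expectation $\PE_C$ respecting $C$ that achieves $\hscsos{C}{f} = \PExc{C}{\unfold{f}}$. Unfolding the decomposition of \lemref{fold:split:gen:mult} gives $\unfold{f}(x) = \sum_{\alpha \in \udmindex{d_1/2}} x^{2\alpha}\cdot \unfold{\multif{2\alpha}}(x)$, so by linearity $\hscsos{C}{f} = \sum_{\alpha \in \udmindex{d_1/2}} \PExc{C}{x^{2\alpha}\cdot \unfold{\multif{2\alpha}}}$.

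The crux is to bound each summand by $\PExc{C}{x^{2\alpha}}\cdot\hscsos{C}{\multif{2\alpha}}$. Writing $x^{2\alpha} = (x^\alpha)^2$ with $\deg(x^\alpha) = \cardin{\alpha}$, and noting that $\unfold{\multif{2\alpha}}$ has degree $(d_1 - 2\cardin{\alpha}) + d_2$, the polynomial $x^\alpha$ has degree exactly $\cardin{\alpha} = \bigl((d_1+d_2) - \deg(\unfold{\multif{2\alpha}})\bigr)/2$, which is precisely the degree budget allowed by \lemref{sos:replace} when applied with the degree-$(d_1+d_2)$ operator $\PE_C$, the polynomial $p = x^\alpha$, and the polynomial $\unfold{\multif{2\alpha}}$. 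That lemma then yields $\PExc{C}{x^{2\alpha}\cdot \unfold{\multif{2\alpha}}}\le \PExc{C}{x^{2\alpha}}\cdot\hscsos{C}{\multif{2\alpha}}$, and $\PExc{C}{x^{2\alpha}}\ge 0$ since $x^{2\alpha}$ is the square of a polynomial of degree $\cardin{\alpha}\le d_1/2 \le (d_1+d_2)/2$.

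Next I would combine these bounds using the multinomial identity $\sum_{\alpha\in\degmindex{t}}\cardin{\orbit{\alpha}}\,x^{2\alpha} = \norm{2}{x}^{2t}$, which holds because $\cardin{\orbit{\alpha}}$ is exactly the multinomial coefficient $\binom{t}{\alpha}$. Applying $\PExc{C}{\cdot}$ and using that $\PE_C$ respects $\norm{2}{x}^2 = 1$ (legitimate since $2t \le d_1 \le d_1+d_2$) gives $\sum_{\alpha\in\degmindex{t}}\cardin{\orbit{\alpha}}\PExc{C}{x^{2\alpha}} = 1$ for every $t$. Hence, for each fixed $t\in\{0,\dots,d_1/2\}$, the quantity $\sum_{\alpha\in\degmindex{t}}\PExc{C}{x^{2\alpha}}\,\hscsos{C}{\multif{2\alpha}}$ is a convex combination of the numbers $\hscsos{C}{\multif{2\alpha}}/\cardin{\orbit{\alpha}}$ and is therefore at most $\max_{\alpha\in\degmindex{t}}\hscsos{C}{\multif{2\alpha}}/\cardin{\orbit{\alpha}}$. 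Summing over the $1 + d_1/2$ values of $t$ and bounding each inner maximum by the maximum over all of $\udmindex{d_1/2}$ delivers the claimed inequality.

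The routine parts here are the multinomial identity and the convex-combination bookkeeping. The one point needing genuine care is the degree matching for \lemref{sos:replace}: the operator $\PE_C$ must have degree exactly $d_1+d_2$, and $\multif{2\alpha}$ must unfold to degree $(d_1+d_2) - 2\cardin{\alpha}$ — which hinges on \lemref{fold:split:gen:mult} leaving the fold degree $d_2$ untouched — and all the $\hscsos{C}{\cdot}$ quantities must be taken with respect to the \emph{same} constraint system $C$, moment non-negativity constraints included. With those aligned, the proof is just the folded, pseudo-expectation-level transcription of the proof of \lemref{gen:mult:sp}, with the eigenvalue/block-diagonal step replaced by averaging over the orbit $\orbit{\alpha}$.
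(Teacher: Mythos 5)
Your proposal is correct and is essentially identical to the paper's proof: both decompose $f$ via \lemref{fold:split:gen:mult}, apply \lemref{sos:replace} with $p = x^{\alpha}$ to each term, and then use $\sum_{\alpha\in\degmindex{t}}\cardin{\orbit{\alpha}}x^{2\alpha} = \norm{2}{x}^{2t}$ together with $\PExc{C}{\norm{2}{x}^{2t}}=1$ and $\PExc{C}{x^{2\alpha}}\ge 0$ to extract the factor $(1+d_1/2)$ times the maximum of $\hscsos{C}{\multif{2\alpha}}/\cardin{\orbit{\alpha}}$. Your "convex combination" phrasing is just a restatement of the paper's step of multiplying and dividing by $\cardin{\orbit{\alpha}}$, and your degree-matching check for \lemref{sos:replace} is exactly the right point to verify.
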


\begin{proof}
	Consider any degree-($d_1+d_2$) pseudo-expectation operator $\PE_{C}$. 
	We have, 
	\begin{align*}
		\PExc{C}{f} 
		&= 
		\sum_{\mathclap{\alpha\in \udmindex{d_1\!/2}}} 
		\PExc{C}{\multif{2\alpha}(x)\cdot x^{2\alpha}}
		&&\text{(by \lemref{fold:split:gen:mult})} \\
		&\leq 
		\sum_{\mathclap{\alpha\in \udmindex{d_1\!/2}}} 
		\PExc{C}{x^{2\alpha}} \cdot \hscsos{C}{\multif{2\alpha}}
		&&\text{(by \lemref{sos:replace})} \\
		&= 
		\sum_{0\leq t\leq \frac{d_1}{2}}
		\sum_{\alpha\in \degmindex{t}} 
		\PExc{C}{x^{2\alpha}}\cdot 
		\hscsos{C}{\multif{2\alpha}} \\
		&= 
		\sum_{0\leq t\leq \frac{d_1}{2}}
		\sum_{\alpha\in \degmindex{t}} 
		\PExc{C}{|\orbit{\alpha}|x^{2\alpha}}\cdot 
		\frac{\hscsos{C}{\multif{2\alpha}}}{|\orbit{\alpha}|} \\
		&\leq 
		\sum_{0\leq t\leq \frac{d_1}{2}}
		\sum_{\alpha\in \degmindex{t}} 
		\PExc{C}{|\orbit{\alpha}|x^{2\alpha}}\cdot 
		\max_{\mathclap{\beta\in \udmindex{d_1\!/2}}} \,
		\frac{\hscsos{C}{\multif{2\beta}}}{|\orbit{\beta}|}
		&&(\PExc{C}{x^{2\alpha}}\geq 0) \\
		&= 
		\sum_{0\leq t\leq \frac{d_1}{2}}
		\PExc{C}{
		\sum_{\alpha\in \degmindex{t}} 
		|\orbit{\alpha}|x^{2\alpha}
		}
		\cdot 
		\max_{\mathclap{\beta\in \udmindex{d_1\!/2}}} \,
		\frac{\hscsos{C}{\multif{2\beta}}}{|\orbit{\beta}|} \\
		&= 
		\sum_{0\leq t\leq \frac{d_1}{2}}
		\PExc{C}{\|x\|_2^{2t}}
		\cdot 
		\max_{\beta\in \udmindex{d_1\!/2}} 
		\frac{\hscsos{C}{\multif{2\beta}}}{|\orbit{\beta}|}  \\
		&= 
		\sum_{0\leq t\leq \frac{d_1}{2}} ~
		\max_{\beta\in \udmindex{d_1\!/2}} 
		\frac{\hscsos{C}{\multif{2\beta}}}{|\orbit{\beta}|} \\
		&= ~~
		\max_{\beta\in \udmindex{d_1\!/2}}
		\frac{\hscsos{C}{\multif{2\beta}}}{|\orbit{\beta}|} 
		(1+d_1/2) 
	\end{align*}
\end{proof}

\subsection{Relating Evaluations of $f$ to Evaluations of $\multif{2\alpha}$}

Here we would like to generalize \lemref{nnc:mult:2} and \lemref{gen:mult:2} to 
allow folds, however for technical reasons related to decoupling of the domain of the 
folds from the domain of the monomials of a folded polynomial, we instead generalize 
claims implicit in the proofs of \lemref{nnc:mult:2} and \lemref{gen:mult:2}. 

Let $f \in \FPR{d_1}{d_2}$ be a $(d_1,d_2)$-folded polynomial. Recall that an evaluation 
of a folded polynomial treats the folds as coefficients and only substitutes values 
in the monomials of the folded polynomial. Thus for any fixed $y\in \Re^{n}$, $f(y)$ 
(sometimes denoted by $(f(y))(x)$ for contextual clarity) is a degree-$d_2$ polynomial 
in $x$, i.e. $f(y)\in \PR{d_2}$. 

\begin{lemma}[Folded Analogue of \lemref{nnc:mult:2}]
\lemlab{fold:nnc:mult:eval} ~\\
    Let $f \in \NFPR{d_1}{d_2}$ be a $(d_1,d_2)$-folded polynomial whose folds have non-negative 
    coefficients. Then for any $\alpha\in\udmindex{d_1/2}$ and any $y\geq 0$, 
    \[
        \pth{f\pth{y+\frac{\sqrt{\alpha}}{\sqrt{|\alpha|}}}}\!(x) 
        ~\geq~ 
        \frac{(\multif{2\alpha}(y))(x)}{|\orbit{\alpha}|} \cdot 2^{-O(d_1)}
    \]
    where the ordering is coefficient-wise. 
\end{lemma}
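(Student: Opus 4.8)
The plan is to adapt the argument of \lemref{nnc:mult:2}, but carried out at the level of the coefficients of the degree-$d_2$ polynomials in $x$ rather than of scalar evaluations. Write $t:=|\alpha|$ and set $z := y + \sqrt{\alpha}/\sqrt{t}$ entrywise (with the convention that the second term is $\vec 0$ when $\alpha = 0^n$; note then $|\orbit{0^n}|=1$ and the claimed inequality reduces to discarding non-negative monomials from $(f(y))(x)$, so this case needs no separate work). Since $y\geq 0$ and $\sqrt{\alpha}/\sqrt{t}\geq 0$ entrywise we have $z\geq 0$, and since every fold $\fold{f}{\beta}$ has non-negative coefficients, each summand of $(f(z))(x) = \sum_{\beta\in\degmindex{d_1}} \fold{f}{\beta}(x)\cdot z^{\beta}$ is a polynomial in $x$ with non-negative coefficients. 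Hence, keeping only the terms with $\beta = 2\alpha+\gamma$ for $\gamma\in\degbmindex{d_1-2|\alpha|}$ (which all lie in $\degmindex{d_1}$ since $\alpha\in\udmindex{d_1/2}$) yields a coefficient-wise lower bound
\[
(f(z))(x) ~\geq~ \sum_{\gamma\in\degbmindex{d_1-2|\alpha|}} \fold{f}{2\alpha+\gamma}(x)\cdot z^{2\alpha+\gamma}.
\]

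Next I would lower bound the scalar $z^{2\alpha+\gamma} = \prod_i z_i^{2\alpha_i+\gamma_i}$ by a multiple of $y^{\gamma}$. For $i\notin\supp{\alpha}$ use $z_i\geq y_i$ to get $z_i^{\gamma_i}\geq y_i^{\gamma_i}$; for $i\in\supp{\alpha}$ use $z_i\geq y_i$ together with $z_i\geq \sqrt{\alpha_i}/\sqrt{t}$ to get $z_i^{2\alpha_i+\gamma_i}\geq (\alpha_i/t)^{\alpha_i}\cdot y_i^{\gamma_i}$. Multiplying over all $i$ gives $z^{2\alpha+\gamma} \geq \big(\prod_{i\in\supp{\alpha}}\alpha_i^{\alpha_i}/t^{t}\big)\cdot y^{\gamma}$. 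Now, using $|\orbit{\alpha}| = t!/\prod_i\alpha_i!$, the bound $\alpha_i!\leq \alpha_i^{\alpha_i}$, and Stirling's inequality $t!\geq (t/e)^{t}$, one gets
\[
\frac{\prod_{i\in\supp{\alpha}}\alpha_i^{\alpha_i}}{t^{t}} ~\geq~ \frac{\prod_i \alpha_i!}{t^t} ~=~ \frac{t!}{t^t}\cdot\frac{1}{|\orbit{\alpha}|} ~\geq~ \frac{2^{-O(t)}}{|\orbit{\alpha}|} ~=~ \frac{2^{-O(d_1)}}{|\orbit{\alpha}|}.
\]
This is exactly the chain of estimates already in \lemref{nnc:mult:2}. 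Since each $\fold{f}{2\alpha+\gamma}(x)$ has non-negative coefficients, multiplying the previous display through by the fold and summing over $\gamma$, then invoking the identity $\fold{(\multif{2\alpha})}{\gamma} = \fold{f}{2\alpha+\gamma}$ from \lemref{fold:split:gen:mult}, gives
\[
(f(z))(x) ~\geq~ \frac{2^{-O(d_1)}}{|\orbit{\alpha}|}\sum_{\gamma\in\degbmindex{d_1-2|\alpha|}} \fold{(\multif{2\alpha})}{\gamma}(x)\cdot y^{\gamma} ~=~ \frac{2^{-O(d_1)}}{|\orbit{\alpha}|}\,(\multif{2\alpha}(y))(x),
\]
which is the claim.

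I do not anticipate a genuine obstacle: the scalar estimate on $z^{2\alpha+\gamma}$ is the crux and is essentially identical to the unfolded case in \lemref{nnc:mult:2}, the only genuinely new ingredient being the bookkeeping — one must work throughout with the coefficient-wise partial order on degree-$d_2$ polynomials in $x$ and verify that every term that is discarded or rescaled is non-negative coefficient-wise, which is precisely where the hypotheses that the folds have non-negative coefficients and that $y\geq 0$ enter. The mild subtlety of handling indices $i$ where $\alpha_i$ and $\gamma_i$ are both nonzero is dealt with by the ``$z_i\geq y_i$ and $z_i\geq\sqrt{\alpha_i}/\sqrt{t}$'' splitting above.
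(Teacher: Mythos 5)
Your proposal is correct and is essentially the paper's own argument: the paper's proof of this lemma literally reads ``identical to the proof of \lemref{nnc:mult:2},'' and what you have written is that proof carried out with the folds in place of scalar coefficients, with the coefficient-wise partial order on $\pr{d_2}$ replacing the scalar order. The splitting $z_i^{2\alpha_i+\gamma_i}\geq(\alpha_i/t)^{\alpha_i}y_i^{\gamma_i}$ and the chain $\prod_i\alpha_i^{\alpha_i}/t^t\geq 2^{-O(t)}/|\orbit{\alpha}|$ are exactly the estimates used there, so nothing further is needed.
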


\begin{proof}
    Identical to the proof of \lemref{nnc:mult:2}. 
\end{proof}

\begin{lemma}[Folded Analogue of \lemref{gen:mult:2}]
\lemlab{fold:gen:mult:eval} ~\\
    Let $f \in \FPR{d_1}{d_2}$ be a $(d_1,d_2)$-folded polynomial. 
    Consider any $\alpha\in\udmindex{d_1/2}$ and any $y$, and let 
    \[
        z \quad := \quad
        \Xi \cdot y\circ \frac{1}{2\alpha+\one}\circ b ~~+~ 
		\frac{\sqrt{\alpha}\circ \zeta}{\sqrt{|\alpha|}} 
	\]  
	where $\Xi$ is an independent and uniformly randomly chosen 
	$(d_1-2|\alpha|+1)$-th root of unity, and for any $i\in [n]$,  
	$\zeta_i$ is an independent and uniformly randomly chosen 
	$(2\alpha_i+1)$-th root of unity, and $b_i$ is an 
	independent $\mathrm{Bernoulli}(p)$ random variable ($p$ is an arbitrary  
	parameter in $[0,1]$). Then 
    \[
        \Ex{(f(z))(x)\cdot \Xi\cdot 
	    \prod_{i\in [n]} \zeta_{i}} 
        ~=~ 
        p^{\,d_1-2|\alpha|}\cdot \frac{(\multif{2\alpha}(y))(x)}{|\orbit{\alpha}|}\cdot 2^{-O(d_1)}
        ~+~
        r(p)
    \]
    where $r(p)$ is a univariate polynomial in $p$ with degree less than $d_1 - 2|\alpha|$ 
    (and whose coefficients are in $\PR{d_2}$). 
\end{lemma}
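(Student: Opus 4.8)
The plan is to reproduce the moment computation from the proof of \lemref{gen:mult:2}, observing that passing to folded polynomials changes nothing of substance: evaluating a folded polynomial at a point substitutes only into its monomials, so the folds $\fold{f}{\beta}(x)\in\PR{d_2}$ stay frozen as coefficients throughout. Writing $f(x)=\sum_{\beta\in\degmindex{d_1}}\fold{f}{\beta}(x)\cdot x^{\beta}$, we have $(f(z))(x)=\sum_{\beta\in\degmindex{d_1}}\fold{f}{\beta}(x)\cdot z^{\beta}$, so by linearity of expectation
\[
\Ex{(f(z))(x)\cdot\Xi\cdot\prod_{i\in[n]}\zeta_i}~=~\sum_{\beta\in\degmindex{d_1}}\fold{f}{\beta}(x)\cdot\Ex{z^{\beta}\cdot\Xi\cdot\prod_{i\in[n]}\zeta_i}\mper
\]
Thus it suffices to evaluate the scalar expectations $\Ex{z^{\beta}\cdot\Xi\cdot\prod_i\zeta_i}$, which are exactly those arising in \lemref{gen:mult:2} with the degree $d$ replaced by $d_1$ and the optimizer $\bx$ replaced by the arbitrary vector $y$ (optimality of $\bx$ is never used in that computation).

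First I would expand each $z_i^{\beta_i}$ by the binomial theorem and take expectations using the mutual independence of $\Xi$, the $\zeta_i$, and the $b_i$. By \lemref{rand:rtou:moments} a binomial term with indices $(j_i)_i$ survives the expectation over each $\zeta_i$ only if $(2\alpha_i+1)\mid(\beta_i-j_i+1)$, and survives the expectation over $\Xi$ only if $(d_1-2|\alpha|+1)\mid(\sum_i j_i+1)$. Since $\sum_i\beta_i=d_1$, a short modular argument — identical to the one in \lemref{gen:mult:2} — shows these constraints, together with $0\le j_i\le\beta_i$, force $j_i=\beta_i-2\alpha_i$ for every $i$ (so in particular $\beta\ge 2\alpha$); the surviving per-coordinate $\zeta_i$-exponent is then $2\alpha_i$ and the $\Xi$-exponent is $d_1-2|\alpha|+1$. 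Hence, writing $t:=|\alpha|$, $k:=d_1-2t$ and $\gamma:=\beta-2\alpha$ (so $|\gamma|=k$), every surviving $\beta$ contributes
\[
\Ex{z^{2\alpha+\gamma}\cdot\Xi\cdot\prod_{i\in[n]}\zeta_i}~=~\Ex{\prod_{i\in[n]}b_i^{\gamma_i}}\cdot y^{\gamma}\cdot\prod_{i\in[n]}\binom{2\alpha_i+\gamma_i}{\gamma_i}\frac{1}{(2\alpha_i+1)^{\gamma_i}}\cdot\prod_{i\in\supp{\alpha}}\frac{\alpha_i^{\alpha_i}}{t^{\alpha_i}}\mper
\]

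Finally I would split the sum over $\gamma\ge 0$, $|\gamma|=k$, according to whether $\gamma\le\one$. For multilinear $\gamma\in\degbmindex{k}$, \lemref{bernoulli:moments} gives $\Ex{\prod_i b_i^{\gamma_i}}=p^{k}$ and the factor $\binom{2\alpha_i+\gamma_i}{\gamma_i}(2\alpha_i+1)^{-\gamma_i}$ equals $1$ for each $\gamma_i\in\{0,1\}$; summing and using \lemref{fold:split:gen:mult} to recognize $\sum_{\gamma\in\degbmindex{k}}\fold{f}{2\alpha+\gamma}(x)\,y^{\gamma}=(\multif{2\alpha}(y))(x)$, this part equals $p^{k}\cdot\bigl(\prod_{i\in\supp{\alpha}}\alpha_i^{\alpha_i}/t^{\alpha_i}\bigr)\cdot(\multif{2\alpha}(y))(x)$. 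For $\gamma\not\le\one$ some $\gamma_i\ge 2$, hence $|\supp{\gamma}|\le k-1$ and $\Ex{\prod_i b_i^{\gamma_i}}=p^{|\supp{\gamma}|}$ by \lemref{bernoulli:moments}; these terms assemble into a univariate polynomial $r(p)$ of degree at most $k-1<d_1-2|\alpha|$ whose coefficient of each power of $p$ is an $\RR$-linear combination of the folds $\fold{f}{2\alpha+\gamma}(x)$, hence lies in the span of the folds, and in particular in $\PR{d_2}$. A Stirling estimate — the same one closing \lemref{gen:mult:2}, using $t\le d_1/2$ — gives $\prod_{i\in\supp{\alpha}}\alpha_i^{\alpha_i}/t^{\alpha_i}=2^{\pm O(d_1)}/|\orbit{\alpha}|$, which absorbs into the $2^{-O(d_1)}$ and yields the stated identity. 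I expect the only delicate point to be the modular bookkeeping that isolates the surviving $\beta$, but it is inherited verbatim from \lemref{gen:mult:2}; folding introduces no new difficulty precisely because the folds never enter the random substitution $z$.
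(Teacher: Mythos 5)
Your proof is correct and is exactly what the paper intends: its entire proof of this lemma is the one-line remark that one should ``go through the proof of \lemref{gen:mult:2} for every fixed $x$,'' and your write-up is a faithful, fully detailed execution of that plan (folds frozen as coefficients, the same modular bookkeeping forcing $j_i=\beta_i-2\alpha_i$, the same split into multilinear and non-multilinear $\gamma$, and the same Stirling estimate identifying $\prod_{i\in\supp{\alpha}}\alpha_i^{\alpha_i}/t^{\alpha_i}$ with $2^{\pm O(d_1)}/|\orbit{\alpha}|$). No gaps.
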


\begin{proof}
    This follows by going through the proof of \lemref{gen:mult:2} for every fixed $x$. 
\end{proof}

\subsection{Bounding $\hscsos{C}{}$ of Multilinear Folded Polynomials}

Here we bound $\hscsos{C}{}$ of a multilinear folded polynomial in terms of 
properties of the polynomial that are inspired by treating the folds as coefficients 
and generalizing the coefficient-based approximations for regular (non-folded) polynomials 
from \thmref{gen:mult} and \thmref{nnc:mult}.

\subsubsection{General Folds: Bounding $\hssos{}$ in terms of $\hssos{}$ of the "worst" fold}
Here we will give a folded analogue of the proof of \thmref{gen:mult} wherein 
we used Gershgorin-Circle theorem to bound SOS value in terms of the 
max-magnitude-coefficient. 

\begin{lemma}[Folded Analogue of Gershgorin Circle Bound on Spectral Radius]
\lemlab{fold:gen:coefficient:sos}
	For even $d_1,d_2$, let $d=d_1+d_2$, let $f \in \FPR{d_1}{d_2}$ 
	be a multilinear $(d_1,d_2)$-folded polynomial. We have, 
	\[
		\hssos{f} ~\leq~ 
		2^{O(d)}~\frac{n^{d_1/2}}{d_1^{\,d_1}}\,
		\max_{\gamma\in \degbmindex{d_1}}
		\|\fold{f}{\gamma}\|_{sp}.
	\]
\end{lemma}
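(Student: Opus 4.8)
The plan is to mimic the proof of \thmref{gen:mult}, replacing the scalar Gershgorin bound with a \emph{block} Gershgorin bound and the ``max-magnitude coefficient'' with the fold of largest spectral norm. Concretely, I will exhibit a single matrix representation $M$ of the degree-$d$ polynomial $\unfold{f}$ with $\norm{2}{M}\le \tfrac{n^{d_1/2}}{d_1!}\cdot\max_{\gamma}\fsp{\fold{f}{\gamma}}$; since $\hssos{f}\le\norm{2}{M}$ and $1/d_1!\le e^{d_1}/d_1^{d_1}=2^{O(d)}/d_1^{d_1}$ by Stirling, this yields the claim.

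First I would fix, for each multilinear $\gamma\in\degbmindex{d_1}$, a matrix representation $N_\gamma\in\sym{n^{d_2/2}}$ of the degree-$d_2$ fold $\fold{f}{\gamma}$ with $\norm{2}{N_\gamma}=\fsp{\fold{f}{\gamma}}$ (here $d_2$ even is needed), and then identify $[n]^{d/2}$ with $[n]^{d_1/2}\times[n]^{d_2/2}$, viewing an $n^{d/2}\times n^{d/2}$ matrix as an $n^{d_1/2}\times n^{d_1/2}$ array of $n^{d_2/2}\times n^{d_2/2}$ blocks. I define $M$ by setting its $(I_1,J_1)$ block equal to $\tfrac{1}{|\orbit{\beta}|}N_\beta=\tfrac{1}{d_1!}N_\beta$ whenever the concatenated tuple $(I_1,J_1)\in[n]^{d_1}$ has all-distinct entries, where $\beta$ is its (multilinear) multi-index, and equal to $0$ otherwise. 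Exactly as in the computation in \lemref{gen:mult:sp} --- distributing each term $x^{\beta}\,\fold{f}{\beta}(x)$ uniformly over the $|\orbit{\beta}|=d_1!$ blocks indexed by permutations of $\beta$ --- I would verify that $(x^{\otimes d/2})^{T}M\,x^{\otimes d/2}=\unfold{f}(x)$, that $M$ is symmetric, and hence that $\hssos{f}\le\norm{2}{M}$.

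It then remains to bound $\norm{2}{M}$. Let $B\in\RR^{[n]^{d_1/2}\times[n]^{d_1/2}}$ be the nonnegative symmetric matrix of block spectral norms, so $B[I_1,J_1]$ equals $\tfrac{1}{d_1!}\fsp{\fold{f}{\beta}}$ when $(I_1,J_1)$ is multilinear with multi-index $\beta$, and $0$ otherwise. The standard two-level estimate $\norm{2}{M}\le\norm{2}{B}$ --- for a block vector $v=(v_{I_1})$ one has $\norm{2}{Mv}^2\le\sum_{I_1}\bigl(\sum_{J_1}\norm{2}{M_{(I_1,J_1)}}\,\norm{2}{v_{J_1}}\bigr)^2=\norm{2}{Bw}^2$ with $w_{I_1}=\norm{2}{v_{I_1}}$ --- reduces the task to bounding $\norm{2}{B}$, and since $B$ is symmetric and nonnegative, Gershgorin gives $\norm{2}{B}\le\max_{I_1}\sum_{J_1}B[I_1,J_1]$. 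A row $I_1$ contributes $0$ unless $I_1$ has distinct entries, in which case the number of $J_1$ for which $(I_1,J_1)$ is multilinear is $(n-d_1/2)(n-d_1/2-1)\cdots(n-d_1+1)\le n^{d_1/2}$; hence each row sum is at most $\tfrac{n^{d_1/2}}{d_1!}\max_\gamma\fsp{\fold{f}{\gamma}}$, which combined with the previous paragraph proves the lemma.

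I do not anticipate a genuine obstacle, as this is the block analogue of \thmref{gen:mult}. The two places needing care are (i) checking that $M$ really represents $\unfold{f}$, where multilinearity of the monomials $x^{\beta}$ is precisely what makes $|\orbit{\beta}|=d_1!$ and the uniform-spreading identity go through cleanly, and (ii) the passage $\norm{2}{M}\le\norm{2}{B}\le(\text{max row sum of }B)$. The mildly fiddly part is keeping the index bookkeeping between $[n]^{d/2}$ and $[n]^{d_1/2}\times[n]^{d_2/2}$ straight, but there is no real difficulty there.
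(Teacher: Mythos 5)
Your proposal is correct and follows essentially the same route as the paper: build the block matrix whose $(I_1,J_1)$ block is $\tfrac{1}{d_1!}$ times a spectral-norm-optimal representation of the fold $\fold{f}{\mi{I_1}+\mi{J_1}}$, check it represents $\unfold{f}$ using multilinearity, and bound its spectral norm by the maximum block-row sum of block spectral norms, which is at most $n^{d_1/2}/d_1!$ times the largest $\fsp{\fold{f}{\gamma}}$. The only cosmetic difference is that you re-derive the block Gershgorin bound via the auxiliary matrix $B$ of block norms, whereas the paper cites it directly.
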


\begin{proof}
	Since $\hssos{f}\leq \|f\|_{sp}$, it is sufficient to bound 
	$\|f\|_{sp}$. 
	
	Let $M_{\fold{f}{\gamma}}$ be the matrix representation of 
	$\fold{f}{\gamma}$ realizing $\|\fold{f}{\gamma}\|_{sp}$. Let 
	$M_{f}$ be an $[n]^{d_1/2}\times [n]^{d_1/2}$ block matrix 
	with $[n]^{d_2/2}\times [n]^{d_2/2}$ size blocks, where for 
	any $I,J\in [n]^{d_1/2}$ the block of $M_f$ at index $(I,J)$ is 
	defined to be $\frac{1}{d_1!}\cdot M_{\fold{f}{\mi{I}+\mi{J}}}$. 
	Clearly $M_f$ (interpreted as an $[n]^{d/2}\times [n]^{d/2}$) is 
	a matrix representation of the unfolding of $f$ since $f$ is a 
	multilinear folded polynomial. Lastly, applying Block-Gershgorin 
	circle theorem to $M_f$ and upper bounding the sum of spectral norms 
	over a block row by $n^{d_1/2}$ times the max term implies the claim. 
\end{proof}

\subsubsection{Non-Negative Coefficient Folds: Relating SoS Value to the SoS Value of the $d_1/2$-collapse}
Observe that in the case of a multilinear degree-$d$ polynomial, the $d/2$-collapse corresponds (up to scaling) 
to the sum of a row of the SOS symmetric  matrix representation of the polynomial. We will next develop 
a folded analogue of the proof of \thmref{nnc:mult} wherein we employed Perron-Frobenius theorem to 
bound SOS value in terms of the $d/2$-collapse. 

\medskip

The proof here however, is quite a bit more subtle than in the general case above. This is because 
one can apply the block-matrix analogue of Gershgorin theorem (due to Feingold \etal \cite{feingold1962block}) 
to a matrix representation of the folded polynomial (whose spectral norm is an upper bound on $\hssos{}$) in the 
general case. Loosely speaking, this corresponds to bounding $\hssos{f}$ in terms of 
\[
	\max_{\gamma\in\degbmindex{k}} \sum_{\theta\in \degbmindex{k}} \hssos{\fold{f}{\gamma+\theta}}
\]
where $k=d_1/2$. This however is not enough in the nnc case as in order to win the $1/2$ in the exponent, 
one needs to relate $\hscsos{C}{f}$ to 
\[
	\max_{\gamma\in\degbmindex{k}} \hssos{ \sum_{\theta\in \degbmindex{k}} \fold{f}{\gamma+\theta}}.
\]
This however, cannot go through Block-Gershgorin since it is \textbf{not} true that the spectral norm of a non-negative block 
matrix is upper bounded by the max over rows of the spectral norm of the sum of blocks in that row. It instead, can only be upper bounded by the max over rows of the sum of spectral norms of the blocks in that row. 

To get around this issue, we skip the intermediate step of bounding $\hscsos{C}{f}$ by the spectral norm of a matrix 
and instead prove the desired relation directly through the use of pseudoexpectation operators. This involved first 
finding a pseudo-expectation based proof of Gershgorin/Perron-Frobenius bound on spectral radius that generalizes to folded 
polynomials in the right way. 

\begin{lemma}[Folded analogue of Perron-Frobenius Bound on Spectral Radius]
\lemlab{fold:nnc:collapse:sos}
	For even $d_1 = 2k$, let $f \in \NFPR{d_1}{d_2}$ be a multilinear 
	$(d_1,d_2)$-folded polynomial whose folds have non-negative 
	coefficients. Let $C$ be the system of polynomial constraints 
	given by $\{\|x\|_2^2 = 1; \forall \beta\in \degmindex{d_2}, 
	x^{\beta}\geq 0\}$. We have, 
	\[
		\hscsos{C}{f} ~\leq~ 
		\max_{\gamma\in \degbmindex{k}}
		\hscsos{C}{\fold{g}{\gamma}}\cdot \frac{1}{k!}
	\]
	where 
	\[
		\fold{g}{\gamma}(x) := 
		\fold{\collapse{k}{f}}{\gamma}
		=
		\sum_{
		\mathclap{
		\substack{
		\theta\leq \one - \gamma \\
		\theta\in \degmindex{k}
		}
		}
		} 
		\fold{f}{\gamma+\theta}(x).
	\]
\end{lemma}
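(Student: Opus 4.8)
The plan is to avoid matrix representations entirely and work directly with the pseudoexpectation dual, for precisely the reason flagged in the paragraph preceding the lemma: block-Gershgorin/Perron--Frobenius only controls the \emph{sum of block norms}, whereas we need the \emph{norm of the sum of blocks} $\fold g\gamma$. So fix an arbitrary degree-$(2k+d_2)$ pseudoexpectation operator $\PE_C$ respecting $C$; it suffices to prove $\PExc C f \le \tfrac1{k!}\max_{\gamma\in\degbmindex k}\hscsos C{\fold g\gamma}$ and then take the supremum over $\PE_C$.

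First I would rewrite $\unfold f(x)=\sum_{\beta\in\degbmindex{2k}}\fold f\beta(x)\,x^\beta$ and split each multilinear monomial symmetrically over its $\binom{2k}{k}$ degree-$k$ factorizations: $x^\beta=\binom{2k}{k}^{-1}\sum_{\gamma\le\beta,\,|\gamma|=k}x^\gamma x^{\beta-\gamma}$. The heart of the argument is the elementary ``pseudo--Perron--Frobenius'' inequality. Since $(x^\gamma-x^\theta)^2=x^{2\gamma}-2x^{\gamma+\theta}+x^{2\theta}$ is a square, and every fold $\fold f\beta=\sum_\delta c_\delta x^\delta$ has $c_\delta\ge0$ with $|\delta|=d_2$, the polynomial
\[
\fold f\beta(x)\Bigl(\tfrac12\bigl(x^{2\gamma}+x^{2\theta}\bigr)-x^{\gamma+\theta}\Bigr)\;=\;\tfrac12\sum_\delta c_\delta\, x^{\delta}\,(x^\gamma-x^\theta)^2
\]
is a non-negative combination of terms of degree $d_2+2k$, each of the form (moment constraint $x^\delta\!\ge\!0$, $|\delta|=d_2$) times a square; hence it has non-negative pseudoexpectation under any $\PE_C$. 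This is the one place the moment non-negativity constraints in $C$ are used, and it is exactly why we must argue via $\PE_C$ rather than via a matrix. Consequently $\PExc C{\fold f\beta\,x^{\gamma+\theta}}\le\tfrac12\PExc C{\fold f\beta\,x^{2\gamma}}+\tfrac12\PExc C{\fold f\beta\,x^{2\theta}}$. Substituting into the split, using the bijection $\gamma\leftrightarrow\beta-\gamma$ to symmetrize, and regrouping the double sum by $\gamma$ (so that $\sum_{\theta\le\one-\gamma}\fold f{\gamma+\theta}$ collapses to $\fold g\gamma$) gives
\[
\PExc C f\;\le\;\frac1{\binom{2k}{k}}\sum_{\gamma\in\degbmindex k}\PExc C{x^{2\gamma}\,\fold g\gamma(x)}\,.
\]

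To finish I would invoke \lemref{sos:replace} with $p=x^\gamma$, noting $\deg(x^\gamma)=k=\bigl((2k+d_2)-d_2\bigr)/2$, to get $\PExc C{x^{2\gamma}\fold g\gamma}\le\PExc C{x^{2\gamma}}\cdot\hscsos C{\fold g\gamma}$, then bound $\hscsos C{\fold g\gamma}\le\max_\gamma\hscsos C{\fold g\gamma}$ (which is non-negative, e.g.\ by \lemref{sos:nnc:monotonicity}) while using $\PExc C{x^{2\gamma}}\ge0$. It remains to control $\sum_{\gamma\in\degbmindex k}\PExc C{x^{2\gamma}}$: the multinomial theorem gives $\|x\|^{2k}-k!\sum_{\gamma\in\degbmindex k}x^{2\gamma}=\sum_{|\alpha|=k,\ \alpha\not\le\one}\binom{k}{\alpha}x^{2\alpha}$, a non-negative combination of squares $(x^\alpha)^2$, and $\PExc C{\|x\|^{2k}}=1$ by iterating the constraint $\|x\|^2=1$; hence $\sum_{\gamma\in\degbmindex k}\PExc C{x^{2\gamma}}\le\tfrac1{k!}$. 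Combining everything, $\PExc C f\le\frac1{k!\binom{2k}{k}}\max_\gamma\hscsos C{\fold g\gamma}\le\frac1{k!}\max_\gamma\hscsos C{\fold g\gamma}$, which is the claim (indeed slightly stronger).

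The step I expect to be the main obstacle is the pseudo--Perron--Frobenius inequality itself: making the cross-term cancellation $2x^{\gamma+\theta}\le x^{2\gamma}+x^{2\theta}$ survive after multiplication by a fold requires both that the folds have non-negative coefficients (so the product with a square remains a legitimate SoS-plus-moment certificate) and that their monomials $x^\delta$ have degree \emph{exactly} $d_2$ so that $x^\delta\ge0$ is literally among the constraints of $C$. Everything else — the $\binom{2k}{k}$ and $k!$ normalizations and the multinomial identity for $\|x\|^{2k}$ — is routine bookkeeping.
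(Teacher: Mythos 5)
Your proposal is correct and follows essentially the same route as the paper's proof: fix a constrained pseudoexpectation, split each multilinear monomial into two degree-$k$ halves, use the square $(x^{\gamma}-x^{\theta})^2$ multiplied by the non-negative-coefficient fold (legitimized by the moment constraints in $C$) to perform the AM--GM step, regroup into the collapse $\fold{g}{\gamma}$, and finish with \lemref{sos:replace} and the constraint $\norm{2}{x}^2=1$. The only differences are cosmetic bookkeeping — you index by multi-indices and normalize via the multinomial identity, while the paper indexes by tuples and gets $\sum_{I}\PExc{C}{(x^I)^2}=1$ directly — and both yield the same (slightly stronger) constant $1/(k!\binom{2k}{k})$ before the paper weakens it to $1/k!$.
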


\begin{proof}	
	Consider any pseudo-expectation operator $\PE_{C}$ of degree 
	at least $d_1+d_2$. Note that since $\PE_{C}$ satisfies 
	$\{\forall \beta\in \degmindex{d_2}, x^{\beta}\geq 0\}$, 
	by linearity $\PE_{C}$ must also satisfy $\{h\geq 0\}$ for any 
	$h\in \NPR{d_2}$ - a fact we will use shortly. 
	
	Since $f$ is a multilinear folded polynomial, $\fold{f}{\alpha}$
	is only defined when $0\leq \alpha\leq \one$. If 
	$\alpha\not\leq \one$, we define $\fold{f}{\alpha}:= 0$
	We have, 
	\begin{align*}
		\PExc{C}{f} 
		&= 
		\sum_{\mathclap{\alpha\in \degbmindex{d_1}}} 
		\PExc{C}{\fold{f}{\alpha}\cdot x^{\alpha}} 
		&&(f \text{ is a multilinear folded polynomial})\\
		&= 
		\sum_{I\in [n]^{k}} 
		\sum_{J\in [n]^{k}}
		\PExc{C}{\fold{f}{\mi{I}+\mi{J}}\cdot x^{I}x^{J}} 
		\cdot \frac{1}{d_1!}
		&&\text{(by multilinearity)} \\
		&\leq 
		\sum_{I\in [n]^{k}} 
		\sum_{J\in [n]^{k}}
		\PExc{C}{\fold{f}{\mi{I}+\mi{J}}\cdot 
		\frac{(x^{I})^2+(x^{J})^2}{2}} 
		\cdot \frac{1}{d_1!}
		&&({\PE}_{C} \text{ satisfies } \fold{f}{\alpha} \geq 0) \\
		&= 
		\sum_{I\in [n]^{k}} 
		\sum_{J\in [n]^{k}}
		\PExc{C}{\fold{f}{\mi{I}+\mi{J}}\cdot (x^{I})^2} 
		\cdot \frac{1}{d_1!} \\
		&= 
		\sum_{I\in [n]^{k}} 
		\PExc{C}{(x^{I})^2\cdot 
		\sum_{J\in [n]^{k}}\fold{f}{\mi{I}+\mi{J}} } 
		\cdot \frac{1}{d_1!} \\
		&= 
		\sum_{I\in [n]^{k}} 
		\PExc{C}{(x^{I})^2\cdot 
		\sum_{
		\mathclap{
		\substack{
		\theta\leq \one - \mi{I} \\
		\theta\in \degmindex{k}
		}
		}
		} 
		\fold{f}{\mi{I}+\theta} 
		} 
		\cdot \frac{k!}{d_1!} 
		&&\text{(by multilinearity)} \\
		&= 
		\sum_{I\in [n]^{k}} 
		\PExc{C}{(x^{I})^2\cdot 
		g_{\mi{I}}
		} 
		\cdot \frac{k!}{d_1!} \\
		&\leq 
		\sum_{I\in [n]^{k}} 
		\PExc{C}{(x^{I})^2}
		\cdot \hscsos{C}{\fold{g}{\mi{I}}}
		\cdot \frac{1}{k!}
		&&\text{(by \lemref{sos:replace})}\\
		&\leq 
		\sum_{I\in [n]^{k}} 
		\PExc{C}{(x^{I})^2}
		\cdot \max_{\gamma\in \degbmindex{k}}\hscsos{C}{\fold{g}{\gamma}}
		\cdot \frac{1}{k!}
		&&(\PExc{C}{(x^{I})^2} \geq 0)\\
		&= 
		\PExc{C}{\|x\|_2^{d_1}}
		\cdot \max_{\gamma\in \degbmindex{k}}\hscsos{C}{\fold{g}{\gamma}}
		\cdot \frac{1}{k!} \\
		&= 
		\max_{\gamma\in \degbmindex{k}}\hscsos{C}{\fold{g}{\gamma}}
		\cdot \frac{1}{k!} 
	\end{align*}
\end{proof}

We are finally equipped to prove the main results of this section.

\subsection{$(n/q)^{d/4 - 1/2}$-Approximation for Non-negative Coefficient Polynomials} 

\begin{theorem}
\thmlab{nnc:d/4-1/2}
	Consider any $f \in \NPR{d}$ for $d\geq 2$, and any $q$ 
	divisible by $2d$. Let $C$ be the system of 
	polynomial constraints given by $\{\|x\|_2^2 = 1; \forall 
	\beta\in \degmindex{2q/d}, x^{\beta}\geq 0\}$. Then we have, 
	\[
		\frac{\hscsos{C}{f^{q/d}}^{d/q}}{\|f\|_2} ~\leq~ 
		2^{O(d)}~\frac{n^{d/4-1/2}}{q^{d/4-1/2}}.
	\]
\end{theorem}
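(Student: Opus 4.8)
The plan is to establish the folded mirror of the proof of \thmref{nnc:d/4}, which combined the multilinear reduction \thmref{sos:nnc:mult} with the multilinear estimate \thmref{nnc:mult}; the extra saving of $(n/q)^{1/2}$ will come from peeling a degree-$2$ (quadratic) substructure off $f$ and treating it exactly rather than through the lossy $(n/\cdot)^{\cdot/4}$ bound. Concretely, first write $f = \unfold{h}$ for a $(d-2,2)$-folded polynomial $h \in \NFPR{d-2}{2}$ whose folds $\{\fold{h}{\beta}\}_{|\beta|=d-2}$ are non-negative-coefficient quadratics, chosen so that $\sup_{\|y\|=1}\ftwo{h(y)} = \Theta_d(\ftwo{f})$ (such an $h$ is read off from a matrix representation of $f$ exactly as in \secref{warmup:n}). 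Then set $H := h^{q/d}$, a $(d_1,d_2)$-folded polynomial with $d_1 = (d-2)q/d$ and $d_2 = 2q/d$ (both even since $2d \mid q$), with $\unfold{H} = f^{q/d}$, whose folds are non-negative combinations of products of $q/d$ of the quadratic folds of $h$. I will also use the elementary identity $\ftwo{f^{q/d}} = \ftwo{f}^{q/d}$ (valid because $q/d$ is even), so that it suffices to bound the ratio $\hscsos{C}{f^{q/d}}/\ftwo{f^{q/d}}$ and raise the result to the power $d/q$.

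For the upper bound, apply the folded multilinear reduction \lemref{fold:gen:mult:sos} to $H$, reducing to the multilinear $(d_1 - 2|\alpha|,\,d_2)$-folded polynomials $\multif{2\alpha}$ at the cost of a factor $(1+d_1/2)/|\orbit{\alpha}|$; then apply the folded Perron--Frobenius bound \lemref{fold:nnc:collapse:sos}, which gives $\hscsos{C}{\multif{2\alpha}} \le \tfrac{1}{k!}\max_\gamma \hscsos{C}{\fold{g}{\gamma}}$ with $k = (d_1-2|\alpha|)/2$ and $\fold{g}{\gamma} = \sum_{\theta \le \one-\gamma}\fold{(\multif{2\alpha})}{\gamma+\theta}$ a non-negative-coefficient degree-$d_2$ polynomial (the collapse/row-sum of the ``block matrix'' of $\multif{2\alpha}$). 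For the lower bound, use the folded evaluation lemma \lemref{fold:nnc:mult:eval}: evaluating the monomial directions of $H$ at $y + \sqrt{\alpha}/\sqrt{|\alpha|}$ for $y \ge 0$ dominates, coefficientwise in the remaining variables, a $2^{-O(q)}/|\orbit{\alpha}|$ fraction of $\multif{2\alpha}(y)$; combined with a partial decoupling of the monomial block from the fold block this yields $\ftwo{f^{q/d}} \ge 2^{-O(q)}|\orbit{\alpha}|^{-1}\,\ftwo{\unfold{\multif{2\alpha}}}$, so the $|\orbit{\alpha}|$ factors cancel and $\hscsos{C}{f^{q/d}}/\ftwo{f^{q/d}} \le 2^{O(q)}\max_\alpha \hscsos{C}{\multif{2\alpha}}/\ftwo{\unfold{\multif{2\alpha}}}$. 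Finally, mirroring the evaluation step of \thmref{nnc:mult} but for the folded polynomial $\multif{2\alpha}$ (which is multilinear in its $2k$ monomial directions) --- contracting $k$ of those directions against $\one/\sqrt{n}$ and the others against the heavy collapse-row, with the conflict over the shared variable between monomials and quadratic folds resolved by the root-of-unity randomization already present in \lemref{fold:nnc:mult:eval}/\lemref{gen:mult:2} --- one gets $\ftwo{\unfold{\multif{2\alpha}}} \ge 2^{-O(q)}\bigl(k!/(nk)^{k/2}\bigr)\cdot(\text{value of }\fold{g}{\gamma})$, and the two bounds combine to $2^{O(q)}(n/k)^{k/2} = 2^{O(q)}(n/q)^{d_1/4}$ at $|\alpha|=0$ (the worst case, since larger $|\alpha|$ only shrinks $k$). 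Raising to the power $d/q$ and using $d_1 = (d-2)q/d$ gives the claimed $2^{O(d)}(n/q)^{d/4-1/2}$.

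The step I expect to be the real obstacle is matching these two bounds on $\fold{g}{\gamma}$: the upper bound produces $\hscsos{C}{\fold{g}{\gamma}}$ while the lower bound produces only $\ftwo{\fold{g}{\gamma}}$, and in general these need not be comparable --- a product of $q/d$ \emph{distinct} quadratics on disjoint variables has $\fsp{\cdot}/\ftwo{\cdot}$ as large as $q^{\Theta(q)}$, so one cannot afford to bound $\hscsos{C}{\fold{g}{\gamma}}$ by a spectral norm of a tensor-product representation. The argument must instead use non-negativity crucially: the collapse $\fold{g}{\gamma}$ is a \emph{sub-sum} of $\sum_\beta \fold{H}{\beta}$, which factors through the single non-negative quadratic $Q := \sum_\beta \fold{h}{\beta}$, so that the monotonicity of $\hscsos{C}{\cdot}$ for non-negative-coefficient polynomials (\lemref{sos:nnc:monotonicity}) can be invoked to control $\hscsos{C}{\fold{g}{\gamma}}$ against powers of $Q$ (which \emph{are} easy, via $M_Q^{\otimes q/d}$), and this comparison must be carried out row-by-row so that only $n^{-d_1/4}$ rather than $n^{-d_1/2}$ is lost --- exactly the phenomenon behind the passage from the naive $n^{d/2}$ to $n^{d/4}$ in \thmref{nnc:mult}. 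Getting the bookkeeping of the $|\orbit{\cdot}|$ and factorial factors to cancel consistently between the folded upper bound (\lemref{fold:gen:mult:sos}, \lemref{fold:nnc:collapse:sos}) and the folded lower bound (\lemref{fold:nnc:mult:eval}) is the second delicate point.
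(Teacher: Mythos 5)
Your proposal follows essentially the same route as the paper's proof: the same $(d-2,2)$-folded polynomial $h$ with $s=h^{q/d}$, the same upper-bound chain (\lemref{fold:gen:mult:sos} followed by \lemref{fold:nnc:collapse:sos}), and the same evaluation point $\one/\sqrt{n}+\gamma/\sqrt{k}+\sqrt{\alpha}/\sqrt{|\alpha|}$ on the lower-bound side. The obstacle you flag at the end --- that the upper bound produces $\hscsos{C}{\fold{g}{\gamma}}$ while a naive lower bound only controls $\ftwo{\fold{g}{\gamma}}$ --- is real, and the paper closes it exactly as you conjecture: the lower bound never passes through a $\ftwo{\cdot}$ of the collapse at all. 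Writing $y=a+z$ with $z$ the evaluation point and $a$ the maximizer of the quadratic $(h(z))(x)$, non-negativity gives $\ftwo{f}^{q/d}\geq 2^{-O(q)}\|h(z)^{q/d}(x)\|_2 = 2^{-O(q)}\hscsos{C}{h(z)^{q/d}(x)}$ because SoS is exact on powers of a single quadratic; two applications of \lemref{sos:nnc:monotonicity} --- first via the coefficient-wise domination in \lemref{fold:nnc:mult:eval}, then via $S_{2\alpha}(\one/\sqrt{n}+\gamma/\sqrt{k})\geq \fold{\collapse{k}{S_{2\alpha}}}{\gamma}/(nk)^{k/2}$ --- then descend to $\hscsos{C}{\fold{\collapse{k}{S_{2\alpha}}}{\gamma}}$ itself, which cancels against the identical quantity in the upper bound. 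One bookkeeping correction: the paper's lower bound is $2^{-O(q)}\hscsos{C}{\fold{\collapse{k}{S_{2\alpha}}}{\gamma}}/\bigl((nk)^{k/2}|\orbit{\alpha}|\bigr)$ with \emph{no} $k!$ in the numerator; you have imported the $|\orbit{I^*}|=(d/2)!$ factor from the lower bound of \thmref{nnc:mult} while also keeping the $1/k!$ that \lemref{fold:nnc:collapse:sos} already places in the upper bound, and this double-counting would make the combined ratio $(nk)^{k/2}/(k!)^2$ rather than the correct $(nk)^{k/2}/k! = 2^{O(k)}(n/k)^{k/2}$.
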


\begin{proof}
    Let $h$ be any $(d-2,2)$-folded polynomial whose unfolding yields 
	$f$ and whose folds have non-negative coefficients and 
	let $s$ be the $(\barq,2q/d)$-folded 
	polynomial given by $h^{q/d}$ where $\barq:= (d-2)q/d$. 
	Finally, consider any $\alpha\in \udmindex{\barq/2}$ and let  
	$S_{2\alpha}$ be the multilinear component of $s$ as defined in 
	\lemref{fold:split:gen:mult}. 
	We will establish that for any $\gamma\in \degbmindex{k}$ (where 
	$k:= \barq/2 - |\alpha|$), 
	\begin{equation}
	\Eqlab{nnc:rte}
		\|f\|_2^{q/d} 
		\geq  
		\frac{2^{-O(q)}\cdot \hscsos{C}{\fold{\collapse{\barq/2 - |\alpha|}{S_{2\alpha}}}{\gamma}}}
		{(\barq/2 - |\alpha|)^{\barq/4 - |\alpha|/2}
		\cdot |\orbit{\alpha}|\cdot n^{\barq/4-|\alpha|/2}}
	\end{equation}
    which on combining with the application of \lemref{fold:gen:mult:sos} 
    to $s$ and its composition with \lemref{fold:nnc:collapse:sos}, yields 
    the claim. To elaborate, we apply \lemref{fold:gen:mult:sos} to $s$ with 
    $d_1 = \barq, d_2=2q/d$ and then for every $\alpha\in \udmindex{\barq/2}$ 
    we apply \lemref{fold:nnc:collapse:sos} with 
    $d_1 = \barq-2|\alpha|, d_2 = 2q/d$, to get 
    \[
        \hscsos{C}{f^{q/d}} 
        =
        \hscsos{C}{s}
        \leq 
        2^{O(q)}\cdot 
        \max_{\alpha\in \udmindex{\barq/2}} ~
        \max_{\gamma\in\degbmindex{\barq/2 - |\alpha|}} 
        \frac{\hscsos{C}{\fold{\collapse{\barq/2 - |\alpha|}{S_{2\alpha}}}{\gamma}}}
        {(\barq/2 - |\alpha|)!\cdot |\orbit{\alpha}|}
    \]
    which on combining with \Eqref{nnc:rte} yields the claim. 
    \medskip 

    It remains to establish \Eqref{nnc:rte}. 
    So fix any $\alpha, \gamma$ satisfying the above conditions. 
    Let $t:= |\alpha|$ and let $k:= \barq/2 - |\alpha|$. 
    Clearly $\|f\|_2 \geq f(y/\|y\|_2)$ where $y:=a+z$, and 
	\[
		z := 
		\frac{\one}{\sqrt{n}} +
		\frac{\gamma}{\sqrt{k}} +
		\frac{\sqrt{\alpha}}{\sqrt{t}} 
	\]
	and $a$ is the unit vector that maximizes the quadratic polynomial
	\[
		(h(z))(x).
	\]
	Since $\|y\|_2 = O(1)$, $\|f\|_2 \geq f(y)/2^{O(d)}$. 
	Now clearly by non-negativity we have, 
	\begin{align*}
		f(y) 
		\geq 
		(h(z))(a)
		= 
		\|h(z)\|_{2}
	\end{align*}
	Thus we have, 
	\begin{align*}
	    \|f\|_{2}^{q/d} 
	    &\geq 
	    \|(h(z))(x)\|_{2}^{q/d} \cdot 2^{-O(q)} \\
	    &= 
	    \|h(z)^{q/d}(x)\|_{2} \cdot 2^{-O(q)} \\
	    &= 
	    \hscsos{C}{h(z)^{q/d}(x)} \cdot 2^{-O(q)} 
	    &&(\text{SOS exact on powered quadratics})\\
	    &= 
	    \hscsos{C}{s(z)(x)} \cdot 2^{-O(q)} \\
	    &\geq 
	    \hscsos{C}{S_{2\alpha}(\one/\sqrt{n}+\gamma/\sqrt{k})(x)} \cdot 
	    \frac{2^{-O(q)}}{|\orbit{\alpha}|}
	    &&(\text{by \lemref{sos:nnc:monotonicity}
	    and \lemref{fold:nnc:mult:eval}}) \\
	    &\geq 
	    \frac{\hscsos{C}{\fold{\collapse{k}{S_{2\alpha}}}{\gamma}}}{k^{k/2}\cdot n^{k/2}} \cdot 
	    \frac{2^{-O(q)}}{|\orbit{\alpha}|} 
	    &&(\text{by \lemref{sos:nnc:monotonicity}, and} \\
	    &~&& 
	    S_{2\alpha}(\frac{\one}{\sqrt{n}}+\frac{\gamma}{\sqrt{k}})
	    \geq 
	    \fold{\collapse{k}{S_{2\alpha}}}{\gamma} \text{ coefficient-wise}) 
	\end{align*}
	which completes the proof since we've established \Eqref{nnc:rte}.
\end{proof}

\subsection{$(n/q)^{d/2 - 1}$-Approximation for General Polynomials}
\begin{theorem}
\thmlab{gen:d/2-1}
	Consider any $f \in \NPR{d}$ for $d\geq 2$, and any $q$ divisible 
	by $2d$. Then we have, 
	\[
		\frac{\hssos{f^{q/d}}^{d/q}}{\|f\|_2} ~\leq~ 
		2^{O(d)}~\frac{n^{d/2-1}}{q^{d/2-1}}.
	\]
\end{theorem}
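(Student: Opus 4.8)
The plan is to run the folding machinery of \secref{folding} with a quadratic substructure, in exact parallel with the proof of \thmref{nnc:d/4-1/2} but using the tools that work for general (signed) polynomials; in the language of \thmref{easy-substructure} this is the instance $d_1=d-2$, $d_2=2$, $\Gamma=1$. First I would fix a $(d-2,2)$-folded polynomial $h$ with $\unfold{h}=f$ and $\sup_{\|y\|_2=1}\ftwo{h(y)}=\Theta_d(\ftwo{f})$; such an $h$ exists by splitting off two tensor legs of the SoS-symmetric matrix representation of $f$ and invoking the polarization inequality \lemref{decoupled:lower:bound:pre} to bound $\sup_y\ftwo{h(y)}$, exactly as in the $n$-approximation warmup of \secref{warmup:n}. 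Set $s:=h^{q/d}$, a $(\barq,2q/d)$-folded polynomial with $\barq:=(d-2)q/d$; the hypothesis $2d\mid q$ is precisely what makes $\barq$, $2q/d$, and every $\barq-2\abs{\alpha}$ below even, as \lemref{fold:gen:coefficient:sos} requires. Two facts are used repeatedly: $\unfold{s}=f^{q/d}$, hence $\hssos{s}=\hssos{f^{q/d}}$; and for every fixed vector $z$, $(s(z))(x)=\big((h(z))(x)\big)^{q/d}$ is the $(q/d)$-th power of a single quadratic form in $x$ — a polynomial on which $\fsp{\cdot}$ is exact. Finally, \lemref{fold:split:gen:mult} writes $s=\sum_{\alpha\in\udmindex{\barq/2}}S_{2\alpha}(x)\,x^{2\alpha}$ with each $S_{2\alpha}$ a multilinear $(\barq-2\abs{\alpha},\,2q/d)$-folded polynomial and $\fold{(S_{2\alpha})}{\gamma}=\fold{s}{2\alpha+\gamma}$.

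For the upper estimate I would combine the folded reduction to multilinear folded polynomials \lemref{fold:gen:mult:sos} (with the bare constraint set $C=\{\|x\|_2^2=1\}$, so $\hscsos{C}{\cdot}=\hssos{\cdot}$) with the folded Gershgorin bound \lemref{fold:gen:coefficient:sos} applied to each $S_{2\alpha}$. Writing $d_1':=\barq-2\abs{\alpha}$, this yields
\[
\hssos{f^{q/d}} ~=~ \hssos{s} ~\leq~ 2^{O(q)}\cdot\max_{\alpha}\max_{\gamma}~\frac{n^{d_1'/2}}{(d_1')^{d_1'}\,\cardin{\orbit{\alpha}}}\cdot\fsp{\fold{s}{2\alpha+\gamma}},
\]
the inner maximum over multilinear $\gamma$ of weight $d_1'$. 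This is the analogue of the step ``\lemref{fold:gen:mult:sos} then \lemref{fold:nnc:collapse:sos}'' in \thmref{nnc:d/4-1/2}, with Perron--Frobenius replaced by Gershgorin; this single substitution is what turns the saved exponent from $d/4-1/2$ into $d/2-1$, since Gershgorin controls $n^{d_1'/2}$ where Perron--Frobenius controlled only $n^{d_1'/4}$.

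The crux is the matching lower bound, the folded analogue of \Eqref{nnc:rte}: for every $\alpha$ and multilinear $\gamma$ of weight $d_1'$, $\ftwo{f}^{q/d}\geq 2^{-O(q)}\cdot\fsp{\fold{s}{2\alpha+\gamma}}\big/\big((d_1')^{d_1'/2}\,\cardin{\orbit{\alpha}}\big)$. Granting this, $\fsp{\fold{s}{2\alpha+\gamma}}$ cancels between the two displays and one gets $\hssos{f^{q/d}}/\ftwo{f}^{q/d}\leq 2^{O(q)}\max_{d_1'\leq\barq}(n/d_1')^{d_1'/2}$; using $\barq=(d-2)q/d$ and $q\leq n$, a short optimization over $d_1'$ shows the right side raised to the power $d/q$ is $2^{O(d)}(n/q)^{d/2-1}$, proving the theorem. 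To prove the displayed lower bound without the non-negativity that supplied monotonicity of $\hscsos{C}{\cdot}$ in \thmref{nnc:d/4-1/2}, I would pass to the complex norm via \lemref{complex:to:real} and run the folded complex weak-decoupling lemma \lemref{fold:gen:mult:eval} on $s$ at the vector $y:=\gamma/\sqrt{d_1'}$: for a random $z$ built from independent roots of unity and $\mathrm{Bernoulli}(p)$ variables, $\Ex{(s(z))(x)\cdot\Xi\cdot\prod_i\zeta_i}=p^{d_1'}\cdot 2^{-O(q)}\cardin{\orbit{\alpha}}^{-1}\cdot(S_{2\alpha}(y))(x)+r(p)$ with $\deg r<d_1'$, and $(S_{2\alpha}(y))(x)=(d_1')^{-d_1'/2}\,\fold{s}{2\alpha+\gamma}(x)$ since $S_{2\alpha}$ is multilinear-folded. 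Since $\|z\|_2=O(1)$ with probability one, $\abs{(s(z))(x)}=\abs{(h(z))(x)}^{q/d}\leq 2^{O(q)}\ftwo{f}^{q/d}\|x\|_2^{2q/d}$, so Chebyshev's extremal polynomial inequality \lemref{chebyshev} in the variable $p$ discards $r(p)$ and recovers $\ftwo{\fold{s}{2\alpha+\gamma}}$ up to the claimed factors. The step I expect to be the main obstacle is upgrading this conclusion from $\ftwo{\fold{s}{2\alpha+\gamma}}$ to the relaxation value $\fsp{\fold{s}{2\alpha+\gamma}}$ that appears on the upper side — these genuinely differ for sums of products of quadratics — which forces one to keep $h(z)$ (equivalently $s(z)=h(z)^{q/d}$) bundled throughout the lower bound, where $\fsp{\cdot}$ is exact because it is applied to a power of a single quadratic form, rather than routing the estimate through the individual fold, and then to thread all of the $2^{O(q)}$, $(d_1')^{\pm d_1'/2}$, $\cardin{\orbit{\alpha}}^{\pm1}$ factors of the two-stage extraction ($\alpha$ first, then $\gamma$) through that bundled quantity.
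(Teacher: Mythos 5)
Your proposal follows the paper's proof essentially step for step: the same choice of $h$ via blocks of $\sfM_f$, the same reduction through \lemref{fold:gen:mult:sos} and \lemref{fold:gen:coefficient:sos}, and the same lower bound \Eqref{gen:rte} proved via \lemref{complex:to:real}, \lemref{fold:gen:mult:eval}, and a Chebyshev extraction. You also correctly identified and resolved the one genuinely delicate point — that the lower bound must be obtained for $\fsp{\fold{(S_{2\alpha})}{\gamma}}$ rather than $\ftwo{\cdot}$, by keeping $h(z)^{q/d}$ bundled (where $\fsp{\cdot}$ is exact on powered quadratics) and running the Chebyshev argument at the level of matrix representations — which is exactly how the paper closes the argument.
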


\begin{proof}
    Let $h$ be the unique $(d-2,2)$-folded polynomial whose unfolding yields 
	$f$ and such that for any $\beta\in\degmindex{d-2}$, the fold $\fold{h}{\beta}$ of 
	$h$ is equal up to scaling, to the quadratic form	of the corresponding $(n\times n)$ block 
	of the SOS-symmetric matrix representation $\sfM_f$ of $f$. That is, for any 
	$I,J\in [n]^{d/2 -1}$, s.t. $\mi{I}+\mi{J} = \beta$, 
	\[
	    \fold{h}{\beta}(x) = \frac{x^T\sfM_f[I,J] x}{|\orbit{\beta}|}.
	\]
	Let $s$ be the $(\barq,2q/d)$-folded 
	polynomial given by $h^{q/d}$ where $\barq:= (d-2)q/d$.  
	Consider any $\alpha\in \udmindex{\barq/2}$ and 
	$\gamma\in \degbmindex{\barq - 2|\alpha|}$, 
	and let 	$S_{2\alpha}$ be the multilinear component of $s$ as defined in 
	\lemref{fold:split:gen:mult}. Below the fold (no pun intended), 
	we will show
	\begin{equation}
	\Eqlab{gen:rte}
		\|f\|_2^{q/d} \geq \frac{2^{-O(q)}\cdot \fsp{\fold{(S_{2\alpha})}{\gamma}}}
		{(\barq - 2|\alpha|)^{\barq/2 - |\alpha|}\cdot |\orbit{\alpha}|}
	\end{equation}
	which would complete the proof after applying \lemref{fold:gen:mult:sos} to 
	$s$ and composing the result with \lemref{fold:gen:coefficient:sos}. 
	To elaborate, we apply \lemref{fold:gen:mult:sos} to $s$ with 
    $d_1 = \barq, d_2=2q/d$ and then for every $\alpha\in \udmindex{\barq/2}$ 
    we apply \lemref{fold:gen:coefficient:sos} with 
    $d_1 = \barq-2|\alpha|, d_2 = 2q/d$, to get 
    \[
        \hssos{f^{q/d}} 
        =
        \hssos{s}
        \leq 
        2^{O(q)}\cdot 
        \max_{\alpha\in \udmindex{\barq/2}} ~
        \max_{\gamma\in\degbmindex{\barq - 2|\alpha|}} 
        \frac{\|\fold{(S_{2\alpha})}{\gamma}\|_{sp}}
        {(\barq - 2|\alpha|)^{\barq - 2|\alpha|}\cdot |\orbit{\alpha}|}
    \]
    which on combining with \Eqref{gen:rte} yields the claim. 
	\medskip 
	
	Fix any $\alpha, \gamma$ satisfying the above conditions. Let $k:=\barq-2\alpha$. 
	Let $t:= |\alpha|$, and let 
	\[
		z := 
		\Xi \cdot \frac{1}{\sqrt{k}}\cdot \gamma\circ \frac{1}{2\alpha+\one}\circ b ~~+~~ 
		\frac{\sqrt{\alpha}\circ \zeta}{\sqrt{t}} 
	\]  
	$\Xi$ is an independent and uniformly randomly chosen 
	$(k+1)$-th root of unity, and for any $i\in [n]$,  
	$\zeta_i$ is an independent and uniformly randomly chosen 
	$(2\alpha_i+1)$-th root of unity, and for any $i\in [n]$, $b_i$ is an 
	independent $\mathrm{Bernoulli}(p)$ random variable ($p$ is a 
	parameter that will be set later). 
	By $\lemref{decoupled:lower:bound}$ and definition of $h$, we see that for any 
	$y$, $\|f\|^c_2 \geq \|(h(y))(x)\|^c_2$. Thus we have, 
	\begin{align*}
		\|f\|_2^{q/d} 
		&= 
		\|f^{q/d}\|_2 \\
		&\geq 
		\|f^{q/d}\|^c_2 \cdot 2^{-O(q)} 
		&&\text{(by \lemref{complex:to:real})} \\
		&\geq 
		\max_{p\in [0,1]}
		\Ex{\|h(z)^{q/d}(x)\|_{2}} \cdot 2^{-O(q)} 
		&&(\text{by \lemref{decoupled:lower:bound}}) \\
		&=
		\max_{p\in [0,1]}
		\Ex{\|h(z)^{q/d}(x)\|_{sp}} \cdot 2^{-O(q)} 
		&&(\text{SOS exact on powered quadratics}) \\
		&=
		\max_{p\in [0,1]}
		\Ex{\|h(z)^{q/d}(x) \cdot \Xi \cdot 
		\prod_{\mathclap{i\in [n]}} \zeta_{i}
		\|_{sp}} \cdot 2^{-O(q)} \\
		&\geq 
		\max_{p\in [0,1]}
		\bigg{\|}\Ex{
		h(z)^{q/d}(x) \cdot 
		\Xi \cdot 
		\prod_{\mathclap{i\in [n]}} \zeta_{i}
		}\bigg{\|}_{sp} \cdot 2^{-O(q)} \\
		&= 
		\max_{p\in [0,1]}
		\bigg{\|}\Ex{
		(s(z))(x) \cdot 
		\Xi \cdot 
		\prod_{\mathclap{i\in [n]}} \zeta_{i}
		}\bigg{\|}_{sp} \cdot 2^{-O(q)} \\
		&= 
		\max_{p\in [0,1]}
		\bigg{\|}
		\,p^{k}\cdot \frac{(S_{2\alpha}(\gamma/\sqrt{k}))(x)}{|\orbit{\alpha}|} +
		r(p)
		\bigg{\|}_{sp} \cdot 2^{-O(q)} 
		&&(\text{by \lemref{fold:gen:mult:eval}, }\mathrm{deg}(r)<k)\\
		&= 
		\max_{p\in [0,1]}
		\bigg{\|}
		\,p^{k}\cdot \frac{\fold{(S_{2\alpha})}{\gamma}(x)}{k^{k/2}\cdot |\orbit{\alpha}|} + 
		r(p)
		\bigg{\|}_{sp} 
		\cdot 2^{-O(q)} \\
		&\geq 
		\frac{\fsp{\fold{(S_{2\alpha})}{\gamma}}}{k^{k/2}\cdot 
		|\orbit{\alpha}|} \cdot 2^{-O(q+k)} 
		&&( \text{Chebyshev Inequality - \lemref{chebyshev}}) 
	\end{align*}
	where the last inequality follows by the following argument: one would like to show 
	that there always exists $p\in [0,1]$ such that 
	$\|p^k\cdot h_k(x) + \dots p^0\cdot h_0(x))\|_{sp} \geq \|h_k(x)\|_{sp}\cdot 2^{-O(k)}$. 
	So let $p$ be such that $|p^k\cdot u^T M_k v + \dots p^0\cdot u^T M_0 v| \geq 
	|u^T M_k v|\cdot 2^{-O(k)}$ (such a $p$ exists by Chebyshev inequality) where $M_k$ is the 
	matrix representation of $h_k(x)$ realizing $\|h_k\|_{sp}$ and $u,v$ are the 
	maximum singular vectors of $M_k$. $M_{k-1},\dots ,M_0$ are arbitrary matrix representations of 
	$h_{k-1}, \dots h_0$ respectively. But $p^k\cdot M_k  + \dots p^0\cdot M_0$ is a matrix 
	representation of $p^k\cdot h_k + \dots p^0\cdot h_0$. 
	Thus $\fsp{p^k\cdot h_k + \dots p^0\cdot h_0}\geq |u^T M_k v|/2^{-O(k)} = 
	\fsp{h_k}\cdot 2^{-O(q)}$. 
	
	This completes the proof as we've established \Eqref{gen:rte}. 
\end{proof}

\subsection{Algorithms}

It is straightforward to extract algorithms 
from the proofs of \thmref{nnc:d/4-1/2} and 
\thmref{gen:d/2-1}.

\subsubsection{Non-negative coefficient polynomials}
Let $f$ be a degree-$d$ polynomial with non-negative 
coefficients and let $h$ be a $(d-2,2)$-folded polynomial 
whose unfolding yields $f$. Consider any $q$ divisible by 
$2d$ and let $\barq:= (d-2)q/d$. 
Pick and return the best vector from the set 
\[
	\brc{\frac{\one}{\sqrt{n}} \!+\!
		\frac{\sqrt{\alpha}}{\sqrt{|\alpha|}} \!+\!
		\frac{\gamma}{\sqrt{|\gamma|}} + 
		\mathop{\arg\max}~ 
		\bigg{\|}
		h\pth{\frac{\one}{\sqrt{n}} +
		\frac{\sqrt{\alpha}}{\sqrt{|\alpha|}} +
		\frac{\gamma}{\sqrt{|\gamma|}}}\!\!(x)
		\bigg{\|}_2 
		\sep{\alpha\in \udmindex{\barq/2}, 
		\gamma\in \degmindex{\barq/2-|\alpha|}}}
\]

\subsubsection{General Polynomials}
Let $f$ be a degree-$d$ polynomial and let $h$ be the unique $(d-2,2)$-folded polynomial whose unfolding yields $f$ and such that for any $\beta\in\degmindex{d-2}$, the fold $\fold{h}{\beta}$ of 
$h$ is equal up to scaling, to the quadratic form	of the corresponding $(n\times n)$ block 
of the SOS-symmetric matrix representation $\sfM_f$ of $f$. That is, for any 
$I,J\in [n]^{d/2 -1}$, s.t. $\mi{I}+\mi{J} = \beta$, 
\[
    \fold{h}{\beta}(x) = \frac{x^T\sfM_f[I,J] x}{|\orbit{\beta}|}.
\] 
Consider any $q$ divisible by $2d$ and let $\barq:= (d-2)q/d$. 
Let the set $S$ be defined by,  
\[
    S:= 
    \brc{
    \Xi \cdot \frac{1}{\sqrt{|\gamma|}}\cdot \gamma\circ \frac{1}{2\alpha+\one}\circ b +
	\frac{\sqrt{\alpha}\circ \zeta}{\sqrt{|\alpha|}} 
	~\sep{
        ~\begin{array}{l}
	\Xi\in \Omega_{k+1},~
	\zeta_i\in \Omega_{2\alpha_i+1},~
	b\in \{0,1\}^n, \\[5 pt]
	\alpha\in \udmindex{\barq/2},~
	\gamma\in \degbmindex{\barq-2|\alpha|} 
         \end{array}
       }
	}
\]
where $\Omega_p$ denotes the set of $p$-th roots of unity. 
Pick and return the best vector from the set 
\[
	\brc{c_1\cdot y +
	    c_2\cdot 
		\mathop{\arg\max}
		\|(h(y))(x)\|_2 
		\sep{y\in S,~c_1\in [-(d-2),(d-2)],~c_2\in [-2,2]}
		}
\]
Note that one need only search through all roots of unity 
vectors $\zeta$ supported on $\supp{\gamma}$ and all 
$\{0,1\}$-vectors $b$ supported on $\supp{\alpha}$. 
\lemref{decoupled:lower:bound} can trivially be made constructive in 
time $2^{O(q)}$. Lastly, to go from complexes to reals, 
\lemref{complex:to:real} can trivially be made constructive 
using $2^{O(d)}$ time. Thus the algorithm runs in time $n^{O(q)}$.


%

%
%
%

\section{Constant Level Lower Bounds for Polynomials with Non-negative Coefficients}
\seclab{nnc-lowerbound}
%
Let $G = (V, E)$ be a random graph drawn from the distribution $G_{n, p}$ for $p \geq n^{-1/3}$.  Let $\cliques \subseteq \binom{V}{4}$ be
the set of $4$-cliques in $G$. 
The polynomial $f$ is defined as 
\[
f(x_1, \dots, x_n) := \sum_{\{ i_1, i_2, i_3, i_4 \} \in \cliques} x_{i_1} x_{i_2} x_{i_3} x_{i_4}. 
\]
Clearly, $f$ is multilinear and every coefficient of $f$ is nonnegative. 
In this section, we prove the following two lemmas that establish a polynomial gap 
between $\ftwo{f}$ and $\hssos{f}$. 
\begin{lemma}[Soundness]
With probability at least $1-\frac1{n}$ over the choice of the graph $G$, we have $\ftwo{f} \leq n^2 p^6
\cdot \inparen{\log n}^{O(1)}$.
\lemlab{cliques}
\end{lemma}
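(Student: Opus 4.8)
The plan is to first reduce $\ftwo{f}$ to a purely combinatorial ``$4$-clique density'' quantity, and then bound that quantity whp, splitting into a balanced case (handled by a union bound over known subgraph-count concentration) and an unbalanced case (which is the technical heart). Since all coefficients of $f$ are non-negative, $\ftwo{f}=\sup_{\|x\|=1,\,x\geq 0}f(x)$; fix a maximizer $x^*\geq 0$, and note that if $G$ has no $4$-clique then $f\equiv 0$ and the bound is trivial, so we may assume $\ftwo{f}\geq 1/16$ (attained by spreading $x$ on the vertices of a clique). First I would do the standard dyadic bucketing: partition $[n]$ into $B_{-1}:=\{i:x^*_i\leq n^{-10}\}$ and $B_k:=\{i: n^{-10}2^{k}<x^*_i\leq n^{-10}2^{k+1}\}$ for $0\leq k\leq L=O(\log n)$. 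Coordinates in $B_{-1}$ contribute at most $n^{-10}\cdot n^4=n^{-6}\ll \ftwo{f}$ to $f(x^*)$; expanding $f(x^*)$ over the $O(\log^4 n)$ tuples of buckets, some tuple $(S_1,S_2,S_3,S_4)$ contributes at least $\ftwo{f}/O(\log^4 n)$. Bounding every coordinate of $S_j$ by its bucket's upper endpoint $u_j$ and using $1=\|x^*\|^2\geq |S_j|(u_j/2)^2$, i.e. $u_j\leq 2/\sqrt{|S_j|}$, yields
\[
\ftwo{f}~\leq~O(\log^4 n)\cdot \max_{S_1,\dots,S_4\subseteq[n]}\frac{\abs{\cliques_G(S_1,S_2,S_3,S_4)}}{\sqrt{|S_1|\,|S_2|\,|S_3|\,|S_4|}},
\]
where $\abs{\cliques_G(S_1,\dots,S_4)}$ counts $4$-cliques with $i_j\in S_j$. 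It then suffices to show this maximum is $n^2 p^6(\log n)^{O(1)}$ whp.

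Next I would record the uniform ``regularity'' properties of $G\sim G_{n,p}$ that hold with probability $\geq 1-n^{-2}$ and leave room for the union bounds below — all consequences of Chernoff bounds since $p\geq n^{-1/3}$ makes the relevant expectations polynomially large: (i) maximum degree $\Delta\leq 2np$; (ii) maximum codegree $\max_{u\neq v}\abs{N(u)\cap N(v)}\leq 2np^2$; (iii) maximum common neighbourhood of a triple is $np^3(\log n)^{O(1)}$; and (iv) the discrepancy (expander-mixing) estimate $\abs{e(A,B)-p|A||B|}\leq O(\sqrt{np})\sqrt{|A||B|}$ for all $A,B$, from $\|A_G-pJ\|_{\mathrm{op}}\leq O(\sqrt{np})$ where $A_G$ is the adjacency matrix and $J$ the all-ones matrix.

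I would then split on the set sizes. \textbf{Balanced/small case:} if $\max_j|S_j|\leq(\log n)^{C}$ the trivial bound $\abs{\cliques_G(S_1,\dots,S_4)}\leq |S_1||S_2||S_3||S_4|\leq(\log n)^{4C}$ suffices since $n^2p^6\geq 1$; and if all $|S_j|$ lie in $[s/(\log n)^C,\,s]$ for some $s>(\log n)^C$, I would invoke the strong upper-tail concentration for $K_4$-counts in $G_{n,p}$ and union bound over the $n^{O(s)}$ such tuples to get $\abs{\cliques_G(S_1,\dots,S_4)}\leq s^4p^6(\log n)^{O(1)}\leq \sqrt{|S_1|\cdots|S_4|}\cdot n^2p^6(\log n)^{O(1)}$; the tail estimate is strong enough precisely because $s$ is large. \textbf{Unbalanced case (the main obstacle):} when the $|S_j|$ differ by more than a polylog factor, the above union bound is too lossy, so I would bound $\abs{\cliques_G(S_1,\dots,S_4)}$ directly. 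Writing $A_{12}:=S_3\cap N(v_1)\cap N(v_2)$ and $B_{12}:=S_4\cap N(v_1)\cap N(v_2)$ for an edge $(v_1,v_2)$ with $v_j\in S_j$, we have $\abs{\cliques_G(S_1,\dots,S_4)}=\sum_{(v_1,v_2)}e(A_{12},B_{12})$; applying (iv) and then (ii) reduces the estimate to uniformly bounding a lower-complexity subgraph count (a $K_4$-minus-an-edge count, then a cherry count, then an edge count) by its expectation times a polylog, which one iterates using (ii)--(iv) together with Cauchy--Schwarz over the relevant edge/vertex sums.

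I expect the real work to be in that last step: one must track the $\sqrt{|S_j|}$ normalizations across every regime of the tuple $(|S_1|,\dots,|S_4|)$ and verify that in each regime the ``main'' term of each discrepancy estimate dominates its error term — this is exactly where $p\geq n^{-1/3}$ is used, and a naive peeling (replacing a large $S_j$ by $V$) fails because it discards a $\sqrt{|S_j|}$ that is not small. Once the two cases are in hand, on the intersection of the regularity events and the union-bound events (total failure probability $\leq n^{-1}$) the displayed maximum is $n^2p^6(\log n)^{O(1)}$, and plugging into the inequality above proves \lemref{cliques}.
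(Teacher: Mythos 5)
Your first half — the reduction of $\ftwo{f}$ to bounding $\max_{S_1,\dots,S_4}\abs{\cliques_G(S_1,S_2,S_3,S_4)}/\sqrt{|S_1||S_2||S_3||S_4|}$ up to a $(\log n)^{O(1)}$ factor via dyadic bucketing of the optimizer — is correct and is essentially the paper's Claim about shattering (the paper shifts $x^*$ by $\one/\sqrt{n}$ and randomly partitions buckets to get disjoint sets, while you truncate tiny coordinates; these are interchangeable). The gap is in the second half, and it sits exactly where the paper's actual content lies. For the unbalanced case you write down the decomposition $\abs{\cliques_G(S_1,\dots,S_4)}=\sum_{(v_1,v_2)}e(A_{12},B_{12})$ and propose to iterate expander-mixing plus codegree bounds, but you then explicitly defer the verification that ``in each regime the main term dominates its error term.'' That verification is not routine bookkeeping: it is the theorem. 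Without specifying which of $|S_1|,\dots,|S_4|$ is large in each branch, which intermediate subgraph count you bound (and against what), and how the $\sqrt{|S_j|}$ normalizations are absorbed, there is no argument to check. In particular your main term reduces to bounding shattered triangle counts $\abs{\Delta_G(S_1,S_2,S_4)}$ uniformly over unbalanced triples, which is itself a nontrivial statement you have not supplied.

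For comparison, the paper closes this step with two concrete ingredients: a triangle-count claim proved by counting pairs of shattered triangles sharing an apex in $S_3$ (each such pair forces a common neighborhood of three distinct vertices, hence $\lsim np^3$ choices of apex), followed by Cauchy--Schwarz on the apex degrees, giving $\abs{\Delta_G(S_1,S_2,S_3)}\lsim \abs{S_3}+\abs{E(S_1,S_2)}\inparen{np^3\abs{S_3}}^{1/2}$; and then a two-case split on whether $\abs{E(S_1,S_2)}\lsim n_1n_2p$ or $\lsim n_1+n_2$ (with $n_1\le\dots\le n_4$), where the first case chains edge count $\times$ codegree $\times$ triple-codegree and the second invokes the triangle claim, each branch checked against $\sqrt{n_1n_2n_3n_4}\,n^2p^6$ using $np^3\ge 1$. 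Your balanced-case union bound over $K_4$ upper tails is consistent with what the paper says is known, but it is redundant once the deterministic case analysis is in place, and it does not substitute for the missing unbalanced argument. To complete your proof you would need to either carry out your expander-mixing iteration regime by regime, or import something equivalent to the paper's triangle claim and case split.
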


\begin{lemma}[Completeness]
With probability at least $1 - \frac1{n}$ over the choice of the graph $G$, we have
\[
\hssos{f} ~\geq~ \Omega \Bigl(  \frac{n^{1/2} \cdot p}{\log^{2} n} \Bigr)
\] 
when $p \in [n^{-1/3}, n^{-1/4}]$.
\lemlab{nncsoslower}
\end{lemma}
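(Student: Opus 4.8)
The plan is to use the dual characterization of $\hssos{f}$: by the dual (\textsf{Dual I}) program of Figure~\ref{fig:Lambda},
\[
\hssos{f} ~=~ \sup\brc{\iprod{\sfM_f}{\sfX} : \sfX\succeq 0,\ \Tr{\sfX}=1,\ \sfX\text{ SoS-symmetric}},
\]
so it suffices to exhibit one feasible $\sfX$ with $\iprod{\sfM_f}{\sfX}$ large, where $\sfM_f$ is the SoS-symmetric matrix representation of $f$. The key structural fact is that $\sfM_f$ is supported on the block indexed by ordered edge-pairs $\vec E := \brc{(i,j):\{i,j\}\in E}$ (with $|\vec E| = \Theta(n^2p)$ w.h.p.): its entry at $\big((i_1,i_2),(i_3,i_4)\big)$ is $\tfrac1{24}$ when $\{i_1,i_2,i_3,i_4\}\in\cliques$ (which forces $\{i_1,i_2\},\{i_3,i_4\}\in E$) and $0$ otherwise. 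Restricted to this block, $\sfM_f$ behaves like $\tfrac1{24}$ times the adjacency matrix of a random graph on vertex set $\vec E$ with edge-probability $\approx p^4$ (the four ``cross edges'' of a prospective $4$-clique), so one expects a single Perron-type eigenvalue of order $n^2p^5$ (eigenvector $\approx \one_{\vec E}$) and all remaining eigenvalues of order at most $\widetilde O(\sqrt{n^2p^5})$; in particular the negative spectrum of $\sfM_f$ should have norm $\widetilde O(\sqrt{n^2p^5})$ and lie in the ``active'' subspace $V := \mathrm{rowspace}(\sfM_f)$, of dimension $\Theta(n^2p)$.

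Granting this, I would take $\sfX := Z^{-1}\big(c_1\sfM_f + c_2\,\sfI\big)$, where $\sfI$ is an SoS-symmetric PSD matrix, tailored to $G$, that dominates $\mu\,\Pi_V$ ($\Pi_V$ the projection onto $V$) for $\mu = \widetilde\Theta(\sqrt{n^2p^5})$, has $\Tr{\sfI} = \widetilde\Theta(\mu\cdot n^2p)$, and for which the cross term $\iprod{\sfM_f}{\sfI}$ is negligible (it vanishes if the multilinear degree-$4$ moments of $\sfI$ can be taken to be zero). The constants are balanced so that $c_2\mu = \Theta\big(c_1\cdot\widetilde O(\sqrt{n^2p^5})\big)$ forces $c_1\sfM_f + c_2\sfI\succeq 0$ on the symmetric subspace (using that $\sfM_f$'s negative spectrum has norm $\widetilde O(\sqrt{n^2p^5})$ and lies in $V$), and $Z := \Tr{(c_1\sfM_f + c_2\sfI)} = c_2\Tr{\sfI}$ since $\Tr{\sfM_f}=0$. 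Then $\iprod{\sfM_f}{\sfX} = \Theta\big(c_1\norm{F}{\sfM_f}^2/(c_2\Tr{\sfI})\big) = \Theta\big(\norm{F}{\sfM_f}^2/(\widetilde O(\sqrt{n^2p^5})\cdot n^2p)\big)$ (the $\mu$'s cancel), and with $\norm{F}{\sfM_f}^2 = \tfrac1{24}|\cliques| = \Theta(n^4p^6)$ w.h.p.\ (concentration of the clique count in this range of $p$),
\[
\hssos{f} ~\geq~ \iprod{\sfM_f}{\sfX} ~=~ \Omega\pth{\frac{n^4p^6}{\sqrt{n^2p^5}\cdot n^2p\cdot \log^{O(1)}n}} ~=~ \Omega\pth{\frac{n\,p^{5/2}}{\log^{O(1)}n}} ~\geq~ \Omega\pth{\frac{n^{1/2}p}{\log^2 n}},
\]
the last step using $p\geq n^{-1/3}$. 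It is essential that $\sfI$ be tailored to $G$ so that it dominates $\mu\Pi_V$ rather than a multiple of the full symmetric-subspace identity; the latter would inflate $\Tr{\sfI}$ by a factor $\approx 1/p$ and destroy the bound. Constructing such an $\sfI$ — e.g.\ from canonical representations of low-degree non-negative polynomials attached to $G$, adjusted to be PSD — and verifying it dominates $\mu\Pi_V$ with the right trace is a secondary technical ingredient, itself reducing to a spectral estimate for the adjacency matrix of $G$.

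The main obstacle is making the spectral picture of $\sfM_f$ rigorous: showing $\norm{2}{\sfM_f - \widehat R}\leq \widetilde O(\sqrt{n^2p^5})$ for an explicit PSD rank-one $\widehat R$ capturing the Perron part (this is exactly what controls $c_2$, hence the final bound). The difficulty is that the entries of $\sfM_f$ are not independent — two entries whose index-pairs share a vertex depend on a common edge of $G$ — so the trace-method estimate $\E\,\Tr\big((\sfM_f-\widehat R)^{2k}\big)\leq\big(\widetilde O(\sqrt{n^2p^5})\big)^{2k}$ for $k=\Theta(\log n)$ requires a careful enumeration of closed walks on $\vec E$ that tracks which edges of $G$ are reused, in the spirit of subgraph-count concentration arguments; this is the technical heart. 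Given the estimate, Markov's inequality yields the spectral bound with probability $1-n^{-\omega(1)}$, and a union bound with the routine concentration statements for $|E|$ and $|\cliques|$ gives the Lemma with probability at least $1-\tfrac1n$. The upper restriction $p\leq n^{-1/4}$ enters to keep $G$ sparse enough for the random-graph approximation of the edge-block of $\sfM_f$ and the associated walk counts to behave (and matches the regime of the companion soundness lemma, \lemref{cliques}).
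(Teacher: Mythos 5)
Your proposal follows essentially the same route as the paper's proof: a dual certificate of the form $c_1\sfM_f + c_2\,\sfI$ supported on the edge block, with $\sfI$ an SoS-symmetric PSD correction of trace $\Theta(\abs{\lambda_{\min}}\cdot n^2p)$ (the paper takes $\sfI = \abs{\lambda_{\min}}(I_{E'}+Q_{E'})$, where $Q_{E'}$ is an explicit diagonally dominant matrix supported on entries indexed by $[(i,i),(j,j)]$, so the cross term with the clique matrix vanishes exactly as you anticipate), with everything reducing to a trace-method bound on the negative spectrum of the edge block. The one substantive caveat is your spectral estimate: the bound $\tilde{O}(\sqrt{n^2p^5}) = \tilde{O}(np^{5/2})$ you posit is the idealized random-graph heuristic, whereas what the paper actually proves (\lemref{eigenvalue}) is only $\abs{\lambda_{\min}} = \tilde{O}(n^{3/2}p^4)$ for $p\in[n^{-1/3},n^{-1/4}]$; the two agree only at $p=n^{-1/3}$ and the provable bound is strictly weaker above it. Note that $n^{3/2}p^4$ is exactly the threshold at which the final arithmetic $n^4p^6/(\abs{\lambda_{\min}}\cdot n^2p) \geq n^{1/2}p$ still goes through, so the slack you display in your last inequality (downgrading $np^{5/2}$ to $n^{1/2}p$ via $p\geq n^{-1/3}$) is precisely the slack consumed by the weaker, provable spectral estimate --- there is no room to spare. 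The trace-method step you correctly flag as the heart (enumerating closed walks on edge-indices while tracking which edges of $G$ are reused) is carried out in the paper through a charging argument over contributing subgraphs, and it is that argument --- not any sparsity requirement for the random-graph approximation per se --- that forces the restriction $p\leq n^{-1/4}$ and dictates the final form $n^{1/2}p$ of the lemma.
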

Note that the gap between the two quantities if $\tilde{\Omega}(n^{1/6})$ when $p = n^{-1/3}$, which is the choice we make.

\subsection{Upper Bound on $\ftwo{f}$}

\subsubsection{Reduction to counting shattered cliques}
We say that an ordered $4$-clique $(i_1, \ldots, i_4)$ is {\em shattered} by $4$ disjoint sets $Z_1, \ldots,
Z_4$ if for each $k \in [4]$, $i_k \in Z_k$. Let $Y_{j_1}, \ldots, Y_{j_4}$ be the sets containing
the coordinates $i_1, \ldots, i_4$. Let $\cliques_G$ denote the set of (ordered) 4-cliques in $G$, and let
$\cliques_G(Z_1,Z_2,Z_3,Z_4)$ denote the set of cliques shattered by $Z_1, \ldots, Z_4$.

We reduce the problem of bounding $\ftwo{f}$, to counting shattered 4-cliques.
\begin{claim} \label{clm:shattering}
There exist disjoint sets $Z_1, \ldots, Z_4 \subseteq [n]$ such that
\[
\abs{\cliques_G(Z_1, Z_2, Z_3, Z_4)} ~\geq~ \inparen{\prod_{k=1}^4 \abs{Z_k}}^{1/2} \cdot
O\inparen{ \frac{\ftwo{f}}{(\log n)^4} } \mper 
\]
\end{claim}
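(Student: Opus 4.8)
The statement is purely deterministic (the distribution of $G$ plays no role), so I would prove it for an arbitrary graph $G$ on vertex set $[n]$; if $G$ has no $4$-clique then $f\equiv 0$, $\ftwo{f}=0$, and the claim is vacuous, so assume $\cliques\neq\emptyset$. Since $f$ has non-negative coefficients, replacing a coordinate $x_i$ by $|x_i|$ can only increase $|f|$ while preserving $\|x\|$, so there is an optimizer $x^{\ast}\ge 0$ with $\|x^{\ast}\|_2=1$ and $\ftwo{f}=f(x^{\ast})$, and moreover $\ftwo{f}\ge 1/16$ (take $x$ uniform on the four vertices of any clique). The plan is a dyadic bucketing of the coordinates of $x^{\ast}$, an averaging step to isolate one ``scale profile'' of cliques, and a random splitting to turn the corresponding buckets into genuinely \emph{disjoint} sets.

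First I would discretize: throw away all coordinates with $x^{\ast}_i\le n^{-5}$ (there are fewer than $n^{4}$ ordered $4$-cliques and any ordered clique touching a discarded coordinate has monomial value at most $n^{-5}$, so this loses at most $n^{-1}$ of the total mass, negligible next to $\ftwo{f}\ge 1/16$), and group the surviving coordinates into $L=O(\log n)$ dyadic buckets $B_j=\{\, i : 2^{-j-1}<x^{\ast}_i\le 2^{-j}\,\}$. Working with \emph{ordered} $4$-cliques (which only multiplies the total mass by $4!$) and assigning to each ordered clique the tuple $(j_1,j_2,j_3,j_4)$ of buckets of its vertices, an averaging argument over the $L^{4}$ possible tuples produces one tuple whose ordered cliques carry total mass $\Omega(\ftwo{f}/\log^{4}n)$. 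Writing $N$ for the number of ordered cliques with this profile and using $x^{\ast}_{i_k}\le 2^{-j_k}$ on each, I get $N\cdot\prod_{k}2^{-j_k}\ge \Omega(\ftwo{f}/\log^{4}n)$.

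Next I would convert the $2^{-j_k}$ factors into bucket sizes: from $\sum_i (x^{\ast}_i)^2=1$ each bucket satisfies $|B_j|<4\cdot 4^{\,j}$, hence $2^{-j}<2/\sqrt{|B_j|}$. If the $j_k$ were pairwise distinct I would just take $Z_k=B_{j_k}$ — already disjoint — and conclude $|\cliques_G(Z_1,\dots,Z_4)|=N\ge\Omega\!\big(\ftwo{f}\,(\prod_k |Z_k|)^{1/2}/\log^{4}n\big)$. The real work is when buckets coincide. For each bucket value $j$ occurring among the four positions with multiplicity $m_j\le 4$, color every vertex of $B_j$ with an independent uniform color in $[m_j]$ and let $Z_k$ be the color class indexed by the rank of position $k$ among the positions sharing bucket $j$. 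These $Z_1,\dots,Z_4$ are pairwise disjoint (distinct buckets, or disjoint color classes inside one bucket), and a fixed ordered clique of the chosen profile is shattered by them exactly when, within each repeated bucket, its (necessarily distinct) vertices land in their designated distinct color classes — an event of probability $\prod_j m_j^{-m_j}\ge 4^{-4}=1/256$. So in expectation at least $N/256$ ordered cliques of the profile are shattered, hence some coloring achieves this; fixing it and using $|Z_k|\le|B_{j_k}|$ together with $2^{-j_k}<2/\sqrt{|B_{j_k}|}\le 2/\sqrt{|Z_k|}$ yields the claimed bound (and disjointness makes ``ordered'' versus ``unordered'' irrelevant to the count $|\cliques_G(Z_1,\dots,Z_4)|$).

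The one step that needs genuine care is the last one: verifying that the random splitting simultaneously keeps the $Z_k$ disjoint and loses only a constant factor ($1/256$) in the number of shattered cliques, and that $|\cliques_G(Z_1,\dots,Z_4)|$ counts each shattered clique exactly once (precisely where disjointness is used). The remaining ingredients — the dyadic pigeonhole over $O(\log^4 n)$ profiles and the $\ell_2$ bound $|B_j|=O(4^{\,j})$ — are routine.
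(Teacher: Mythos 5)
Your proof is correct and follows essentially the same route as the paper's: pass to a non-negative optimizer, bucket its coordinates into $O(\log n)$ dyadic level sets, pay a $(\log n)^{-4}$ factor to isolate one scale per position, and use a random $4$-way splitting of the buckets to obtain disjoint sets while losing only a constant fraction of the shattered cliques. The paper's execution differs only cosmetically — it shifts $x^*$ by $\one/\sqrt{n}$ instead of discarding tiny coordinates, and it randomizes over the bucket indices as well, bounding $\E\bigl[\ind{\text{shattered}}/\sqrt{\textstyle\prod_k|Z_k|}\bigr]$ in a single expectation rather than pigeonholing over profiles first.
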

\begin{proof}
Let $x^* \in \SSS^{n-1}$ be the vector that maximizes $f$. Without loss of generality, 
assume that every coordinate of $x^*$ is nonnegative. 
Let $y^*$ be another unit vector defined as 
\[
y^* ~:=~ \frac{\inparen{x^* + {\one}/{\sqrt{n}}}}{\norm{2}{x^* + {\one}/{\sqrt{n}}}} \mper
\]
Since both $x^*$ and $\frac{\one}{\sqrt{n}}$ are unit vectors, the denominator is at most $2$. 
This implies that $f(y^*) \geq \frac{f(x^*)}{2^4}$, and each coordinate of $y^*$ is at least
$\frac{1}{2\sqrt{n}}$.
For $1 \leq j \leq \log_2 n$, let $Y_j$ be the set 
\[
Y_j ~\defeq~ \inbraces{i \in [n] ~\mid~ 2^{-j} < y^*_i \leq 2^{-(j - 1)}} \mper
\]
The sets $Y_1, \dots, Y_{\log_2 n}$ partition $[n]$.
Since $1 = \sum_{i \in [n]} y_i^2 > |Y_j| \cdot 2^{-2j}$, we have for each $j$, $|Y_j| \leq 2^{2j}$. 
Let $Z_1, Z_2, Z_3$, and $Z_4$ be pairwise disjoint random subsets of $[n]$ chosen as follows:
\begin{itemize}
\item Randomly partition each $Y_j$ to $Y_{j, 1}, \dots, Y_{j, 4}$ where each element of $Y_j$ is
  put into exactly one of $Y_{j, 1}, \dots, Y_{j, 4}$ uniformly and independently. 
\item Sample $r_1, \ldots, r_4$ independently and randomly from $\{1, \dots, \log_2 n\}$.
\item For $k = 1, \dots, 4$, take $Z_k := Y_{r_k, k}$
\end{itemize}
We use $\parts$ to denote random partitions $\inbraces{\inparen{Y_{j, 1}, \dots, Y_{j, 4}}}_{j \in [\log n]}$ and
$r$ to denote the random choices $r_1, \ldots, r_4$. Note that the events $i_k \in Z_k$ are
independent for different $k$, and that $Z_1, \ldots, Z_4$ are independent given $\parts$. Thus, we have
\begin{align*}
\Exp{\parts, r}{ \frac{ \ind{(i_1,i_2,i_3,i_4)~ \text{is shattered}} }{ \sqrt{\abs{Z_1} \abs{Z_2} \abs{Z_3}
  \abs{Z_4}} } }  
&~=~ \Exp{\parts}{\prod_{k=1}^4 \Exp{r_k}{ \frac{\ind{i_k \in Z_k}}{\sqrt{\abs{Z_k}}} } } \\
&~=~ \Exp{\parts}{\prod_{k=1}^4 \Exp{r_k}{ \frac{\ind{r_k = j_k} \cdot \ind{i_k \in Y_{j_k,
  k}}}{\sqrt{\abs{Y_{j_k, k}}}} } } \\
&~\geq~ \Exp{\parts}{\prod_{k=1}^4 \Exp{r_k}{ \frac{\ind{r_k = j_k} \cdot \ind{i_k \in Y_{j_k,
  k}}}{\sqrt{\abs{Y_{j_k}}}} } } \\
&~=~ \Exp{\parts}{\prod_{k=1}^4 \inparen{\frac{1}{\log n} \cdot \frac{\ind{i_k \in Y_{j_k,
  k}} }{\sqrt{\abs{Y_{j_k}}} } } } \\
&~=~ \Exp{\parts}{\prod_{k=1}^4 \inparen{\frac{1}{\log n} \cdot \frac{\ind{i_k \in Y_{j_k,
  k}} }{\sqrt{\abs{Y_{j_k}}} } } } \\
&~=~ \frac{1}{\inparen{4 \log n}^4} \cdot \frac{1}{\sqrt{\abs{Y_{j_1}} \abs{Y_{j_2}} \abs{Y_{j_3}}
  \abs{Y_{j_4}}}} \\
&~\geq~ \frac{1}{\inparen{4 \log n}^4} \cdot 2^{j_1 + j_2 + j_3 + j_4} \\
&~\geq~ \frac{1}{\inparen{8 \log n}^4} \cdot y^*_{i_1}y^*_{i_2}y^*_{i_3}y^*_{i_4} \mper
\end{align*}
Then, by linearity of expectation,
\begin{align*}
\Exp{\parts, r}{\frac{\abs{\cliques_G(Z_1, Z_2, Z_3, Z_4)}}{\sqrt{\abs{Z_1} \abs{Z_2} \abs{Z_3}
  \abs{Z_4}}} }
&~\geq~
\frac{1}{(8 \log n)^4} \cdot \sum_{(i_1,\ldots,i_4) \in \cliques_G}
  y^*_{i_1}y^*_{i_2}y^*_{i_3}y^*_{i_4} \\
&~=~
\frac{4!}{(8 \log n)^4} \cdot f\inparen{y^*} \\
&~\geq~
\frac{4!}{(16 \log n)^4} \cdot f\inparen{x^*}
~=~
\frac{4!}{(16 \log n)^4} \cdot \ftwo{f} \mcom
\end{align*}
which proves the claim. 
\end{proof}
We will show that with high probability, $G$ satisfies the property that every four disjoint sets
$Z_1, \dots, Z_4 \subseteq V$ shatter at most $O\inparen{\sqrt{|Z_1||Z_2||Z_3||Z_4|} \cdot n^2 p^6 \cdot (\log
n)^{O(1)}}$ cliques, proving~\lemref{cliques}.   

\subsubsection{Counting edges and triangles}
For a vertex $i \in [n]$, we use $\nbr(i)$ to denote the set of vertices in the graph $G$. For ease
of notation, we use $a \lsim b$  to denote $a \leq b \cdot (\log n)^{O(1)}$.
We first collect some simple consequences of Chernoff bounds. 
\begin{claim}\label{clm:chernoff-counts}
Let $G \sim G_{n, p}$ with $p \geq n^{-1/3}$. Then, with probability $1 - \frac{1}{n}$, we have
\begin{itemize}
\item For all distinct $i_1, i_2 \in [n]$, $\abs{\nbr(i_1) \cap \nbr(i_2)} ~\lsim~ np^2$.
\item For all distinct $i_1, i_2, i_3 \in [n]$, $\abs{\nbr(i_1) \cap \nbr(i_2) \cap \nbr(i_3)} ~\lsim~
  np^3$.
\item For all sets $S_1, S_2 \subseteq [n]$, $\abs{E(S_1,S_2)} ~\lsim~
  \max\inbraces{\abs{S_1}\abs{S_2}p, \abs{S_1} + \abs{S_2}}$.
\end{itemize}
\end{claim}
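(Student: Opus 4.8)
The plan is to obtain all three bounds from a single tool --- a Chernoff tail bound for a sum of independent Bernoulli indicators --- followed by a union bound over the relevant index set, with the deviation calibrated to the size of that set.

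First I would record the Chernoff bound in the convenient form: if $X$ is a sum of independent $[0,1]$-valued random variables with $\Ex{X}\le\mu$, then for every $t\ge e^2\mu$ one has $\Prob{X\ge t}\le(e\mu/t)^t\le e^{-t}$. For the common-neighbourhood statement, fix distinct $i_1,i_2$ and write $\abs{\nbr(i_1)\cap\nbr(i_2)}=\sum_{v}\ind{v\sim i_1}\ind{v\sim i_2}$; the $v$-th term depends only on the two edges $\{v,i_1\},\{v,i_2\}$, and these edge sets are pairwise disjoint across $v$, so this is a sum of at most $n$ independent $\mathrm{Bernoulli}(p^2)$ variables and $\mu\le np^2$. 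Applying the bound with $t=\max\{e^2np^2,\,c\log n\}$ gives failure probability $\le n^{-c}$ per pair, and a union bound over the $<n^2$ pairs (with $c$ a suitable constant) shows that with the desired probability $\abs{\nbr(i_1)\cap\nbr(i_2)}\le\max\{e^2np^2,c\log n\}\lsim np^2$ for all pairs, using $p\ge n^{-1/3}$ so $np^2\ge n^{1/3}$. The triple-intersection bound is verbatim the same with $\mathrm{Bernoulli}(p^3)$, $\mu\le np^3$, $t=\max\{e^2np^3,\,c\log n\}$, and a union over $<n^3$ triples; the one thing to flag is that only $p\ge n^{-1/3}$ (hence $np^3\ge1$) is assumed, so $np^3$ may be $\Theta(1)$ --- this is why the additive $c\log n$ term is present and why the conclusion reads $\le c e^2(\log n)np^3\lsim np^3$ rather than $O(np^3)$.

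For the bipartite edge count, fix $S_1,S_2$ with $\abs{S_1}=a,\abs{S_2}=b$; then $\abs{E(S_1,S_2)}$ is, up to a factor $2$ from ordered pairs counted twice when $S_1\cap S_2\ne\emptyset$, dominated by a sum of at most $ab$ independent $\mathrm{Bernoulli}(p)$ variables, so $\mu\le abp$. I would take $t=\max\{e^2abp,\,c(a+b)\log n\}$, giving per-set-pair failure probability $\le e^{-t}\le n^{-c(a+b)}$, which for $c$ large enough dominates the number $\binom na\binom nb\le n^{a+b}$ of pairs of that size; summing over all sizes (there are $O(n^2)$ size-pairs and $\sum_{k\ge1}(k+1)n^{-(c-1)k}=o(1/n)$) bounds the total failure probability. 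On the good event $\abs{E(S_1,S_2)}\le\max\{e^2abp,c(a+b)\log n\}\le ce^2(\log n)\max\{abp,\,a+b\}\lsim\max\{\abs{S_1}\abs{S_2}p,\abs{S_1}+\abs{S_2}\}$ simultaneously. Intersecting the three bad events then yields all three inequalities with probability at least $1-\tfrac1n$ after choosing the constants appropriately.

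I do not expect a genuine obstacle --- the argument is routine --- but the one place where a naive approach fails, and the step I would be most careful about, is the third part: the union there is over $\approx 4^n$ pairs of sets rather than over polynomially many objects, so the deviation threshold must grow like $(a+b)\log n$ (not like a fixed multiple of the mean), so that the per-event bound $n^{-\Omega(a+b)}$ can absorb the $n^{a+b}$ pairs of a given size. The only other points needing a moment's thought are the trivial-looking observations that $np^3$ and $abp$ may both be tiny, so the stated bounds hold as $\lsim$ of the claimed quantities only after invoking $p\ge n^{-1/3}$ (so $np^3\ge1$) and $a+b\ge1$ respectively.
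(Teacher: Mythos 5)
Your proof is correct and matches the paper's intent exactly: the paper states these bounds without proof as ``simple consequences of Chernoff bounds,'' and your argument (Chernoff plus union bound, with the deviation threshold scaled to $(\abs{S_1}+\abs{S_2})\log n$ in the third item to absorb the exponentially many pairs of sets) is the standard way to fill in those details.
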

We also need the following bound on the number of triangles shattered by three disjoint 
sets $S_1, S_2$ and $S_3$, denoted by $\triangles_G(S_1,S_2,S_3)$. 
As for $4$-cliques, a triangle is said to be shattered if it has exactly 
one vertex in each the sets.
\begin{claim}\label{clm:triangles}
Let $G \sim G_{n, p}$ with $p \geq n^{-1/3}$. Then, with probability $1 - \frac{1}{n}$, for all
disjoint sets $S_1, S_2, S_3 \subseteq [n]$
\[
\abs{\triangles_G(S_1,S_2,S_3)} ~\lsim~ \abs{S_3} + \abs{E(S_1,S_2)} \cdot \inparen{np^3
  \cdot\abs{S_3}}^{1/2} \mper
\] 
\end{claim}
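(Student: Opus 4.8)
The plan is to reduce the count to a sum over the edges of $E(S_1,S_2)$, dispose of two easy parameter regimes using only Claim~\ref{clm:chernoff-counts} and a trivial estimate, and handle the one genuinely hard regime by a two-stage exposure followed by a concentration bound strong enough to survive a union bound over all $\le 8^{n}$ disjoint triples $(S_1,S_2,S_3)$.

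First I would observe that every triangle shattered by $S_1,S_2,S_3$ consists of an edge $uv\in E(S_1,S_2)$ together with a vertex $w\in S_3$ adjacent to both endpoints, so
\[
\abs{\triangles_G(S_1,S_2,S_3)} \;=\; \sum_{uv\in E(S_1,S_2)} \abs{\nbr(u)\cap\nbr(v)\cap S_3} \;=\; \sum_{w\in S_3} e_G\!\inparen{\nbr(w)\cap S_1,\ \nbr(w)\cap S_2},
\]
where $e_G(\cdot,\cdot)$ denotes the number of edges between two vertex sets. If $\abs{S_3}\ge np$, then by the codegree bound of Claim~\ref{clm:chernoff-counts} each summand in the first expression is at most $\abs{\nbr(u)\cap\nbr(v)}\lsim np^{2}\lsim (np^{3}\abs{S_3})^{1/2}$, and summing over the $\abs{E(S_1,S_2)}$ edges already gives the claim (the $\abs{S_3}$ term in the statement is not needed here). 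At the other end, using $\abs{\nbr(u)\cap\nbr(v)\cap S_3}\le\min\{\abs{S_3},\,\tilde{O}(np^{2})\}$ yields the deterministic bound $\abs{\triangles_G(S_1,S_2,S_3)}\le \abs{E(S_1,S_2)}\cdot\min\{\abs{S_3},\tilde{O}(np^{2})\}$, which is $\lsim \abs{S_3}+\abs{E(S_1,S_2)}(np^{3}\abs{S_3})^{1/2}$ whenever $\abs{E(S_1,S_2)}\le 1$ or $\abs{S_3}\le np^{3}$ --- here the slack summand $\abs{S_3}$ absorbs the bound. So it remains to treat the regime $\abs{E(S_1,S_2)}\ge 2$ and $np^{3}< \abs{S_3}< np$.

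For the hard regime, fix such a triple, set $s_3=\abs{S_3}$, and expose the randomness of $G$ in two stages: first all edges with no endpoint in $S_3$, which in particular determines $m:=\abs{E(S_1,S_2)}$; then the edges incident to $S_3$, which are fresh independent $\mathrm{Bernoulli}(p)$ variables (the two edge sets are disjoint since $S_3$ is disjoint from $S_1\cup S_2$). Conditioned on the first stage, $\triangles_G(S_1,S_2,S_3)=\sum_{w\in S_3}Q_w$ with $Q_w:=\sum_{uv\in E(S_1,S_2)}\ind{w\sim u}\ind{w\sim v}$ \emph{independent} over $w$, each a quadratic form in independent Bernoullis with $\Ex{Q_w}=mp^{2}$, so $\mu:=\Ex{\sum_w Q_w}=s_3 m p^{2}\le m(np^{3}s_3)^{1/2}$ (the last inequality being $s_3 p\le n$), i.e.\ the mean is already within the target. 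It thus suffices to bound the upper tail of $\sum_w Q_w$ with failure probability $\exp(-\Omega(n))$, so that a union bound over the $\le 8^{n}$ disjoint triples goes through (after intersecting with the probability-$(1-1/n)$ event of Claim~\ref{clm:chernoff-counts}). A plain Bernstein bound using the worst-case estimate $Q_w\le L:=\min\{m,\tilde{O}(n^{2}p^{3})\}$ (the latter from Claim~\ref{clm:chernoff-counts} applied to $e_G(\nbr(w)\cap S_1,\nbr(w)\cap S_2)$ together with the max-degree bound) is too weak, since its $L\cdot n$ term can exceed the target. Instead I would truncate: for each dyadic scale $\tau$ from $mp^{2}$ up to $L$, bound the number of $w\in S_3$ with $Q_w\gtrsim\tau$ using the exponential upper tail of the single edge-count $Q_w$ (which counts edges of the fixed graph $E(S_1,S_2)$ falling inside a random $p$-subset of its vertex set) and a Chernoff bound on that count, and bound the contribution of those $w$ by $\tau$ times the count; the $w$ with $Q_w\lesssim mp^{2}$ contribute a bounded sum with mean $\le\mu$, handled by Bernstein. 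Summing the $O(\log n)$ scales, with generous constants, keeps each piece $\lsim \abs{S_3}+m(np^{3}\abs{S_3})^{1/2}$ except with probability $\exp(-\Omega(n))$.

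The main obstacle is precisely this concentration step in the hard regime: the quadratic forms $Q_w$ have a heavy enough upper tail that the crude ``all $Q_w$ simultaneously maximal'' estimate is lossy against the target, so one must carefully count how many $w$ have an abnormally large number of common neighbours among the endpoints of $E(S_1,S_2)$ and sum this over scales. This is exactly the type of simultaneous-over-all-subsets subgraph-count estimate for which the concentration results cited in the introduction (e.g.\ \cite{Vu01,KV04}) are built, and I expect the bulk of the work to be in adapting such a bound so that the final estimate comes out uniformly in the clean form $\abs{S_3}+\abs{E(S_1,S_2)}(np^{3}\abs{S_3})^{1/2}$.
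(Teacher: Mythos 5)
Your reduction to $\sum_{w\in S_3}Q_w$ and your triage of the easy regimes ($\abs{S_3}\ge np$ via the codegree bound, $\abs{S_3}\le np^3$ via the trivial bound $m\abs{S_3}$) are correct, but the proof has a genuine gap exactly where you flag it: the entire hard regime $np^3<\abs{S_3}<np$ rests on a multi-scale concentration estimate with per-triple failure probability $e^{-\Omega(n)}$ that you do not actually establish — you only sketch a truncation scheme and say you "expect the bulk of the work" to lie there. That step is not routine. The allowed deviation $m(np^3\abs{S_3})^{1/2}$ sits well below what the crude worst-case bounds on the $Q_w$ certify (at $\abs{S_3}\approx n^{1/3}$, $p=n^{-1/3}$ the gap to the best deterministic bound is a polynomial factor $n^{1/6}$), the $Q_w$ are quadratic forms with heavy upper tails, and the union bound must run over all $4^n$ assignments of vertices to $S_1,S_2,S_3$. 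Indeed, the paper's own overview warns that the tail-bound-plus-union-bound strategy for shattered subgraph counts "inherently does not work when the set sizes become significantly different," which is precisely the asymmetric situation your union bound must survive. Without the concentration lemma worked out, the claim is not proved.

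The paper avoids all of this with a short deterministic double count. Conditioned only on the simultaneous whp events of Claim~\ref{clm:chernoff-counts} (in particular the triple-codegree bound $\abs{\nbr(i_1)\cap\nbr(i_2)\cap\nbr(i_3)}\lsim np^3$ for \emph{all} triples), form the bipartite graph $H$ between $E(S_1,S_2)$ and $S_3$ whose edge count is $\Delta:=\abs{\triangles_G(S_1,S_2,S_3)}$. Any two distinct edges of $E(S_1,S_2)$ span at least three distinct vertices, so the number of $w\in S_3$ adjacent to all their endpoints is $\lsim np^3$; summing over pairs gives $\sum_{w\in S_3}\binom{d_H(w)}{2}\lsim\abs{E(S_1,S_2)}^2\cdot np^3$, and Cauchy--Schwarz yields $\Delta^2-\Delta\abs{S_3}\lsim\abs{S_3}\cdot\abs{E(S_1,S_2)}^2\cdot np^3$, which is the claim. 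Since this is deterministic given the codegree event, no union bound over subsets is needed at all. I'd recommend replacing your hard-regime argument with this K\H{o}v\'ari--S\'os--Tur\'an-style count; if you insist on the concentration route, you must actually prove the scale-by-scale tail bound and verify it beats $4^{-n}$ uniformly over all set sizes, which is the whole difficulty.
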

\begin{proof}
With probability at least $1 - \frac{1}{n}$, $G$ satisfies the conclusion of Claim
\ref{clm:chernoff-counts}.
Fix such a $G$, and consider arbitrary subsets $S_1, S_2, S_3 \subseteq V$. 
Consider the bipartite graph $H$ where the left side vertices correspond to edges in $E(S_1, S_2)$,
the right side vertices correspond to vertices in $S_3$, and there is an edge from $(i_1, i_2) \in
E(S_1, S_2)$ to $i_3 \in S_3$ when both $(i_1, i_3), (i_2, i_3) \in E$.  Clearly,
$\abs{\triangles_G(S_1,S_2,S_3)}$ is equal to the number of edges in $H$. 

Consider two different edges $(i_1, i_2), (i_1', i_2') \in E(S_1, S_2)$. These two edges are
incident on at least $3$ distinct vertices, say $\{i_1,i_2,i_1'\}$. Hence, the number of vertices
$i_3 \in [n]$  that are adjacent to all $\{ i_1, i_2, i_1', i_2' \}$ in $G$ is at most 
$\abs{\nbr(i_1) \cap \nbr(i_2) \cap \nbr(i_1')} \lsim np^3$. This gives that the number of pairs 
triangles sharing a common vertex in $S_3$ is at most $\abs{E(S_1,S_2)}^2 \cdot np^3 (\log
n)^{O(1)}$.

Let $d_H(i_3)$ denote the degree of a vertex $i_3$ in $H$, and let 
$\Delta$ denote the number of shattered triangles. Counting the above pairs of triangles using the
degrees gives
\[
\sum_{i_3 \in S_3} \binom{d_H(i_3)}{2} ~\lsim~ \abs{E(S_1,S_2)}^2 \cdot np^3 \mper
\]
An application of Cauchy-Schwarz gives
\[
\Delta^2 - \Delta \cdot \abs{S_3} ~\lsim~ \abs{S_3} \cdot \abs{E(S_1,S_2)}^2 \cdot np^3 \mcom
\]
which proves the claim.
\end{proof}

\subsubsection{Bounding $4$-clique Density}
Let $G \sim G_{n,p}$ be a graph satisfying the conclusions of Claims \ref{clm:chernoff-counts} and
\ref{clm:triangles}. Let $S_1, \ldots, S_4 \subseteq [n]$ be disjoint sets with sizes $n_1 \leq n_2
\leq n_3 \leq n_4$. We consider two cases:
\begin{itemize}
\item \textbf{Case 1}: $\abs{E(S_1,S_2)} ~\lsim~ n_1n_2p$ \\[2 pt]
Note that each edge $(i_1,i_2)$ can only participate in at most $\abs{\nbr(i_1) \cap \nbr(i_2)}$ triangles,
and each triangle $(i_1,i_2,i_3)$ can only be extended to at most $\abs{\nbr(i_1) \cap \nbr(i_2)
  \cap \nbr(i_3)}$ 4-cliques. Thus, Claim \ref{clm:chernoff-counts} gives
\[
\abs{\cliques_G(S_1,S_2,S_3,S_4)} ~\lsim~ n_1n_2p \cdot np^2 \cdot np^3 ~\lsim~ (n_1n_2n_3n_4)^{1/2}
\cdot n^2 p^6 \mper
\]
\item \textbf{Case 2}: $\abs{E(S_1,S_2)} ~\lsim~ n_1 + n_2$ \\[2 pt]
Claim \ref{clm:triangles} gives
\[
\abs{\Delta_G(S_1,S_2,S_3)} ~\lsim~ n_3 + (n_1+n_2) \cdot \inparen{n_3 \cdot np^3}^{1/2} \mcom
\]
which together with Claim \ref{clm:chernoff-counts} implies
\[
\abs{\cliques_G(S_1,S_2,S_3,S_4)} ~\lsim~ n_3 \cdot np^3 + (n_1+n_2) \cdot n_3^{1/2} \cdot 
\inparen{np^3}^{3/2} \mper
\]
Considering the first term, we note that
\[
n_3 \cdot np^3 ~\leq~ (n_3 n_4)^{1/2} \cdot n^2p^6 ~\leq~ (n_1 n_2 n_3 n_4)^{1/2} \cdot n^2p^6 \mcom
\]
since $n_3 \leq n_4$ and $np^3 \geq 1$. Similarly, for the second term, we have
\[
(n_1+n_2) \cdot n_3^{1/2} \cdot  \inparen{np^3}^{3/2} 
~\leq~ 2(n_2 n_3 n_4)^{1/2} \cdot \inparen{np^3}^{3/2}
~\leq~ 2 \cdot (n_1 n_2 n_3 n_4)^{1/2} \cdot n^2 p^6 \mper
\]
\end{itemize}
Combined with Claim \ref{clm:shattering}, this completes the proof of ~\lemref{cliques}.

\subsection{Lower Bound on $\hssos{f}$}
Recall that given a random graph $G = ([n], E)$ drawn from the distribution $G_{n, p}$, 
the polynomial $f$ is defined as 
\[
f(x_1, \dots, x_n) := \sum_{\{ i_1, i_2, i_3, i_4 \} \in \cliques} x_{i_1} x_{i_2} x_{i_3} x_{i_4},
\]
where $\cliques \subseteq \binom{[n]}{4}$ is the set of $4$-cliques in $G$. 
Let $\sfA \in \RR^{[n]^2 \times [n]^2}$ be the natural matrix representation of $24f$ (corresponding
to ordered copies of cliques) with
\[
\sfA[(i_1, i_2), (i_3, i_4)] ~=~ 
\begin{cases}
1 & \text{if}~ \{i_1, \dots, i_4\} \in \cliques \\[3 pt]
0 & \text{otherwise}
\end{cases}
\]
Let $E' \subseteq [n]^2$ be the set of ordered edges \ie $(i_1, i_2) \in E'$ if and only if
$\inbraces{i_1, i_2} \in E$. Note that $|E'| = 2m$ where $m$ is the number of edges in $G$. 
All nonzero entries of $\sfA$ are contained in the principal submatrix $\sfA_{E'}$, formed by the
rows and columns indexed by $E'$.

\subsubsection{A simple lower bound on $\fsp{f}$}
We first give a simple proof that $\fsp{f} \geq \sqrt{n^2p^5}$ with high probability. 
\begin{lemma}
$\fsp{f} \geq \Omega(\sqrt{n^2p^5}) = \Omega(n^{1/6})$ with high probability. 
\lemlab{nncsoslower2}
\end{lemma}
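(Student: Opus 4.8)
The plan is to lower bound $\fsp{f} = \inf_M \norm{2}{M}$ directly, by showing that \emph{every} matrix representation $M \in \sym{n^2}$ of $f$ has large spectral norm; this is done by passing to a small principal submatrix and controlling its Frobenius norm. The key observation is that, for any representation $M$, the identity $(x^{\otimes 2})^T M x^{\otimes 2} = f(x)$ forces $\sum_{(I,J)\,:\,\mi{I}+\mi{J}=\beta} M[I,J] = f_\beta$ for every $\beta \in \degmindex{4}$. When $\beta$ is the indicator vector of a $4$-clique $\{i_1,i_2,i_3,i_4\} \in \cliques$, this sum has exactly $4! = 24$ terms and $f_\beta = 1$; moreover each index $I$ (and $J$) appearing is an ordered pair of distinct clique vertices, hence lies in the ordered edge set $E'$. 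Thus all $24$ of these entries sit inside the principal submatrix $M_{E'}$ indexed by $E'$, and Cauchy--Schwarz gives $\sum_{(I,J)\,:\,\mi{I}+\mi{J}=\beta} M[I,J]^2 \geq 1/24$. Since distinct cliques give disjoint sets of index pairs, summing over all cliques yields $\norm{F}{M_{E'}}^2 \geq \cardin{\cliques}/24$.

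Next I would chain standard inequalities: the spectral norm of a (symmetric) matrix dominates that of any principal submatrix by Cauchy interlacing, and $\norm{2}{B} \geq \norm{F}{B}/\sqrt{N}$ for any $N \times N$ matrix $B$. Applied with $N = \cardin{E'}$, this gives, for every representation $M$,
\[
\norm{2}{M} ~\geq~ \norm{2}{M_{E'}} ~\geq~ \frac{\norm{F}{M_{E'}}}{\sqrt{\cardin{E'}}} ~\geq~ \sqrt{\frac{\cardin{\cliques}}{24\,\cardin{E'}}},
\]
and hence $\fsp{f} \geq \sqrt{\cardin{\cliques}/(24\,\cardin{E'})}$ for the given graph $G$.

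It then remains to insert high-probability estimates for the two graph quantities. A Chernoff bound gives $\cardin{E'} = 2m \leq n^2 p$ except with probability $\exp(-\Omega(n^2 p))$. For $\cardin{\cliques}$, the expected number of $4$-cliques is $\binom{n}{4}p^6 = \Theta(n^4 p^6)$, and since $p = n^{-1/3}$ is polynomially above the $K_4$-appearance threshold $n^{-2/3}$, concentration for subgraph counts (Janson's inequality, or the tail bounds in the references cited earlier) yields $\cardin{\cliques} \geq \Omega(n^4 p^6)$ with failure probability $n^{-\omega(1)}$. Combining, with high probability $\fsp{f} \geq \Omega\bigl(\sqrt{n^4 p^6 / (n^2 p)}\bigr) = \Omega(\sqrt{n^2 p^5})$, which is $\Omega(n^{1/6})$ for $p = n^{-1/3}$. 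The only nonroutine ingredient is the lower bound on the $4$-clique count, which relies on $p$ lying well above the appearance threshold, and one must ensure its failure probability survives the union bound with the edge-count event; everything else is a short bookkeeping argument, the one point requiring care being the verification that all $24$ orbit entries of a clique monomial lie within the edge-indexed submatrix, since that is exactly what couples the Frobenius-norm lower bound to the small dimension $\cardin{E'}$.
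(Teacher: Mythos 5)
Your proposal is correct and follows essentially the same route as the paper: restrict any representation $M$ to the principal submatrix indexed by the ordered edges $E'$, lower-bound its Frobenius norm by distributing each clique's coefficient over its $24$ orbit entries (all of which lie in $M_{E'}$), and convert to a spectral-norm bound via $\norm{F}{M_{E'}}^2 \leq \cardin{E'}\cdot\norm{2}{M_{E'}}^2$ together with the standard concentration of $\cardin{\cliques}$ and $\cardin{E'}$. The Cauchy--Schwarz justification you give for the per-clique Frobenius contribution is exactly the content of the paper's remark that the bound is ``minimized when $M = \sfA$.''
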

\begin{proof}
Consider any matrix representation $M$ of $24f$ and its principal submatrix $\sfM_{E'}$. 
It is easy to observe that the Frobenius norm of $M_{E'}$ satisfies $\norm{F}{M_{E'}}^2 \geq
24\abs{\cliques}$, minimized when $M = \sfA$. 
Since $\norm{F}{M_{E'}}^2 \leq \abs{E'} \cdot \norm{2}{A_{E'}}^2$, we have that 
with high probability, 
\[
\norm{2}{A} 
~\geq~ 
\norm{2}{A_{E'}} 
~\geq~ 
\sqrt{\frac{24\abs{\cliques}}{2|E|}} = \Omega\inparen{\frac{\sqrt{n^4 p^6}}{\sqrt{n^2 p}}} 
~=~ 
\Omega\inparen{\sqrt{n^2 p^5}}.  
\]
\end{proof}

\subsubsection{Lower bound for the stronger relaxation computing $\hssos{f}$}
We now prove \lemref{nncsoslower}, which says that $\hssos{f} \geq \frac{n^{1/6}}{\log^{2} n}$
with high probability.
In order to show a lower bound, we look at the dual SDP for computing $\hssos{f}$, which is a
maximization problem over positive semidefinite, SoS-symmetric matrices $\sfM$ with $\Tr{\sfM} =
1$. We exhibit such a matrix $\sfM \in \RR^{[n]^2 \times [n]^2}$ 
for which the value of the objective $\iprod{\sfA}{\sfM}$ is large.


%
For large $\langle \sfA, \sfM \rangle$, one natural attempt is to take $\sfM$ to be $\sfA$ and
modify it to satisfy other conditions. Note that $\sfA$ is already SoS-symmetric. However, 
$\Tr{\sfA}= 0$, which implies that the minimum eigenvalue is negative.

Let $\lambda_{\min}$ be the minimum eigenvalue of $\sfA$, which is also the minimum eigenvalue of
$\sfA_{E'}$. 
Let $I_{E'} \in \RR^{[n]^2 \times [n]^2}$ be such that $I[(i_1, i_2), (i_1, i_2)] = 1$ if $(i_1,
i_2) \in E'$ and all other entries are $0$. Note that $I_{E'}$ is a diagonal matrix with $\Tr{I_{E'}} = 2m$. 
Adding $-\lambda_{\min} \cdot I_{E'}$ to $\sfA$ makes it positive semidefinite, so setting 
\begin{equation}
\Eqlab{M-first-attemp}
\sfM 
~=~ \frac{\sfA - \lambda_{\min} I_{E'}}{\Tr{\sfA - \lambda_{\min} I_{E'}} } 
~=~  
\frac{\sfA - \lambda_{\min} I_{E'}}{ - 2 m \lambda_{\min} }
~=~
\frac{\sfA + \abs{\lambda_{\min}} \cdot I_{E'}}{ 2 m \cdot \abs{\lambda_{\min}} }
\end{equation}
makes sure that $\sfM$ is positive semidefinite, $\Tr{\sfM} = 1$, and $\iprod{\sfA}{\sfM} =
\frac{12 |\cliques|}{m \cdot \abs{\lambda_{\min}}}$ (each $4$-clique in $\cliques$ contributes $24$). 
Since $\abs{\cliques} = \Theta(n^4 p^6)$ and $m = \Theta(n^2 p)$ with high probability, if
$\abs{\lambda_{\min}} = O(np^{5/2})$, $\iprod{\sfA}{\sfM} = \Theta\inparen{n^2p^{5/2}}$, which is
$\Omega(n^{1/6})$ when $p = \Omega(n^{-1/3})$. 

The $\sfM$ defined in \Eqref{M-first-attemp} 
does not directly work since it is not SoS-symmetric. However, the following claim
proves that this issue can be fixed by losing a factor $2$ in $\langle \sfA, \sfM \rangle$. 
\begin{claim}
There exists $\sfM$ such that it is SoS-symmetric, positive semidefinite with $\Tr{\sfM} = 1$, 
and $\langle \sfA, \sfM \rangle \geq \frac{6 |\cliques|}{m \cdot \abs{\lambda_{\min}}}$.
\lemlab{fix}
\end{claim}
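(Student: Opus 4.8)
The idea is to leave the $\sfA$-part of the moment matrix untouched --- note that $\sfA$ is \emph{already} SoS-symmetric, since whether $\{i_1,i_2,i_3,i_4\}$ forms a $4$-clique depends only on the unordered multiset of the four indices --- and to replace the offending non-SoS-symmetric term $\abs{\lambda_{\min}}I_{E'}$ in \Eqref{M-first-attemp} by $\abs{\lambda_{\min}}\sfI$ for a carefully chosen SoS-symmetric, positive semidefinite, ``identity-like'' $\sfI$, and then set $\sfM := (\sfA + \abs{\lambda_{\min}}\sfI)\,/\,(\abs{\lambda_{\min}}\Tr{\sfI})$. The first thing one would try is $\sfI = (I_{E'})^S$, the SoS-symmetrization of $I_{E'}$ (average each entry over the $24$ permutations of the four tensor coordinates); this is block-diagonal, but its block on $\mathrm{span}\{e_{(a,a)} : a\in[n]\}$ equals $\tfrac13 A_G$ (with $A_G$ the adjacency matrix of $G$), which is indefinite, and on the ``symmetric edge'' directions $e_{(a,b)}+e_{(b,a)}$ it acts only as $\tfrac23 \Id$ rather than the full identity needed to cancel the negative eigenvalues of $\sfA$. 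Both defects are repaired by
\[
\sfI ~:=~ \tfrac32\,(I_{E'})^S ~+~ t\sum_{a\in[n]} e_{(a,a)}e_{(a,a)}^T \mcom
\]
where $t = \Theta(\sqrt{np})$ (up to logarithmic factors) is chosen large enough that $\tfrac12 A_G + t\,\Id_n \succeq 0$; for $G\sim G_{n,p}$ with $p=n^{-1/3}$ this is possible with such a $t$ by the standard spectral estimate $\lambda_{\min}(A_G)\ge -O(\sqrt{np})$ w.h.p.

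\textbf{Routine parts.} Granting positive semidefiniteness, the rest is easy. The matrix $\sfI$ is SoS-symmetric because $(I_{E'})^S$ is and because $\sum_a e_{(a,a)}e_{(a,a)}^T$ is (each of its nonzero entries lies in a singleton orbit), so $\sfM$ is SoS-symmetric; and $\Tr{\sfM}=1$ by construction. Since $\sfA$ is supported only on entries whose four coordinates are \emph{distinct}, while every nonzero entry of $\sfI$ involves at most two distinct coordinates, their supports are disjoint, so $\iprod{\sfA}{\sfI}=0$ and $\iprod{\sfA}{\sfM} = \iprod{\sfA}{\sfA}/(\abs{\lambda_{\min}}\Tr{\sfI}) = 24\abs{\cliques}/(\abs{\lambda_{\min}}\Tr{\sfI})$. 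A direct count gives $\Tr{\sfI} = m + tn$ (the $\tfrac32(I_{E'})^S$ piece contributes $\tfrac32\cdot\tfrac23 m = m$); and since $p=n^{-1/3}$ makes $np$ polynomially large, $tn = O(\sqrt{np}\cdot n\cdot\mathrm{polylog}(n)) \le 3m$ for large $n$, whence $\Tr{\sfI}\le 4m$ and $\iprod{\sfA}{\sfM} \ge 6\abs{\cliques}/(m\abs{\lambda_{\min}})$, as required.

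\textbf{The main obstacle: $\sfA + \abs{\lambda_{\min}}\sfI \succeq 0$.} This is precisely where the naive $S_4$-symmetrization of \Eqref{M-first-attemp} fails (partial transposes destroy PSD-ness), and it is the heart of the argument. The key structural observation is that $\sfA$ has identical rows indexed by $(a,b)$ and $(b,a)$, so $\sfA$ annihilates every ``antisymmetric edge'' vector $e_{(a,b)}-e_{(b,a)}$, every $e_{(a,a)}$, and every non-edge coordinate vector; hence all of the negative spectrum of $\sfA$ lives on the $m$-dimensional ``symmetric edge'' subspace $V_+ = \mathrm{span}\{e_{(a,b)}+e_{(b,a)}:\{a,b\}\in E\}$, on which $\sfA$ is similar to $2\widetilde{\sfA}$ with $\widetilde{\sfA}$ the $E\times E$ clique-adjacency matrix and $\lambda_{\min}(2\widetilde{\sfA}) = \lambda_{\min}$. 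One then checks that $\sfA+\abs{\lambda_{\min}}\sfI$ is block-diagonal with respect to the orthogonal splitting of $\RR^{[n]^2}$ into $\mathrm{span}\{e_{(a,a)}\}$, $V_+$, the antisymmetric edge subspace, and the span of the remaining (non-edge) coordinate vectors, and verifies semidefiniteness block by block: on $V_+$ it equals $2\widetilde{\sfA}+\abs{\lambda_{\min}}\,\Id \succeq 0$ (here $\tfrac32(I_{E'})^S$ acts exactly as $\Id$, which is why the factor $\tfrac32$ is needed); on $\mathrm{span}\{e_{(a,a)}\}$ it equals $\abs{\lambda_{\min}}(\tfrac12 A_G + t\,\Id_n)\succeq 0$ by the choice of $t$; and it vanishes on the other two blocks. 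I expect this blockwise PSD check to be the main difficulty, and the delicate point within it is the tension imposed by SoS-symmetry: boosting the $[(a,b),(a,b)]$ entries to make $\sfI$ act as the \emph{full} identity on $V_+$ unavoidably deposits mass $\tfrac12 A_G$ on the ``diagonal-type'' entries $[(a,a),(b,b)]$, which then has to be absorbed by the $t\sum_a e_{(a,a)}e_{(a,a)}^T$ correction --- and one must simultaneously keep $\Tr{\sfI}\le 4m$, which is exactly where the choice $p=n^{-1/3}$ (equivalently, $np$ polynomially large) enters.
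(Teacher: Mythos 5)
Your construction is correct, and it follows the same high-level template as the paper's proof: set $\sfM \propto \sfA + \abs{\lambda_{\min}}\cdot\sfI$ for an SoS-symmetric, PSD, identity-like $\sfI$ of trace $O(m)$ whose support is disjoint from that of $\sfA$. The difference is in how $\sfI$ is built and how PSD-ness is certified. The paper does not symmetrize $I_{E'}$ at all: it keeps $I_{E'}$ intact (so that $\sfA + \abs{\lambda_{\min}}I_{E'}\succeq 0$ is immediate from the definition of $\lambda_{\min}$) and adds a separate matrix $Q_{E'}$ that fills in exactly the two missing orbit entries $[(a,a),(b,b)]$, $[(b,b),(a,a)]$ per edge and places $\deg_G(i)$ at $[(i,i),(i,i)]$; $Q_{E'}$ is then diagonally dominant, hence PSD for \emph{every} graph $G$, and has trace $2m$. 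Your route instead symmetrizes $I_{E'}$, which forces the rescaling by $3/2$ and deposits $\tfrac12 A_G$ on the diagonal-index block; you then absorb that with $t\sum_a e_{(a,a)}e_{(a,a)}^T$, $t=\Theta(\sqrt{np})$, via the block decomposition into $\mathrm{span}\{e_{(a,a)}\}$, $V_+$, $V_-$, and the non-edge coordinates (your blockwise verification is sound, and the observation that all of $\sfA$'s negative spectrum lives on $V_+$ is exactly right). What this costs you is an extra probabilistic input --- the w.h.p.\ bound $\lambda_{\min}(A_G)\ge -\tilde{O}(\sqrt{np})$ and $m=\Theta(n^2p)$ --- and the requirement that $np$ be polynomially large, so your version of the Claim holds only w.h.p.\ over $G$ rather than deterministically. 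That is harmless here, since the Claim is only invoked inside the w.h.p.\ proof of \lemref{nncsoslower} and $p\ge n^{-1/3}$ throughout, but the paper's diagonal-dominance trick gets the unconditional statement with no random-graph theory. (A small bonus on your side: your trace is $m+tn\approx m$ rather than exactly $4m$, so you'd get a slightly better constant, which is immaterial.)
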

\begin{proof}
Let $Q_{E'} \in \RR^{[n]^2 \times [n]^2}$ be the matrix such that
\begin{itemize}
\item For $(i_1, i_2) \in E'$, $Q_{E'}[(i_1, i_1), (i_2, i_2)] = Q_{E'}[(i_2, i_2), (i_1, i_1)] = 1$. 
\item For $i \in [n]$, $Q_{E'}[(i, i), (i, i)] = \deg_G(i)$, where $\deg_G(i)$ denotes the degree of $i$ in $G$.  
\item All other entries are $0$. 
\end{itemize}
We claim that $I_{E'} + Q_{E'}$ is SoS-symmetric: $(I_{E'} + Q_{E'})[(i_1, i_2), (i_3, i_4)]$ has a
nonzero entry if and only if $i_1 = i_2 = i_3 = i_4$ or two different numbers $j_1, j_2$ appear
exactly twice and $(j_1, j_2) \in E'$ (in this case $(I_{E'} + Q_{E'})[(i_1, i_2), (i_3, i_4)] =
1$). 
Since $\sfA$ is SoS-symmetric, so $\sfA + \abs{\lambda_{\min}} \cdot (I_{E'} + Q_{E'})$ is also
SoS-symmetric. 

It is easy to see that $Q_{E'}$ is diagonally dominant, and hence positive semidefinite. 
Since we already argued that $\sfA + \abs{\lambda_{\min}} \cdot I_{E'}$ is positive semidefinite,
$\sfA + \abs{\lambda_{\min}} \cdot (I_{E'} + Q_{E'})$ is also positive semidefinite. Also,
$\Tr{Q_{E'}} = \sum_{i \in [n]} \deg_G(i) = 2m$. 
Thus, we take
\[
\sfM ~=~ 
\frac{\sfA + \abs{\lambda_{\min}} \cdot (I_{E'} + Q_{E'})}{\Tr{\sfA + \abs{\lambda_{\min}} \cdot
    (I_{E'} + Q_{E'})}
 }
 ~=~  
\frac{\sfA + \abs{\lambda_{\min}} \cdot I_{E'}}{ 4 m \cdot \abs{\lambda_{\min}} } \mper
\]
By the above arguments, we have that $\sfM$ that is PSD, SoS-symmetric with $\Tr{\sfM} = 1$, and
\[
\langle \sfA, \sfM \rangle = \frac{6 |\cliques|}{m \cdot \abs{\lambda_{\min}}}
\]
as desired. 
\end{proof}

It only remains to bound $\lambda_{\min}$, which is the minimum eigenvalue of $\sfA$ and
$\sfA_{E'}$. For $p$ in the range $[n^{-1/3}, n^{-1/4}]$, we will show  a bound of
$\tilde{O}(n^{3/2}p^4)$ below, which when combined with the above claim, completes the proof of
\lemref{nncsoslower}.

\subsubsection{Bounding the smallest eigenvalue via the trace method}
Our estimate $\abs{\lambda_{\min}} = O(np^{5/2})$ is based on the following
observation: $\sfA_{E'}$ is a $2m \times 2m$ random matrix where each row and column is expected to
have $\Theta(n^2 p^5)$ ones (the expected number of $4$-cliques an edge participates in). An adjacency
matrix of a random graph with average degree $d$ has a minimum eigenvalue $- \Theta(\sqrt{d})$,
hence the estimate $\abs{\lambda_{\min}} = O(n p^{5/2})$. Even though $\sfA_{E'}$ is not sampled
from a typical random graph model (and even $E'$ is a random variable), we will be able to prove the
following weaker estimate, which suffices for our purposes.
%
%
%
%
%
\begin{lemma}
With high probability over the choice of the graph $G$, we have
\[
\abs{\lambda_{\min}} ~=~ 
\begin{cases}
\tilde{O}\inparen{n^{3/2} \cdot p^4} & \text{for}~ p \in \insquare{n^{-1/3}, n^{-1/4}} \\[5 pt]
\tilde{O}\inparen{n^{5/3} \cdot p^{14/3}} & \text{for}~ p \in \insquare{n^{-1/4}, 1/2} 
\end{cases}
\]
\lemlab{eigenvalue}
\end{lemma}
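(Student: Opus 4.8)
The plan is to bound $|\lambda_{\min}| = |\lambda_{\min}(\sfA)| = |\lambda_{\min}(\sfA_{E'})|$ by the trace (moment) method, but applied to a \emph{recentered} matrix. The naive route --- estimating $\mathbb{E}_G[\Tr{\sfA_{E'}^{2k}}]$ and using $|\lambda_{\min}|^{2k} \le \Tr{\sfA_{E'}^{2k}}$ --- fails here, because the Perron eigenvalue of $\sfA_{E'}$ is of order $n^2p^5$ (its typical row sum), which dwarfs the target and dominates $\Tr{\sfA_{E'}^{2k}}$. So the first step is to peel off the rank-one ``mean-field'' part: set $\sfB := \sfA_{E'} - p^4\,\mathbf{1}_{E'}\mathbf{1}_{E'}^\top$ (and, if needed, a few further lower-order structured corrections accounting for coincidences among the four clique vertices, the zero diagonal, and repeated edges inside a $K_4$ step). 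Since the subtracted matrix is PSD, Weyl's inequality gives $\lambda_{\min}(\sfA_{E'}) \ge \lambda_{\min}(\sfB) \ge -\norm{2}{\sfB}$, so it suffices to bound $\norm{2}{\sfB}$. Working throughout with the fixed $n^2 \times n^2$ matrix $\sfA$ (rather than the $2m\times 2m$ matrix $\sfA_{E'}$, whose index set $E'$ is itself random) keeps all randomness inside $G$ and avoids a separate accounting of ``active rows.''

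Second, estimate $\mathbb{E}_G[\Tr{\sfB^{2k}}]$ for an even $2k = \Theta(\log n)$. Expanding the $2k$-th power of $\sfA - p^4\mathbf{1}\mathbf{1}^\top$, each term is indexed by a closed walk $u_0, u_1, \dots, u_{2k} = u_0$ in $[n]^2$ together with a choice, at each step, of an $\sfA$-factor (which forces $u_t \cup u_{t+1}$ to span a $4$-clique, i.e.\ forces $6$ edges of $K_n$) or the rank-one factor (which splits the walk into independent segments and carries weight $-p^4$). Group these by combinatorial ``shape'': which of the $4k$ coordinate-symbols coincide, where the rank-one factors sit, and the multigraph of forced $K_n$-edges. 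A shape with $v$ distinct vertices of $[n]$ and $e$ distinct forced edges contributes $\pm\,p^{\,e + 4\cdot(\#\text{rank-one steps})}$ times at most $n^v$ labelings, so $\mathbb{E}_G[\Tr{\sfB^{2k}}] \le \sum_{\text{shapes}} n^v p^{\,e+4(\cdots)}$, with the number of shapes at most $(4k)^{O(k)}$.

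Third --- the heart of the argument --- is the combinatorial optimization: bounding $n^v p^e$ over shapes that survive the recentering, subject to the constraints forced by a closed walk built from overlapping copies of $K_4$. The key relations are that a step introducing $2$ fresh vertices forces at least $5$ new edges, a step introducing $1$ fresh vertex forces at least $3$ new edges, the walk must close up, and the rank-one insertions (whose whole purpose is this) prevent the degenerate ``Perron'' walks with $v \approx 4k$, $e \approx 10k$ from surviving. Carrying this through should give $n^v p^{\,e+4(\cdots)} \le \big(n^{3/2}p^4\big)^{2k}\cdot\mathrm{poly}(k)^{2k}$ for $p \in [n^{-1/3}, n^{-1/4}]$ and $\le \big(n^{5/3}p^{14/3}\big)^{2k}\cdot\mathrm{poly}(k)^{2k}$ for $p \in [n^{-1/4}, 1/2]$; the switch of extremal shape at $p = n^{-1/4}$ (whether the binding configuration uses ``edge-extension'' or ``triangle-extension'' steps of the clique walk) is exactly what produces the two regimes. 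Finally, Markov's inequality upgrades the expectation bound to a high-probability bound, and $k = \Theta(\log n)$ absorbs the $\mathrm{poly}(k)^{2k}$ and the Markov loss into an $n^{o(1)}$ factor.

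The obstacle I expect to dominate the work is this last step: designing the recentering so that it genuinely kills \emph{all} shapes with $n^v p^e$ above the target (not merely the single dominant one), and making the $v$-versus-$e$-versus-walk-length bookkeeping sharp enough to land on $n^{3/2}p^4$ rather than the trivial $n^2p^5$. A secondary technical point is checking that every lower-order correction subtracted from $\sfA_{E'}$ is PSD-dominated and has operator norm $O\!\big(n^{o(1)}\cdot n^{3/2}p^4\big)$, so that it does not spoil the final estimate.
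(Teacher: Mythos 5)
Your proposal follows essentially the same route as the paper: subtract $p^4$ times the rank-one edge-indicator outer product (the paper's $\hN_E := \hA - p^4\hJ_E$, justified by the same Weyl/interlacing observation), bound $\norm{2}{\hN_E}$ via $\E[\Tr{(\hN_E)^r}]$ with $r = \Theta(\log n)$, exploit that the centering annihilates walks containing an isolated fresh $4$-clique step, and use exactly the counts you name ($\geq 5$ new forced edges per step adding two fresh vertices, $\geq 3$ per step adding one) to get the two regimes with crossover at $p = n^{-1/4}$, finishing with Markov. The combinatorial optimization you flag as the dominant obstacle is indeed where the paper spends its effort (a charging/compensation argument over walk-step types), and no extra PSD corrections beyond the single rank-one subtraction turn out to be needed.
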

%
\begin{proof}
Instead of $\sfA_{E'}$, we directly study $\sfA$ to bound $\lambda_{\min}$. 
For simplicity, we consider the following matrix $\hA$, where each row and column is indexed by an
unordered pair $\{i, j\} \in \binom{[n]}{2}$, and $\hA[\{i_1, i_2\}, \{i_3, i_4\}] = 1$ if and only
if $i_1, i_2, i_3, i_4$ form a $4$-clique.
$\sfA$ has only zero entries in the rows or columns indexed by $(i, i)$ for all $i \in [n]$, 
and for two pairs $i_1 \neq i_2$ and $i_3 \neq i_4$, we have 
\begin{align*}
\hA[\{i_1, i_2\}, \{i_3, i_4\}] ~\defeq~ &\frac{1}{4} \cdot \left\{ \sfA[(i_1, i_2), (i_3, i_4)] + 
\sfA[(i_1, i_2), (i_4, i_3)] \right\}  \\ 
+~ &\frac14 \cdot \left\{\sfA[(i_2, i_1), (i_3, i_4)] + 
\sfA[(i_2, i_1), (i_4, i_3)] \right\} \mper
\end{align*}
Therefore, $\abs{\lambda_{\min}\inparen{\sfA}} \leq 4\cdot \abs{\lambda_{\min}\inparen{\hA}}$
and it suffices to bound the minimum eigenvalue of $\hA$. 
We consider the matrix $\hN_E := \hA - p^{4} \cdot \hJ_E$, where $\hJ_E \in \RR^{\binom{[n]}{2}
  \times \binom{[n]}{2}}$ is such that
\[
\hJ_E[\{i_1, i_2\}, \{i_3, i_4\}] ~=~ 
\begin{cases}
1 & \text{if}~ \{i_1, i_2\}, \{i_3, i_4\} \in E \\
0 & \text{otherwise}
\end{cases} \mper
\]
Since $\hJ_E$ is a rank-$1$ matrix with a positive eigenvalue, the minimum eigenvalues of $\hA$ and
$\hN_E$ are the same. In summary, $\hN_E$ is the following matrix. 
\[
\hN_E[\{i_1, i_2\}, \{i_3, i_4\}] ~=~ 
\begin{cases}
1 - p^4 & \text{if}~\{i_1, i_2, i_3, i_4\} \in \cliques \\
-p^4 &  \text{if}~\{i_1, i_2, i_3, i_4\} \notin \cliques ~\text{but}~ \{i_1,i_2\}, \{i_3,i_4\} \in E
\\
0 & \text{otherwise}
\end{cases}
\]
We use the trace method to bound 
$\norm{2}{\hN_{E}}$, based on the observation that for every even $r \in \NN$, 
$\norm{2}{\hN_{E}} \leq \inparen{\Tr{(\hN_E)^r}}^{1/r}$.
Fix an even $r \in \NN$. The expected value of the trace can be represented as
\[
\Ex{\Tr{(\hN_E)^r}}
~=~ 
\Ex{\sum_{I^1, \dots, I^r \in \binom{[n]}{2}} \prod_{k = 1}^r \hN_E[I^k, I^{k + 1}]}
~=~ 
\sum_{I^1, \dots, I^r \in \binom{[n]}{2}} \Ex{ \prod_{k = 1}^r \hN_E[I^k, I^{k + 1}] }
\]
where each $I^j  = \inbraces{i^j_1, i^j_2} \in \binom{[n]}{2}$ is an edge of the complete graph on $n$ vertices (call it a {\em potential edge}) and $I^{r + 1} := I^1$. 

Fix $r$ potential edges $I^1, \dots, I^r$, let $t := \prod_{k = 1}^r \hN_E[I^k, I^{k + 1}]$, and
consider $\E[t]$. 
Let $E_0 := \{ I^1, \dots, I^r \}$ be the set of distinct edges represented by $I^1, \dots, I^r$. 
Note that the expected value is $0$ if one of $I^j$ does not become an edge. 
Therefore, $\E[t] = p^{|E_0|} \cdot \Ex{t ~|~ E_0 \subseteq E}$. 

Let $D \subseteq [r]$ be the set of $j \in [r]$ such that all four vertices in $I^j$ and $I^{j+1}$
are distinct \ie
\[
D ~\defeq~ \inbraces{j \in [r] ~\mid~ \left|\inbraces{i^j_1, i^j_2, i^{j+1}_1, i^{j+1}_2} \right| = 4} \mper
\]
For $j \in [r] \setminus D$, $\{ i^j_1, i^j_2, i^{j+1}_1, i^{j+1}_2 \}$ cannot form a $4$-clique, so
given that $I^j, I^{j + 1} \in E$, we have $\hN_E[I^j, I^{j + 1}] = -p^4$. For $j \in D$, let 
$E_j := \inbraces{ 
\inbraces{i^j_1, i^{j + 1}_1},
\inbraces{i^j_1, i^{j + 1}_2},
\inbraces{i^j_2, i^{j + 1}_1},
\inbraces{i^j_2, i^{j + 1}_2}
} \setminus E_0$ 
be the set of edges in the $4$-clique created by $\inbraces{ i^j_1, i^j_2, i^{j+1}_1, i^{j+1}_2 }$ except
ones in $E_0$. 
Then
\[
\Ex{t} 
~=~ 
p^{|E_0|} \cdot \Ex{t | E_0 \subseteq E} 
~=~ 
p^{|E_0|} \cdot (-p^4)^{r - |D|} \cdot \Ex{\prod_{k \in D} \hN_E[I^k, I^{k + 1}] \, \mid \, E_0 \subseteq E }.
\]
Suppose there exists $j \in D$ such that $\abs{E_j} = 4$ and $E_j \cap (\cup_{j' \in D \setminus \{
  j \}} E_{j'}) = \emptyset$. Then, given that $E_0 \subseteq E$, $\hN_E[I^j, I^{j + 1}]$ is
independent of all $\inbraces{ \hN_E[I^k, I^{k + 1}] }_{k \in D \setminus \{ j \} }$, and 
\[
\Ex{\hN_E[I^j, I^{j + 1}] | E_0 \subseteq E} ~=~ p^4 (1 - p^4) + (1 - p^4)(- p^4) ~=~ 0 \mper
\] 
Therefore, $\E[t] = 0$ unless for all $j \in D$, either $|E_j| \leq 3$ or there exists $j' \in D
\setminus \{ j \}$ with $E_j \cap E_{j'} \neq \emptyset$. 

Let $E_D := \bigcup_{j \in D} E_j$. Note that $E_0$ and $E_D$ completely determines $t$. $\E[t]$ can
be written as 
\begin{align*}
&\, p^{|E_0|} \cdot (-p^4)^{r - |D|} \cdot 
\Ex{\prod_{k \in D} \hN_E[I^k, I^{k + 1}] \, \mid \, E_0 \subseteq E } \\
= \, & \,  p^{|E_0|} \cdot (-p^4)^{r - |D|} \cdot \sum_{F \subseteq E_D} \inparen{ p^{|F|} (1 -
       p)^{|E_D| - |F|} \cdot \Ex{\prod_{k \in D} \hN_E[I^k, I^{k + 1}] \, | \, E_0 \subseteq E,
       E_D \cap E = F } } \\
= \, &\,  p^{|E_0|} \cdot (-p^4)^{r - |D|} \cdot \sum_{F \subseteq E_D} \inparen{ p^{|F|} (1 -
       p)^{|E_D| - |F|} \cdot (1 - p^4)^{|D| - a(F)} (-p^4)^{a(F)} },
\end{align*}
where $a(F)$ denotes the number of $j \in D$ with $E_j \not \subseteq F$. 
Since $E_D \subseteq F \cup \inparen{\bigcup_{j : E_j \not \subseteq F} E_j) }$ and $4a(F) + |F|
\geq |E_D|$, we have
\begin{align*}
\Ex{t}
&~=~ 
p^{|E_0|} \cdot (-p^4)^{r - |D|} \cdot \sum_{F \subseteq E_D} \inparen{ p^{|F|} (1 -
       p)^{|E_D| - |F|} \cdot (1 - p^4)^{|D| - a(F)} (-p^4)^{a(F)} } \\
&~\leq~
p^{|E_0|} \cdot (p^4)^{r - |D|} \cdot 2^{|E_D|} \cdot p^{|E_D|} \\[3 pt]
&~\leq~ 
2^{4r} \cdot p^{4(r - D) + |E_0| + |E_D|} \mper
\end{align*}

We now count the number of terms which contribute to the sum.
Fix a graph $H$ with $r$ labelled edges $I^1, \dots, I^r$ (possibly repeated) and $q := q(H)$
vertices, without any isolated vertex (so $q \leq 2r$). There are at most $\binom{q}{2}^r \leq
(2r)^{2r}$ such graphs. Then $I^1, \dots, I^r$, as edges in $\binom{[n]}{2}$, are determined by a
map $V_H \to [n]$. There are at most $n^q$ such mappings. 

Let $E_0 := E_0(H), D := D(H), E_j := E_j(H), E' := E'(H)$ be defined as before. Note that $E_0$ is
set the edges of $H$.  As observed before, the contribution from $H$ is $0$ if there exists $j \in
D$ such that $|E_j| = 4$ and $E_j$ is disjoint from $\{ E_{j'} \}_{j' \in D \setminus \{ j \}}$. Let
$\calH$ be the set of $H$ that has nonzero contribution. 
Then,
\begin{align*}
\Ex{\Tr{\inparen{\hN_E}^r}} 
& ~=~ 
\sum_{I^1, \dots, I^r \in \binom{[n]}{2}} \Ex{ \prod_{k = 1}^r \hN_E[I^k, I^{k + 1}] } \\
&~\leq~ 
\sum_{H \in \calH} n^{q(H)} \cdot 2^{4r} \cdot p^{4(r - D(H)) + |E_0(H)| + |E_D(H)|} \\
& ~\leq~ 
(2r)^{2r}  \cdot \max_{H \in \calH} \inparen{ n^{q(H)} 2^{4r} \cdot p^{4(r - D(H)) + |E_0(H)| + |E_D(H)|} } \\
& ~\leq~ 
(8r)^{2r}   \cdot \max_{H \in \calH} \inparen{ n^{q(H)} p^{4(r - D(H)) + |E_0(H)| + |E_D(H)|} }
\end{align*}
We will prove the following bound on the maximum contribution of any $H \in \calH$.
\begin{claim}\label{clm:contribution}
Let $\calH$ be defined as above. Then, for all $H \in \calH$, we have
\[
n^{q(H)} p^{4(r - D(H)) + |E_0(H)| + |E_D(H)|} ~\leq~ n^2 \cdot B_p^r \mcom
\]
where
\[
B_p
~=~ 
\begin{cases}
n^{3/2} \cdot p^{4} & \text{for}~ p \in \insquare{n^{-1/3}, n^{-1/4}} \\[5 pt]
n^{5/3} \cdot p^{14/3} & \text{for}~ p \in \insquare{n^{-1/4}, 1/2} 
\end{cases} \mper
\]
\end{claim}
Using the above claim, we can bound $\Ex{\Tr{\inparen{\hN_E}^r}}$ as 
\begin{align*}
\Ex{\Tr{\inparen{\hN_E}^r}}
& ~\leq~ (8r)^{2r}   \cdot \max_{H \in \calH} \big( n^{q(H)} p^{4(r - D(H)) + |E_0(H)| + |E_D(H)|}
 \big) \\
& ~\leq~ (8r)^{2r} \cdot n^2 \cdot B_p^r \mcom
\end{align*}
where $B_p$ is given by Claim \ref{clm:contribution} for different ranges of $p$. By Markov's
inequality, we get that with probability $1 - \frac{1}{n}$, we have $\Tr{\inparen{\hN_E}^r} \leq
(8r)^{2r} \cdot n^3 \cdot B_p^r$, which gives
\[
\norm{2}{\hN_E} ~\leq~ \inparen{8r}^2 \cdot B_p \cdot n^{3/r} \mper
\]
Choosing $r = \Theta(\log n)$ then proves the lemma.
%
%
\end{proof}

It remains to prove Claim \ref{clm:contribution}.

\subsubsection{Analyzing contributing subgraphs}
Recall that  graphs $H \in \calH$ were constructed from edges $\{I^1, \ldots, I^r\}$, with edge
$I^j$ consisting of vertices $\{i_1^j, i_2^j\}$. Also, we define $q(H) = \abs{V(H)}$.
Moreover, we defined  the following sets for graph $H$
\begin{align*}
E_0(H) &~\defeq~ \{I^1, \ldots, I^r\}  \quad\text{(counting only distinct edges)}\\
D(H) &~\defeq~ \inbraces{j \in [r] \given \abs{\inbraces{i_1^j, i_2^j, i_1^{j+1}, i_2^{j+1}}} = 4 }
  \\
E_j(H) &~\defeq~ \inbraces{\{i_1^j, i_1^{j+1}\}, \{i_1^j, i_2^{j+1}\}, \{i_2^j, i_1^{j+1}\},
         \{i_2^j, i_2^{j+1}\}} \setminus E_0(H) \\
E_D(H) &~\defeq~ \bigcup_{j \in D} E_j(H)
\end{align*}
Moreover, the graph $H$ is in $\calH$ only if for every $j \in D$, either $\abs{E_j(H)} \leq 3$ or
there exists $j' \in D\setminus\{j\}$ such that $E_j(H) \cap E_{j'}(H) \neq \emptyset$. 
Claim \ref{clm:contribution} then follows from the following combinatorial claim (taking $b =
\log(1/p)/\log n$).
\begin{claim}
Any graph $H \in \calH$ satisfies, for all $b \in [0,1/3]$
\[
q(H) ~\leq~  2 +  b \cdot \inparen{4(r - \abs{D(H)}) +  |E_0(H)| + |E_D(H)|} + c \cdot r \mcom
\]
where $c = 5/3 - 14b/3$ ~for $b \in [0,1/4]$ and $c = 3/2 - 4b$ ~for $b \in [1/4, 1/3]$.
\end{claim}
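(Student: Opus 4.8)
The plan is to bound the vertex count $q(H)$ by a structural analysis of the closed walk $I^1,I^2,\dots,I^r,I^{r+1}=I^1$ together with the $4$-clique cross-edge sets $E_j(H)$, and then to use convexity in the parameter $b$ to reduce the stated inequality (which must hold for the whole range $b\in[0,1/3]$) to a handful of extremal instances. First I would record the easy structural fact that the graph $G_1:=(V(H),E_0(H)\cup E_D(H))$ is connected: for $j\notin D(H)$ the edges $I^j$ and $I^{j+1}$ share a vertex, and for $j\in D(H)$ the four cross-edges between $I^j$ and $I^{j+1}$ all lie in $E_0(H)\cup E_D(H)$, so the walk links the whole vertex set. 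This immediately gives the crude bound $q(H)\le 1+|E_0(H)|+|E_D(H)|$, which is of the right shape near $b=1/3$ but is useless on its own; it must be traded off against a bound linear in $r$.

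For the linear-in-$r$ bound I would process the walk step by step, counting at each step the number of previously unseen vertices, and split $[r]$ into three classes: the \textbf{overlap steps} $j\notin D(H)$, where $I^{j+1}$ meets $I^j$ and at most one new vertex appears; the \textbf{partial-clique steps} $j\in D(H)$ with $|E_j(H)|\le 3$, where at least one cross-edge already lies in $E_0(H)$ (pinning a vertex of $I^{j+1}$); and the \textbf{full-clique steps} $j\in D(H)$ with $|E_j(H)|=4$. The defining property of $\calH$ forces every full-clique step $j$ to satisfy $E_j(H)\cap E_{j'}(H)\ne\emptyset$ for some other $j'\in D(H)$; unwinding what such an intersection can be shows it essentially produces a ``chord'' $I^{j-1}\cap I^{j+1}\ne\emptyset$ (or a symmetric variant), so the intersection graph on the family $\{E_j(H)\}$ has minimum degree $\ge1$ and the full-clique steps pair up. This pairing is what keeps $|E_D(H)|$ from growing like $4|D(H)|$ and what rules out degenerate configurations (e.g.\ $r$ pairwise-disjoint edges traversed by jumps, which would give $q=2r$ but violates $\calH$). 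Amortizing the new-vertex count against $r$, $|D(H)|$, $|E_0(H)|$, $|E_D(H)|$, while using that the walk is closed so the $D$-jumps must cancel, produces the required family of inequalities.

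Finally I would observe that the right-hand side of the claimed bound, as a function of $b\in[0,1/3]$, has the form $b\cdot\bigl(4(r-|D(H)|)+|E_0(H)|+|E_D(H)|\bigr)+c(b)\,r$ with $c$ piecewise linear and a single interior breakpoint at $b=1/4$; since the slope increases there (because $-4r>-\tfrac{14}{3}r$) this function is convex, so its minimum over $[0,1/3]$ is attained at $b\in\{0,\tfrac14,\tfrac13\}$, and it suffices to verify the three explicit instances $q(H)\le 2+\tfrac53 r$ together with
\begin{gather*}
q(H)\le 2+\tfrac32 r-|D(H)|+\tfrac14\bigl(|E_0(H)|+|E_D(H)|\bigr),\\
q(H)\le 2+\tfrac32 r-\tfrac43|D(H)|+\tfrac13\bigl(|E_0(H)|+|E_D(H)|\bigr),
\end{gather*}
each obtained from the step-by-step count with an appropriate choice of weights. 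The main obstacle is exactly this amortized count: pinning down the sharp constants ($4$, $\tfrac{14}{3}$, $\tfrac53$) requires careful bookkeeping of how the cross-edge sets $E_j(H)$ of the $D$-steps overlap one another and the edges traversed by the walk, and the $\calH$-pairing of full $4$-cliques must be exploited quantitatively — the naive combination (connectivity of $G_1$, $|E_D(H)|\le 4|D(H)|$, and at most $r+|D(H)|$ vertices from the walk) is off by constant factors and does not close the argument.
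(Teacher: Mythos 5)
Your high-level strategy is recognizably the same as the paper's: process the closed walk $I^1,\dots,I^r$ step by step, classify steps by whether $I^j$ meets $I^{j-1}$ and how many new vertices appear, invoke the defining property of $\calH$ to handle the dangerous steps where a fresh, fully disjoint $4$-clique is created, and then optimize over $b$. Your reduction via convexity of the right-hand side to the three values $b\in\{0,\tfrac14,\tfrac13\}$ is valid and is a cleaner packaging than the paper's (which instead derives a list of linear constraints of the form $\alpha b+\beta c\geq\gamma$, one per compensation scenario, and defines $c$ as the maximum of the resulting lower bounds). But the proposal does not close: you explicitly concede that the amortized count "does not close the argument" with the naive bounds, and that amortized count \emph{is} the proof. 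Everything before it is setup.

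Two concrete problems. First, your structural use of $\calH$ is off: you assert that the full-clique steps "pair up" because the intersection graph on $\{E_j(H)\}$ has minimum degree $\geq 1$, but the $\calH$ condition also permits $|E_j|\leq 3$ as an alternative, and the paper uses it differently --- for a step $j$ of Type $2$ (where $I^j$ is disjoint from \emph{all} earlier edges), membership in $\calH$ forces some \emph{later} edge $I^{j'}$ to share a vertex with $I^j$, and the smallest such $j'$ is necessarily \emph{not} of Type $2$; the charge for the two new vertices of $I^j$ is paid partly by this compensator. Type-$2$ steps do not pair with each other. Second, the sharp constants come from a case analysis of who compensates whom (a Type-$1$ index compensating one Type-$2$ index needs $8b+2c\geq 3$; a Type-$0$ index compensating two Type-$2$ indices splits into the subcases $e_{j'}\geq 1$ and $e_{j'}=0$, the latter forcing $j_2=j_1+1$ and $I^{j'-1}=I^{j_1-1}=I^{j_2+1}$ so that a fourth index joins the charge, giving $10b+4c\geq 4$; the Type-$(-1)$ cases contribute their extra $4b$; and the uncompensating Type-$1$ and Type-$(-1)$ indices need $3b+c\geq 1$ and $4b+c\geq 1$). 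Without carrying out this bookkeeping --- in particular the degenerate $e_{j'}=0$ subcase, which is where $5/3-14b/3$ ultimately comes from --- the claimed piecewise formula for $c$ is not established.
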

%
\begin{proof}
Fix a graph $H \in \calH$. 
Let $j = 1, \dots, r$, let $V_j := \inbraces{ i^{j}_1, i^{j}_2 }_{1 \leq j \leq r}$ (i.e., the set
of vertices covered by $I^1, \dots, I^j$). 
For each $j = 2, \dots, r$, let $v_j := |V_j| - |V_{j - 1}|$ and classify the index $j$ to one of
the following types.
\begin{itemize}
\item Type $-1$: $I^j \cap I^{j - 1} \neq \emptyset$ (equivalently, $j-1 \notin D$).
\item Type $k$ ($0 \leq k \leq 2$): $I^j$ and $I^{j - 1}$ are disjoint, and $v_j = k$ (i.e., adding
  $I^j$ introduces $k$ new vertices). 
\end{itemize}
Let $T_k$ ($-1 \leq k \leq 2$) be the set of indices of Type $k$, and let $t_k := |T_k|$. 
The number of vertices $q$ is bounded by 
\[
q 
~\leq~ 
2 + 1 \cdot t_{-1} + 0 \cdot t_0 + 1 \cdot t_1 + 2 \cdot t_2 
~=~ 
2 + t_{-1} + t_1 + 2t_2.
\]

Let $H_j$ be the graph with $V_j$ as vertices and edges
\[
E(H_j) ~=~ \inbraces{I^1, \ldots, I^j} \bigcup \inparen{\bigcup_{k \in D \cap [j-1]} E_k} \mper
\]
%
%
For $j = 2, \dots, r$, let $e_j = \abs{E(H_j)} - \abs{E(H_{j-1})}$. 
For an index $j \in T_2$, adding two vertices ${i^j_1, i^j_2}$ introduces at least $5$ edges in
$H_j$ compared to $H_{j - 1}$ (i.e., six edges in the $4$-clique on $\inbraces{i^{j-1}_1, i^{j-1}_2,
  i^j_1, i^j_2 }$ except $I^{j-1}$), so $e_j \geq 5$. Similarly, we get $e_j \geq 3$ for each $j \in T_1$. 

The lemma is proved via the following charging argument. For each index $j = 2, \dots, r$, we get
value $b$ for each edge in $H_j \setminus H_{j - 1}$ and get value $c$ for the new index. 
If $j \in T_{-1}$, we get an additional value of $4b$. We give this value to vertices in $V_{j}
\setminus V_{j-1}$.
If we do not give more value than we get and each vertex in $V(H) \setminus V_1$ 
gets more than $1$, this means 
\[
q - 2 ~\leq~ b \cdot \inparen{|E_0| + |E_D| + 4(r - \abs{D(H)})} + c \cdot {r},
\]
proving the claim. 
For example, if $j$ is an index of Type $1$, it gets a value at least $3b+c$ and needs to give value
$1$, such a charging can be done if $3b+c \geq 1$. Similarly, a type 0 vertex does not need to give
any value and has a surplus. We will choose parameters so that each $j$ of types $-1$, $1$ or $0$ 
can distribute the value to vertices added in $V_j \setminus V_{j-1}$.
However, if $j$ is an index of Type $2$, it needs to distribute the value it gets ($5b+c$) to two
vertices, and we will allow it to be ``compensated'' by vertices of other types, which may have a
surplus.


Consider an index $j \in T_2$. The fact that $j \in T_2$ guarantees that earlier edges $I^1, \dots,
I^{j - 1}$ are all vertex disjoint from $I^j$. 
If later edges $I^{j+1}, \dots, I^{r}$ are all vertex disjoint from $I^j$, then $|E_{j-1}| = 4$ and
$E_{j-1}$ is disjoint from $\{ E_{j'} \}_{j' \in D \setminus \{ j-1 \}}$, and this means that $H
\not \in \calH$. 
Thus, there exists $j' > j$ such that $I^{j'}$ and $I^j$ share an vertex. Take the smallest $j' >
j$, and say that $j'$ {\em compensates} $j$. Note that $j' \notin T_2$.

We will allow a type 1 index to compensate at most one type 2 index, and a type -1 or 0 index to
compensate at most two type 2 indices. We consider below the constraints implied by each kind of
compensation.


\begin{enumerate}
\item \emph{One Type $1$ index $j'$ compensates one Type $2$ index $j$} \\[3 pt]
$v_{j'} + v_{j} = 3$ and $e_{j'} + e_{j} \geq 8$ ($5$ from $e_j$ and $3$ from
  $e_{j'}$). This is possible if $8b + 2c \geq 3$.
%
%
\item \emph{One Type $0$ index $j'$ compensates one Type $2$ index $j$} \\[3 pt]
$v_{j'} + v_{j} = 2$ and $e_{j'} + e_{j} \geq 5$ ($5$ from $e_j$).
This is possible if $5b + 2c \geq 2$.
\item \emph{One Type $0$ index $j'$ compensates two Type $2$ indices $j_1$ and $j_2$ (say $j_1 < j_2$)}. \\[3 pt]
There are two cases.
\begin{enumerate}
\item $e_{j'} + e_{j_1} + e_{j_2} \geq 11$: 
$v_{j'} + v_{j_1} + v_{j_2} = 4$. 
This is possible if $11b + 3c \geq 4$.
%
\item $e_{j'} + e_{j_1} + e_{j_2} = 10$: since $e_{j_1}, e_{j_2} \geq 5$, 
this means that $e_{j'} = 0$.

First, we note that since $j_1$ is a type 2 index and $j'$ is the smallest index $j$ such that
$I^{j_1} \cap I_{j} \neq \emptyset$, in the graph $H_{j'-1}$, vertices in $I^{j_1}$ only have edges
to vertices in $I^{j_1 - 1}$ and $I^{j_1 + 1}$. Similarly, vertices in $I^{j_2}$ only have edges
to vertices in $I^{j_2 - 1}$ and $I^{j_2 + 1}$.

Since $I^{j'}$ shares one vertex each with $I^{j_1}$ and $I^{j_2}$, let $I^{j'} =
\{i^{j'}_1, i^{j'}_2\}$ with $i^{j'}_1 \in I^{j_1}$ and $i^{j'}_2 \in I^{j_2}$.
Since $e_{j'} = 0$ means that $I^{j'} = \{i^{j'}_1, i^{j'}_2\}$ was in $H_{j' - 1}$. However,
this is an edge between vertices in $I^{j_1}$ and $I^{j_2}$. By the above argument, this is only
possible if $j_2 = j_1 + 1$. Also, since $j'$ is type 0 and $I^{j'}$ shares a vertex with $I^{j_2}$,
we must have $j' > j_2 + 1$ (otherwise $j'$ would be type -1).

Consider $I^{j' - 1}$, which are vertex disjoint from both $I^{j_1}$ and $I^{j_2}$. If $I^{j' - 1}
\neq I^{j_1 - 1}$, at least one edge between $I^{j' - 1}$ and $I^{j}$ was not in $H_{j' - 1}$,
contradicting the assumption $e_{j'} = 0$. Therefore, $I^{j' - 1} = I^{j_1 - 1}$. For the same
reason, $I^{j' - 1} = I^{j_2 + 1}$. Thus, in particular, we have $I^{j_2+1} = I^{j_1-1}$. Thus,
$j_2+1$ is also a type 0 index. Moreover, it cannot compensate any previous index, since any such
index would already be compensated by $j_1-1$.

In this case we consider that $I^{j_2 + 1}$ and $I^{j'}$ jointly compensate $j_1$ and $j_2$. 
$v_{j'} + v_{j_2+1} + v_{j_1} + v_{j_2} = 4$ and $e_{j_2+1} + e_{j'} + e_{j_1} + e_{j_2} \geq 10$. 
Compensation is possible if $10b + 4c \geq 4$.
\end{enumerate}
\item \emph{One Type $-1$ index $j'$ compensates one Type $2$ index $j$}. \\[3 pt]
$v_{j'} + v_{j} \leq 3$ and $e_{j'} + e_{j} \geq 5$  ($5$ from $e_j$). Compensation is possible if
$5b + 4b + 2c \geq 2$. 
\item \emph{One Type $-1$ index $j'$ compensates two Type $2$ indices $j_1$ and $j_2$} \\[3 pt]  
We have $v_{j'} + v_{j_1} + v_{j_2} \leq 5$ and $e_{j'} + e_{j_1} + e_{j_2} \geq 10$.
Compensation is possible if $10b + 4b + 3c \geq 5$.
\end{enumerate}
Each index $j$ of Type $2$ is compensated by exactly one other index $j'$. We also require indices
of types $1$ and $-1$ which do not compensate any other index, to have value at least 1 (to account
for the one vertex added). This is true if $3b+c \geq 1$ and $4b+c \geq 1$.

Aggregating the above conditions (and discarding the redundant ones), we take
\[
c = \max \inbraces{\frac32 - 4b, ~1 - \frac{5b}{2}, ~\frac43 - \frac{11b}{3}, ~\frac53 - \frac{14b}{3} }
\]
It is easy to check that the maximum is attained by $c = 5/3 - 14b/3$ when $b \in [0,1/4]$ and $c =
3/2 - 4b$ when $b \in [1/4, 1/3]$.
\end{proof}


%

\section{Lifting $\fsp{\cdot}$ lower bounds to higher levels}
\seclab{fsp-lowerbound}
For a matrix $B\in\Re^{[n]^{q/2}\times [n]^{q/2}}$, let 
$B^S$ denote the matrix obtained by symmetrizing $B$, i.e. 
for any $I,J\in [n]^{q/2}$, 
\[
B^S[I,J] ~~
:= ~~
\frac{1}{|\orbit{\mi{I}+\mi{J}}|}
\cdot ~~~
\sum_{
\mathclap{
\substack{
\mi{I'}+\mi{J'} \\
= \mi{I}+\mi{J}
}
}
} ~
B[I',J'] 
\]
Equivalently, $B^S$ can be defined as follows: 
\[
    B^S = \frac{1}{q!}\cdot \sum_{\pi \in \Sym_{q}} B^{\pi}
\]
where for any $K\in [n]^{q}$, $B^{\pi}[K] := B[\pi(K)]$. 

\medskip

For a matrix $M\in\Re^{[n]^2\times [n]^2}$ let $T\in \RR^{[n]^4}$ denote 
the tensor given by, $T[i_1,i_2,i_3,i_4] = M[(i_1,i_2),(i_3,i_4)]$. 
Also for any non-negative integers $x,y$ satisfying $x+y = 4$, let 
$M_{x,y}\in \Re^{[n]^{x}\times [n]^{y}}$ denote the matrix given by, 
$M[(i_1,\dots ,i_x),(j_1,\dots j_y)] = T[i_1,\dots ,i_x,j_1,\dots j_y]$. 
We will use the following result that we prove in 
\secref{lifting:stable:lbs}. 

\begin{theorem}[Lifting ``Stable" $\fsp{\cdot}$ Lower Bounds]
\thmlab{lift:stable:sp:lb}
    Let $M\in \Re^{[n]^2 \times [n]^2}$ be a degree-$4$-SOS-symmetric matrix satisfying  
    \[
        \norm{S_1}{M},~ 
        \norm{S_1}{M_{3,1}}
        \leq 1.
    \]
    Then for any $q$ divisible by $4$,
    \[
    	\norm{S_1}{
        \pth{M^{\otimes q/4}}^{S}
        }
        = 2^{O(q)}
    \]
\end{theorem}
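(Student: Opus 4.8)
The plan is to derive \thmref{lift:stable:sp:lb} from the structural ``Tetris'' decomposition of $\pth{M^{\otimes q/4}}^S$ (\thmref{tetris}); granted that decomposition, the rest is routine Schatten-norm bookkeeping. Recall the four building blocks associated to $M$: writing $T$ for the underlying $4$-tensor and $M_{x,y}$ for its flattenings, set $M_A := M_{1,3}\otimes M_{4,0}\otimes M_{1,3}$, $M_B := M_{1,3}\otimes M_{3,1}$, $M_C := M$, and $M_D := \Vector{M}\Vector{M}^T = M_{0,4}\otimes M_{4,0}$. The Tetris theorem expresses $\pth{M^{\otimes q/4}}^S$ as a sum of $2^{O(q)}$ matrices of the form $C(a,b,c,d)\cdot P\cdot\pth{M_A^{\otimes a}\otimes M_B^{\otimes b}\otimes M_C^{\otimes c}\otimes M_D^{\otimes d}}\cdot P$, where $12a+8b+4c+8d=q$, $\norm{2}{P}=1$, and $\abs{C(a,b,c,d)}=2^{O(q)}$.

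Establishing this decomposition is where essentially all the work lies, and is the main obstacle. The idea is to expand $\pth{M^{\otimes q/4}}^S = \tfrac{1}{q!}\sum_{\pi\in\Sym_q}\pth{M^{\otimes q/4}}^\pi$ and analyze, for each permutation $\pi$, how the $q/4$ ``blocks'' (the four-index groups, one per tensor factor of $M$) are distributed between the row- and column-coordinates of the resulting $[n]^{q/2}\times[n]^{q/2}$ matrix. A block with an even split $(2,2)$ contributes a factor of $M_C=M$; a block split oddly, as $(1,3)$ or $(3,1)$, or wholly to one side, as $(4,0)$ or $(0,4)$, cannot stand alone and must link into a chain. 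Enumerating the admissible chains shows that, up to conjugation by a coordinate-permutation matrix $P$ (which has $\norm{2}{P}=1$) and up to multiplicities $C(a,b,c,d)$ counting how many $\pi$ realize a given pattern, every term is a tensor product of copies of $M_A$ (a $(1,3)$–$(4,0)$–$(1,3)$ chain), $M_B$ (a $(1,3)$–$(3,1)$ chain), $M_C$, and $M_D$ (a $(4,0)$–$(0,4)$ pairing), with the identity $12a+8b+4c+8d=q$ recording the total number of coordinates. The delicate points are the precise enumeration of chain patterns and the verification that $\norm{2}{P}=1$ and $\abs{C(a,b,c,d)}=2^{O(q)}$; these are carried out in \secref{lifting:stable:lbs}.

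Given the decomposition, the conclusion follows quickly, and this is precisely where both hypotheses on $M$ enter. Since $M$ is SoS-symmetric, $T[i_1,i_2,i_3,i_4]$ depends only on the multiset $\{i_1,\dots,i_4\}$, so $M_{1,3}=M_{3,1}^T$ and hence $\norm{S_1}{M_{1,3}}=\norm{S_1}{M_{3,1}}\leq 1$ by assumption. The flattening $M_{4,0}$ is the column vector $\Vector{M}$, so $\norm{S_1}{M_{4,0}}=\norm{2}{\Vector{M}}=\norm{F}{M}\leq\norm{S_1}{M}\leq 1$, and likewise $\norm{S_1}{M_{0,4}}\leq 1$. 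Multiplicativity of Schatten norms under tensor products then gives $\norm{S_1}{M_A}=\norm{S_1}{M_{1,3}}^2\,\norm{S_1}{M_{4,0}}\leq 1$, $\norm{S_1}{M_B}=\norm{S_1}{M_{1,3}}\,\norm{S_1}{M_{3,1}}\leq 1$, $\norm{S_1}{M_C}=\norm{S_1}{M}\leq 1$, and $\norm{S_1}{M_D}=\norm{F}{M}^2\leq 1$. Consequently $\norm{S_1}{M_A^{\otimes a}\otimes M_B^{\otimes b}\otimes M_C^{\otimes c}\otimes M_D^{\otimes d}}=\norm{S_1}{M_A}^a\norm{S_1}{M_B}^b\norm{S_1}{M_C}^c\norm{S_1}{M_D}^d\leq 1$, and by the submultiplicativity bound $\norm{S_1}{AXB}\leq\norm{2}{A}\,\norm{S_1}{X}\,\norm{2}{B}$ together with $\norm{2}{P}=1$, each term in the decomposition has Schatten-$1$ norm at most $\abs{C(a,b,c,d)}=2^{O(q)}$. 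Summing over the $2^{O(q)}$ terms and invoking the triangle inequality for $\norm{S_1}{\cdot}$ yields $\norm{S_1}{\pth{M^{\otimes q/4}}^S}=2^{O(q)}$, as claimed.
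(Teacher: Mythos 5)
Your proposal is correct and takes essentially the same route as the paper: it deduces the bound from the Tetris decomposition of $\pth{M^{\otimes q/4}}^S$, controls the blocks via $\norm{S_1}{M_{0,4}}=\norm{F}{M}\le\norm{S_1}{M}$ and multiplicativity of the Schatten $1$-norm under Kronecker products, and finishes with permutation-invariance and the triangle inequality over the $2^{O(q)}$-weighted terms. As in the paper's own proof, the substantive content is the Tetris theorem itself (together with the check that the combinatorial coefficients are $2^{O(q)}$), which you correctly identify as the main obstacle and take as given.
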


\subsection{Gap between $\fsp{\cdot}$ and $\ftwo{\cdot}$ for Non-Neg. Coefficient Polynomials}

\begin{lemma}
\lemlab{simple:sp:lb}
    Consider any homogeneous polynomial $g$ of even degree-$t$ 
    and let $M_g\in \Re^{[n]^{t/2}\times [n]^{t/2}}$ be its SoS-symmetric matrix 
    representation. Then $\fsp{g}\geq \norm{F}{M_g}^2/\norm{S_1}{M_g}$. 
\end{lemma}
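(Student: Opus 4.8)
The plan is to combine the dual (``nuclear norm'') formulation of $\fsp{g}$ (see \figref{fsp}) with the elementary fact that $\iprod{M}{X}$ is the same for every matrix representation $M$ of $g$ whenever $X$ is SoS-symmetric. First I would record this invariance. For $B \in \Re^{[n]^{t/2}\times[n]^{t/2}}$ and $\pi \in \Sym_t$, the reindexing map $B \mapsto B^{\pi}$ used in the definition of $B^S$ at the start of this section merely permutes the $n^t$ entries of $B$, hence is orthogonal with adjoint $B \mapsto B^{\pi^{-1}}$; averaging over $\Sym_t$ shows $B \mapsto B^S$ is self-adjoint, i.e. $\iprod{A^S}{B} = \iprod{A}{B^S}$ for all $A,B$. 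Moreover, each $M^{\pi}$ is again a representation of $g$ (since $(x^{\otimes t/2})^T M^{\pi} x^{\otimes t/2} = \iprod{\mathcal{T}}{x^{\otimes t}}$ for the tensor $\mathcal{T}$ underlying $M$, which is invariant under permuting the $t$ factors), so $M^S$ is an SoS-symmetric representation of $g$ and therefore equals $M_g$ by uniqueness of the SoS-symmetric representation.

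The main step is then short. Fix any matrix representation $M$ of $g$. Since $M_g$ is SoS-symmetric, $M_g^S = M_g$, so by self-adjointness of symmetrization
\[
\iprod{M}{M_g} = \iprod{M}{M_g^S} = \iprod{M^S}{M_g} = \iprod{M_g}{M_g} = \norm{F}{M_g}^2 .
\]
On the other hand, by von Neumann's trace inequality (the spectral norm $\norm{2}{\cdot}$ and the Schatten-$1$ norm $\norm{S_1}{\cdot}$ are dual), $\iprod{M}{M_g} \le \norm{2}{M}\cdot \norm{S_1}{M_g}$. Combining the two displays gives $\norm{2}{M} \ge \norm{F}{M_g}^2 / \norm{S_1}{M_g}$, and since this holds for every matrix representation $M$ of $g$, taking the infimum yields $\fsp{g} \ge \norm{F}{M_g}^2 / \norm{S_1}{M_g}$, as claimed.

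I do not expect any real obstacle here; the only point worth stating carefully is the representation-independence of $\iprod{M}{X}$ for SoS-symmetric $X$, which is exactly the self-adjointness of the symmetrization operator above (equivalently, one checks directly that a ``null'' matrix $D$ with $(x^{\otimes t/2})^T D\,x^{\otimes t/2}\equiv 0$ is orthogonal to every SoS-symmetric matrix, since all orbit-sums of such a $D$ vanish). An essentially equivalent alternative would be to invoke strong duality for the $\fsp{\cdot}$ SDP and simply exhibit $X = M_g/\norm{S_1}{M_g}$ as a dual-feasible point with objective value $\norm{F}{M_g}^2/\norm{S_1}{M_g}$; I would present the direct trace-duality argument so as not to rely on strong duality.
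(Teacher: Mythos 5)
Your proof is correct, and it lands on the same witness as the paper --- the point $X = M_g/\norm{S_1}{M_g}$ in the dual program of \figref{fsp} --- but it gets there by a different and slightly more self-contained route. The paper's proof is one line: invoke strong duality for the $\fsp{\cdot}$ SDP, so that $\fsp{g} = \max\{\iprod{X}{M_g} : \norm{S_1}{X}=1,\ X \text{ SoS-symmetric}\}$, and substitute $X = M_g/\norm{S_1}{M_g}$. You instead prove exactly the weak-duality inequality that is actually needed: for every representation $M$ of $g$ one has $\iprod{M}{M_g} = \iprod{M^S}{M_g} = \norm{F}{M_g}^2$ (by self-adjointness of symmetrization and uniqueness of the SoS-symmetric representation) together with $\iprod{M}{M_g} \le \norm{2}{M}\cdot\norm{S_1}{M_g}$ (duality of the spectral and Schatten-$1$ norms), and then take the infimum over $M$. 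Since the lemma only requires the direction $\fsp{g} \ge \iprod{X}{M_g}$ for one feasible $X$, strong duality is indeed not needed, and your argument makes that explicit; what the paper's phrasing buys is brevity, at the cost of leaning on the strong-duality discussion of \secref{sos-prelims}. Both are fine; your representation-independence observation ($\iprod{M}{X}$ depends only on $g$ when $X$ is SoS-symmetric) is the right way to make the direct argument airtight.
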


\begin{proof}
    We know by strong duality, that 
    \[
        \fsp{g} = 
        \max\brc{
        \iprod{X}{M_g} 
        \sep{
        \norm{S_1}{X}=1,~~
        X\text{ is SoS-Symmetric},~~
        X\in \Re^{[n]^{t/2}\times [n]^{t/2}}
        }
        }.
    \]
    The claim follows by substituting $X:= M_g/\norm{S_1}{M_g}$. 
\end{proof}

\begin{theorem}
    For any $q$ divisible by $4$ and $f$ as defined in \secref{nnc-lowerbound}, we have that 
    w.h.p. \[\frac{\fsp{f^{q/4}}}{\ftwo{f^{q/4}}} ~\geq~ 
    \frac{n^{q/24}}{(q \log n)^{O(q)}}.\]
\end{theorem}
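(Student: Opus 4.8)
The plan is to build an explicit dual feasible solution for the program computing $\fsp{f^{q/4}}$ out of the matrix $\sfA$ from \secref{nnc-lowerbound}, and to control its Schatten-$1$ norm via the lifting machinery of \thmref{lift:stable:sp:lb}. Fix $p = n^{-1/3}$ (the choice made in \secref{nnc-lowerbound}) and condition on the high-probability events that $\ftwo{f} \le (\log n)^{O(1)}$ (this is \lemref{cliques} at $p = n^{-1/3}$), that $\abs{\cliques} = \Theta(n^2)$ and $m = \Theta(n^{5/3})$, and that $\norm{2}{\hN_E} = \widetilde{O}(n^{3/2}p^4) = \widetilde{O}(n^{1/6})$ --- this last estimate being what the trace-method computation in the proof of \lemref{eigenvalue} actually establishes. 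Since $\sfA$ is an \emph{SoS-symmetric} representation of $24f$, the matrix $(\sfA^{\otimes q/4})^S$ is an SoS-symmetric representation of $24^{q/4} f^{q/4}$, hence by uniqueness equals $24^{q/4}\,\sfM_{f^{q/4}}$, where $\sfM_{f^{q/4}}$ is the canonical representation of $f^{q/4}$. By \lemref{simple:sp:lb} applied to $g = f^{q/4}$ it then suffices to prove $\norm{F}{\sfM_{f^{q/4}}}^2 \ge n^{q/2}/(q\log n)^{O(q)}$ and $\norm{S_1}{\sfM_{f^{q/4}}} \le n^{11q/24}\,(\log n)^{O(q)}$.

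For the Frobenius lower bound I would use that symmetrizing an entrywise-nonnegative matrix costs only a $1/q!$ factor in squared Frobenius norm: with $B := \sfA^{\otimes q/4}$, one computes $\norm{F}{B^S}^2 = \tfrac{1}{q!}\sum_{\pi \in \Sym_q}\iprod{B}{B^\pi}$, and since $\sfA \ge 0$ entrywise every summand is nonnegative, so $\norm{F}{B^S}^2 \ge \tfrac{1}{q!}\norm{F}{B}^2 = \tfrac{1}{q!}\norm{F}{\sfA}^{q/2}$. As $\norm{F}{\sfA}^2 = 24\abs{\cliques} = \Theta(n^2)$, this yields $\norm{F}{\sfM_{f^{q/4}}}^2 = 24^{-q/2}\norm{F}{B^S}^2 \ge \abs{\cliques}^{q/4}/(q!\,24^{q/4}) \ge n^{q/2}/(q\log n)^{O(q)}$, absorbing $q!$ and the constant powers into the error term.

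For the Schatten-$1$ upper bound, set $M := \sfA / \max\{\norm{S_1}{\sfA},\, \norm{S_1}{\sfA_{3,1}}\}$, an SoS-symmetric matrix with $\norm{S_1}{M} \le 1$ and $\norm{S_1}{M_{3,1}} \le 1$. Then \thmref{lift:stable:sp:lb} gives $\norm{S_1}{(M^{\otimes q/4})^S} = 2^{O(q)}$, and since $(M^{\otimes q/4})^S$ is a positive scalar multiple of $\sfM_{f^{q/4}}$ this rearranges to $\norm{S_1}{\sfM_{f^{q/4}}} \le \bigl(\max\{\norm{S_1}{\sfA},\, \norm{S_1}{\sfA_{3,1}}\}/24\bigr)^{q/4}\cdot 2^{O(q)}$. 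To finish I would bound $\max\{\norm{S_1}{\sfA},\,\norm{S_1}{\sfA_{3,1}}\} = \widetilde{O}(n^{11/6})$: the matrix $\sfA_{3,1}$ has rank at most $n$, so $\norm{S_1}{\sfA_{3,1}} \le \sqrt{n}\,\norm{F}{\sfA_{3,1}} = \sqrt{n}\sqrt{24\abs{\cliques}} = O(n^{3/2})$; and $\norm{S_1}{\sfA} = O(\norm{S_1}{\hA})$ (since $\sfA$ is $\hA$ with each row and column duplicated), while $\hA = \hN_E + p^4\hJ_E$ with $\hJ_E$ rank one, $\norm{S_1}{\hJ_E} = m$, and $\hN_E$ supported on an $m\times m$ block so $\norm{S_1}{\hN_E} \le m\,\norm{2}{\hN_E}$, giving $\norm{S_1}{\hA} \le m\,\norm{2}{\hN_E} + p^4 m = \widetilde{O}(n^{5/3}\cdot n^{1/6}) = \widetilde{O}(n^{11/6})$. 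Hence $\norm{S_1}{\sfM_{f^{q/4}}} \le n^{11q/24}\,(\log n)^{O(q)}$.

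Combining through \lemref{simple:sp:lb},
\[
\fsp{f^{q/4}} \;\ge\; \frac{\norm{F}{\sfM_{f^{q/4}}}^2}{\norm{S_1}{\sfM_{f^{q/4}}}} \;\ge\; \frac{n^{q/2}/(q\log n)^{O(q)}}{n^{11q/24}\,(\log n)^{O(q)}} \;=\; \frac{n^{q/24}}{(q\log n)^{O(q)}},
\]
and since $\ftwo{f^{q/4}} = \ftwo{f}^{q/4} \le (\log n)^{O(q)}$, dividing gives the claimed bound on the ratio. I expect the main obstacle to be the Schatten-$1$ bookkeeping in the third paragraph: the naive estimate $\norm{S_1}{\sfA} \le m\,\norm{2}{\sfA} = \widetilde{O}(n^{2})$ is useless here (it would yield no gap at all), and it is precisely the sharper bound $\norm{S_1}{\sfA} = \widetilde{O}(n^{11/6})$ --- obtained by isolating the rank-one ``Perron'' part via $\hJ_E$ and applying the trace-method spectral estimate of \lemref{eigenvalue} to the remainder $\hN_E$ --- that pins the final exponent to $q/24$. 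Verifying the ``stability'' hypothesis $\norm{S_1}{M_{3,1}} \le 1$ of \thmref{lift:stable:sp:lb}, i.e. the bound on $\norm{S_1}{\sfA_{3,1}}$, is the second ingredient that needs care.
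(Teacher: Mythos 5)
Your proposal is correct and follows essentially the same route as the paper: reduce via \lemref{simple:sp:lb} to a Frobenius lower bound and a Schatten-$1$ upper bound on $\sfM_{f^{q/4}}=(M^{\otimes q/4})^S$, obtain the latter from \thmref{lift:stable:sp:lb} after verifying $\norm{S_1}{M}=\widetilde{O}(n^{11/6})$ and $\norm{S_1}{M_{3,1}}=O(n^{3/2})$. Your two supporting computations — the $1/q!$ Frobenius loss under symmetrization via nonnegativity of the cross terms, and the decomposition $\hA=\hN_E+p^4\hJ_E$ giving $\norm{S_1}{\hA}\le p^4 m+m\,\norm{2}{\hN_E}$ — are exactly the details the paper leaves as "easily verified" and "implicit in the proof of \lemref{eigenvalue}," and they check out.
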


\begin{proof}
    Let $f$ be the degree-$4$ homogeneous polynomial as defined in \secref{nnc-lowerbound} and 
    let $M=M_f$ be its SoS-symmetric matrix representation. 
    Let $g:= f^{q/4}$ and let $M_g$ be its SoS-symmetric matrix representation. 
    Thus $M_g = (M^{\otimes q/4})^{S}$ and it is easily verified that 
    w.h.p., $\norm{F}{M}^2 \ge \widetilde{\Omega}(n^4p^6) = \widetilde{O}(n^2)$ and also  
    $\norm{F}{M_g}^2 \ge \widetilde{\Omega}((n^4p^6)^{q/4}/q^{O(q)}) = \widetilde{\Omega}(n^{q/2}/q^{O(q)})$.
   
    It remains to estimate $\norm{S_1}{M_g}$ so that we may apply \lemref{simple:sp:lb}. 
    Implicit, in the proof of \lemref{eigenvalue}, is that w.h.p. $M$ has one eigenvalue of 
    magnitude $O(n^2p) = O(n^{5/3})$ and at most $O(n^2p) = O(n^{5/3})$ eigenvalues of 
    magnitude $\widetilde{O}(n^{3/2} p^4) = \widetilde{O}(n^{1/6})$. Thus 
    $\norm{S_1}{M} = \widetilde{O}(n^{11/6})$ w.h.p. 
    Now we have that $\norm{S_1}{M_{1,3}} \leq \sqrt{n}\cdot \norm{F}{M_{1,3}} = 
    \sqrt{n}\cdot \norm{F}{M} = \widetilde{O}(n^{3/2})$ w.h.p. Thus on applying 
    \thmref{lift:stable:sp:lb} to $M/\widetilde{O}(n^{11/6})$, we get that 
    $\norm{S_1}{M_g}/\widetilde{O}(n^{11q/24}) \leq 2^{O(q)}$ w.h.p. 
    
    Thus, applying \lemref{simple:sp:lb} yields the claim. 
\end{proof}

\subsection{Tetris Theorem}
    Let $M\in \Re^{[n]^2\times [n]^2}$ be a degree-$4$ SoS-Symmetric matrix, 
    let $M_A:= M_{3,1}\otimes M_{0,4}\otimes M_{3,1}$, let $M_B := M_{3,1}\otimes M_{1,3}$, 
    let $M_C:=M$ and let $M_D:= \Vector{M}\Vector{M}^T = M_{0,4}\otimes M_{4,0}$. 
    For any permutation $\pi\in \Sym_{q/2}$ let $\overline{\pi}\in \Sym_{n^{q/2}}$ denote 
    the permutation that maps any $i\in [n]^{q/2}$ to $\pi(i)$. Also let 
    $P_{\pi} \in \Re^{[n]^{q/2} \times [n]^{q/2}}$ denote the row-permutation 
    matrix induced by the permutation $\overline{\pi}$. Let $P:=\sum_{\pi\in \Sym_{q/2}} 
    P_{\pi}$. Let $\mathcal{R}(a,b,c,d) := {(c!^{\,2} \,2!^{\,2c})~(b!\,(2a+b)!\,3!^{\,2a+2b})
    ~(d!\,(a+d)!\,4!^{\,a+2d})}$.
    Define 
    \begin{align*}
        \tetmat(a,b,c,d) &:= 
        \frac{M_A^{\otimes a}\otimes M_B^{\otimes b}\otimes M_C^{\otimes c}\otimes 
        M_D^{\otimes d}}
        {\mathcal{R}(a,b,c,d)} 
        \\
        \overline{\tetmat}(a,b,c,d) &:= 
        \frac{(M_A^T)^{\otimes a}\otimes 
        M_B^{\otimes b}\otimes M_C^{\otimes c}\otimes M_D^{\otimes d}}
        {\mathcal{R}(a,b,c,d)} 
        \\
        S^\tetmat &:= \brc{P\cdot \tetmat(a,b,c,d) \cdot P^T
        \sep{12a+8b+4c+8d = q}}~
        \mathsmaller{\bigcup} \\
        &\phantom{:=\,\,}
        \brc{P \cdot \overline{\tetmat}(a,b,c,d) \cdot P^T
        \sep{12a+8b+4c+8d = q}} 
        .
    \end{align*}
    
\begin{theorem}
\thmlab{tetris}
    Let $M\in \Re^{[n]^2 \times [n]^2}$ be a degree-$4$-SOS-symmetric 
    matrix. Then 
    \begin{align}
        \Eqlab{tetris:permute}
        (q/4)!\cdot 4!^{\,q/4}\cdot 
        \sum\limits_{\tetmat\in S^{\mathfrak{M}}} \tetmat 
        \quad = \quad 
        \sum\limits_{\pi \in \Sym_q} 
        \pth{M^{\otimes q/4}}^{\pi}
        \quad = \quad 
        q!\cdot 
        \pth{M^{\otimes q/4}}^{S}
    \end{align}
\end{theorem}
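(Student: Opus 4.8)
The plan is to establish the (stronger) middle equality in \Eqref{tetris:permute}, i.e. $(q/4)!\,4!^{q/4}\sum_{\tetmat\in S^{\tetmat}}\tetmat \;=\; \sum_{\pi\in\Sym_q}(M^{\otimes q/4})^{\pi}$, since the second equality is merely the definition of the symmetriser $(\cdot)^S$. As both sides are multilinear in the entries of $M$, I would prove this by an orbit-counting/bijection argument rather than by direct computation. The starting point is that, because $M$ is degree-$4$ SoS-symmetric, the underlying $4$-tensor $T$ (with $T[i_1,i_2,i_3,i_4]=M[(i_1,i_2),(i_3,i_4)]$) is fully $\Sym_4$-symmetric, so $T^{\otimes q/4}$ is invariant under the subgroup $H=\Sym_4\wr\Sym_{q/4}\le\Sym_q$ (which permutes the $q/4$ blocks of four tensor arguments and permutes the arguments inside each block), and $|H|=(q/4)!\,4!^{q/4}$. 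Hence $(M^{\otimes q/4})^{\pi}$ depends only on the coset $\pi H$, and $\sum_{\pi\in\Sym_q}(M^{\otimes q/4})^{\pi}=(q/4)!\,4!^{q/4}\sum_{\pi H}(M^{\otimes q/4})^{\pi}$; moreover the cosets $\pi H\in\Sym_q/H$ are in bijection with the unordered partitions $\mathcal{P}$ of the $q$ index positions into $q/4$ blocks of size $4$.

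Next I would make the structure of each coset-representative matrix explicit. The $q$ positions split into $q/2$ ``row'' positions and $q/2$ ``column'' positions; for a block $B$ of a partition $\mathcal{P}$ let $r(B)\in\{0,1,2,3,4\}$ be the number of row positions in $B$. Unpacking definitions, $(M^{\otimes q/4})^{\pi_{\mathcal{P}}}$ equals $\bigotimes_{B\in\mathcal{P}}M_{r(B),\,4-r(B)}$ up to a relabelling of the row positions and (separately) of the column positions. Two partitions with the same multiset $\{r(B)\}_B$ lie in one orbit of $\Sym_{q/2}\times\Sym_{q/2}$ (acting by relabelling rows and columns), and summing $(M^{\otimes q/4})^{\pi_{\mathcal{P}}}$ over such an orbit gives $\tfrac{1}{|\mathrm{Stab}|}\,P\cdot\bigl(\bigotimes_B M_{r(B),4-r(B)}\bigr)\cdot P^{T}$, where $P=\sum_{\sigma\in\Sym_{q/2}}P_\sigma=P^{T}$ and $\mathrm{Stab}\le\Sym_{q/2}\times\Sym_{q/2}$ is the stabiliser of a canonical representative. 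Therefore $\sum_{\pi\in\Sym_q}(M^{\otimes q/4})^{\pi}=(q/4)!\,4!^{q/4}\sum_{O}\tfrac{1}{|\mathrm{Stab}_O|}\,P\,N_O\,P^{T}$, the sum being over the orbits $O$, equivalently over admissible multisets of slice-shapes $M_{r,4-r}$ (that is, $q/4$ of them whose row-dimensions multiply to $n^{q/2}$).

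It then remains to match this with $\sum_{\tetmat\in S^{\tetmat}}\tetmat$. Here the key combinatorial fact is the classification of the ``irreducible'' balanced multisets of slice-shapes — those that cannot be partitioned into two nonempty balanced sub-multisets, ``balanced'' meaning equal total row- and column-dimension: the only ones are $\{M_{2,2}\}$, $\{M_{0,4},M_{4,0}\}$, $\{M_{3,1},M_{1,3}\}$, $\{M_{3,1},M_{0,4},M_{3,1}\}$, and its mirror $\{M_{1,3},M_{4,0},M_{1,3}\}$, i.e. $M_C$, $M_D$, $M_B$, $M_A$, and $M_A^{T}$; every balanced multiset decomposes into copies of these. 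Because $P\cdot(\cdot)\cdot P^{T}$ is insensitive both to the ordering of the Kronecker factors and to relabellings of legs inside a single slice (again by $\Sym_4$-symmetry of $T$), $P\,N_O\,P^{T}$ equals $P\bigl(M_A^{\otimes a}\otimes M_B^{\otimes b}\otimes M_C^{\otimes c}\otimes M_D^{\otimes d}\bigr)P^{T}$ (or the analogous $\overline{\tetmat}$ expression with $M_A^{T}$ in place of $M_A$) for the multiplicities $(a,b,c,d)$ of the irreducible pieces, which in particular forces $12a+8b+4c+8d=q$; conversely every $(a,b,c,d)$ on this list, with either orientation of the $M_A$-pieces, arises from an orbit. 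Finally one computes $|\mathrm{Stab}_O|$ directly — a stabilising $(\sigma,\tau)$ permutes the blocks of each fixed type $r$ among themselves, with the row-part and column-part of a block moving together, and may permute the $r$ row-legs and $4-r$ column-legs inside each block — which, after the factorial bookkeeping over the five block types, yields $|\mathrm{Stab}_O|=\mathcal{R}(a,b,c,d)$. Substituting gives exactly $(q/4)!\,4!^{q/4}\sum_{\tetmat\in S^{\tetmat}}\tetmat$.

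I expect the main obstacle to be this last round of bookkeeping: verifying the irreducible-multiset classification, checking that decompositions of a slice-multiset are captured (up to transpose) by the pure all-$M_A$ and all-$M_A^{T}$ families in $S^{\tetmat}$ rather than needing mixed families, handling the coincidence $\tetmat(0,b,c,d)=\overline{\tetmat}(0,b,c,d)$ so nothing is over- or under-counted, and carrying the stabiliser computation out precisely enough to land on the normalisation $\mathcal{R}(a,b,c,d)$. The remaining ingredients — coset counting, the $\Sym_4$-symmetry of $T$, and the behaviour of $P_\sigma$ under Kronecker products — are routine.
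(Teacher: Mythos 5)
Your proposal is correct and is essentially the paper's own argument in different clothing: the paper's ``template-hypergraph'' formalism encodes exactly your partition of $[q]$ into $4$-blocks with row/column splits $(r(B),4-r(B))$, its factor $(q/4)!\,4!^{\,q/4}$ is your wreath-product coset count, and its $\mathcal{R}(a,b,c,d)$ is your stabiliser order. The bookkeeping you flag does close --- in particular the balance identity $2(s_{0,4}-s_{4,0})=s_{3,1}-s_{1,3}$ forces $a=(s_{3,1}-s_{1,3})/2\geq 0$ whenever $s_{0,4}\geq s_{4,0}$, so every balanced shape-multiset is realized by a single-orientation family $\tetmat(a,b,c,d)$ or $\overline{\tetmat}(a,b,c,d)$, which is precisely the computation the paper carries out.
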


We shall prove this claim in \secref{tetris:proof} after first exploring its 
consequences.

\subsection{Lifting Stable Degree-$4$ Lower Bounds}
\seclab{lifting:stable:lbs}

\begin{theorem}[Lifting ``Stable" $\fsp{\cdot}$ Lower Bounds: Restatement of \thmref{lift:stable:sp:lb}]
\thmlab{nuclear:after:lift}~\\
    Let $M\in \Re^{[n]^2 \times [n]^2}$ be a degree-$4$-SOS-symmetric matrix satisfying  
    \[
        \norm{S_1}{M},~ 
        \norm{S_1}{M_{3,1}}
        \leq 1.
    \]
    Then for any $q$ divisible by $4$,
    \[
    	\norm{S_1}{
        \pth{M^{\otimes q/4}}^{S}
        }
        = 2^{O(q)}
    \]
\end{theorem}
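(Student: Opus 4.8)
The plan is to invoke the Tetris theorem (\thmref{tetris}) to rewrite the symmetrization $\pth{M^{\otimes q/4}}^{S}$ as a short combination of highly structured matrices, and then bound the nuclear norm of each piece separately. By \thmref{tetris},
\[
\pth{M^{\otimes q/4}}^{S} ~=~ \frac{(q/4)!\cdot 4!^{\,q/4}}{q!}\sum_{\tetmat\in S^{\mathfrak{M}}}\tetmat ,
\]
where each $\tetmat\in S^{\mathfrak{M}}$ has the form $P\cdot\tetmat(a,b,c,d)\cdot P^{T}$ or $P\cdot\overline{\tetmat}(a,b,c,d)\cdot P^{T}$ with $12a+8b+4c+8d=q$, and $\tetmat(a,b,c,d)=\mathcal{R}(a,b,c,d)^{-1}\cdot M_A^{\otimes a}\otimes M_B^{\otimes b}\otimes M_C^{\otimes c}\otimes M_D^{\otimes d}$ (with $\overline{\tetmat}$ the same but with $M_A$ replaced by $M_A^{T}$). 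The number of tuples $(a,b,c,d)$ satisfying $12a+8b+4c+8d=q$ is $q^{O(1)}$, so by the triangle inequality for $\norm{S_1}{\cdot}$ it is enough to show $\tfrac{(q/4)!\cdot 4!^{\,q/4}}{q!}\cdot\norm{S_1}{\tetmat}=2^{O(q)}$ for each such piece.

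First I would control $\norm{S_1}{\tetmat(a,b,c,d)}$. Since $P=\sum_{\pi\in\Sym_{q/2}}P_{\pi}$ is a sum of $(q/2)!$ permutation matrices, $\norm{2}{P}\le(q/2)!$, so $\norm{S_1}{P\,A\,P^{T}}\le\norm{2}{P}^{2}\cdot\norm{S_1}{A}\le((q/2)!)^{2}\cdot\norm{S_1}{A}$. Because $\norm{S_1}{\cdot}$ is multiplicative under tensor products, $\norm{S_1}{M_A^{\otimes a}\otimes M_B^{\otimes b}\otimes M_C^{\otimes c}\otimes M_D^{\otimes d}}=\norm{S_1}{M_A}^{a}\norm{S_1}{M_B}^{b}\norm{S_1}{M_C}^{c}\norm{S_1}{M_D}^{d}$, and I claim each factor is $\le1$: we have $\norm{S_1}{M_C}=\norm{S_1}{M}\le1$ by hypothesis; since $M$ is degree-$4$-SoS-symmetric, its associated $4$-tensor $T$ is fully symmetric, hence $M_{1,3}=M_{3,1}^{T}$ and $\norm{S_1}{M_{1,3}}=\norm{S_1}{M_{3,1}}\le1$; the flattenings $M_{0,4}$ and $M_{4,0}$ are just reshapings of $M$ into a row (resp.\ column) vector, so their unique singular value equals $\norm{F}{M}\le\norm{S_1}{M}\le1$; and $M_A$, $M_B$, $M_D$ (as well as $M_A^{T}$) are tensor products of these matrices, so their $\norm{S_1}{\cdot}$ is a product of numbers that are each $\le1$. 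Thus $\norm{S_1}{\tetmat(a,b,c,d)}\le((q/2)!)^{2}/\mathcal{R}(a,b,c,d)$.

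What remains is the purely numerical estimate $\tfrac{(q/4)!\cdot 4!^{\,q/4}}{q!}\cdot((q/2)!)^{2}\cdot\mathcal{R}(a,b,c,d)^{-1}=2^{O(q)}$. Here I would use that $\tfrac{((q/2)!)^{2}}{q!}=\binom{q}{q/2}^{-1}\le1$, together with the observation that the shape constraint $12a+8b+4c+8d=q$ is equivalent to $c+b+(2a+b)+d+(a+d)=q/4$. Since $\mathcal{R}(a,b,c,d)=(c!)^{2}(2!)^{2c}\cdot b!\,(2a+b)!\,(3!)^{2a+2b}\cdot d!\,(a+d)!\,(4!)^{a+2d}\ge c!\,b!\,(2a+b)!\,d!\,(a+d)!$, this gives
\[
\frac{(q/4)!}{\mathcal{R}(a,b,c,d)}~\le~\frac{(q/4)!}{c!\,b!\,(2a+b)!\,d!\,(a+d)!}~=~\binom{q/4}{c,\,b,\,2a+b,\,d,\,a+d}~\le~5^{\,q/4},
\]
so $\tfrac{(q/4)!\cdot 4!^{\,q/4}}{\mathcal{R}(a,b,c,d)}\le120^{\,q/4}$. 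Summing over the $q^{O(1)}$ pieces then yields $\norm{S_1}{\pth{M^{\otimes q/4}}^{S}}\le q^{O(1)}\cdot120^{\,q/4}=2^{O(q)}$.

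The hard part will not be conceptual but careful accounting: symmetrization can inflate the nuclear norm by a factor as large as $(q/4)!$, and one has to verify that this blow-up is exactly offset by the denominator $\mathcal{R}(a,b,c,d)$ in the definition of $\tetmat(a,b,c,d)$ — which is precisely where the explicit form of $\mathcal{R}$ and the shape constraint $12a+8b+4c+8d=q$ supplied by \thmref{tetris} are needed, through the multinomial identity above. A secondary point worth stating cleanly is that full symmetry of $T$ makes $\norm{S_1}{M_{1,3}}=\norm{S_1}{M_{3,1}}$, so that the single stability hypothesis $\norm{S_1}{M_{3,1}}\le1$ already bounds every flattening of $M$ that appears inside $M_A$ and $M_B$.
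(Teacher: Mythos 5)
Your proposal is correct and follows essentially the same route as the paper: invoke the Tetris theorem, bound $\norm{S_1}{M_A},\norm{S_1}{M_B},\norm{S_1}{M_C},\norm{S_1}{M_D}$ by $1$ via multiplicativity of the nuclear norm under Kronecker products together with $\norm{S_1}{M_{0,4}}=\norm{F}{M}\le\norm{S_1}{M}$, absorb the permutation sum, and check that the coefficient $(q/4)!\,4!^{q/4}/\mathcal{R}(a,b,c,d)$ is $2^{O(q)}$. The only cosmetic differences are that you bound the aggregated $P$ via $\norm{S_1}{PAP^{T}}\le\norm{2}{P}^{2}\norm{S_1}{A}$ where the paper expands $P$ into $(q/2)!$ unitary permutation matrices and applies the triangle inequality, and your multinomial estimate is a slightly more explicit version of the paper's $(i+j+k+l)!/(i!\,j!\,k!\,l!)\le 4^{i+j+k+l}$ bound.
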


\begin{proof}
	Implicit in the proof of \thmref{tetris} is the following:
	\begin{align}
	\Eqlab{tri:ineq}
		&\quad\quad q!\cdot \pth{M^{\otimes q/4}}^{S}
		\nonumber\\
		&=
		\sum_{12a+8b+4c+8d=q}~
        \frac{(q/4)!\cdot 4!^{\,q/4}}{\mathcal{R}(a,b,c,d)}
        ~\sum\limits_{\sigma_1,\sigma_2 \in \Sym_{q/2}}~ 
        \pth{P_{\sigma_1}^T
        \pth{
        M_A^{\otimes a}\otimes M_B^{\otimes b}\otimes M_C^{\otimes c}\otimes M_D^{\otimes d}
        }
        P_{\sigma_2}}
        \nonumber\\
        &+
        \sum_{12a+8b+4c+8d=q}~
        \frac{(q/4)!\cdot 4!^{\,q/4}}{\mathcal{R}(a,b,c,d)}
        ~\sum\limits_{\sigma_1,\sigma_2 \in \Sym_{q/2}}~ 
        \pth{P_{\sigma_1}^T
        \pth{
        (M_A^T)^{\otimes a}\otimes M_B^{\otimes b}\otimes M_C^{\otimes c}\otimes M_D^{\otimes d}
        }
        P_{\sigma_2}}
	\end{align}
	
	First note that $(q/4)!\cdot 4!^{\,q/4}/\mathcal{R}(a,b,c,d) 
	\leq 2^{O(q)}$ since for any integers $i,j,k,l$ 
	\[(i+j+k+l)!/(i!\cdot j!\cdot k!\cdot l!) 
	\leq 4^{i+j+k+l}.\] 
	
	Next note that $\norm{S_1}{M_{0,4}} = \norm{F}{M} \leq \norm{S_1}{M} \leq 1$. 
	Combining this with the fact that $\norm{S_1}{M_{1,3}}, \norm{S_1}{M}\leq 1$, 
	we get that $\norm{S_1}{M_A}, \norm{S_1}{M_B}, \norm{S_1}{M_C}, 
	\norm{S_1}{M_D}\leq 1$, since $\norm{S_1}{X\otimes Y} = 
	\norm{S_1}{X}\cdot \norm{S_1}{Y}$ for any (possibly rectangular) 
	matrices $X$ and $Y$. 
	Further note that Schatten $1$-norm is invariant to multiplication by a permutation 
	matrix. 	Thus the claim follows by applying triangle inequality to the $O_q(q^{q})$ 
	terms in \Eqref{tri:ineq}. 
\end{proof}

%

\begin{corollary}[lifting ``stable" $\hssos{\cdot}$ lower bounds]
\corlab{lift:stable:sos:lb}
    Let $M\in \Re^{[n]^2 \times [n]^2}$ be a degree-$4$-SOS-symmetric matrix satisfying  
    \[
        M \succeq 0, \quad\quad
        M_A := M_{3,1}\otimes M_{0,4}\otimes M_{3,1} \succeq 0, \quad\text{ and }\quad 
        M_B := M_{3,1}\otimes M_{1,3} \succeq 0.
    \]
    Then for any $q$ divisible by $4$,
    \[
        \pth{M^{\otimes q/4}}^{S} \succeq 0
    \]
    (i.e. $\pth{M^{\otimes q/4}}^{S}$ is a degree-$q$ SOS moment matrix).
\end{corollary}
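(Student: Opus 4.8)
The plan is to mimic the proof of \thmref{nuclear:after:lift}, replacing ``each term has bounded Schatten-$1$ norm'' by ``each term is positive semidefinite.'' By \thmref{tetris},
\[
q!\cdot \pth{M^{\otimes q/4}}^{S} ~=~ (q/4)!\cdot 4!^{\,q/4}\cdot \sum_{\tetmat\in S^{\tetmat}} \tetmat \mcom
\]
and since $q!>0$ and $(q/4)!\,4!^{\,q/4}>0$, it suffices to show that every matrix in $S^{\tetmat}$ is PSD. Each such matrix has the form $P\cdot N\cdot P^{T}$, where (recalling that $P$ is symmetric, being a sum over a group of permutation matrices) $N$ is a positive scalar multiple of either $M_A^{\otimes a}\otimes M_B^{\otimes b}\otimes M_C^{\otimes c}\otimes M_D^{\otimes d}$ or $(M_A^{T})^{\otimes a}\otimes M_B^{\otimes b}\otimes M_C^{\otimes c}\otimes M_D^{\otimes d}$, with $12a+8b+4c+8d=q$. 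Because $v^{T}(PNP^{T})v=(P^{T}v)^{T}N(P^{T}v)$, it is enough to show that $N$ is positive as a quadratic form, so the work is to control the four tensor-power factors.

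First I would dispose of the two ``easy'' factors. Since $M_C=M$ is symmetric and $M\succeq 0$ by hypothesis, write $M_C=LL^{T}$, so $M_C^{\otimes c}=L^{\otimes c}(L^{\otimes c})^{T}\succeq 0$; likewise $M_D=\Vector{M}\Vector{M}^{T}$ is symmetric of rank one, so $M_D^{\otimes d}\succeq 0$. Hence $M_C^{\otimes c}\otimes M_D^{\otimes d}$ is genuinely symmetric PSD, say $SS^{T}$, and tensoring any quadratic-form-positive matrix $Q$ with $SS^{T}$ preserves quadratic-form positivity: $Q\otimes SS^{T}=(I\otimes S)(Q\otimes I)(I\otimes S)^{T}$, and $Q\otimes I$ is a permutation conjugate of a block-diagonal matrix with blocks $Q$. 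So the corollary reduces to showing that $M_A^{\otimes a}\otimes M_B^{\otimes b}$, conjugated by $P$, is PSD; and for the $\overline{\tetmat}$ terms we also record that $v^{T}M_A^{T}v=v^{T}M_Av\ge 0$, so $M_A^{T}\succeq 0$ follows from $M_A\succeq 0$.

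The hard part is this remaining claim. Although $M_A\succeq 0$ and $M_B\succeq 0$ are assumed, these matrices are Kronecker products of the rectangular blocks $M_{3,1},M_{1,3},M_{0,4}$ and are \emph{not} symmetric (one checks that $M_A\ne M_A^{T}$ because the induced partition of the row/column slots into $4$-sets changes), and in general a tensor power of a non-symmetric quadratic-form-positive matrix need not be quadratic-form positive. The point is that conjugation by the full symmetrizer $P$ over $\Sym_{q/2}$ restricts the quadratic form to the symmetric subspace $\mathrm{Sym}^{q/2}(\R^{n})$, on which the particular way the Kronecker product distributes the $q/2$ slots among the $M_{x,y}$ blocks becomes immaterial, because the tensor $T$ underlying $M$ is fully symmetric; there $M_A,M_B$ behave like honest symmetric PSD matrices (their restricted quadratic forms equal those of their symmetric parts, which are PSD by the hypotheses $M_A,M_B\succeq 0$), so the tensor powers, once sandwiched by $P$, can be put in a manifestly PSD $GG^{T}$ form. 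Making this last step precise --- the positive-semidefinite analogue of the norm bookkeeping in \thmref{nuclear:after:lift}, paralleling the ``spread''-counting in the proof of \thmref{tetris} --- is what I expect to be the main obstacle. Granting it, summing over the finitely many elements of $S^{\tetmat}$ gives $\pth{M^{\otimes q/4}}^{S}\succeq 0$; since $\pth{M^{\otimes q/4}}^{S}$ is SoS-symmetric by construction, it is a valid degree-$q$ moment matrix, which completes the proof.
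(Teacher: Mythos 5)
Your overall route is the paper's: invoke \thmref{tetris}, and argue that each summand $P\cdot\tetmat(a,b,c,d)\cdot P^T$ (and its $\overline{\tetmat}$ counterpart) is PSD, so that the nonnegative combination $q!\,(M^{\otimes q/4})^S$ is PSD. You also correctly dispose of the $M_C^{\otimes c}$ and $M_D^{\otimes d}$ factors (the latter being automatically PSD of rank one). But as written the proposal does not prove the statement: you explicitly leave the PSDness of the $M_A^{\otimes a}\otimes M_B^{\otimes b}$ factor as ``the main obstacle,'' and the sketch you offer for it --- that conjugation by $P$ restricts the quadratic form to the symmetric subspace, where $M_A$ and $M_B$ ``behave like'' their symmetric parts --- is neither carried out nor needed.

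The difficulty is one you create for yourself by reading $M_A\succeq 0$ and $M_B\succeq 0$ as mere nonnegativity of the quadratic form of a possibly non-symmetric matrix; under that reading, Kronecker powers indeed need not remain positive, and your argument genuinely stalls. The hypotheses are instead exactly the ``mild technical condition'' the corollary advertises: $M_A$ and $M_B$ are required to be (symmetric) positive semidefinite matrices, which is a real extra constraint on $M$ since for a generic SoS-symmetric $M$ these Kronecker products are not even symmetric. With that reading the proof is the one-liner the paper gives: symmetric PSD matrices are closed under transpose (which handles $(M_A^T)^{\otimes a}$ in the $\overline{\tetmat}$ terms), Kronecker product, positive scaling (the factors $1/\mathcal{R}(a,b,c,d)$ are positive), conjugation $X\mapsto PXP^T$, and addition, so every element of $S^{\tetmat}$ is PSD and \Eqref{tetris:permute} expresses $(M^{\otimes q/4})^S$ as a positive multiple of a sum of PSD matrices. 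No passage to the symmetric subspace is required, and without either that clarification or a completed version of your subspace argument, the proposal has a genuine gap at its central step.
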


\begin{proof}
    Observe that \thmref{tetris} implies the 
    claim since $M_A,M_B,M_C,$ and $M_D$ are PSD and the set of PSD 
    matrices is closed under transpose, Kronecker product, scaling, 
    conjugation, and addition.     
\end{proof}

\subsection{Proof of Tetris Theorem}
\seclab{tetris:proof}

We start with defining a ~\emph{hypergraphical matrix} which will allow a more 
intuitive paraphrasing of \thmref{tetris}. By now, this is an important 
formalism in the context of SoS, and closely-related objects have been defined in 
several works, including \cite{DM15}, \cite{RRS16}, \cite{BHKKMP16}. 

\subsubsection{Hypergraphical Matrix}

\begin{defn}
\deflab{template-hypergraph}
    For symbolic sets $L=\{\ell_1,\dots \ell_{q_1}\},
    R=\{r_1,\dots r_{q_2}\}$, a $d$-uniform \emphi{template-hypergraph} represented by 
    $(L,R,E)$, is a $d$-uniform hypergraph on vertex set $L\uplus R$ with $E$ 
    being the set of hyperedges. 
    
    For $I=(i_1, \dots i_{q_1}) [n]^{q_1}, J=(j_1, \dots j_{q_2})\in [n]^{q_2}$, 
    we also define a related object called \emphi{edge-set instantiation} 
    (and denoted by $E(I,J)$) as the set of size-$d$ multisets induced by $E$ 
    on substituting $\ell_t = i_t$ and $r_t=j_t$. 
\end{defn}

\paragraph{Remark.}
There is a subtle distinction between $E$ and $E(I,J)$ above, in that $E$ is a set of 
$d$-sets and $E(I,J)$ is a set of size-$d$ multisets (i.e. $e\in E(I,J)$ 
can have repeated elements). 

\begin{defn}
\deflab{hypergraphical-matrix}
    Given an SoS-symmetric order-$d$ tensor $\sfT$  and a $d$-uniform template-hypergraph 
    $H=(L,R,E)$ with $|L|=q_1,|R|=q_2$, we define the $d$-uniform degree-$(q_1,q_2)$ 
    hypergraphical matrix $\hgm{\sfT}{H}$ as 
    \[
        \hgm{\sfT}{H}[I,J] 
        = 
        \prod_{e\in E(I,J)} \sfT[e]
    \]    
    for any $I\in [n]^{q_1}, J\in [n]^{q_2}$. 
\end{defn}

In order to represent \thmref{tetris} in the language of hypergraphical matrices, 
we first show how to represent $M^{\otimes q/4}$ and 
$M_A^{\otimes a}\otimes M_B^{\otimes b}\otimes M_C^{\otimes c}\otimes M_D^{\otimes d}$ 
in this language. 

\subsubsection{Kronecker Products of Hypergraphical Matrices}
\seclab{kron:prod}

We begin with the observation that the kronecker product of hypergraphical matrices 
yields another hypergraphical matrix (corresponding to the "disjoint-union" of the 
template-hypergraphs). 

\begin{defn}
\deflab{th:disjoint:union}
    let $H=(L,R,E)$, 
    $H'=(L',R',E')$ be template-hypergraphs with 
    $|L|=q_1, |R|=q_2,|L'|=q_3,|R'|=q_4$. Let 
    $\overbar{H}=(\overbar{L},\overbar{R},\overbar{E})$ be a template-hypergraph 
    with $|\overbar{L}|=q_1+q_3, |\overbar{R}|=q_2+q_4$, where 
    $\overbar{\ell}_t = \ell_t$ ~if $t\in [q_1]$, 
    $\overbar{\ell}_t = \ell'_t$ ~if $t\in [q_1+1,q_1+q_3]$, 
    $\overbar{r}_t = r_t$ ~if $t\in [q_2]$, 
    $\overbar{r}_t = r'_t$ ~if $t\in [q_2+1,q_2+q_4]$, and 
    $\overbar{E}=E\uplus E'$. We call $\overbar{H}$ the \emphi{disjoint-union} 
    of $H$ and $H'$, which we denote by $H\uplus H'$. 
\end{defn}

\begin{observation}
\obslab{hgm:kron:prod}
    Let $\sfT$ be an SOS-symmetric order-$d$ tensor and let $H=(L,R,E)$, 
    $H'=(L',R',E')$ be template-hypergraphs. Then, 
    \[
        \hgm{\sfT}{H} \otimes \hgm{\sfT}{H'}
        =
        \hgm{\sfT}{H\uplus H'}
    \]
\end{observation}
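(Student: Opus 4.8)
The plan is to prove \obsref{hgm:kron:prod} by a direct entrywise computation, since both sides are honest matrices indexed by pairs of tuples and the statement is essentially a consistency check between the Kronecker-product convention and the ``disjoint-union'' convention of \defref{th:disjoint:union}. First I would fix notation: write $H = (L,R,E)$ with $|L| = q_1,\ |R| = q_2$ and $H' = (L',R',E')$ with $|L'| = q_3,\ |R'| = q_4$, so that $\hgm{\sfT}{H}\otimes\hgm{\sfT}{H'} \in \Re^{[n]^{q_1+q_3}\times[n]^{q_2+q_4}}$. Fixing the block-index convention for $\otimes$ so that the first $q_1$ (resp.\ first $q_2$) coordinates index into the first factor, the $((I,I'),(J,J'))$ entry — with $I\in[n]^{q_1}$, $I'\in[n]^{q_3}$, $J\in[n]^{q_2}$, $J'\in[n]^{q_4}$ — equals $\hgm{\sfT}{H}[I,J]\cdot\hgm{\sfT}{H'}[I',J']$ by definition of the Kronecker product.

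Next I would unpack both factors via \defref{hypergraphical-matrix}: $\hgm{\sfT}{H}[I,J] = \prod_{e\in E(I,J)}\sfT[e]$ and $\hgm{\sfT}{H'}[I',J'] = \prod_{e'\in E'(I',J')}\sfT[e']$, so their product is $\prod_{e\in E(I,J)}\sfT[e]\cdot\prod_{e'\in E'(I',J')}\sfT[e']$. On the other side, writing $\overbar{H} = H\uplus H'$ with hyperedge set $\overbar{E} = E\uplus E'$, we have $\hgm{\sfT}{\overbar{H}}[(I,I'),(J,J')] = \prod_{\overbar{e}\in\overbar{E}((I,I'),(J,J'))}\sfT[\overbar{e}]$.

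The one piece of actual content — though still routine — is the multiset identity
\[
\overbar{E}\big((I,I'),(J,J')\big) ~=~ E(I,J)\uplus E'(I',J').
\]
This follows by tracing the relabeling in \defref{th:disjoint:union}: the vertices $\overbar{\ell}_1,\dots,\overbar{\ell}_{q_1}$ of $\overbar{H}$ are identified with $\ell_1,\dots,\ell_{q_1}$ and hence instantiated to the coordinates of $I$, while $\overbar{\ell}_{q_1+1},\dots,\overbar{\ell}_{q_1+q_3}$ are identified with $\ell'_1,\dots,\ell'_{q_3}$ and instantiated to the coordinates of $I'$ (and symmetrically the $\overbar{r}$-vertices are instantiated from $J$ then $J'$). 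Thus each $e\in E$, viewed as a hyperedge of $\overbar{H}$, instantiates under $(I,I'),(J,J')$ to exactly the same size-$d$ multiset it instantiates to in $H$ under $(I,J)$, and likewise each $e'\in E'$ instantiates to its $H'$-instantiation under $(I',J')$. Since $\overbar{E}=E\uplus E'$ is a \emph{disjoint} union of hyperedge collections, its instantiation is the disjoint (multiset) union of the two instantiations; here it is essential to work with multiset unions, since entries of an edge-set instantiation may be multisets with repeated elements and distinct hyperedges of $\overbar{H}$ may instantiate to the same multiset.

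Plugging the multiset identity back in gives $\prod_{\overbar{e}\in\overbar{E}((I,I'),(J,J'))}\sfT[\overbar{e}] = \prod_{e\in E(I,J)}\sfT[e]\cdot\prod_{e'\in E'(I',J')}\sfT[e']$, which is exactly the Kronecker-product entry computed above; as the two matrices agree on every index pair they are equal. I do not expect any genuine obstacle — the proof is bookkeeping — but the two points requiring care are keeping the block indexing of $\otimes$ consistent with the left-then-right vertex ordering used in \defref{th:disjoint:union}, and not conflating set union with multiset union when instantiating hyperedges.
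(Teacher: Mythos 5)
Your proof is correct: the paper states \obsref{hgm:kron:prod} without proof, treating it as immediate from the definitions, and your entrywise verification (Kronecker entry $=$ product of the two factors' entries, plus the multiset identity $\overbar{E}((I,I'),(J,J')) = E(I,J)\uplus E'(I',J')$) is exactly the routine check being elided. Your remark that edge-set instantiations must be treated as multisets of multisets (so that coinciding instantiations of distinct hyperedges are counted with multiplicity) is a worthwhile clarification of \defref{template-hypergraph}, without which the product formula would fail.
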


\paragraph{Remark.}
Note that the disjoint-union operation on template-hypergraphs does 
not commute, i.e. $\hgm{\sfT}{H\uplus H'}\neq \hgm{\sfT}{H'\uplus H}$ 
(since kronecker-product does not commute). 
\bigskip

\noindent
Now consider a degree-$4$ SoS-symmetric matrix $M$ 
(as in the statment of \thmref{tetris}) and let $\sfT$ be the SoS-symmetric 
tensor corresponding to $M$. Then for any $x+y=4$ we have that 
$M_{x,y} = \hgm{\sfT}{H_{x,y}}$, where $H_{x,y}=(L,R,E)$ is the template-hypergraph 
satisfying $L=\{\ell_1,\dots \ell_x\},R=\{r_1,\dots r_y\}$ and 
$E=\{\{\ell_1,\dots \ell_x,r_1,\dots r_y\}\}$. Combining this observation 
with \obsref{hgm:kron:prod} yields that 
$M_A = \hgm{\sfT}{H_A},~ M_B = \hgm{\sfT}{H_B},~ 
M_C = \hgm{\sfT}{H_C},~ M_D = \hgm{\sfT}{H_D}$, where 
$H_A := H_{3,1}\uplus H_{0,4}\uplus H_{3,1}$,~$H_B := H_{3,1}\uplus H_{1,3}$,~ 
$H_C := H_{2,2}$,~and $H_D := H_{4,0}\uplus H_{0,4}$. 
Lastly, another application of \obsref{hgm:kron:prod} to the above, yields 
\begin{observation}
\obslab{hgm:kron:prod:tetris}
For a template-hypergraph $H$, let $H^{\uplus t}$ denote $\biguplus_{g\in [t]} H$. 
Then, 
    \begin{enumerate}[(1)]
        \item $M^{\otimes q/4} = \hgm{\sfT}{H_{2,2}^{\uplus q/4}}$.
        
        \item $M_A^{\otimes a}\otimes M_B^{\otimes b}\otimes 
        M_C^{\otimes c}\otimes M_D^{\otimes d} 
        = \hgm{\sfT}{H(a,b,c,d)}$ where 
        \[
        H(a,b,c,d) 
        ~:= 
        H_A^{\uplus a}
        \uplus
        H_B^{\uplus b}
        \uplus
        H_C^{\uplus c}
        \uplus
        H_D^{\uplus d}
        \]
    \end{enumerate}
\end{observation}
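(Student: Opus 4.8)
The plan is to derive both identities directly from \obsref{hgm:kron:prod} by a short induction on the number of Kronecker factors, once the single-factor ``building blocks'' are in place. First I would record the base-case identity: for every pair of nonnegative integers $x,y$ with $x+y=4$ one has $M_{x,y}=\hgm{\sfT}{H_{x,y}}$, where $H_{x,y}$ is the $4$-uniform template-hypergraph with left vertices $\{\ell_1,\dots,\ell_x\}$, right vertices $\{r_1,\dots,r_y\}$, and the single hyperedge $\{\ell_1,\dots,\ell_x,r_1,\dots,r_y\}$. This is immediate by unwinding the two definitions: $\hgm{\sfT}{H_{x,y}}[I,J]=\prod_{e\in E(I,J)}\sfT[e]=\sfT[\{i_1,\dots,i_x,j_1,\dots,j_y\}]$, a size-$4$ multiset evaluation that equals $T[i_1,\dots,i_x,j_1,\dots,j_y]=M_{x,y}[I,J]$; SoS-symmetry of $\sfT$ is exactly what makes $\sfT[e]$ well defined on multisets. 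Specializing to $x=y=2$ gives $M=M_{2,2}=\hgm{\sfT}{H_{2,2}}$, which is the $t=1$ case for part (1).

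For part (1) I would induct on $t$: assuming $M^{\otimes(t-1)}=\hgm{\sfT}{H_{2,2}^{\uplus(t-1)}}$, one application of \obsref{hgm:kron:prod} gives $M^{\otimes t}=M^{\otimes(t-1)}\otimes M=\hgm{\sfT}{H_{2,2}^{\uplus(t-1)}}\otimes\hgm{\sfT}{H_{2,2}}=\hgm{\sfT}{H_{2,2}^{\uplus(t-1)}\uplus H_{2,2}}=\hgm{\sfT}{H_{2,2}^{\uplus t}}$; taking $t=q/4$ finishes. For part (2) I would first obtain $M_A=\hgm{\sfT}{H_A}$, $M_B=\hgm{\sfT}{H_B}$, $M_C=\hgm{\sfT}{H_C}$, $M_D=\hgm{\sfT}{H_D}$ by applying \obsref{hgm:kron:prod} to the defining products $M_A=M_{3,1}\otimes M_{0,4}\otimes M_{3,1}$, $M_B=M_{3,1}\otimes M_{1,3}$, $M_C=M_{2,2}$, $M_D=M_{0,4}\otimes M_{4,0}$ together with the base-case identities, which yields $H_A=H_{3,1}\uplus H_{0,4}\uplus H_{3,1}$, $H_B=H_{3,1}\uplus H_{1,3}$, $H_C=H_{2,2}$, $H_D=H_{4,0}\uplus H_{0,4}$. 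Then the same induction --- applied to the Kronecker product of $a$ copies of $M_A$, then $b$ of $M_B$, then $c$ of $M_C$, then $d$ of $M_D$, in that order --- together with associativity of $\uplus$ gives $M_A^{\otimes a}\otimes M_B^{\otimes b}\otimes M_C^{\otimes c}\otimes M_D^{\otimes d}=\hgm{\sfT}{H_A^{\uplus a}\uplus H_B^{\uplus b}\uplus H_C^{\uplus c}\uplus H_D^{\uplus d}}=\hgm{\sfT}{H(a,b,c,d)}$.

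The main (and essentially the only) point requiring care --- so the ``hard'' part, though it is pure bookkeeping --- is that $\uplus$ on template-hypergraphs is noncommutative, since the relabelings in \defref{th:disjoint:union} encode the order of the Kronecker factors; one has to check that iterating the disjoint-union in the order prescribed by \obsref{hgm:kron:prod} produces $H(a,b,c,d)$ with exactly the left/right vertex labelings claimed, and that associativity of $\uplus$ (which does hold) is what licenses rewriting the nested unions in the flat form $H_A^{\uplus a}\uplus H_B^{\uplus b}\uplus H_C^{\uplus c}\uplus H_D^{\uplus d}$. Once this indexing convention is pinned down consistently with the statement of \obsref{hgm:kron:prod}, no further computation is needed.
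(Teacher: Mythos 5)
Your proposal is correct and matches the paper's treatment: the paper likewise establishes the base identity $M_{x,y}=\hgm{\sfT}{H_{x,y}}$ and then obtains both parts by repeated application of \obsref{hgm:kron:prod}, first to get $M_A,M_B,M_C,M_D$ as hypergraphical matrices for $H_A,H_B,H_C,H_D$ and then once more for the full Kronecker product. Your added remarks on the noncommutativity of $\uplus$ and the labeling conventions only make explicit bookkeeping the paper leaves implicit.
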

For technical reasons we also define the following related template-hypergraph: 
\[
    \overbar{H}(a,b,c,d) 
    ~:= 
    \overbar{H}_A^{\uplus a}
    \uplus
    H_B^{\uplus b}
    \uplus
    H_C^{\uplus c}
    \uplus
    H_D^{\uplus d},
\]
where $\overline{H}_A$ is the template-hypergraph whose 
corresponding hypergraphical matrix is $M_A^{T}$. 
\smallskip 

To finish paraphrasing \thmref{tetris}, we are left with studying the effect 
of permutations on hypergraphical matrices - which is the content of the 
following section. 

\subsubsection{Hypergraphical Matrices under Permutation}
Recall that for any matrix $B\in \Re^{[n]^{q/2}\times [n]^{q/2}}$ and $\pi\in \Sym_q$, 
$B^{\pi}$ is the matrix satisfying $B^{\pi}[K] := B[\pi(K)]$ where 
$K\in [n]^{q/2}\times [n]^{q/2}$. 

Also recall that for any permutation $\sigma\in \Sym_{q/2}$, $\overline{\sigma}\in 
\Sym_{n^{q/2}}$ denotes the permutation that maps any $i\in [n]^{q/2}$ to $\sigma(i)$ 
and also that $P_{\sigma} \in \Re^{[n]^{q/2} \times [n]^{q/2}}$ denotes the 
$[n]^{q/2}\times [n]^{q/2}$ row-permutation matrix induced by the permutation 
$\overline{\sigma}$. 

We next define a permuted template hypergraph in order to capture how permuting a 
hypergraphical matrix (in the senses above) can be seen 
as permutations of the vertex set of the hypergraph. 

\begin{defn}[Permuted Template-Hypergraph]
    For any $\pi\in \Sym_{q}$ (even $q$), and a $d$-uniform template-hypergraph 
    $H=(L,R,E)$ with $|L|=|R|=q/2$, 
    let $H^{\pi} = (L',R',E)$ denote the template-hypergraph obtained by setting 
    $\ell'_t := k_t$~ and $r_t = k_{t+q/2}$~ for $t\in [q/2]$, where 
    $K=(k_1, \dots k_q) = \pi(L\oplus R)$. 
    
    Similarly for any $\sigma_1,\sigma_2\in \Sym_{q/2}$, let 
    $H^{\sigma_1,\sigma_2} = (L',R',E)$ denote the 
    template-hypergraph obtained by setting $\ell'_t := \sigma_1(\ell_t)$~ and 
    $r'_t = \sigma_2(r_t)$. 
\end{defn}

\noindent
We then straightforwardly obtain 
\begin{observation}
\obslab{hgm:permutation}
    For any $\pi\in \Sym_{q}$, $\sigma_1,\sigma_2\in \Sym_{q/2}$, SoS-symmetric 
    order-$d$ tensor $\sfT$ and any $d$-uniform template-hypergraph $H$, 
    \begin{compactenum}[(1)]
        \item $\pth{\hgm{\sfT}{H}}^{\pi} = \hgm{\sfT}{H^{\pi}}$
        
        \smallskip
        
        \item $P_{\sigma_1}\cdot \hgm{\sfT}{H} \cdot P^T_{\sigma_2} = 
        \hgm{\sfT}{H^{\sigma_1,\sigma_2}}$
    \end{compactenum}
\end{observation}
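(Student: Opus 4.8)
The plan is to prove both identities by directly unwinding the relevant definitions — the permutation actions on matrices ($B\mapsto B^{\pi}$ and $B\mapsto P_{\sigma_1}BP_{\sigma_2}^{T}$), the hypergraphical matrix $\hgm{\sfT}{H}$ from \defref{hypergraphical-matrix}, and the permuted template-hypergraphs $H^{\pi}$ and $H^{\sigma_1,\sigma_2}$ — and checking that the index bookkeeping matches on the two sides. There is no substantive idea involved: the whole content is that one and the same permutation can be made to act coherently on the positions of the index tuple of a matrix in $\Re^{[n]^{q/2}\times[n]^{q/2}}$ (viewed as a single tuple in $[n]^{q}$) and on the symbolic vertices of a template-hypergraph, and that the edge-set instantiation respects this.

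For part~(1), fix an index which we write as $K=I\oplus J\in[n]^{q}$, with $I$ the row tuple and $J$ the column tuple. Unwinding the definition of $B^{\pi}$ and then of $\hgm{\sfT}{H}$ gives
\[
    \pth{\hgm{\sfT}{H}}^{\pi}[K]
    ~=~ \hgm{\sfT}{H}[\pi(K)]
    ~=~ \prod_{e\in E_{H}(\pi(K))}\sfT[e],
\]
where $E_{H}(\pi(K))$ is the collection of size-$d$ multisets obtained by substituting the entries of $\pi(K)$ for the symbolic vertices $\ell_{1},\dots,\ell_{q/2},r_{1},\dots,r_{q/2}$ of $H$ in the hyperedges of $E$. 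By construction $H^{\pi}$ is obtained from $H$ by renaming its vertex symbols according to $\pi$ — the $t$-th left symbol of $H^{\pi}$ is the symbol occupying position $\pi(t)$ in $L\oplus R$, and likewise on the right — while the hyperedge set $E$ is left literally unchanged. Consequently, substituting the entries of $K$ for the symbols of $H^{\pi}$ yields the same family of multisets as substituting the entries of $\pi(K)$ for the symbols of $H$, i.e. $E_{H^{\pi}}(K)=E_{H}(\pi(K))$ as collections of multisets, so the two products agree factor by factor. The only use of SoS-symmetry of $\sfT$ is to guarantee that each factor $\sfT[e]$ depends only on the multiset $e$ (so reorderings of vertices within an edge caused by the renaming never change a factor), and this is already built into \defref{hypergraphical-matrix}; the left/right labels are irrelevant to the edge-set instantiation itself and only record which half of $K$ feeds which symbols, which is exactly the consistency that $H^{\pi}$ is designed to preserve.

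Part~(2) runs identically, with the single permutation $\pi\in\Sym_{q}$ replaced by a pair $(\sigma_{1},\sigma_{2})\in\Sym_{q/2}^{2}$ acting separately on the row half and the column half of the index. Unfolding the matrix product gives $\pth{P_{\sigma_{1}}\hgm{\sfT}{H}P_{\sigma_{2}}^{T}}[I,J]=\hgm{\sfT}{H}[\overline{\sigma_{1}}(I),\overline{\sigma_{2}}(J)]=\prod_{e\in E_{H}(\overline{\sigma_{1}}(I)\oplus\overline{\sigma_{2}}(J))}\sfT[e]$, and since $H^{\sigma_{1},\sigma_{2}}$ is by definition $H$ with $L$ relabeled by $\sigma_{1}$ and $R$ relabeled by $\sigma_{2}$, we get $E_{H^{\sigma_{1},\sigma_{2}}}(I\oplus J)=E_{H}(\overline{\sigma_{1}}(I)\oplus\overline{\sigma_{2}}(J))$, which is the claim.

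The only place where care is needed — and the only place a slip is easy — is to fix, once and for all, a convention for what ``apply $\pi$ to a tuple'' means (read $K_{\pi(t)}$ into position $t$ versus place $K_{t}$ into position $\pi(t)$) and to use the matching convention in the definition of $\pi(L\oplus R)$ and in the row-permutation matrix $P_{\sigma}$, so that the inverses line up and $E_{H^{\pi}}(K)$ is genuinely $E_{H}(\pi(K))$ and not $E_{H}(\pi^{-1}(K))$. With conventions pinned down each part is a one-line verification; alternatively, both sides are visibly multiplicative under composition of permutations, so it suffices to check the identity when $\pi$ (resp.\ $\sigma_{1},\sigma_{2}$) is a transposition and then extend by induction. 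Finally, this observation is exactly the bridge used in the Tetris theorem: together with \obsref{hgm:kron:prod:tetris} it rewrites both $\sum_{\pi\in\Sym_{q}}(M^{\otimes q/4})^{\pi}$ and the $P(\cdot)P$-conjugated Kronecker products in terms of permuted template-hypergraphs, reducing \thmref{tetris} to the purely combinatorial statement that $\{(H_{2,2}^{\uplus q/4})^{\pi}:\pi\in\Sym_{q}\}$ is covered, with the right multiplicities, by the families $\{H(a,b,c,d)^{\sigma_{1},\sigma_{2}}\}$ and $\{\overline{H}(a,b,c,d)^{\sigma_{1},\sigma_{2}}\}$ over $12a+8b+4c+8d=q$, which is where the real work of that proof lies.
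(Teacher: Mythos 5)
Your verification is correct and follows exactly the route the paper intends: the paper states this as an Observation with no proof ("We then straightforwardly obtain..."), treating it as immediate from the definitions, and your careful unwinding of the index bookkeeping — including the convention-matching caveat about $\pi(K)$ versus $\pi^{-1}(K)$ — is precisely the elided verification.
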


Thus to prove \thmref{tetris}, it remains to estabish 
\begin{align}
\Eqlab{tetris:hypergraphical}
    &\frac{1}{4!^{q/4}\cdot (q/4)!}\cdot 
    \sum_{\pi\in \Sym_{q}} \hgm{\sfT}{(H_{2,2}^{\uplus q/4})^{\pi}} 
    \nonumber\\
    =& 
    \sum_{12a+8b+4c+8d=q}~
    \frac{1}{\mathcal{R}(a,b,c,d)}\cdot 
    \sum_{\sigma_1,\sigma_2\in\Sym_{q/2}} 
    \hgm{\sfT}{H(a,b,c,d)^{\sigma_1,\sigma_2}} ~~+ 
    \nonumber\\
    ~&
    \sum_{12a+8b+4c+8d=q}~
    \frac{1}{\mathcal{R}(a,b,c,d)}\cdot 
    \sum_{\sigma_1,\sigma_2\in\Sym_{q/2}} 
    \hgm{\sfT}{\overbar{H}(a,b,c,d)^{\sigma_1,\sigma_2}}
\end{align}
We will establish this in the next section by comparing the template-hypergraphs 
generated (with multiplicities) in the LHS with those generated in the RHS.

\subsubsection{Proof of \Eqref{tetris:hypergraphical}}
We start with some definitions to track the template-hypergraphs generated 
in the LHS and RHS of \Eqref{tetris:hypergraphical}. For any 
$12a+8b+4c+8d=q$, let 
\begin{align*}
    \mathcal{F}(a,b,c,d) 
    &:= 
    \brc{
    H(a,b,c,d)^{\sigma_1,\sigma_2}
    \sep{\sigma_1,\sigma_2\in \Sym_{q/2}}
    } 
    \\ 
    \overbar{\mathcal{F}}(a,b,c,d) 
    &:= 
    \brc{
    \overbar{H}(a,b,c,d)^{\sigma_1,\sigma_2}
    \sep{\sigma_1,\sigma_2\in \Sym_{q/2}}
    } 
    \\
    \mathcal{F}
    &:= 
    \brc{
    (H_{2,2}^{\uplus q/4})^{\pi}
    \sep{\pi\in \Sym_{q}}
    } 
\end{align*}

\noindent
Firstly, it is easily verified that whenever 
$(a,b,c,d)\neq (a',b',c',d')$, 
$\mathcal{F}(a,b,c,d)\cap \mathcal{F}(a',b',c',d') = \phi$, and that 
$\mathcal{F}(a,b,c,d)\cap \overbar{\mathcal{F}}(a,b,c,d) = \phi$. 
It is also easily verified that for any $12a+8b+4c+8d=q$, 
and any $H\in \mathcal{F}(a,b,c,d)$, 
\[
    \mathcal{R}(a,b,c,d) 
    ~=~ 
    \cardin{\brc{
    (\sigma_1,\sigma_2)\in \Sym_{q/2}^{2}
    \sep{H(a,b,c,d)^{\sigma_1,\sigma_2} = H}
    }}
\]
and for any $H\in \mathcal{F}$,
\[
    4!^{q/4}\cdot (q/4)!
    ~=~ 
    \cardin{\brc{
    \pi\in \Sym_{q}
    \sep{(H_{2,2}^{\uplus q/4})^{\pi} = H}
    }}.
\]
Thus in order to prove \Eqref{tetris:hypergraphical}, it is sufficient 
to establish that 
\begin{equation}
\Eqlab{tetris:th}
    \mathcal{F} 
    ~ = \qquad
    \biguplus_{\mathclap{12a+8b+4c+8d=q}} ~~
    (\mathcal{F}(a,b,c,d) \uplus \overbar{\mathcal{F}}(a,b,c,d))
\end{equation}
It is sufficient to establish that 
\begin{equation}
\Eqlab{tetris:th:subset}
    \mathcal{F} 
    ~ \subseteq \qquad
    \biguplus_{\mathclap{12a+8b+4c+8d=q}} ~~
    (\mathcal{F}(a,b,c,d) \uplus \overbar{\mathcal{F}}(a,b,c,d))
\end{equation}
since the other direction is straightforward. To this end, 
consider any $H=(L,R,E)\in \mathcal{F}$, and for any $x+y=4$, define 
\[
    s_{x,y} := \cardin{\brc{e\in E\sep{|e\cap L|=x,~|e\cap R|=y}}}.
\]
Now clearly $H\in \mathcal{F}(a,b,c,d)$ iff 
\begin{equation}
\Eqlab{satisfy:new}
    s_{0,4}=a+d,~ s_{3,1}=2a+b,~ s_{1,3}=b,~ s_{2,2}=c,~ s_{4,0}=d.
\end{equation} 
and $H\in \overbar{\mathcal{F}}(a,b,c,d)$ iff 
\begin{equation}
\Eqlab{satisfy2:new}
    s_{4,0}=a+d,~ s_{1,3}=2a+b,~ s_{3,1}=b,~ s_{2,2}=c,~ s_{0,4}=d.
\end{equation} 
Thus we need only find $12a'+8b'+4c'+8d' = q$, such that \Eqref{satisfy:new} 
or \Eqref{satisfy2:new} is satisfied. 

We will assume w.l.o.g. that $s_{0,4}\geq s_{4,0}$ and show that one can satisfy 
\Eqref{satisfy:new}, since if $s_{(0,4)} < s_{(4,0)}$, an identical argument 
allows one to show that \Eqref{satisfy2:new} is satisfiable. 
So let $d' = s_{4,0}$, $c'=s_{2,2}$, $b'=s_{1,3}$ and $a' = (s_{3,1}-s_{1,3})/2$. 
Since $H\in \mathcal{F}$, it must be true that $4s_{4,0}+3s_{3,1}+2s_{2,2} + s_{(1,3)} 
= q/2$. Thus, $12a'+8b'+4c'+8d' = 8s_{4,0}+6s_{3,1}+4s_{2,2} + 2s_{1,3} = q$ as 
desired. We will next see that $(a',b',c',d')$ and $s_{x,y}$ satisfy 
\Eqref{satisfy:new}. We have by construction that $s_{4,0} = d'$, $s_{2,2} = c'$, 
$s_{1,3}=b'$ and $s_{3,1} = 2a'+b'$. It remains to show that 
$s_{0,4} = a'+d'$. Now we know that $4s_{4,0}+3s_{3,1}+2s_{2,2} + s_{1,3} = q/2$ 
and $4s_{0,4}+3s_{1,3}+2s_{2,2} + s_{3,1} = q/2$. Subtracting the two equations yields 
$s_{0,4}-s_{4,0} = (s_{3,1}-s_{1,3})/2$. This implies $a'+d' = s_{4,0}+
(s_{3,1}-s_{1,3})/2 = s_{0,4}$, and furthermore it implies that $a'$ is non-negative 
since we assumed $s_{0,4}\geq s_{4,0}$. So \Eqref{satisfy:new} is satisfied. 
Thus we have established \Eqref{tetris:th:subset}, which completes the proof of 
\thmref{tetris}. 

\section{Open problems}

Our work makes progress on polynomial optimization based on new
spectral techniques for dealing with higher order matrix representations of
polynomials. Several interesting questions in the subject remain open,
and below we collect some of the salient ones brought to the fore by our work.

\begin{enumerate}
\item What is the largest possible ratio between $\hssos{f}$ and
  $\ftwo{f}$ for arbitrary homogeneous polynomials of degree $d$?
  Recall that we have an upper bound of $O_d(n^{d/2-1})$ and a lower
  bound of $\Omega_d(n^{d/4-1/2})$, and closing this quadratic gap
  between these bounds is an interesting challenge. Even a lower bound
  for $\fsp{\cdot}$ that improves upon the current
  $\Omega_d(n^{d/4-1/2})$ bound by polynomial factors would be very
  interesting.
  
\item A similar goal to pursue would be closing the gap between upper
  and lower bounds for polynomials with non-negative coefficients.

\item We discussed two relaxations of $\ftwo{h}$ --- $\hssos{h}$ which minimizes the maximum eigenvalue $\lambda_{\max}(M_h)$ over matrix representations $M_h$ of $h$, and $\fsp{h}$ which minimizes the spectral norm $\ftwo{M_h}$. How far apart, if at all, can these quantities be for arbitrary polynomials $h$? 

\item Are there low-degree SoS lower bounds that satisfy the stability conditions of 
\corref{lift:stable:sos:lb} (for the arbitrary polynomial case or other special classes)?

\item We studied three classes of polynomials: arbitrary, those with non-negative coefficients, and sparse. Are there other natural classes of polynomials for which we can give improved SoS-based (or other) approximation algorithms? Can our techniques be used in sub-exponential algorithms 
for special classes?
 
\item Despite being such a natural problem for which known algorithms give weak polynomially large approximation factors, the known NP-hardness results for polynomial optimization over the unit sphere only rule out an FPTAS. Can one obtain NP-hardness results for bigger approximation factors?

\end{enumerate}

\bibliographystyle{alpha}
\bibliography{polynomials}

\appendix

\section{Oracle Lower Bound}
\applab{oracle_lower_bound}
\label{oracle-lowerbound}

\newcommand{\ball}{\mathbb{B}}
\newcommand{\qqq}{\mathbb{Q}}
\newcommand{\rrr}{\mathbb{R}}
\newcommand{\sss}{\mathbb{S}}
\newcommand{\calK}{\mathcal{K}}
\newcommand{\calP}{\mathcal{P}}
\newcommand{\conv}{\operatorname{conv}}
\newcommand{\diam}{\operatorname{diam}}
\newcommand{\bigoh}{\operatorname{O}}
\newcommand{\bigomega}[1]{\Omega\left(#1\right)}
\newcommand{\innerprod}[1]{\left\langle #1 \right\rangle}
\newcommand{\suchthat}{\xspace : \xspace}
Khot and Naor \cite{KN08} observed that the problem of maximizing a polynomial over
unit sphere can be reduced to computing diameter of centrally symmetric convex body.
This observation was also used by So \cite{So11} later. We recall the reduction here:
For a convex set $K$, let $K^\circ$ denote the polar of $K$, \ie $K^\circ=\{y\suchthat 
\forall x\in K\innerprod{x,y}\le 1\}$. For a degree-$3$ polynomial $P(x,y,z)$ on $3n$
variables, let $\norm{P}{x}=\norm{sp}{P(x,\cdot,\cdot)}$ where $P(x,\cdot,\cdot)$ is a degree-2
restriction of $P$ with $x$ variables set. Let $\ball_P=\{x\suchthat \norm{P}{x}\le 1\}$.
From the definition of polar and $\norm{sp}{\cdot}$, we have:
\begin{align*}
 \max_{\norm{2}{x},\norm{2}{y},\norm{2}{z}\le 1} P(x,y,z) & = \max_{x\in\ball_2} \norm{P}{x}\\
  & = \max_{x\in\ball_P^\circ} \norm{2}{x}
\end{align*}

For general convex bodies, a lower bound for number of queries with ``weak separation oracle''
for approximating the diameter of the convex body was proved by Brieden \etal \cite{BGKKLS01} and later
improved by Khot and Naor \cite{KN08}. We recall the
definition:
\begin{defn}
  For a given a convex body $P$, a {\deffont weak separation oracle} $A$ is an algorithm which on input
  $(x,\eps)$ behaves as following:
  \begin{itemize}
    \item If $x\in A+\eps\ball_2$, $A$ accepts it.
    \item Else $A$ outputs a vector $c\in\qqq^n$ with $\norm{\infty}{c}=1$ such that for all $y$
      such that $y+\eps\ball_2\subset P$ we have $c^T x +\eps\ge c^T y$.
  \end{itemize}
\end{defn}

Let $K_{s,v}$ be the convex set $K^{(n)}_{s,v}=\conv\left(\ball_2^n\cup \{sv,-sv\}\right)$, for unit vector $v$.
Brieden \etal \cite{BGKKLS01} proved the following theorem:
\begin{theorem}\label{}
  Let $A$ be a randomized algorithm, for every convex set $P$, with access to a weak separation oracle
  for $P$. Let $\calK(n,s)=\{K^{(n)}_{s,u}\}_{u\in\sss^{n-1}_2}\cup \{\ball^n_2\}$. If for every
  $K\in \calK(n,s)$ and $s=\frac{\sqrt n}{\lambda}$, we have:
  \[
    \Prob{A(K)\le \diam(K)\le \frac{\sqrt n}{\lambda}A(K)}\ge \frac{3}{4}
  \]
  where $\diam(K)$ is the diameter of $K$, then $A$ must use at least $\bigoh(\lambda 2^{\lambda^2/2})$
  oracle queries for $\lambda\in[\sqrt 2,\sqrt{n/2}]$.
\end{theorem}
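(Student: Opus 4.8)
The plan is to prove this via Yao's minimax principle together with a concentration-of-measure estimate for spherical caps, following the original Brieden \etal\ argument. First I would fix a hard input distribution $\mathcal D$ supported on $\calK(n,s)$: output $\ball_2^n$ with probability $\tfrac12$, and $K^{(n)}_{s,u}$ with $u$ drawn uniformly from $\sss^{n-1}$ with probability $\tfrac12$. By Yao's principle it suffices to show that every \emph{deterministic} algorithm making at most $q\le c\,\lambda\,2^{\lambda^2/2}$ weak-separation queries fails the guarantee $A(K)\le\diam(K)\le\tfrac{\sqrt n}{\lambda}A(K)$ with probability greater than $\tfrac14$ under $\mathcal D$. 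The role of the two instance types is the diameter gap: $\diam(\ball_2^n)=2$ while $\diam(K^{(n)}_{s,u})=2s=\tfrac{2\sqrt n}{\lambda}$, a factor $\tfrac{\sqrt n}{\lambda}$ apart, so the two branches leave the algorithm with essentially no valid common output (the two admissible windows overlapping only at a boundary value that the oracle's $\eps$-slack lets the adversary exclude).

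Next I would set up the \textbf{adversarial oracle}: answer every query $(x,\eps)$ \emph{as if the body were the ball} --- accept iff $\norm{2}{x}\le 1+\eps$, otherwise return the separator obtained by $\ell_\infty$-normalizing $x$. This is manifestly a legal weak-separation oracle for $\ball_2^n$. The crucial claim is that for $K^{(n)}_{s,u}=\conv\pth{\ball_2^n\cup\{\pm su\}}$ the very same answers remain legal \emph{unless} the query point lies in a ``bad cap'' $B(u)\subseteq\rrr^n$, namely the cone of directions along which the protruding spike $su$ would force a different accept/reject decision or a different separating hyperplane; checking this is a direct computation from the definition of the weak-separation oracle and the explicit description of $K^{(n)}_{s,u}$. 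Since $s=\sqrt n/\lambda$, the spike protrudes only mildly, so $B(u)$, viewed as a set of directions, is a spherical cap about $\pm u$ of angular radius $\Theta(\lambda/\sqrt n)$; by the standard Gaussian / spherical-cap tail bound its normalized surface measure on $\sss^{n-1}$ is $\mu=2^{-\Theta(\lambda^2)}$ for $\lambda\in[\sqrt2,\sqrt{n/2}]$, and for any \emph{fixed} point $x$ one has $\Pr_{u}[x\in B(u)]=\mu$.

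Then I would run the \textbf{indistinguishability argument}. Fix the deterministic algorithm and run it against the adversarial oracle when the true body is $K^{(n)}_{s,u}$. Conditioned on no query so far having landed in $B(u)$, the next query is a deterministic function of the transcript seen so far --- which is exactly the transcript it would see against $\ball_2^n$ --- hence a fixed point, independent of $u$; a union bound over the $\le q$ queries gives $\Pr_{u}[\text{some query in }B(u)]\le q\mu<\tfrac18$ once $q$ is below the stated threshold. On the complementary event the algorithm's entire view, and therefore its output $A$, is identical to what it outputs on $\ball_2^n$; but correctness on $\ball_2^n$ pins $A$ into the narrow window around $2$, which by the diameter gap cannot also be a legal estimate for $\diam(K^{(n)}_{s,u})=\tfrac{2\sqrt n}{\lambda}$. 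Hence the algorithm errs on at least a $(1-\tfrac18)$-fraction of the $K$-branch, i.e.\ with probability $>\tfrac14$ overall, contradicting the hypothesized $\tfrac34$ success rate; thus $q=\bigomega{2^{\lambda^2/2}}$, and tracking the polynomial-in-$\lambda$ slack in the cap estimate sharpens this to the claimed $\bigomega{\lambda\,2^{\lambda^2/2}}$.

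The main obstacle is the second step: delimiting $B(u)$ precisely enough that (i) \emph{outside} it the ``pretend-it's-the-ball'' responses are provably legal weak-separation answers for $K^{(n)}_{s,u}+\eps\ball_2$ simultaneously for all $\eps>0$, and (ii) its measure is the tight $2^{-\lambda^2/2}$ rather than a crude $2^{-\Omega(\lambda^2)}$, since it is exactly this exponent that must match $\log q$ to yield a sharp query bound. This forces one to combine the extremal geometry of $\conv\pth{\ball_2^n\cup\{\pm su\}}$ with the precise sub-Gaussian tail $\Pr_{g\sim\mathcal N(0,I_n)}\big[\,|g_1|\ge t\norm{2}{g}\,\big]\approx 2^{-t^2n/2}$, evaluated at the critical threshold $t\asymp\lambda/\sqrt n$.
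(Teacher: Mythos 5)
First, note that the paper does not prove this theorem: it is quoted verbatim from Brieden et al.\ \cite{BGKKLS01} in \appref{oracle_lower_bound} and used as a black box, so there is no in-paper proof to compare your attempt against. On its own terms, your sketch is a reasonable reconstruction of the standard adversary argument from that line of work (Yao's principle, a ``pretend the body is the ball'' oracle, bad caps of normalized measure $2^{-\Theta(\lambda^2)}$ around $\pm u$, and a union bound over the fixed query sequence of the ball-run), and the quantitative bookkeeping ($t\asymp\lambda/\sqrt n$, cap measure $\approx e^{-t^2n/2}$) is of the right order.

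There are, however, two genuine gaps. The decisive one is the endgame. With the parameters exactly as stated, $\operatorname{diam}(\mathbb{B}_2^n)=2$, $\operatorname{diam}(K^{(n)}_{s,u})=2s$, and the approximation factor equals $s=\sqrt n/\lambda$; hence the admissible outputs are $A\in[2/s,\,2]$ for the ball and $A\in[2,\,2s]$ for the spiked body, and these windows intersect at $A=2$. Indistinguishability therefore forces no error at all: the zero-query algorithm that always outputs $2$ satisfies the stated guarantee on every $K$ in the family. Your parenthetical appeal to ``the oracle's $\eps$-slack'' does not repair this, since the slack only makes the oracle's answers more permissive and does not shrink either correctness window. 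The result is meaningful (and is what \cite{BGKKLS01} actually prove) only when the approximation factor is strictly below the diameter ratio --- e.g.\ requiring $\operatorname{diam}(K)\le\rho\,A(K)$ with $\rho<s$, or enlarging the family to include $K^{(n)}_{s',u}$ with $s'>s$ --- and your proof must make that adjustment explicit rather than inherit the degenerate boundary case of the (imprecisely transcribed) statement. The second gap is the one you flag yourself: the claim that the ball-oracle's answers remain legal weak-separation answers for $K^{(n)}_{s,u}$ whenever the query avoids a cap of angular radius $\Theta(\lambda/\sqrt n)$ is precisely the technical content of the theorem --- one must verify that a separator $c$ valid for $\mathbb{B}_2^n$ stays valid for the larger body unless $s\,|\langle c,u\rangle|$ exceeds roughly $\langle c,x\rangle+\eps$, and that the acceptance branch is always safe --- and deferring this as ``a direct computation'' leaves the proof incomplete at its core.
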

Using $\lambda=\log n$, we get that to get $s=\frac{\sqrt n}{\log n}$ approximation to diameter,
$A$ must use super-polynomial number of queries to the weak separation oracle. We note that this was
later improved to give analogous lower bound on the number of queries for an approximation factor
$\sqrt{\frac{ n}{\log n}}$ by Khot and Naor \cite{KN08}.

Below, we show that the family of hard convex bodies considered by Brieden \etal \cite{BGKKLS01}
can be realized as $\{\ball_P^\circ\}_{P\in\calP}$ by a family of polynomials $\calP$ -- which,
in turn, establishes a lower bound of $\bigomega{\frac{\sqrt n}{\log n}}$ on the approximation
for polynomial optimization, achievable using this approach, for the case of $d=3$.
For an unit vector $u\in\sss_2^{n-1}$, let $P_u$ be the polynomial defined as:
\[
P_u(x,y,z)=\sum_{i=1}^n x_iy_iz_1+s\cdot (u^T x) y_nz_n.
\]
A matrix representation of $P_u(x,\cdot,\cdot)$, with rows indexed by $y$ and columns
indexed by $z$ variables is as follows:
\[
  A_u=
  \begin{pmatrix}
    x_1 & 0 & \ldots & 0 & 0\\
    x_2 & 0 & \ldots & 0 & 0\\
    \vdots & \vdots & \ddots & \vdots & \vdots\\
    x_{n-1} & 0 & \ldots & 0 & 0\\
    x_n & 0 & \ldots & 0 & s\cdot (u^T x)\\
  \end{pmatrix} \text { and so, }
  A_u^TA_u=
  \begin{pmatrix}
    \norm{2}{x}^2 & 0 & \ldots & 0 & 0\\
    0 & 0 & \ldots & 0 & 0\\
    \vdots & \vdots & \ddots & \vdots & \vdots\\
    0 & 0 & \ldots & 0 & 0\\
    0 & 0 & \ldots & 0 & s^2\cdot\abs{u^T x}^2\\
  \end{pmatrix}.
\]
This proves: $\norm{P_u}{x}=\norm{sp}{P_u(x,\cdot,\cdot)}=\norm{sp}{A_u}=\max\{\norm{2}{x},s\abs{u^Tx}\}$.

Let $B=\{x\suchthat \norm{2}{x}\le 1\}$ and $C_u=\{x\suchthat s\cdot \abs{u^Tx}\le 1\}$.
We note that, $B^\circ=\{y\in\rrr^n\suchthat
\norm{2}{y}\le 1\}$ 
and, $C_u^\circ=\{\lambda\cdot u\suchthat \lambda\in[-s,s]\}=\conv(\left\{ -s\cdot u,s\cdot u \right\})$.

Next, we observe: $\ball_{P_u}=B\cap C_u$. It follows from De Morgan's law of polars 
that: $\ball_{P_u}^\circ=(B\cap C_u)^\circ=\conv(B^\circ \cup C_u^\circ)=
\conv(\ball_2^n \cup \{-s\cdot u,s\cdot u\})=K_{s,u}^{(n)}$.
Finally, we observe that for the polynomial $P_0=\sum_{i=1}^n x_iy_iz_1$,
we have: $\ball_{P_0}=\ball_2^n$.

Hence for polynomial $Q\in \calP=\{P_u\}_{u\in \sss_2^{n-1}}\cup \{ P_0 \}$, 
no randomized polynomial
can approximate $\diam{\ball_Q}$ within factor $\frac{\sqrt n}{q}$ 
without using more than $2^{\Omega(q)}$ number of queries. Since
the algorithm of Khot and Naor \cite{KN08} reduces the problem of optimizing polynomial $Q$ to computing
$\diam(\ball_Q)$, $\calP$ shows that their analysis is almost tight.

\section{Maximizing $\abs{f(x)}$ vs. $f(x)$}
\applab{fmax:vs:ftwo}
Let $\fmax{f}$ denote $\sup_{\|x\|=1}f(x)$.
Note that for polynomials with odd-degree, we have $\ftwo{f}=\fmax{f}$. 
When the degree is even, a multiplicative approximation for $\fmax{f}$ is not
possible since $\fmax{f}$ may be 0 or even negative. 
Moreover, even when $\fmax{f}$ is positive, any constructive multiplicative approximation of
$\fmax{f}$ with a factor (say) $B$, can be turned into a $1+\eps$ approximation by considering $f'
= f - C \cdot \norm{2}{x}^d$,  for $C = (1-\eps) \cdot \fmax{f}$ (one can use binary search on the
values of $C$ and use the solution give by the constructive algorithm to check). 

An alternate notion considered in the literature \cite{HLZ10,So11} is that of relative approximation
where one bounds the ratio $(\Lambda - \fmin{f})/(\fmax{f} - \fmin{f})$ (known as a
relative approximation), where $\Lambda$ is the estimate by the algorithm, and $\fmin{f}$ is defined
analogously to $\fmax{f}$. While this is notion is
stronger than approximating $\ftwo{f}$ in some cases, one can use a shift of $f$ as in the example
above (by $C \cdot \fmin{f}$) to obtain a relative approximation unless
$\abs{\fmax{f}-\fmin{f}}/\abs{\fmin{f}} = n^{-\omega(1)}$.

\end{document}